\theoremstyle{plain}
\newtheorem{thm}{Theorem}[chapter]
\newtheorem{lem}{Lemma}[chapter]
\newtheorem{rem}{Remark}[chapter]
\newtheorem{corollary}{Corollary}
\newtheorem{prop}{Proposition}
\theoremstyle{definition}
\newtheorem{defn}{Definition}
\newcommand{\etalchar}[1]{$^{#1}$}
\begin{document}

\begin{titlepage}
\begin{center}
\begin{figure}
\centering
\includegraphics[scale=1]{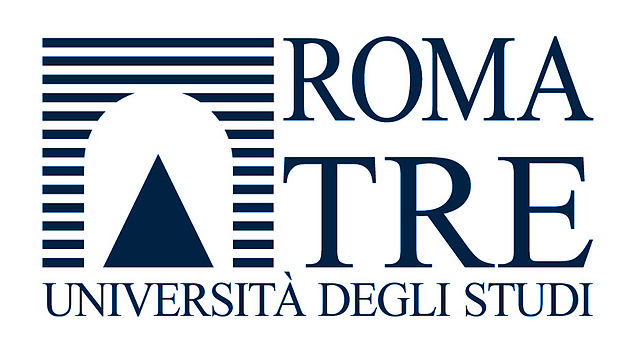}
\end{figure}
{{\Large{\textsc{Universit\`a degli Studi Roma Tre}}}} \rule[0.1cm]{15.8cm}{0.1mm}
\rule[0.5cm]{15.8cm}{0.6mm}
{\small{\bf DOTTORATO DI RICERCA IN MATEMATICA
}}\\
\vspace{1cm} {\large{XXIX CICLO}}
\end{center}
\vspace{3cm}
\begin{center}
{\LARGE{\bf Interacting fermions on the half-line:}}\\
\vspace{3mm}
{\LARGE{\bf boundary counterterms and boundary corrections}}
\end{center}
\vspace{25mm}
\par
\noindent
\begin{minipage}[l]{1\textwidth}
{\large{\bf Candidate: Giovanni Antinucci}}\\

{\large{\bf Supervisor: Prof. Alessandro Giuliani}\\

\large{\bf Coordinator: Prof. Luigi Chierchia}}
\end{minipage}
\vspace{15mm}
\begin{center}
{\large{\bf DEC, 2016}}
\end{center}
\end{titlepage}

\newpage
\null
\thispagestyle{empty}
\newpage
\begin{flushright}
\null\vspace{\stretch{1}}
{\it La spensieratezza\\
va stroncata alla nascita.}\\
-{\tiny ANTONIO REZZA - Fratto\_X}-
\vspace{\stretch{2}}\null
\end{flushright}

\chapter*{Acknowledgements}

The work behind this thesis lasted three years, one of which I spent abroad: so let me take the opportunity to thank first of all the Università degli Studi di Roma Tre, that gave me the possibility of attending different educational experiences. Besides, I wish to express my gratitude to those who hosted me: the {\it Université de Lyon 1}, where I spent three months (October - December 2015), supported by the A*MIDEX project Hypathie (n. ANR-11-IDEX-0001-02) funded by the {\it "Investissements d'Avenir"} 25 French Government program, managed by the French National Research Agency (ANR), having the opportunity to enjoy the very stimulating end exciting climate of the Probability Group of Professor Toninelli; and the  {\it Université de Geneve}, and {\it SwissMap project}, that funded me to attend, as a visiting student, the wonderful Masterclass in Statistical Physics 2015-2016, from January to June 2016.\\

Il mio più grande e sincero ringraziamento va al Prof. A. Giuliani, per la fiducia che ha avuto in me proponendomi questo problema, e per l’interesse, la disponibilità, e l’infinita pazienza con cui mi ha seguito: senza il suo supporto, {\it scientifico} e {\it umano}, la realizzazione di questo lavoro non sarebbe stata possibile.\\

Ringrazio il Prof. V. Mastropietro, per le stimolanti discussioni e per il {\it cruciale suggerimento} di inserire un {\it non-local boundary counterterm}.\\
Ringrazio il Prof. M. Porta, per tutto l’entusiasmo che mi ha trasmesso durante le mie, seppur brevi, visite a Zurigo.\\
Ringrazio il Dott. I. Jauslin, che ha reso meno traumatico e più veloce il mio inserimento nel mondo della Fisica Matematica e del RG, grazie alle lunghissime e coinvolgenti discussioni sempre affrontate con interesse e passione.\\
Ringrazio il Prof. Gallavotti: è grazie a lui, ai suoi consigli e al suo esempio che ho intrapreso la strada della Fisica Matematica e, in particolare, del Gruppo di Rinormalizzazione.\\

Infine, ringrazio i Professori M. Correggi, G. Dell’Antonio, G. Panati, A. Teta per avermi sempre dato la possibilità di partecipare alle molte conferenze da loro organizzate, e per aver sempre accolto con grande interesse qualsiasi mia domanda o semplice curiosità.

\chapter*{Abstract}
Recent years witnessed an extensive development of the theory of the critical point in two-dimensional statistical systems, which allowed to prove  {\it existence} and {\it conformal invariance} of the {\it scaling limit} for two-dimensional Ising model and dimers in planar graphs. Unfortunately, we are still far from a full understanding of the subject: so far, exact solutions at the lattice level, in particular determinant structure and exact discrete holomorphicity, play a cucial role in the rigorous control of the scaling limit. The few results about not-integrable (interacting) systems at criticality are still unable to deal with {\it finite domains} and {\it boundary corrections}, which are of course crucial for getting informations about conformal covariance. 
In this thesis, we address the question of adapting constructive Renormalization Group methods to non-integrable critical systems in $d= 1+1$ dimensions. We study a system of interacting spinless fermions on a one-dimensional semi-infinite lattice, which can be considered as a prototype of the Luttinger universality class with Dirichlet Boundary Conditions. We develop a convergent renormalized expression for the thermodynamic observables in the presence of a quadratic {\it boundary defect} counterterm, polynomially localized at the boundary. In particular, we get explicit bounds on the boundary corrections to the specific ground state energy.

\tableofcontents

\chapter{Introduction}

\section{Motivations}

\paragraph{Critical phenomena and symmetries} 
It is now well understood that the common background to {\it critical phenomena} displayed by very different systems (both classical and quantum) like liquid-vapor transition, paramagnetic-ferromagnetic transition, superfluids, superconductors, {\it etc.}, is the strong fluctuation of infinitely many coupled variables. So, once this mechanism has been identified, it is natural to introduce models that are both as realistic as possible and mathematically treatable.\\
In this framework, two dimensional $(2D)$ statistical systems play the special role of being the {\it simplest non trivial examples} of systems undergoing a phase transition: in this regard, it must be mentioned the Ising model, introduced by Ising \cite{Ising1925} and exactly solved first by Onsager \cite{Onsager:1944aa} and later by many others (with different techniques) \cite{kaufman1949crystal, kac1952combinatorial, lieb1961two,hurst1966new,samuel1980use}.\\
The importance of the Ising model is due to the fact that it has been the first model giving quantitative indications that a {\it microscopic short range interaction} can produce phase transitions. A remarkable fact is that the notion of integrability for the Ising model in zero magnetic field is really strong, meaning that the model can be {\it exactly mapped} into a system of free fermions \cite{lieb1961two,hurst1966new,samuel1980use}, so that it is not only possible to explicitly compute the {\it free energy} and the {\it magnetization}, but one can even get exact formulae (allowing an exact control of the asymptotic behaviour for large distances of some of them) for several {\it spin correlation functions}: energy-energy correlation functions, {\it spin-spin} correlation functions \cite{montroll1963correlations,wu1976spin,tracy1973neutron,barouch1973zero,mccoy2014two}, some multispin correlation functions (with some constraint on the relative positions of the spins) \cite{kadanoff1969correlations}. The impressive thing is that, thanks to these results, it is possible to caclulate the {\it critical exponents} of the model and check that they are different from those predicted by the Curie-Weiss theory of ferromagnetism: so one claims that {\it the Ising model belongs to a different universality class}. The concept of {\it universality class}, thanks to which we classify in the same {\it family} models that, even though describe very different physical systems, show the same {\it critical behaviour} (meaning that the {\it critical exponents are the same}, provided one managed to identify in some sense the corresponding thermodynamic functions for the systems under comparison) has been largely studied and understood by using the Renormalization Group (RG) tools \cite{kadanoff1966scaling,di1969microscopic,callan1970broken, symanzik1970small, wilson1971renormalization, wilson1971renormalization2,wilson1972critical}: in the language of RG one says that the correlation functions of two systems respecting the {\it same symmetries} and with {\it interactions differing only by irrelevant terms}, are characterized by {\it the same} long distance behaviour at the critical point ({\it i.e. they have the same critical exponents}).
\paragraph{Conformal invariance} If, on the one hand, the idea of  RG arises conceptually from the {\it scale invariance} of the scaling limit, which roughly tells us that under a {\it uniform change of lenght scale the correlation functions transform covariantly in a simple way}, on the other hand it is just thanks to the RG analysis that we can rigoroulsy conclude that the infrared fixed point, for many statistical systems, {\it is in fact scale invariant}, as well as invariant also with respect to the {\it usual} Euclidean symmetries.\\ 
This has been the starting point for naturally guessing that the {\it scaling limit} should be, under {\it plausible assumptions}, invariant under the {\it larger group} of the {\it conformal transformations}, which roughly speaking is a generalization of a scale transformation with a lenght-rescaling factor depending continuously on position, {\it i.e.} it is {\it conformal invariant}. The first time that the idea of the {\it conformal invariance of the scaling limit} appeared in literature was in a paper by Polyakov \cite{polyakov1970conformal}, in which he showed that the correlation functions are invariant under conformal transformations, and he used this to compute explicitly the three-point correlation functions. Nevertheless, it seems that for a while the {\it deep consequences} of the conformal invariance have not been properly understood by the community (for example, just a very short section is dedicated to this topic in the review about phase transitions and critical phenomena by Wegner \cite{wegner1976phase}). \\
The breakthrough in the field came with the seminal paper by Belavin, Polyakov and Zamolodchikov in 1984 \cite{belavin241infinite}, based on the fact that in $d=2$ the {\it conformal group} is much larger than in higher dimensions, and in particular it is isomorphic to the group of {\it analytic transformations}, whose corresponding group algebra, known as Virasoro algebra, had already been studied with different purposes in the context of particle theory \cite{kac1979lecture, jacob1974dual,mansouri1972gauge,ferrara1972conformal}. Roughly speaking they showed that, assuming the {\it conformal invariance} of the scaling limit, it is possible to get not only the {\it critical exponents} of the model, but also {\it all the multi-point correlation functions at the critical point} (the analysis is based on the correspondence of each of the {\it primary scaling operators} of a two-dimensional systems with a representation of the Virasoro algebra which allows, in some particular case, to perform explicit computations); notably they recognized that the theory is characterized by the {\it central charge} (also known as {\it conformal anomaly} since it is associated with an anomaly term in the commutation relations of the stress energy tensor).\\
\cite{belavin241infinite} paved the way for an impressive number of papers that, in the immediately following years, increased and refined the understanding of the topic \cite{dotsenko1984conformal,dotsenko1985four,dotsenko1985operator,
dotsenko1984critical,friedan1984conformal}. A special comment is deserved by the famous paper by Cardy \cite{cardy1984conformal} in which, for the first time, he realized that, using some {\it conformal mapping}, conformal invariance allows the explicit calculation of some {\it finite-size effects at the critical point}, offering the possibility of getting properties of the infinite system from some finite samples of the same system. In particular, these {\it finite-size effects} are linked to the concept of {\it central charge}: as already pointed out in \cite{belavin241infinite}, and then studied by Affleck \cite{affleck1986universal}, Bl\"ote-Cardy-Nightingale \cite{blote1986conformal} and Friedan-Qiu-Shenker \cite{friedan1984conformal}, the {\it central charge can be defined in terms of the finite size corrections to the free energy at criticality}; moreover,  it has been recognized that, for some special cases in which the critical theory is fully characterized by the value of the chentral charge, the critical exponents are all explicitly known in terms of the Kac formula.
Of course, the importance of this result has to be read taking into account that the increase of the {\it computational power} of the computers offered, at that times, a lot of convincing validation of the principle of conformal invariance at the critical point. \\
It is worth stressing again that this huge amount of impressive results have been achieved {\it regarding the conformal invariance of the scaling limit as a principle}, since there was no rigorous proof of this fact, and one big conceptual problem was that it was not even straightforward to give a {\it mathematical definition} of the scaling limit ({\it i.e. to define a precise mathematical object to study in order to check whether the scaling limit of the model is conformal invariant or not}). A milestone in this direction  has been posed by Schramm \cite{schramm2000scaling} who, in the context of percolation models (in which in some sense one can reduce the study of the critical point physics to the study of interfaces), inspired by the {\it numerical results} presented in \cite{langlands1994conformal} and by the explicit formula that Cardy proposed for the limit of percolation crossing probability \cite{cardy1992critical}, introduced the idea that interfaces of percolation models should belong to a family of {\it conformal invariant continuous non-selfcrossing curves}: the {\it Schramm-Loewner Evolutions} (SLE). The strenght of this proposal lies properly in the {\it mathematical formalization of the goal}: to prove that the interfaces of percolation models converge, at the scaling limit, to SLE processes. The revolution in the rigorous understanding of this topic came with the rigorous proof of Cardy's formula for critical site percolation on the triangular lattice by Smirnov \cite{smirnov2001critical}, whose great importance lies in an impressive consequence: {\it Cardy's formula} is equivalent to convergence of interfaces to SLE, meaning that proving the conformal invariance of a {\it well chosen observable} is enough to prove the conformal invariance of interfaces. This idea has been afterwards extended to Ising model, introducing the famous {\it fermionic observable} (a discrete holomorphic quantity) \cite{riva2006holomorphic}, which can be proved to converge to a holomorphic function in the scaling limit: this is the basic tool of the very ample literature \cite{smirnov2001critical, chelkak2012universality,chelkak2012conformal, duminil2012conformal, chelkak2014convergence, benoist2014conformal, lawler2011conformal} {\it etc.}, based on techniques of combinatorics, probability and discrete analysis (in particular discrete holomorphicity, {\it a.k.a. pre-holomorphicity}), already introduced by Kenyon in studying {\it close packed dimers} \cite{kenyon2001dominos}.\\
Fifteen years after the first step in the rigorous study of the {\it conformal invariance of the scaling limit of two-dimensional statistical systems}, the level of the understanding of these phenomena is really advanced, even though {\it mostly limited} to {\it integrable models}, since so far the {\it integrability} seems to be a {\it fundamental ingredient} to give full control of the existence and conformal invariance of the scaling limit. As a matter of fact, the two models on which results are more complete are models at the {\it free Fermi point}: Ising and dimers. Anyway, the existence and the conformal invariance of the scaling limit is believed to be independent of a {\it free fermions} description, meaning that it is believed to be true also for {\it non-integrable 2D systems} corresponding, in terms of fermions, to {\it interacting fermions} in $d=1+1$. \\
An important open problem, which motivates the study of this thesis, is the proof of {\it conformal invariance of the scaling limit of interacting non-solvable models} close, but non exactly at, the {\it free fermion point}.

\paragraph{Luttinger Liquid and its Universality Class}{In order to understand 2D critical systems outside the {\it free fermion point}, we need techniques for dealing with interacting fermionic systems in d=1+1}.\\
The starting point is to recognise that there are a few interacting models, that present a non-trivial critical behaviour, that can be exaclty solved by using special methods ({\it for instance} bosonization in the case of the Luttinger model, Bethe Ansatz in the case of the one dimensional antiferromagnetic Heisenberg model).\\
The reference model in the  framework of $1+1$ dimensional fermionic system is the {\it exactly solvable} Luttinger model, which is the simplest possible model describing many body systems consisting of two different kinds of fermions, left-movers and right-movers on a (continuous) segment {\it interacting} via a weak, short range density-density potential. The model was introduced by Luttinger \cite{luttinger1963exactly} and rigorously solved by Mattis and Lieb \cite{mattis1965exact}, using a very famous techinque now known as {\it bosonization}, (see below). The interesting feature of the Luttinger model is that the presence of the interaction really changes the physical behaviour: first of all, the ground state of the system is characterized by a density of states which does not have a discontinuity at the Fermi momentum (as the Free Fermi Gas), but its graph has an infinite slope with tangency exponent $a(\lambda)=\mathcal O(\lambda^2)$, called the anomaly of the Fermi surface; moreover, also the $n-$point functions, which can be computed exactly, show a large distance behaviour with {\it anomalous exponents continuously depending on the interaction size $\lambda$}.\\
Of course, one wonders if the Luttinger physics is in some sense robust under weak modification of the model; Luttinger model is in fact believed to give a robust description of models described in terms of spinless $1D$ fermions. By combining bosonization techniques with (formal) perturbative renormalization arguments, it has been conjectured \cite{kadanoff1977connections, haldane1981luttinger, luther1975calculation, nienhuis1984critical,den1981derivation} the existence of a {\it universality class}, called {\it $8$-vertex universality class} or {\it Luttinger liquids}, describing a variety of  two-dimensional classical systems, such as the $6$ and $8$-vertex models, the Ashkin-Teller model, and the interacting dimer models at close-packing; and one-dimensional quantum systems, such as the Heisenberg spin chains, the Luttinger model itself and the spinless Hubbard model and perturbation thereof.\\
The inspiring idea is that all the systems in the $8$-vertex universality class can be described in terms of {\it lattice fermions}, {\it i.e.} a family of Grassmann variables $\psi^{\epsilon}_{\omega,\bm x}$ indexed by lattice vertices $\bm x=(x,x_0)$ and by indices $\epsilon,\omega=\pm$. In particular, for a special choice of the model parameters  (free-fermion point), these fermions are non-interacting, so the system is analyitically diagonalizable. As soon as we change the values of these parameters, the fermions become interacting, meaning that, in the {\it action} of the system, at least a quartic term in the Grassmann variables appear, so the partition and correlation functions are given by {\it non-Gaussian Grassmann integrals}. Performing a {\it formal continuum limit}, these fermions become interacting Dirac fermions in $d=1+1$ dimensions, which are massless at criticality.\\
Let us start by considering non-interacting massless Dirac fermions $\psi^\sigma_{x,\omega}$ with propagator antidiagonal in $\sigma=\pm$, diagonal in $\omega=\pm$, and translation-invariant in $\bm x\in\mathbb R^2$:
$$\left<\psi^-_{\bm x,\omega}\psi^+_{\bm 0,\omega}\right>=\frac{1}{2\pi}\frac{1}{x_0+i\omega x}.$$
In this case, the bosonization consists in two identities \cite{itzykson1991statistical}:
\begin{itemize}
\item the multi-point correlations of the {\it fermionic density} $\psi^+_{\bm x,\omega}\psi^-_{\bm x,\omega}$ are the same as the derivative of a boson field $\phi$ (massless gaussian field):
$$\psi^+_{x,\omega}\psi^-_{\bm x,\omega}\leftrightarrow -\omega  \partial_\omega\phi(\bm x), \hspace{5mm} \partial_\omega:=\frac{1}{2}(\partial_{x_0}-i\omega\partial_{x}),$$
so that in particular correlations of {\it odd order} and truncated correlations of order larger than 2 vanish,
\item the {\it fermionic mass} $\psi^+_{\bm x,\omega}\psi^-_{\bm x,-\omega}$ has the same correlations as a normal ordered exponential of the boson field:
$$\psi^+_{\bm x,\omega}\psi^-_{\bm x,\omega}\leftrightarrow  \frac{1}{2\pi} :e^{2\pi i\omega\phi(\bm x)}:$$
\end{itemize}
The remarkable fact is that these relations, up to {\it renormalization constants}, remain valid also in the presence of suitable density-density interaction, in particular for the Thirring model  \cite{thirring1958soluble}, which can be thought of as a limit of Luttinger as the interactions tends to a local delta potential. In the case of Luttinger model, even though the correspondence between fermionic and bosonic representation is more complicated (so the formulae are more cumbersome), the consequences remain asymptotically the same.\\
There are some other models, as the {\it antiferromagnetic 1D Heisenberg model} and the {\it Hubbard model}, that are exactly solvabe by {\it Bethe ansatz}, thanks to which it is possible to compute the thermodynamic functions, the {\it critical exponents} and some of the {\it amplitudes}, unfortunately without a full control of the correlation functions.\\
Summarizing, there are some {\it very special, solvable models} as the Luttinger model, the Thirring model, the antiferromagnetic 1D Heisenberg model , the Hubbard model for which it is possible to explicitly check the conjectured properties we just mentioned. Formal perturbation/renormalization arguments suggest that the same long distance behavior should be displayed by several other models, provided that the interaction strength is suitably tuned, so that the critical exponent of (say) the Green function coincides; once this tuning is performed, all the other exponents should coincide. Even more remarkably, the resulting critical exponents and amplitudes should satisfy the same universal relations valid in the Luttinger model (Kadanoff and Haldane relations). These predictions, which are expected to hold for a very general class of models, have been first of all checked for solvable models, but checking them in absence of exact solutions or of bosonization identities is of course a hard mathematical task. Constructive quantum field theory and Renormalization Group methods are powerful tools to study these problems, and in fact they allowed to rigorously prove these prediction for several different models, as we briefly discuss in the next paragraph.
 
\paragraph{Renormalization Group in the context of many-body theories} RG methods {\it à la Wilson} \cite{wilson1971renormalization,wilson1971renormalization2} at the very beginning have been the basic tools for studying several problems in Constructive Quantum Field Theory, as the renormalization of $\phi^4_d$ theories \cite{gallavotti1985renormalization,polchinski1984renormalization,glimm2012quantum, glimm1973particle,guerra1976boundary} and the existence of the continuum limit of Quantum Theory models in $d=1+1$, as the Gross-Neveu model with $N>1$ colors \cite{gawedzki1985massless,feldman1986renormalizable}, or the massive Yukawa model \cite{lesniewski1987effective}.\\
In applying these methods to one dimensional fermionic system, one has to deal with the further difficulty given by the fact that the {\it theory is not asymptotically free}, but {\it it belongs to a class of models characterized by a vanishing beta function} (implying that a second order computation is not enough to recognize the nature of the flow of the effective coupling, but one has to exploit {\it non trivial cancellations at all orders in the renormalized expansion}).\\
Chronologically, Dzylaloshinskii and Larkin \cite{dzyaloshinskii1974correlation} first attacked the Tomonaga model \cite{tomonaga1950remarks} (not exactly solvable) performing a {\it non rigorous resummation} of the perturbative expansion after several uncontrolled estimates. Then, Metzner and Di Castro \cite{metzner1993conservation} correctly pointed out that the vanishing of the beta function, in multiplicative RG, follows from the Ward identities which, anyways, are {\it exactly true} only in the Luttinger model, not in {\it non solvable ones}.\\
Of course the natural next step is to push the understanding of these topics at a {\it rigorous level}. The Roman school gave an impressive contribution to the {\it construction of models with vanishing Beta function}: the starting point of a huge literature was the one-dimensional system of interacting non relativistic fermions in the continuum, studied in a seminal paper by Benfatto, Gallavotti, Procacci and Scoppola \cite{benfatto1993beta}, where the crucial property of vanishing beta function was proved by comparing this model with the exact solution of the Luttinger model (rigorous RG methods had already been used in attacking fermionic many-body theories in \cite{benfatto1990perturbation,feldman1992infinite}). Later, Benfatto and Mastropietro adapted the already mentioned ideas by Dzylaloshinskii, Larkin, Metzner, Di Castro to a {\it constructive RG approach}, and in doing that they had to overcome several technical problems. As a matter of fact, it is worth mentioning a series of papers in which, without any comparison with the exact solution of the Luttinger model, they proved the vanishing of the beta function \cite{benfattodensity,benfatto2001renormalization,benfatto2004ward, benfatto2005ward} overcoming well known problem due to the conflict between the Wilsonian RG and Ward Identities (basically, the Wilsonian RG breaks the local gauge invariance necessary to get Ward Identities). These techniques have been then used to study a variety of models belonging to the Luttinger universality class and, for some of these, to check the Kadanoff- Haldane predictions, \cite{giuliani2005anomalous, benfatto2009extended, benfatto2010universality,benfatto2010universal, benfatto2011drude, benfatto2014universality, benfatto2014universalityii, giuliani2015height, giuliani2016haldane}, {\it etc.}.\\
In light of these important achievements of the Renormalization Group methods, one naturally asks: what is missing to prove the {\it conformal invariance of the scaling limit of interacting non-solvable models?}
\paragraph{ Motivations of this thesis} Due to its {\it robustness} with respect to perturbations of {\it solvable models}, one is naturally tempted to use RG to {\it extend} the conformal invariance informations we have about {\it exactly-solvable systems} to {\it interacting, non solvable systems}. In this direction, a first step has been moved by Giuliani and Mastropietro \cite{giuliani2013universal}, who rigorously checked, for an {interacting Ising model on a torus} (so the system is {\it translation invariant}), the CFT prediction according to which, at the critical temperature, the finite size corrections to the free energy are universal (meaning that they are exactly independent of the interaction). Moreover, they showed that, as proposed by Affleck \cite{affleck1986universal} and Blote-Cardy-Nightingale \cite{blote1986conformal},
the central charge, defined in terms of the coefficient of the first subleading term to the free energy is constant and equal to $1/2$ for all $0<\lambda\leq \lambda_0$ where $\lambda_0$ is a small but finite convergence radius. Besides, it is worth mentioning \cite{giuliani2012scaling} where multipoint correlation functions are explicitly computed in the scaling limit in which the lattice spacing is sent to zero and the temperature at the critical one, in the case of a ferromagnetic Ising model weakly perturbed by a finite range perturbation. Anyway, if on the one hand these results confirm that the {\it energy-energy correlations} are in fact those predicted by {\it conformal field theories} and {\it bosonization}, on the other hand they are not enough to prove the {\it conformal invariance} of the scaling limit, since a control of the {\it boundary terms} is still missing.\\
Indeed, even though these papers must be considered as the starting point for a wider understanding of the conformal invariance of the interacting critical point, the rigorous contructive RG methods, which are the main tools used in those papers, built up so far are still based too heavily on the {\it translation invariance of the system}, that implies a lot of technical and conceptual simplifications. These considerations seem to identify the goal: adapting the {\it RG formalism} to the case of systems defined in non-trivial domains (hopefully a formalism independent of boundary conditions, as also Brydges suggests in \cite{brydges2007lectures}). In the context of one dimensional Fermionic systems, the simplest non trivial domain is the half-line.\\
In the last 20 years, encouraged by the possibility of realizing and performing measurements on the so called {\it quantum wires}, the theoretical physics community has been interested in trying to describe  finite one dimensional fermionic systems with open boundary conditions \cite{fabrizio1995interacting, meden2000luttinger,mattsson1997properties,grap2009renormalization}, predicting in fact that the boundary induces some {\it anomalous boundary critical exponent}. Besides, it is worth mentioning that a conceptually similar question is linked to two important problems: the one, that we will briefly comment in the conclusive chapter, is the Kondo effect, as pointed out in \cite{affleck1995conformal}.
The other one is the {\it Casimir effect} that, starting from the $1908$'s when a seminal paper by Symanzyk appeared \cite{symanzik1981schrodinger}, motivated a series of papers about $\phi^4_{4-\epsilon}$ theories in non trivial domains (properly in a {\it semispace}, meaning that the simplest possible non-trivial boundary is introduced in the theory) \cite{diehl1981field1, diehl1981field2, diehl1983universality,diehl1986field,diehl1994surface,diehl1997theory,diehl1998massive,dietrich1981critical,mattsson1997properties,cordery1981surface}, in which the basic strategy is to show that the boundary corrections are localized at the boundary and absorbed into a {\it boundary potential}. \\
We stress that when we say that the half-line is the {\it simplest non-trivial domain}, we mean that even being {\it simple to define} it already shows {\it non-trivial complications}: indeed, due to the presence of a boundary, the relevant and marginal terms that {\it naturally} are generated in the construction of the effective theory, respectively related to the {\it density} of the system and the {\it dressed density-density interaction}, are no more {\it running coupling constants}, but more in general they are {\it running coupling functions}.\\
Driven by the fact that, {\it well inside the bulk}, one expects to recover the predictions of the {\it translation invariant theory} (meaning that one expects to lose, at some point, memory of the boundary), an intuitive way to look at the contributions we are interested in, {\it i.e.} the quadratic and quartic terms of the effective theories we define in the RG procedure (being respectively {\it relevant} and {\it marginal} in a RG sense) is to split them into a  {\it bulk} and {\it boundary contributions}. One expects that the first ones are related to the {\it usual running coupling constants} appearing in the {\it analogous translation invariant theory}, while the {\it boundary contributions}, by construction, {\it keep memory} of the boundary. The main technical result is that, in fact, the boundary terms have a {\it dimensional gain}, in the sense of $L_1$ norms, with respect to the {\it bulk} contributions. This dimensional gain is enough to conclude that the {\it boundary correction to the quartic terms} are in fact {\it irrelevant}; unfortunately, on the other hand it is not enough to {\it renormalize the} quadratic contributions, that deeply modify the {\it effective theory}. \\
Of course all these intuitions have to be {\it quantified in a mathematically meaningful way}, so the question we ask is: are we able to make {\it quantitative} the {\it intuitive notions} of {\it nearby the boundary} and {\it well inside the bulk}? In order to do that, it is necessary to control the {\it quadratic terms} that, as just commented, give rise to {\it running coupling functions} instead of running coupling constants. In this thesis we show that it is possible to find a convergent expansion for termodynamic functions provided we choose a suitable {\it quadratic counterterm algebraically localized at the boundary}, whose decay law seems to be compatible with a {\it space-dependent correction to the critical exponents}.

\section{The model and the main result}

We are interested in constructing the ground state of interacting spinless fermions living in a discrete one-dimensional box of mesh size $a=1$ and volume $L\gg 1$ with {\it open boundary conditions}, meaning that the system is defined on a segment instead of on a torus. \\ 
Let $\mathcal F=\oplus_{n=0}^\infty H^{\wedge n}$ be the standard antisymmetric (fermionic) Fock space, where $\wedge$ denotes the antisymmetric tensor product, and let  $\psi^\pm_x$ be the {\it fermionic creation and annihilation} operators defined on $\mathcal F$, where $x$ is the space coordinate and $\Lambda:=\left\{x\in\mathbb Z: 1\leq x\leq L\right\}$,  $L\in \mathbb N$. We introduce the Hamiltonian
\begin{equation}
H=H_0+\lambda V+ \varpi \mathcal N,
\end{equation}
where
\begin{equation}
\begin{split}
H_0&=T_0-\mu_0 N_0,\\
T_0&=\sum_{x\in\Lambda}\psi^+_x\left(-\Delta^d  \psi^-_x\right)=\sum_{x\in\Lambda}\frac{1}{2}\left(-\psi^+_{x+1}\psi^-_x-\psi^+_{x-1}\psi^-_x+2 \psi^+_x\psi^-_x\right),\\
N_0&=\sum_{x\in \Lambda}\psi^+_x\psi^-_x,
\end{split}
\end{equation}
where, in the formula of $T_0$, we have to interpret $\psi^{\pm}_0=\psi^{\pm}_{L+1}=0$, where $\mu_0$ is the chemical potential, choosen in such a way that, if we call $\sigma(T_0):=[e_-, e_+]$ the spectral band of the kinetic operator, $\mu_0\in [e_-+\kappa, e_+-\kappa]$ for some $\kappa>0$ fixed once for all.\\
Morover the interaction of {\it strenght} $\lambda$ is 
\begin{equation}
V=\sum_{x,y \in\Lambda}\psi^+_x\psi^-_xv(x,y)\psi^+_y\psi^-_y
\end{equation}
where $v(x,y)=v(y,x)$ is a real, compactly supported function, and satisfies what we call {\it Dirichlet property}, {\it i.e.} it can be written as
\begin{equation}
v(x,y)=\frac{2}{L+1}\sum_{k \in \mathcal{D}^d_{\Lambda}}\sin(kx)\sin(ky)\hat v(k),
\end{equation}
where $\mathcal D_\Lambda^d=:\left\{k=\frac{n\pi}{L+1}, n=1,\dots,L\right\}$. We stress that the {\it Dirichlet property} of $v( x, y)$ is not crucial at all but it simplifies some technical aspects of the proof.\\
 Finally, $\mathcal N$ is a {\it boundary counterterm} of size $\varpi=\mathcal O(\lambda)$ of the form
\begin{equation}
\mathcal N =\sum_{x,y\in \Lambda}\psi^+_x\psi^-_y \pi(x,y),
\end{equation}
where $\pi(x,y)$ is a Hermitian matrix such that $\sup_{x\in \Lambda}\int dy |\pi(x,y)|=1$.\\
We present here the main result: let $\beta\geq 0$ be the {\it inverse temperature} defining the {\it finite volume specific free energy}
\begin{equation}
f_{\Lambda,\beta}=-\frac{1}{|\Lambda|\beta}\log \left(Tr \left(e^{-\beta H}\right)\right),
\end{equation}
and respectively
\begin{equation}
f_{\Lambda}=-\frac{1}{|\Lambda|}\lim_{\beta\nearrow \infty}\frac{1}{\beta}\log \left(Tr \left(e^{-\beta H}\right)\right),\hspace{3mm} f_{\infty}=-\lim_{|\Lambda|\nearrow \infty}\frac{1}{|\Lambda|}\lim_{\beta\nearrow \infty} \frac{1}{\beta}\log \left(Tr \left( e^{-\beta H}\right)\right),
\label{definition_free_energies_finite_infinite_volume}
\end{equation}
we can state the main result.
\begin{thm}
\label{theorem_main_introduction}
In this framework, there exists a radius $\lambda_0>0$ such that, for any $|\lambda|\leq \lambda_0$ it is possible to fix the {\it boundary defect} $\pi(x,y)$ and its strenght $\varpi=\varpi(\lambda)$ in such a wat that, for any $\theta\in (0,1)$, there exists a constant $C_\theta$ such that 
\begin{equation}
\sum_{y\in\Lambda} \left|\pi(x,y)\right| \leq C_\theta \left(\frac{1}{\left(1+|x|\right)^\theta}+\frac{1}{\left(1+|L-x|\right)^\theta}\right),
\end{equation}
and in such a way that the finite volume specific ground state free energy $f_\Lambda$ admits a convergent expansion in $\lambda$ and $\varpi$, uniformly in $\Lambda$.\\ 
Moreover
\begin{equation}
\left| f_\Lambda-f_\infty \right|\leq |\lambda|\frac{C_\theta}{L^\theta}.
\end{equation}
\end{thm}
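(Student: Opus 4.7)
The plan is to derive a multiscale Grassmann representation for $Tr(e^{-\beta H})$ and then perform a Wilsonian Renormalization Group analysis adapted to the presence of the Dirichlet boundary. First I would rewrite the partition function as a finite volume Grassmann functional integral on $\Lambda \times [0,\beta)$ by the standard Matsubara/Trotter procedure. The bare propagator is the Dirichlet Green function of the free operator; using the sine basis $\mathcal D^d_\Lambda$ it splits canonically into a translation invariant \emph{bulk} piece and a \emph{boundary} (mirror) piece that decays polynomially with the distance from the two endpoints. I would then perform a UV/IR separation and a dyadic multiscale decomposition of the IR propagator down to a smallest scale $h^* \sim -\log\min(L,\beta)$, each single-scale propagator inheriting the same bulk/boundary decomposition.

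Second, I would run the effective theory iteratively, defining $V^{(h)}$ as the effective potential after integrating scales above $h$ and $\mathcal L$ as the localization operator extracting the relevant and marginal parts. The marginal quartic flow is controlled by the vanishing-beta-function mechanism of the Luttinger universality class, exactly as in the translation invariant construction, provided the bulk running coupling constants are properly compared with the Luttinger model flow. What is genuinely new here is the relevant quadratic local part $\sum_{x,y}\psi^+_x \psi^-_y W^{(h)}_2(x,y)$: it is no longer a single number but a \emph{running coupling function}. I would show that it decomposes as $W^{(h)}_2(x,y) = \nu_h\,\delta(x,y) + W^{(h)}_{2,\partial}(x,y)$, where $\nu_h$ is the bulk relevant coupling (tuned to zero by a renormalization of $\mu_0$) and the boundary kernel satisfies, uniformly in $h$ and $\Lambda$,
\begin{equation*}
\sup_{x\in \Lambda}\sum_{y\in \Lambda}|W^{(h)}_{2,\partial}(x,y)| \leq C_\theta\bigl((1+|x|)^{-\theta}+(1+|L-x|)^{-\theta}\bigr).
\end{equation*}
The proof of this bound relies on the polynomial decay of the boundary part of every single-scale propagator and on Gallavotti--Nicol\`o tree estimates with Gram/Brydges--Battle--Federbush determinant bounds, together with the inductive hypothesis that the pair $(\pi,\varpi)$ has the same decay.

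Third, the counterterm $\varpi\pi(x,y)$ must be chosen so as to exactly cancel $\sum_{h \leq 0} W^{(h)}_{2,\partial}$, which is a self-consistency equation because $\pi$ itself enters every $W^{(h)}_{2,\partial}$. I would set this up as a fixed point problem in the Banach space $\mathcal B_\theta$ of Hermitian kernels with weighted norm
\begin{equation*}
\|K\|_\theta = \sup_{x\in\Lambda} w_\theta(x)^{-1}\sum_{y\in\Lambda}|K(x,y)|, \qquad w_\theta(x)=(1+|x|)^{-\theta}+(1+|L-x|)^{-\theta},
\end{equation*}
and, by the uniform-in-$\Lambda$ bounds of the previous step, exhibit the associated map as a contraction on a ball of radius $O(\lambda)$ when $|\lambda|\leq\lambda_0$. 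Its fixed point yields the claimed $\pi$ and $\varpi$, and the convergence of the renormalized tree expansion for $f_\Lambda$ follows. The main obstacle I anticipate is precisely this step: one has to simultaneously (i) close the self-consistency equation for $\pi$ in the polynomial weight $w_\theta$, (ii) preserve the vanishing-beta-function argument for the bulk flow in the presence of the nonlocal boundary counterterm, and (iii) maintain the determinant-type bounds on the single-scale kernels; this is also the step that forces the restriction $\theta < 1$ rather than $\theta=1$, since in the critical case summing $(1+|x|)^{-1}$ against a linear number of boundary positions produces a logarithmic divergence not absorbable by the fixed point argument.

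Finally, the bound $|f_\Lambda - f_\infty| \leq |\lambda| C_\theta L^{-\theta}$ follows by comparing the tree expansions for $f_\Lambda$ and $f_\infty$ term by term. A Gallavotti--Nicol\`o tree contributes to the difference only if its spatial support meets the boundary region, either through a factor of $\pi$ or through the boundary part of at least one propagator; in either case one extracts a factor $w_\theta(x)$ at the root position, while the purely bulk contribution reproduces $f_\infty$ and cancels. Summing the root over $\Lambda$ gives $\sum_{x=1}^L w_\theta(x) \leq C_\theta L^{1-\theta}$ for $\theta\in(0,1)$, and dividing by $|\Lambda|=L$ in the definition of the specific free energy yields the announced $L^{-\theta}$ decay, with the overall factor $|\lambda|$ reflecting that the noninteracting contribution is already bulk and produces no difference between $f_\Lambda$ and $f_\infty$ at leading order.
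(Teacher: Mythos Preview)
Your overall strategy matches the paper's: Grassmann representation, bulk/boundary splitting of the Dirichlet propagator into $g_P+g_R$, multiscale IR analysis, vanishing-beta-function control of the bulk flow, and a Banach fixed point for the boundary counterterm. However, your treatment of the quadratic effective kernel has a gap. You write $W^{(h)}_2(x,y)=\nu_h\,\delta(x,y)+W^{(h)}_{2,\partial}(x,y)$ with $\nu_h$ the single bulk relevant coupling, but the bulk quadratic part also contains the marginal pieces $z_h(-ik_0)+a_h v_0 k'$ that must dress the propagator and generate the wave function renormalization $Z_h$. These cannot sit inside $W^{(h)}_{2,\partial}$, since they are translation invariant and do not decay like $w_\theta(x)$; nor can they be dropped, since without the dressing the bulk flow is not the Luttinger one you invoke. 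The obstruction the paper isolates is that after a single RG step $W^{(h)}_2$ is \emph{not} diagonal in the sine basis $\mathcal D^d_\Lambda$, so there is no direct way to read off a Dirichlet-compatible bulk piece to feed into the measure.

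The paper's resolution is a preliminary ``bulk localization'' $\mathcal L_{\mathcal B}$ that \emph{constructs} a sine-diagonal kernel $W^{d(h)}_2$ from $W^{(h)}_2$: one selects the trees built only from $P$-propagators and no $\varpi$-endpoints, symmetrizes over sign flips of the internal space coordinates to obtain a $2(L{+}1)$-periodic translation-invariant $\bar W^{(h)}_2$, and then antisymmetrizes to recover $\bar W^{(h)}_2(x-y)-\bar W^{(h)}_2(x+y)$. On this piece the usual Taylor localization $\mathcal L_{\mathcal T}$ extracts $n_h,z_h,a_h$ and the dressing proceeds exactly as in the periodic case; on the remainder $\mathcal R_{\mathcal B}W^{(h)}_2$ only a time-localization $\tilde{\mathcal L}_{\mathcal T}$ is applied, producing the nonlocal running coupling function $\varpi_h(x,y)$, and the fixed point is set up for the whole sequence $\{\varpi_h\}_{h\le 1}$ rather than for $\pi$ alone. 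With this in place, your remaining steps---the weighted-norm contraction and the $L^{-\theta}$ bound from $\frac{1}{L}\sum_{x} w_\theta(x)$---go through essentially as you describe.
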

Even though it is not explicitly investigated in this thesis, we stress that a straightforward extension of the proof of this theorem would allow one to control the boundary corrections at finite volume also for the correlation functions.

\section{The outline of the proof}

 \paragraph{Multiscale decomposition} The proof relies on a multiscale analysis of the model, in which the free energy and Schwinger functions are expressed as successive integrations over individual scales. To define a multiscale decomposition, we refer to momentum space, in which each scale is defined as a set of momenta $\bm k$’s contained inside an annulus at a distance of $2^h$ for $h\in\mathbb Z$ around the singularities located at the Fermi points. The positive scales correspond to the ultraviolet regime, that we do not study in detail, referring to \cite{benfatto1993beta}. The negative scales contain the essential difficulties of the problem, whose nature is intrinsically infrared.
 \paragraph{Presence of a non trivial boundary} Physically, the presence of a non trivial boundary induces, obviously, the breaking of translation invariance (so of the momentum conservation): one expects that, very far from the boundary, the bulk {\it i.e. correlation functions} tend to the translation invariant one while, going closer and closer to the boundary, one expects some non trivial boundary effect. Despite the conceptual immediacy of this difference between the physics in presence (or in absence) of a boundary, it is an hard problem to deal with from a technical point of view. Indeed, an important symmetry which most RG methods are based on is the {\it invariance of boundary conditions under RG iterative step}: starting with periodic boundary conditions, the integration of {\it a single scale degrees of freedom} gives back an effective theory having exactly the same boundary conditions as the original one, so it is {\it immediately true} that we are dealing with a {\it selfsimilar theory}; as it will be clear later, in the Dirichlet boundary condition case (and it would be the same for any {\it non translation invariant boundary conditions}) the very first integration is enough to give us an effective theory whose quadratic term is no longer diagonal in the {\it Dirichlet basis}, so it is not sufficient to iterate the {\it rescaling and dressing} process, as one {\it usualy would do} to renormalize a theory whose boundary conditions are {\it invariant under Renormalization Group procedure}.
\paragraph{The main idea} So far, we cannot renormalize the theory without the counterterm $\mathcal{N}$ we introduced in the definition of the model. Indeed, the idea will be to keep as a reference a theory with DBC. Technically, the first step is to recognize that the propagator of the model defined on a box with Dirichlet boundary conditions can be written as a linear combination of propagators of a model on a suitably defined box with periodic boundary conditions, computed respectively in the difference of the arguments (translation invariant part) and in the sum of them (remainder). So, in evaluating the Feynman diagrams coming from the fermionic Wick rule, we will follow the following steps:
\begin{itemize}
\item {\bf Dimensional analysis} Being the bulk contribution the dominant one, a naive dimensional analysis would have the same result of the translation invariant case, so the only {\it problematic terms} will be the quartic (marginal) and quadratic (relevant) operators. After a deeper analysis, one can recognize that the presence of a {\it remainder propagator} improve by {\it one scaling dimension} (this terminology will be clear later) the $L_1$  norm of the values of the graphs; so first of all the flow of quartic terms is reduced to the flow of the {\it translation invariant quartic terms} ({\it i.e.} it is the same flow of the bulk theory). On the other hand, this dimensional gain is not enough to renormalize the quadratic term, so we must do something more.
\item {\bf Dirichlet part extraction and dressing of the propagator} The idea is to redefine the {\it localization operator}, in order to, first of all, extract a bulk quadratic term diagonal in the Dirichlet basis to dress the propagator with, bringing the theory back to the well known formalism of the {\it translation invariant case}, and then to extract the relevant and marginal parts.
\item {\bf Tuning the counterterm} In addition to the bulk relevant and marginal terms, our procedure identifies a marginal, boundary quadratic term, whose divergent part is controlled by the counterterm $\varpi \mathcal N$ that we introduced in the Hamiltonian. The counterterm $\varpi\mathcal N$, that physically reflects the breaking of translation invariance of the theory, will be fixed by studying the flow of {\it coupling functions} (no more constants), whose presence is due to the boundary, {\it via a fixed point argument}.
\end{itemize}

The thesis is organized as follows: since conceptually we will refer to the {\it usual way} to perform RG on translation invariant models, first of all we will give a review about how to deal with a one dimensional system of interacting spinless fermions on a periodic lattice; then, we will be able to explain the new ideas arising in the presence of the boundary.\\
In particular,
\begin{itemize}
\item in Chapter (\ref{chapter_fermions_PBC}) we review the RG approach to translationally invariant spinless 1D systems. More precisely:
\begin{itemize}
\item in Section (\ref{section_PBC_the_model}) we define the model, the observables we are interested in and we state the main result of Chapter (\ref{chapter_fermions_PBC}),
\item in Section (\ref{section_1_pert_theory}) we first show the failure of the {\it naive perturbation theory} in computing the {\it specific free energy}, due to two different problems:
\begin{itemize}
\item the sum over all the perturbative orders diverges because of a {\it too big number} of Feynman diagrams involved in the expansion,
\item the infinite volume limit does not exists, since the rough bounds we obtain by naive perturbative estimates are not uniform in the cut-offs.
\end{itemize}
In Subsection (\ref{subsection_determinant_expansion}) we solve one of the two problems, the combinatorial one, by showing the so called {\it determinant expansion}. To solve the other problem it is necessary a multiscale analysis
\item in Section (\ref{section_multiscale_analysis}) we show the {\it multiscale analysis} of the theory, stressing in particular its hierarchical structure that allows us to represent the observables we are interested in in terms of the so called {\it Gallavotti-Nicolò} trees. Besides, we use the multiscale expansion to identify, in RG language, the {\it sources of the divergences}.
\item in Section (\ref{subsection_renormalization_group_PBC}) we explain how to prove, using RG methods, that we can express the specific free energy as a convergent series in the size of the interaction, if $\lambda$ is small enough.
\end{itemize}
\item Chapter (\ref{chapter_Interacting_fermions_on_the_half_line}) contains the new results of this thesis, in particular we prove the main Theorem (\ref{theorem_main_introduction}):
\begin{itemize}
\item in Section (\ref{the_model_DBB}) we present the model and we recall the main result,
\item in Section (\ref{section_the_interacting_case_DBC}) we perform a multiscale expansion of the thermodynamic observables of the system,
\item in Section (\ref{section_Non-renormalized expansion and properties of kernels}) we identify the source of the divergences by a non-renormalized analysis, and we extract the bulk contributions from the quadratic and the quartic terms of the effective potential,
\item in Section (\ref{section_renormalization_group_DBC}), in order to prove the main theorem, we show in a series of technical Lemmata how the presence of non-translation invariant elements improves the dimensional bound of the kernels; finally, we prove the main theorem.
\end{itemize}
\item in Chapter (\ref{chapter_conlcusion}) we draw the conclusions of this thesis:
\begin{itemize}
\item in Section (\ref{section_summary}) we summarize the result of this work, and we comment some possible and simple improvement of the bounds that can easily be reached,
\item in Section (\ref{section_outlook}) we present some very general ideas we would like to explore in more detail in order to approach the main goal of studying the theory without boundary counterterms.
\end{itemize}
\end{itemize}

\chapter{Interacting fermions on the line}
\label{chapter_fermions_PBC}

In this chapter, the main goal is to introduce the reader to the study,  via rigorous constructive Renormalization Group techniques, of one dimensional interacting Fermi systems. It is important to stress that nothing new will be shown (we will present in detail the new result in the following chapter) but, especially for a reader not familiar with RG, it will be explained how to {\it construct the ground state} of a model describing spinless fermions living on a one dimensional lattice, where the only perturbation to the free {\it hopping} Hamiltonian is a {\it weak} density-density interaction. 
For a more detailed review of RG applied to 1D fermionic systems, we refer to \cite{gentile2001renormalization}. In this chapter we will give a self-consistent presentation of the main ideas of the construction of 1D fermions in the translationally invariant case, since this will serve as reference theory for the construction of the theory on the half-space, discussed in Chapter 3.\\
Before starting, it worths doing two comments on how the technical assumption we will do reflect on the physics we are interested in:
\begin{itemize}
\item {\bf Fermions on a lattice} Thanks to this assumptions we have a natural ultraviolet cut-off (which is the mesh size of the lattice), by which we get rid of the ultraviolet divergences. Physically, assuming that the electrons can move only on a lattice corresponds to thinking the electrons as localized on atomic sites of a crystal, and by the hopping Hamiltonian we let them move to the nearest neighbor atoms. 
\item {\bf Periodic boundary conditions} As we already mentioned in the previous introducting chapter, after some decades of impressive work, nowadays the theory of RG is well developed, and a lot of important and fundamental results have been proven under the assumption of {\it translation invariance}. On the one hand, it is true that a lot of technical simplifications come from this assumption (as it will be clear by comparing this chapter with the next one), but on the other hand it is important to underline that this assumption is quite satisfactory as long as one is interested in the bulk properties of the model (which, in the case of condensed matter, translates into asking what happens very far from the boundaries of the crystal we have in the lab, driven by the idea that, being the {\it size} of particles much smaller than the distance from the boundary, a model without boundaries is a good model for the bulk behavior for system).
\end{itemize}
In the following we introduce all the necessary technical and theoretical tools, whose definition will be extended in the following to the case of Dirichlet boundary conditions. 

\section{The model}
\label{section_PBC_the_model}
\subsection{Definition and main result}

\paragraph{The Hamiltonian}

We are interested in constructing the ground state of interacting spinless fermions living in a discrete one-dimensional box of step $a=1$ and size $L\gg 1$. In particular, we perturb by a {\it weak} density- density interaction an integrable Hamiltonian describing non interacting fermions hopping to the nearest neighbouring sites in a box $\Lambda$ with periodic boundary condition (PBC), imposed by identifying the two extremal sites.\\
Let $\mathcal{F}=\oplus_{n=0}^{\infty}H^{\wedge n}$ be the standard antisymmetric fermion Fock space, where $\wedge$ denotes the antisymmetric tensor product and let $\psi_{x}^{\pm}$ be the {\it fermionic creation or annihilation} operators defined on $\mathcal{F}$, where $x$ is the spatial coordinate. Let us consider the discrete box $\Lambda:=\{x\in\mathbb{Z}: -\lfloor L/2\rfloor \leq x\leq \lfloor (L-1)/2 \rfloor\}$, and the gran-canonical Hamiltonian

\begin{equation}
H=H_0+\lambda V,
\label{hamiltonian_PBC}
\end{equation}
where
\begin{equation}
\begin{split}
H_0&=T_0-\mu_0N_0,\\
T_0&=\sum_{x\in\Lambda}\frac{1}{2}\left(-\psi_x^+\psi_{x+1}^- -\psi_x^+\psi_{x-1}^-+2\psi_x^+\psi_x^-\right),\\
N_0&=\sum_{x\in\Lambda}\psi_x^+\psi_x^-,
\end{split}
\label{free_hamiltonian_PBC}
\end{equation}
where $\mu_0$ is the chemical potential, choosen in such a way that, if we call $\sigma(T_0):=[e_-, e_+]$ the spectral band of the kinetic operator, $\mu_0\in [e_-+\kappa, e_+-\kappa]$ for some $\kappa>0$ fixed once for all;
\begin{equation}
V=\sum_{x,y\in\Lambda}\psi_x^+\psi_x^- v(x-y) \psi^+_y\psi_y^-,
\label{interaction_PBC}
\end{equation}
$v(x-y)$ is a {\it compactly supported} function: $V$ is a so-called {\it density-density interaction}, being $\psi^+_x\psi^-_x=:n_x$ the density operator in $x$.

\paragraph{Specific free energy, Schwinger functions and the main theorem}

The main goal of this chapter is to compute the {\it specific free energy}, defined as
\begin{equation}
f_{\Lambda,\beta}:=-\frac{1}{\beta |\Lambda|}\log\left(Tr \left(e^{-\beta H}\right)\right)
\label{free_energy_specific_PBC}
\end{equation}
where $\beta$ is the inverse temperature (so in order to construct the ground state energy we are interested in the {\it zero temperature limit} $\beta\to \infty$). We are also interested in the {\it finite temperature imaginary time correlation functions, or {\it Schwinger functions,}} at finite temperature $T=\beta^{-1}$, defined as
\begin{equation}
S_{\Lambda,\beta}(\bm x_1, \epsilon_1;\dots;\bm x_m, \epsilon_m):=\left< \psi^{\epsilon_1}(\bm x_1)\dots \psi^{\epsilon_n}(\bm x_m)\right>_{\Lambda, \beta}:=\frac{Tr\left( e^{-\beta H}\bm T \left( \psi^{\epsilon_1}(\bm x_1)\dots \psi^{\epsilon_m}(\bm x_m)\right)\right)}{Tr\left( e^{-\beta H}\right)}
\label{schwinger_function_n_points_PBC}
\end{equation}
where $\epsilon_i\in \{\pm\}$ for $i=1,\dots, m$ and $\bm T$ is the {\it Fermionic time ordering operator}, and where we have introduced a collection $\left\{t_1,\dots,t_m\right\}$ of {\it time variables} such that $t_i\in \left[0,\beta\right)$ $\forall i=1,\dots,m$.\\
The main strategy to compute these quantities will be to derive {\it convergent expansions} for both $f_{\Lambda,\beta}$ and $S$, uniformly in the volume $|\Lambda|$ and in the inverse temperature $\beta$, and then to take the {\it infinite volume} and the {\it zero temperature} limits: $|\Lambda|\to \infty$ first,then $\beta\to \infty$.\\
We will describe in detail how to compute the {\it specific ground state energy}, in particular how to prove the following theorem.
\begin{thm}
\label{theorem_free_energy_analyticity_PBC}
In this framework, there exists a radius $\lambda_0>0$ such that for each $\lambda\leq |\lambda_0|$ the specific ground state energy 
\begin{equation}
f:=\lim_{\beta\nearrow \infty}\lim_{|\Lambda|\nearrow \infty}\left[-\frac{1}{|\Lambda| \beta}\log\left( Tr \left({e^{-\beta H}}\right)\right)\right],
\end{equation}
exists uniformly in $|\Lambda|$ and $\beta$, an it is an analyitic function of $\lambda$.
\end{thm}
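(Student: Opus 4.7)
The plan is to rewrite the partition function as a Grassmann functional integral and control it by a multiscale Renormalization Group expansion. First I would pass from the operator trace $Tr(e^{-\beta H})$ to a Grassmann integral over fields $\psi^{\pm}_{\bm x}$ with $\bm x=(x,x_0)\in\Lambda\times[0,\beta)$, so that $\beta|\Lambda| f_{\Lambda,\beta}=-\log\int P(d\psi)e^{-V(\psi)}$ where $P(d\psi)$ is the Gaussian measure with free propagator $g(\bm x-\bm y)$. In momentum space, $\hat g(\bm k)$ is singular only at the two Fermi points $\pm p_F$ determined by $\mu_0$; accordingly I would introduce a smooth partition of unity in a neighborhood of the singularity and split $g=\sum_{h\leq 0}g^{(h)}$ with $g^{(h)}$ supported on momenta at distance $\sim 2^h$ from $p_F$, plus an ultraviolet piece for $h\geq 0$ which I would dispose of following \cite{benfatto1993beta}.

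Next I would integrate out scales one at a time, starting from the ultraviolet, producing a sequence of effective potentials $V^{(h)}(\psi^{(\leq h)})$ whose kernels I would expand as a sum over Gallavotti--Nicol\`o trees. The two standard diseases of naive perturbation theory are addressed as follows: the combinatorial $n!$ proliferation of Feynman graphs is cured by the Battle--Brydges--Federbush--Kennedy determinant formula for truncated expectations, whose Gram--Hadamard bound gives a bare $C^n$ estimate; the infrared divergence is cured by the scale decomposition, since each $g^{(h)}$ satisfies $\|g^{(h)}\|_\infty\leq C\, 2^h$ and $\|g^{(h)}\|_1\leq C\, 2^{-h}$, which is what feeds into the power counting.

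The dimensional analysis on the tree expansion assigns to each vertex of the effective potential a scaling dimension $D=2-n/2$, where $n$ is the number of external legs. Hence quartic terms are marginal and quadratic terms are relevant, while all higher monomials are irrelevant and automatically summable. I would then introduce a localization operator $\mathcal L$ that splits the local parts of these dangerous terms from their irrelevant remainders, absorb the relevant quadratic contribution into a renormalization of the chemical potential $\mu_0\mapsto\mu_0+\nu_h$, and introduce a running quartic coupling $\lambda_h$. A standard fixed point argument in the Banach space of sequences $\{\nu_h\}_{h\leq 0}$ with weight $2^{\theta h}$ fixes the counterterm so that the flow of $\nu_h$ stays bounded (in fact, $|\nu_h|\leq C|\lambda|2^{\theta h}$ for any $\theta<1$).

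The main obstacle is then the flow of the marginal coupling $\lambda_h$. A naive bound gives $|\lambda_h-\lambda_{h+1}|\leq C\lambda_h^2$, which already controls $|\lambda_h|$ only up to a logarithmic scale and therefore does not yet yield analyticity uniformly in $h$. The key nontrivial input is the \emph{vanishing beta function} property of the Luttinger universality class: one shows that the beta function $\beta_h(\lambda_h,\lambda_{h+1},\dots,\lambda_0)$, once the corresponding object for the reference Luttinger model is subtracted, decays as $2^{\theta h}$, so that $\lambda_h$ stays within a small neighborhood of $\lambda$ for all $h\leq 0$. This relies either on an exact comparison with the bosonized solution of the Luttinger model \`a la Benfatto--Gallavotti--Procacci--Scoppola, or on a constructive implementation of the Ward identities \`a la Benfatto--Mastropietro. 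Once this bound is in place, summing the tree expansion yields analyticity of $\beta|\Lambda|f_{\Lambda,\beta}$ in $\{z:|z|\leq\lambda_0\}$ with bounds uniform in $|\Lambda|$ and $\beta$; the two thermodynamic limits are then taken term by term in the convergent series, giving the claimed analyticity of $f$.
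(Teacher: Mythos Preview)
Your proposal is correct and follows essentially the same route as the paper's Chapter~2: Grassmann representation, multiscale decomposition around the Fermi points, Gallavotti--Nicol\`o trees with the BBFK determinant formula and Gram--Hadamard bound, localization of the two- and four-leg kernels, a fixed-point choice of the $\nu$ counterterm, and control of the marginal flow via the vanishing of the Luttinger beta function. One point you glossed over: the localized quadratic kernel produces not only the relevant counterterm $\nu_h$ but also two \emph{marginal} pieces $z_h$ and $a_h$ (the coefficients of $-ik_0$ and $v_0 k'$), which the paper absorbs into the Gaussian measure as a wave-function renormalization $Z_h$ and an additional running coupling $\delta_h=(Z_h/Z_{h-1})(a_h-z_h)$; the flow of $\delta_h$ is then tamed by the same vanishing-beta-function mechanism you invoke for $\lambda_h$.
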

\begin{rem}
A modification of the expresion behind the proof of Theorem (\ref{theorem_free_energy_analyticity_PBC}) allows one to compute the Schwinger functions, see \cite{gentile2001renormalization}, Section 12.
\end{rem}
\subsection{Free Hamiltonian diagonalization and free propagator}
\label{subsection_free_propagator}

It is straightforward to check that the {\it free Hamiltonian} $H_0$ can be diagonalized in Fourier space by defining
\begin{equation}
\hat \psi^{\pm}_k = \sum_{x\in\Lambda}e^{\mp ik x}\psi^{\pm}_x,
\label{fourier_transform_creation_annihilation_PBC}
\end{equation}
where $k\in\mathcal D_\Lambda$, 
\begin{equation}
\mathcal{D}_\Lambda=\left\{k=2\pi n/L, n\in\mathbb{Z}, -[L/2]\leq n \leq [(L-1)/2] \right\}.
\label{dual_space_PBC}
\end{equation}
and the operator $\hat \psi^+_k/\hat \psi^-_k$ creates/annihilates a spinless electron with momentum k, so that 
\begin{equation}
H_0=\frac{1}{|\Lambda|}\sum_{k\in\mathcal D_\Lambda}\hat\psi^+_k e(k)\hat \psi^-_k,
\label{H_0_PBC_diagonal}
\end{equation}
where $e(k)=1-\cos k-\mu_0$ is called the {\it dispersion relation} defined in $\mathcal D_\Lambda$.\\
It worths noting that when $L\to \infty$, $\mathcal{D}_L\to [-\pi, \pi]$, so in the infinite volume limit there are two points, let us call them $\pm p_F$, such that $e(\pm p_F)=0$ since $\mu_0\in [e_-+\kappa,e_+-\kappa]$.

\paragraph{Free propagator}
The non interacting model, {\it i.e.} the model described by the free Hamiltonian $H_0$, is exactly solvable and all the Schwinger functions can be computed, by the anticommutative {\it Wick rule}, starting from the {\it two point Schwinger function}, also known as {\bf propagator}, that can be explicitly computed starting from the definition, we refer to \cite{benfatto1995renormalization}. \\
Let us recall that $\psi_x^{\pm}=\frac{1}{|\Lambda|}\sum_{k\in\mathcal D_{\Lambda}} e^{\pm ikx}\hat \psi_k^{\pm}$ for any $x\in\Lambda$, and if we call $\bm x=(x,x_0)$, $\bm y=(y,y_0)$, $\bm k=(k,k_0)$ the evolution in time of the operators is $\psi^{\pm}_{\bm x}=e^{H_0 x_0}\psi^{\pm}e^{-H_0x_0}$. Recalling that $\left<\cdot\right>_{\Lambda,\beta, 0}=Tr\left(e^{-\beta H_0}\cdot\right)/Tr(e^{-\beta H_0})$, we can compute, for any $-\beta < x_0-y_0 \leq \beta$,
\begin{equation}
\begin{split}
\left<\bm T\left(\psi^-_{\bm x}\psi^+_{\bm y}\right)\right>_{\Lambda,\beta, 0}=\frac{1}{|\Lambda|}\sum_{k\in\mathcal D_{\Lambda}}e^{-ik(x-y)}\cdot\\
\cdot \left[\theta(x_0-y_0)\frac{e^{-(x_0-y_0)e(k)}}{1+e^{-\beta e(k)}}-\theta(y_0-x_0)\frac{e^{-(x_0-y_0+\beta)e(k)}}{1+e^{-\beta e(k)}}\right]
\end{split}
\end{equation}
there $\theta(\cdot )$ is the Heaviside step function. The latter formula is a priori defined only for $-\beta < x_0-y_0\leq \beta$, but we can extend it periodically over the whole real axis: the periodic extension is continuous in $x_0-y_0\notin \beta \mathbb Z$, while it has jump discontinuities at $x_0-y_0\in\beta \mathbb Z$ (the jump height is equal to $(-1)^n\delta_{x,y}$ if $x_0-y_0=\beta n$), so if we define 
\begin{equation}
\mathcal{D}_{\beta,M}:=\left\{k_0:=\frac{2(n+1/2)\pi}{\beta}, n\in\mathbb{Z}, -M\leq n \leq M-1\right\},
\label{momenta_space_time}
\end{equation}
we get
\begin{equation}
S^0_{L,\beta}(\bm x,-;\bm y,+):= g(\bm x-\bm y)=\frac{1}{\beta L}\lim_{M\to \infty}\sum_{k \in\mathcal{D}_{\Lambda}}\sum_{k_0\in\mathcal{ D}_{\beta,M}}e^{i\delta_Mk_0}e^{-i\bm k\cdot (\bm x-\bm y)}\hat g(\bm k),
\label{free_propagator_PBC}
\end{equation}
where $D_\Lambda$ has been defined in (\ref{dual_space_PBC}), while $M$ is a suitable cut-off to be removed at the very end (of course the scheme will be to get bound independent of the cut-off $M$, and finally to take the limit $M$ to infinity), and
\begin{equation}
\hat{g}(\bm k):=\frac{1}{-ik_0+e(k)},
\label{free_propagator_momenta}
\end{equation}
where $e(k)$ is the {\it dispersion relation} already defined in (\ref{H_0_PBC_diagonal}). From now on, we will use $\bm k \in \mathcal{D}_{\Lambda,\beta,M}$ to denote $(k,k_0)\in\mathcal{D}_{\Lambda}\times\mathcal{D}_{\beta,M}$.
The constant $\delta_M=\beta/\sqrt{M}$ is introduced in order to take correctly into account the discontinuity of the propagator $g(\bm x-\bm y)$ at $\bm x=\bm y$, where it has to be defined as $\lim_{x_0\to 0^-}g(0,x_0)$, in fact the latter definition guarantees that $\lim_{M\to \infty}g_M(\bm x-\bm y):=g(\bm x-\bm y)$ for $\bm x\neq\bm y$, while $\lim_{M\to \infty}g_M(\bm 0):=g(0,0^-)$ at equal points. 

\section{Perturbation theory and Grassmann integral formulation}		
\label{section_1_pert_theory}

\subsection{Perturbation theory and Trotter's formula}

Let us now consider the interacting case. Our strategy is to derive first a formal perturbation theory for the specific free energy, and properly to find rules to {\it formally compute the generic perturbative order in} $\lambda$ of $f_{\Lambda,\beta}$. Then we will explain how to give sense to this formal expression, by suitable resummations of the formal power series. It is worth stressing that the interaction could in principle move, in some {\it interaction dependent way}, the Fermi points of the theory. To take into account this fact, we rewrite
\begin{equation}
\mu_0=\mu+\nu,
\end{equation}
where $\nu$ is a {\it counterterm} that will be eventually suitably chosen in order to fix the position of the singularity to some {\it interaction independent} point.\\
So we rewrite 
$$H=H_0+U,$$
where
\begin{equation}
U=\lambda V+\nu N_0=\lambda \sum_{x,y\in\Lambda}\psi^+_x\psi^-_xv(x-y)\psi^+_y\psi^-_y+ \nu\sum_{x\in\Lambda}\psi^+_x\psi^-_x,
\end{equation}
and we use the Trotter product formula
\begin{equation}
e^{-\beta H}=\lim_{n\to \infty}\left[e^{-\beta H_0/n}\left(1-\frac{\beta}{n}U\beta\right)\right]^n,
\end{equation}
so that, if we define
\begin{equation}
U(t):=e^{tH_0}Ve^{-tH_0},
\end{equation}
we get
\begin{equation}
\begin{split}
\frac{Tr\left(e^{-\beta H}\right)}{Tr\left(e^{-\beta H_0}\right)}=\\
=1+\sum_{(-1)^N}\int_0^\beta dt_1 \int_0^{t_1} dt_2\dots\int_0^{t_N-1}dt_N \frac{Tr\left(e^{-\beta H_0}U(t_1)\dots U(t_N)\right)}{Tr\left(e^{-\beta H_0}\right)},
\end{split}
\end{equation}
which, using again the {\it fermionic time-ordering operator}, can be rewritten as 
\begin{equation}
\frac{Tr\left(e^{-\beta H}\right)}{Tr\left(e^{-\beta H_0}\right)}=1+\sum_{N\geq 1}\frac{(-1)^N}{N!}\left<\bm T \left(U(\psi)^N\right)\right>_{\Lambda, \beta, 0}
\label{Tr(cdot)/Tr}
\end{equation}
where $\left<\cdot\right>_{\Lambda, \beta, 0}=Tr\left(e^{-\beta H_0 \cdot}\right)/Tr\left(e^{-\beta H_0}\right)$ and we have defined
\begin{equation}
\begin{split}
U(\psi)=\lambda\int_{[0,\beta)}dx_0 \sum_{x\in\Lambda}\int_{[0,\beta)}dy_0 \sum_{y\in\Lambda} \psi^+_{\bm x}\psi^-_{\bm x}v(x,y)\delta_{x_0,y_0}\psi^+_{\bm y} \psi^-_{\bm y}+\nu\int_{[0,\beta)}dx_0\sum_{x\in\Lambda}\psi^+_{\bm x}\psi^-_{\bm x}.
\end{split}
\end{equation}
The $N$-th order of formula (\ref{Tr(cdot)/Tr}) can be computed using the Wick rule
\begin{equation}
\begin{split}
&\left<\bm T\left(\psi^-_{\bm x_1}\dots \psi^+_{\bm x_n}\right)\right>_{0,\Lambda,\beta}=det G,\\
&G_{ij}=\left<\bm T\left(\psi^-_{\bm x_i}\psi^+_{\bm x_j}\right)\right>_{0,\Lambda,\beta}=S_{L,\beta}^0(\bm x,-;\bm y,+).
\end{split}
\end{equation}
and the explicit {\it free propagator} (\ref{free_propagator_PBC}), where the subscript $0$ denotes that the expectations are computed with respect to the free measure. In order to use the Wick rule, it is conveniente to briefly recall the Feynman rules.

\subparagraph{Feynman rules} 
\begin{figure}
\begin{center}
 \begin{tikzpicture} 
 [thick,decoration={
    markings,
    mark=at position 0.5 with {\arrow{>}}}] 
\node  at (1,3.3) {{\bf x}};
\node at (3,3.3) {{\bf y}};
\fill (1,3) circle (0.06);
\fill (3,3) circle (0.06);
\draw [postaction={decorate}] (0,2) -- ++(1,1);
\draw [postaction={decorate}](0,2) ++ (1,1)-- ++ (-1,1);
\draw [postaction={decorate}] (0,2) ++ (1,1)++ ( 2,0) ++(1,1) ++ (-1,-1) --++ (1,-1);
\draw [postaction={decorate}] (0,2) ++ (1,1)++ ( 2,0) ++(1,1) ++ (-1,-1) ++ (1,-1) ++ (-1,1) -- ++(1,1);
\draw [-,decorate,decoration=snake] (0,2) ++ (1,1) ++ (-1,1)++ (1,-1) -- ++(2,0);
\node at (-4,3.3) {\bf x};
\fill (-4,3) circle (0.1);
\draw [postaction={decorate}] (-5,3) -- ++ (1,0);
\draw [postaction={decorate}] (-4,3) -- ++ (1,0);
\end{tikzpicture}
\end{center}
\caption{Graph elements associated with $\nu$-type endpoints (left) and $\lambda$-type endpoints (right).}
\label{figure_graph_elements_PBC}
\end{figure}
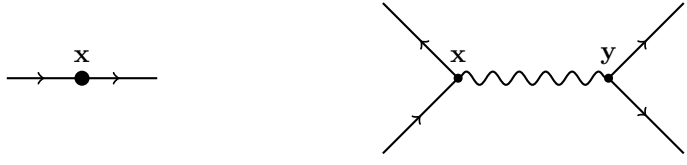

In order to compute $\left<\bm T\left(U_\beta(\psi))^N\right)\right>_{\Lambda, \beta, 0}$, it is easy to check that one can follow these steps:
\begin{itemize}
\item $\forall k,l$ such that $0 \leq k,l\leq N $ and $k+l=N$, draw $k$ graph elements consisting of {\it four legged vertices}, $l$ graph elements consisting of {\it two legged local vertices} with the vertices associated to labels $\bm x_i$, $i=1,\dots,N$, in such a way that the {\it four legged vertices} are composed by two entering and to exiting fields, while the {\it two legged vertices} are associated with one exiting and one entering leg (see Figure (\ref{figure_graph_elements_PBC}));
\item pair the fields in all possible ways, in such a way that every pair is obtained by contracting an entering and an exiting leg;
\item associate to every pairing the {\it right sign}, which is the sign of the permutation needed to bring every pair of contracted fields next to each other;
\item associate to every linked pair of fields $\left(\psi^-(\bm x_i),\psi^+(\bm x_j)\right)$ an {\it oriented} line connecting the $i-$th with the $j-$th vertex, oriented from $j$ to $i$ ({\it i.e.} from $+$ to $-$ field);
\item associate to every oriented line from $j$ to $i$ value $g(\bm x_i,\bm x_j)$ given by (\ref{free_propagator_PBC});
\item associate to every configuration of pairings, which is called {\it Feynman graph} a value, equal to the product of the sign of the pairing, times $\lambda^k\nu^l$ times the product of the values of all the oriented lines;
\item integrate over $\bm x_i$, then perform the sum over all the possible pairings, over $k, l$ and over N;
\end{itemize}
It is convenient, algebraically, to rewrite the quantities (\ref{free_energy_specific_PBC}) and (\ref{schwinger_function_n_points_PBC}) in terms of {\it Grassmann Gaussian integrals}. Even though the theory of {\it Grassmann integrals} is a very well known topic in the literature (see again, for instance, \cite{gentile2001renormalization}), in sake of self consistency we will sketch the main definitions and properties.

\subparagraph{Grassmann algebra}

Given some finite set $A$ of indices $\alpha\in A$, we define a {\it finite dimensional Grassmann algebra}, generated by a set of {\it anticommuting Grassmann variables} $\left\{\psi_{\alpha}^{\pm}\right\}_{\alpha\in A}$: we attach at each element $\alpha\in A$ a couple of variables $\psi \equiv \left\{\psi^+_{\alpha}, \psi^-_{\alpha}\right\}$ such that
\begin{equation}
\psi^{\epsilon}_{\alpha}\psi^{\epsilon'}_{\alpha'}+ \psi^{\epsilon'}_{\alpha'}\psi^{\epsilon}_{\alpha}=0 \hspace{3mm} \forall \alpha,\alpha'\in A, \forall \epsilon, \epsilon'\in\{\pm\}.
\label{anticommutation_rules_grassmann}
\end{equation}

\begin{rem}
In particular, $\forall \alpha\in A, \forall \epsilon\in\{\pm\}$ we have $\left(\psi^{\epsilon}_{\alpha}\right)^2=0.$
\label{grassmann_variables_squared_remark}
\end{rem}

\subparagraph{Grassmann integral operator} 

Let us introduce the {\it Grassmann integral operator} $\int  d\psi^{\epsilon}_{\alpha}\cdot$ acting as:
\begin{equation}
\int  d\psi^{\epsilon}_{\alpha}\psi^{\epsilon}_{\alpha}=1, \hspace{3mm} \int d\psi^{\epsilon}_{\alpha}=0
\label{grassmann_integral}
\end{equation}

A straightforward generalization in the case of many {\it Grassmann variables integral} can be obtained by iterating (\ref{grassmann_integral}):
\begin{equation}
\int \prod_{\alpha\in B}d\psi^+_\alpha d\psi^-_{\alpha}\left(\prod_{\alpha\in B}\psi^-_\alpha \psi^+_{\alpha}\right)=1,\hspace{3mm} \forall B\subset A.
\end{equation} 
so that if $F(\psi)$ is a polynomial in $\psi_\alpha^+, \psi_\alpha^-, \alpha\in A$, the operation
\begin{equation}
\int \prod_{\alpha\in A}d\psi^+_\alpha d\psi^-_\alpha F(\psi)
\end{equation}
extracts the coefficient of the linear term in $\left(\prod_{\alpha=1}^N \psi^-_\alpha \psi^+_\alpha \right)$.
\\ Using the remark (\ref{grassmann_variables_squared_remark}) and the usual Taylor series for the exponential, $e^{-\psi^+_{\alpha}C\psi^-_{\alpha}}=1-\psi^+_{\alpha}C\psi^-_{\alpha}$, so by the definition (\ref{grassmann_integral})

\begin{equation}
\frac{\int d\psi^+_{\alpha}d\psi^-_{\alpha}e^{-\psi^+_{\alpha}C\psi^-_{\alpha}}\psi^-_{\alpha}\psi^+_{\alpha}}{\int d\psi^+_{\alpha}d\psi^-_{\alpha}e^{-\psi^+_{\alpha}C\psi^-_{\alpha}}}=C^{-1}, \hspace{3mm} \forall \alpha\in A, C \in \mathbb{C}
\end{equation}

To generalize this formula in the case of $2N$ Grassmann variables, we introduce the matrix $M\in GL(N,\mathbb C)$,

\begin{equation}
\begin{split}
\int \prod_{\alpha=1}^N \left(d\psi^+_{\alpha}d\psi^-_{\alpha}\right) e^{-\sum_{\alpha,\alpha'=1}^N\psi^+_{\alpha}M_{\alpha,\alpha'}\psi^-_{\alpha'}}=\det M, \\
\int \prod_{\alpha=1}^N \left(d\psi^+_{\alpha}d\psi^-_{\alpha}\right) e^{-\sum_{\alpha,\alpha'=1}^N\psi^+_{\alpha}M_{\alpha,\alpha'}\psi^-_{\alpha'}}\psi^-_{\bar{\alpha}}\psi^+_{\tilde{\alpha}}= \bar M_{\tilde{\alpha},\bar{\alpha}}
\end{split}
\end{equation}

where $\bar M_{\bar{\alpha},\tilde{\alpha}}$ is the minor complementary to the entry $M_{\bar{\alpha},\tilde{\alpha}}$ and, if $M$ is invertible,

\begin{equation}
\frac{\int \prod_{\alpha=1}^N \left(d\psi^+_{\alpha}d\psi^-_{\alpha}\right) e^{-\sum_{\alpha,\alpha'=1}^N\psi^+_{\alpha}M_{\alpha,\alpha'}\psi^-_{\alpha'}}\psi^-_{\tilde \alpha}\psi^+_{\bar \alpha}}{\int \prod_{\alpha=1}^N \left(d\psi^+_{\bar \alpha}d\psi^-_{\tilde \alpha}\right) e^{-\sum_{\alpha,\alpha'=1}^N\psi^+_{\alpha}M_{\alpha,\alpha'}\psi^-_{\alpha'}}}=\left[M^{-1}\right]_{\bar{\alpha},\tilde{\alpha}}
\end{equation}

\begin{rem}
These properties are similar to the ones of the usual {\it Gaussian integrals}, without the constraint on $C$ to be real and on $M$ to be positive definite, but only invertible.
\label{grassmann_integrals_gaussian_remark}
\end{rem}

\paragraph{Grassmann Gaussian integration}

Inspired by the remark (\ref{grassmann_integrals_gaussian_remark}), we can build up a {\it Grassmann Gaussian integration} $P(d\psi)$ associated with the propagator $g(\bm x- \bm y)$ in order to express the specific free energy (\ref{free_energy_specific_PBC}) and the Schwinger functions (\ref{schwinger_function_n_points_PBC}) as Gaussian Grassmann integrals.
First of all, let us introduce a finite set of {\it Grassmann variables} $\{\hat \psi^{\pm}(\bm k)\}_{\bm k\in\mathcal{D}_{\Lambda,\beta,M}}$; hence we define

\begin{equation}
P(d\psi)=\left(\prod_{\bm k\in\mathcal{D}_{\Lambda,\beta,M}}\left(L\beta\hat g(\bm k)\right)\hat \psi^+(\bm k)\hat \psi^-(\bm k)\right)e^{-\sum_{\bm k \in \mathcal{D}_{\Lambda,\beta,M}}\left(L\beta\hat g(\bm k)\right)^{-1}\hat \psi^+(\bm k)\hat \psi^-(\bm k)}.
\label{grassmann_gaussian_measure_k_space_PBC}
\end{equation}
By introducing the Fourier transforms:
\begin{equation}
\psi^{\pm}(\bm x)=\frac{1}{L\beta}\sum_{\bm k\in\mathcal{D}_{L,\beta,M}}\hat \psi^{\pm}(\bm k)e^{\pm i \bm k\cdot \bm x},
\label{grassman_variables_fourier_transform}
\end{equation}
we can use the measure (\ref{grassmann_gaussian_measure_k_space_PBC}) to get
\begin{equation}
\lim_{M\to\infty}\int P(d\psi)\psi^-(\bm x)\psi^+(\bm y)=\frac{1}{L\beta}\lim_{M\to\infty}\sum_{\bm k\in \mathcal{D}_{\Lambda,\beta,M}}\hat g(\bm k)e^{-i\bm k\cdot (\bm x-\bm y)}=g(\bm x-\bm y),
\label{grassmann_gaussian_measure_x_space_PBC}
\end{equation}

where we denoted with $P$ the {\it Grassmann Gaussian integration} associated to the propagator $g$ in (\ref{free_propagator_PBC}).

\subparagraph{Expectation functional}
Calling $P(d\psi)$ {\it Gaussian fermionic integration} with covariance $g$ we mean that, for any analytic function $F$ defined on the Grassmann algebra, we can define an {\it expectation functional}

\begin{equation}
\int P(d\psi)F(d\psi)=\mathcal{E}(F).
\label{expectation}
\end{equation}

\begin{rem}
$P(d \psi)$ is not a measure in the usual sense, indeed it does not satisfy the positivity condition, so we use the terminology of expectation $\mathcal E$ by analogy. 
\end{rem}

\subparagraph{Truncated expectation functions} Given $p$ functions $X_1,\dots, X_p$ defined on the Grassmann algebra and $p$ integer numbers $n_1,\dots,n_p,$ the {\it truncated expectation} is defined as
\begin{equation}
\mathcal{E}^T\left(X_1,\dots,X_p;n_1,\dots,n_p\right)= \frac{\partial^{n_1+\dots+n_p}}{\partial_{\lambda_1}^{n_1}\dots \partial_{\lambda_p}^{n_p}} \left .\log \int P(d\psi)e^{\lambda_1X_1(\psi)+\dots+ \lambda_pX_p(\psi)}\right|_{\lambda=0}.
\label{expectation_truncated}
\end{equation}
where $\lambda=\left(\lambda_1,\dots,\lambda_p\right)$; we will use the notation
\begin{equation}
\mathcal E^T(X_1,\dots, X_p):=\mathcal E^T\left(X_1,\dots, X_p;\underbrace{1,\dots,1}_{p \mbox{ times }}\right),
\end{equation}
\\
In particular,
\begin{equation}
\mathcal{E}^T\left(X;n\right)=\frac{\partial^n}{\partial^n\lambda}\log \left . \int P(d\psi)e^{\lambda X(\psi)}\right |_{\lambda=0},
\end{equation}
and
\begin{equation}
\log \int P(d\psi)e^{X(\psi)}=\sum_{n=0}^{\infty} \frac{1}{n!} \frac{\partial^n}{\partial \lambda^n}\log \left .\int P(d\psi)e^{\lambda X(\psi)}\right|_{\lambda=0}=\sum_{n=0}^{\infty}\frac{1}{n!}\mathcal{E}^T\left(X;n\right).
\label{log_grassmann_integral_free_energy}
\end{equation}

\paragraph{Properties of Grassmann integrals and expectation functions}

\begin{itemize}
\item {\bf Wick rule} Given two sets of labels $\left\{\alpha_1,\dots,\alpha_n\right\},\left\{\beta_1,\dots,\beta_m \right\}\subset A$, so
\begin{equation}
\int P(d\psi) \psi_{\alpha_1}^-\dots\psi_{\alpha_n}^- \psi_{\beta_1}^+\dots\psi_{\beta_m}^+=\delta_{n,m} \sum_{\Gamma}\sum_{\pi}(-1)^{p_{\pi}}\prod_{\Gamma \ni \ell=(\bm x_i,\bm x_{\pi(j)})}g(\ell).
\label{wick_rule}
\end{equation}
where the sum over $\Gamma$ is the sum over all the possible pairings (or Feynman graph configurations) and the product over $\ell$ is the product over all the possible contractions compatible with the configuration $\Gamma$. 
\item {\bf Addition principle} Given two Grassmann measures $P(d\psi_1)$ with covariance $g_1$ and $P(d\psi_2)$ with covariance $g_2$, for any analytic function $F(\psi)$ defined on the Grassmann algebra and such that $\psi=\psi_1+\psi_2$, so
\begin{equation}
\int P(d\psi_1)\int P(d\psi_2)F\left(\psi_1+\psi_2\right)= \int P(d\psi)F(\psi),
\label{addition_principle}
\end{equation}
with $P(d\psi)$ associated to a covariance $g=g_1+g_2$.
\item {\bf Invariance of exponentials} Using the definition of truncated expectation (\ref{expectation_truncated}) it follows that, if $\phi$ is an external field (meaning that $\phi$ is not involved in the integration process),
\begin{equation}
\int P(d\psi)e^{X(\psi+\phi)}=\exp \left[\sum_{n=0}^{\infty}\frac{1}{n!}\mathcal{E}^T\left(X\left(\cdot+\phi\right);n\right) \right]=:e^{X'(\phi)}.
\label{invariance_of_exponential}
\end{equation}

\item {\bf Change of integration measure} Let $P_g(d\psi)$ be the integration measure with covariance $g$. Then, for any analytic function defined on the Grassmann algebra $F(d\psi)$, it holds
\begin{equation}
\frac{1}{N_{\nu}}\int P_g(d\psi)e^{-\nu\psi^+\psi^-}F(\psi)= \int P_{\tilde g}(d\psi)F(\psi),
\label{change_of_integration_measure_property}
\end{equation}

where $\tilde{g}^{-1}=g^{-1}+\nu$ and $N_{\nu}=\frac{g^{-1}+\nu}{g^{-1}}=1+g\nu=\int P_g(d\psi) e^{-\nu\psi^+\psi^-}$.

\end{itemize}

\paragraph{Free energy}

Using these definitions and the Feynman rules described above, we can rewrite equation (\ref{Tr(cdot)/Tr}) as 
\begin{equation}
\label{Tr/Tr_as_grassmann_integral}
\frac{Tr\left(e^{-\beta H}\right)}{Tr\left(e^{-\beta H_0}\right)}=\lim_{M\to\infty}\int P_M(d\psi)e^{-\mathcal V(\psi)},
\end{equation}
where 
\begin{equation}
\mathcal{V}=\lambda\int_0^{\beta} dx_0\int_0^{\beta}dy_0\sum_{x,y\in\Lambda}\psi_{\bm x}^+\psi_{\bm x}^-v(x-y)\delta_{x_0,y_0}\psi_{\bm y}^+\psi_{\bm y}^-+\nu \int_{[0,\beta)}dx_0 \sum_{x\in\Lambda}\psi^+_{\bm x}\psi^-_{\bm x},
\label{interaction_grassmann}
\end{equation}
and $e^{-\mathcal V(\psi)}$ must be identified with its Taylor series in $\lambda$ and $\nu$, which is finite for every finite $M$ due to the anticommutation rules of the Grassmann variables and the fact that the Grassmann algebra is finite for every finite M. A priori, equation (\ref{Tr/Tr_as_grassmann_integral}) has to be read as an equality between formal power series in $\lambda$ and $\nu$, however, it can be given a {\it non-perturbative meaning} provided we can prove the convergence of the Grassmann integral in the r.h.s. under analiticity assumption in a complex disc. \\
Using (\ref{Tr/Tr_as_grassmann_integral}), we can compute the specific free energy (\ref{free_energy_specific_PBC}) provided we are able to check that the r.h.s. of (\ref{Tr/Tr_as_grassmann_integral}) is analytic in a domain that is uniform in $M,\beta,\Lambda$, and that it converges to a well defined analyric function uniformly as $M\to\infty$; in fact, this will be the main goal of this chapter. Let us start by rewriting the specific free energy as:
\begin{equation}
f_{\Lambda, \beta}:=-\frac{1}{\beta L}\sum_{N\geq 1}\frac{(-1)^N}{N!}\mathcal{E}^T\left(\mathcal{V};N\right),
\label{free_energy_as_sum_of_trunc_expec}
\end{equation}
where the {\it expectations functionals} have been already defined, and now we will discuss how to compute them. We underline that we slightly abused of the notation, indeed the function $f_{\Lambda,\beta}$ just defined is actually the {\it difference between the specific free energy of the interacting system and the specific free energy of the free system $f_{0,\Lambda,\beta}=-1/|\Lambda|\beta \log Tr(e^{-\beta H_0})$.}
\subsection{How to compute truncated expectations}
\label{subsection_How_to_compute_truncated_expectations}
\paragraph{Feynman graphs}
We have already described the most immediate way to compute truncated expectation functions when we listed the {\it Feynman rules} to compute the expectations values in (\ref{free_energy_as_sum_of_trunc_expec}), getting the result we recall here.\\
Given $s$ sets of indices $P_1,\dots,P_s$, we define for each of those
\begin{equation}
\tilde{\psi}\left(P_i\right)=\prod_{f\in P_i}\psi^{\sigma(f)}_{\bm x(f)},
\end{equation}
where $\sigma(f)\in\{\pm\}$ and $\bm x(f)\in \Lambda\times\left[0,\beta\right)$. Then,
\begin{equation}
\mathcal{E}\left(\tilde{\psi}\left(P_1\right),\dots,\tilde{\psi}\left(P_s\right)\right)=\sum_{\Gamma\in\mathcal{G}_0}Val(\Gamma),
\label{expectation_truncated_s_sets}
\end{equation}
where $\Gamma$ is a Feynman graph belonging to the family of all possible Feynman graphs $\mathcal{G}_0$, and $Val(\Gamma)$ includes the integration over the space-time labels $\bm x_i$: for instance let $\Gamma\in\mathcal G_{0,N}$, where $\mathcal G_{0,N}$ is the family of all possible Feynman graphs of order $N$,
\begin{equation}
Val(\Gamma)=\sum_{1\leq k+l\leq N}\nu^k\lambda^l \int d{\bm x_1}\dots d\bm x_n(-1)^{p_\pi}\prod_{\ell\in\Gamma}g_{\ell}
\label{value_of_a_feynman_graph}
\end{equation}
where, as explained in the list of the rules, $p_{\pi}$ is the parity of the permutation, and $\ell\in\Gamma$ is the set of all the lines belonging to the Feynman graph. As we already commented in the general discussion of expectations,
\begin{equation}
\mathcal E^T(\mathcal V; N)=\sum_{\Gamma\in\mathcal G^T_{0,N}} Val(\Gamma).
\end{equation}
where $\mathcal G^T_{0,N}\subset \mathcal G_{0,N}$ is the set of {\it connected Feynman diagrams.}\\
These considerations, and the fact that we can compute $Val(\Gamma)$ using the Feynman rules, allow us to derive a very rough {\it upper bound} on the $N$-th order contribution to $f_{\Lambda,\beta}$ that, thanks to (\ref{free_energy_as_sum_of_trunc_expec}), is
\begin{equation}
f_{\Lambda,\beta}^{(N)}:=-\frac{1}{|\Lambda|\beta}\frac{(-1)^N}{N!}\mathcal E^T(\mathcal V;N).
\label{free_energy_specific_N_th_order}
\end{equation}
\begin{lem}
\label{lemma_bounds_no_multiscale_no_determinants}
Let $\epsilon:=\max\{\lambda,\nu\}$, $|\mathcal G_{0,N}^T|$ be the number of connected Feynman diagrams of order $N$, so it holds
\begin{equation}
\begin{split}
|f_{\Lambda,\beta}^{(N)}|\leq \frac{1}{\beta|\Lambda|}\frac{1}{N!}\sum_{\Gamma\in\mathcal G_{0,N}^T}|Val(\Gamma)|\leq \frac{|\mathcal G_{0,N}^T|}{N!}\epsilon^N||g||_{\infty}^{N+1}||g||_1^{N-1}\leq \\
\leq \left(C\epsilon\right)^N N! M^{N+1}\beta^{(N-1)},
\end{split} 
\end{equation}
\end{lem}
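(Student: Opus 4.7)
The plan is to establish the three inequalities in the chain one at a time, using only the Feynman expansion derived above together with crude norm bounds on the free propagator $g$.

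The first inequality is simply the triangle inequality applied to the Feynman diagrammatic representation $\mathcal E^T(\mathcal V;N)=\sum_{\Gamma\in \mathcal G^T_{0,N}}Val(\Gamma)$ together with $|(-1)^N/N!|=1/N!$.

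For the middle inequality I would bound a single $|Val(\Gamma)|$ for a generic connected graph $\Gamma$ with $N$ vertices, say $l$ four-legged ones (coming from $\lambda V$) and $k=N-l$ two-legged ones (coming from $\nu N_0$). The product of coupling constants is bounded by $\epsilon^N$ since $\epsilon=\max\{\lambda,\nu\}$. To estimate the $N$-fold space-time integral in \eqref{value_of_a_feynman_graph} I would select a spanning tree $T\subset\Gamma$: fix one of the $N$ vertex positions, absorbing the resulting factor $\beta|\Lambda|$ into the prefactor $(\beta|\Lambda|)^{-1}$ in \eqref{free_energy_specific_N_th_order}, then integrate the remaining $N-1$ vertex positions iteratively along the edges of $T$, each integration costing at most $\|g\|_1$ by translation invariance. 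The propagators on the remaining (loop) lines $\Gamma\setminus T$ are bounded in supremum norm by $\|g\|_\infty$. Since the total number of lines is at most $2N$ (saturated when all vertices are four-legged), the number of loop lines is at most $2N-(N-1)=N+1$, whence $|Val(\Gamma)|\le \epsilon^N\,\beta|\Lambda|\,\|g\|_1^{N-1}\|g\|_\infty^{N+1}$. Summing over $\Gamma$ and multiplying by $1/(\beta|\Lambda|\,N!)$ yields the middle expression.

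For the third inequality I would bound the cardinality $|\mathcal G^T_{0,N}|$ by counting all Wick pairings of at most $4N$ fermionic half-lines: the standard estimate $(4N-1)!!\le C^N(2N)!\le C^N(N!)^2$ combines with the $1/N!$ prefactor to give $C^N\,N!$. Finally I would bound the propagator norms using the explicit formula \eqref{free_propagator_momenta}: since $k_0\in\mathcal D_{\beta,M}$ satisfies $|k_0|\ge \pi/\beta$ one gets $\|\hat g\|_\infty\le \beta/\pi$, and summing over the $\sim M|\Lambda|$ allowed momenta and dividing by $\beta|\Lambda|$ produces $\|g\|_\infty\le CM$; the bound $\|g\|_1\le C\beta$ follows from integration in the time direction together with the off-diagonal spatial decay of $g$. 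Combining everything yields $(C\epsilon)^N N!\,M^{N+1}\beta^{N-1}$.

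The main point is not a subtle technical step but the \emph{message} that this bound carries: the factorial $N!$ (from uncontrolled counting of Feynman graphs) and the explicit divergences as $M\to\infty$ and $\beta\to\infty$ prevent any hope of summability or of a well-defined thermodynamic/zero-temperature limit at this level of the argument. This is precisely what motivates the two improvements developed in the rest of the chapter, namely the determinant (Gram--Hadamard) reorganization of the fermionic truncated expectation, which removes the $N!$, and the Wilsonian multiscale decomposition of $g$, which controls the $M^{N+1}\beta^{N-1}$ infrared/ultraviolet blow-up.
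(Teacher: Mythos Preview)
Your proof is correct and follows essentially the same route as the paper: select a spanning tree in each connected graph, bound the $N-1$ tree propagators in $L^1$ (absorbing one $\beta|\Lambda|$ factor), bound the at most $N+1$ remaining loop propagators in $L^\infty$, and then use $|\mathcal G_{0,N}^T|\le c^N(N!)^2$ together with $\|g\|_\infty\le CM$, $\|g\|_1\le C\beta$. The paper states exactly this argument in two sentences, citing Appendix~A.3.3 of \cite{gentile2001renormalization} for the graph count and Appendix~\ref{appendix_propagator_decay_property} for the propagator norms; your write-up is simply more explicit about the line counting and gives a direct Fourier-space estimate for $\|g\|_\infty$ rather than the multiscale argument the appendix uses.
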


\begin{proof}
Given $\Gamma\in\mathcal G_{0,N}^T$, select an arbitrary {\it spanning tree} in $\Gamma$ (a loopless subset of $\Gamma$ connecting all the N vertices). The integrals over the space time coordinates of the product of the propagators of the spanning tree is bounded by $\beta |\Lambda| ||g||^{N-1}_1$, while the product of the remaining propagators is bounded by $||g||^{N+1}_{\infty}$. Then, we use that for some $c>0$, $|\mathcal G_{0,N}^T|\leq c^N(N!)^2$ (see Appendix A.3.3 of \cite{gentile2001renormalization}), and the estimates $||g||_\infty\leq CM$, $||g||_1\leq C\beta$, that we prove in Appendix (\ref{appendix_propagator_decay_property}).
\end{proof}

Of course this rough Lemma has two main problems:
\begin{enumerate}
\item a combinatorial problem, associated to $N!$, that does not allow us to perform the sum over N not even for finite $M, \beta$;
\item a divergence problem, associated to $M^{N+1}\beta^{N-1}$ which is exponentially divergent as $M\to\infty$ and $\beta\to\infty$
\end{enumerate}

Problem 1) can be solved via a smarter re-organization of the perturbation theory in the form of a determinant expasion together with a systematic use of the Gram-Hadamard bound. Problem 2) can be solved by a systematic resummation of the series, based on a multiscale integration of the theory.

\subsection{The determinant expansion}
\label{subsection_determinant_expansion}
Let us show how the first problem can be solved.\\
The basic idea is that, besides the already discussed Feynman diagram representation, there is another well known way to represent the truncated expectation: et us consider the same setting described in the case of (\ref{expectation_truncated_s_sets}), so $s$ sets of indices $P_1,\dots,P_s$. Let us call $|P_i|$ the number of  elements in the set $P_i$, let us label each element with a couple of indices $P_j\ni f:=(j,i)$ where the first index is associated to the set the element belongs to, and the second one is $i=1,\dots,|P_j|$. Finally, let us call $2n=|P_1|+\dots+|P_s|$, {\it i.e.} $n$ is the number of {\it lines} in the Feynman graphs $\Gamma\in\mathcal{G}_0$. So
\begin{equation}
\mathcal{E}^T\left(\tilde \psi\left( P_1\right),\dots, \tilde \psi \left( P_s\right)\right)= \sum_T \alpha_T \left(\prod_{\ell\in T}g_{\ell}\right)\int dP_T(\bm t)\det G^T(\bm t),
\label{expectation_truncated_determinants}
\end{equation}

where

\begin{enumerate}
\item $T$ is an {\it anchored tree} between the clusters of points $P_1,\dots, P_s$: $T$ is a set of lines becoming a tree if one identifies all the points in the same cluster;
\item $\alpha_T$ is a sign, irrelevant for the subsequent bounds;
\item $\bm t$ is the set of parameters $\bm t:=\{t_{j,j'}\in[0,1], 1\leq j,j'\leq s\}$;
\item $dP_T(\bm t)$ is a {\it normalized probability measure} with support on a set $\bm t$ which can be obtained as $t_{i,i'}=\bm u_j\cdot \bm u_{j'}$ for some family of unitary-normed vectors $\bm u_j\in\mathbb R^s$;
\item $G^T(\bm t)$ is a $(n-(s-1))\times(n-(s-1))$ matrix, whose elements are
\begin{equation}
\left[G^T(\bm t)\right]_{(j,i).(j',i')}=t_{j,j'}g(\bm x(j,i),\bm x(j',i'))
\end{equation}
where $1\leq j,j'\leq s$ and $1\leq i \leq |P_j|$, $1\leq i' \leq |P_{j'}|$ in such a way that the lines $\ell=(\bm x(j,i),\bm x(j',i'))$ do not belong to the anchored tree $T$. If $s=1$, $\sum_T$ is empty, and we shall interpret (\ref{expectation_truncated_determinants}) as 
\begin{equation}
\mathcal{E}^T\left(\tilde\psi\left(P_1\right)\right)= \begin{cases}
1,\mbox{ if $P_1$ is empty},\\
\det G(\bm 1), \mbox{ otherwise },
\end{cases}
\end{equation}
where $\bm 1$ is obtained by setting $t_{j,j'}=1 \forall j,j'$.
\end{enumerate}

\begin{rem}
If we expressed the left hand side of (\ref{expectation_truncated_determinants}) as a sum over all possible Feynman graphs, we would actually expand the sum into $O(s!)^2$ terms (where $s$, as in the previous list, is the number of clusters). The latter expression (\ref{expectation_truncated_determinants}) is written in terms of a sum over the family of trees connecting the boxes. It worths noting that, fixing a tree $T$, one can expand the determinant $\det G^T(\bm t)$ in order to obtain, {\it as expected}, all the possible graphs which can be obtained by contracting the $(n-(s-1))$ half-lines not belonging to $T$, {\it i.e.} one can get the Feynman graph representation leading to (\ref{expectation_truncated_s_sets}). The big {\bf improvement} is in the number of terms we are summing up: in the case on Feynman graphs expansion, the sum runs over $O(s!)^2$ terms, while in the latter case the sum runs over the anchored trees, whose number is only $O(s!)$, which morally compensates the $\frac{1}{s!}$ coming from the perturbative expansion.
\end{rem}

We do not present in this thesis the proof of the determinant representation (see \cite{gentile2001renormalization}), which is due to a fermionic reinterpretation of the interpolation formulas by Battle, Brydges and Federbush \cite{battle1984note, brydges1978new, brydges1984short}. Using (\ref{expectation_truncated_determinants}), we get that the $N-$ order of the specific free energy is
\begin{equation}
f_{\Lambda,\beta}^{(N)}=-\frac{1}{\beta|\Lambda|}\frac{(-1)^N}{N!}\epsilon^N \sum_{T\in \mathcal T_N}\alpha_T\int d\bm x_1\dots d\bm x_N \prod_{\ell\in T} g_\ell\int dP_T(\bm t)\det G^T(\bm t),
\label{free_energy_determinant_expansion_gram_hadamard}
\end{equation}
that definitely improves the rough bound in previous Lemma. Indeed, using the fact that the number of anchored trees in $\bm T_N$ is bounded by $C^NN!$ for some $C>0$ (see \cite{gentile2001renormalization}, A.3.3), we get
\begin{equation}
|f_{\Lambda,\beta}^{(N)}|\leq c^N\epsilon^N||g||_1^{N-1}||\det G^T(\cdot)||_\infty.
\end{equation}
Then, in order to bound $||\det G^T||_\infty$, we use the {\it Gram-Hadamard inequality}, 
\begin{lem}[Gram-Hadamard inequality]
\label{lemma_gram_hadamard_inequality}
If $M$ is a square matrix with elements $M_{ij}$ of the form $M_{ij}=\left<A_i,B_j\right>$, where $A_i$ and $B_j$ are vectors in a Hilbert space with scalar product $\left<\cdot,\cdot\right>$, then
\begin{equation}
\left|\det M\right|\leq \prod_{i}||A_i|| ||B_j||
\end{equation}
where $||\cdot||$ is the norm induced by the scalar product. 
\end{lem}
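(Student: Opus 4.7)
The plan is to reduce the statement to the classical Hadamard inequality for $n \times n$ matrices via a Gram--Schmidt orthogonalization of one of the two families of vectors. The key idea is that orthogonalizing, say, $\{A_1,\ldots,A_n\}$ factorizes $M$ as the product of a lower-triangular matrix encoding the Gram--Schmidt coefficients and a matrix whose columns are the projections of the $B_j$'s onto the orthonormal basis just produced; at that point both factors can be estimated separately.

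First I would reduce to finite dimensions by restricting the Hilbert space to $V := \mathrm{span}\{A_i, B_j : 1 \leq i,j \leq n\}$, which has dimension at most $2n$; the inner products $M_{ij}$ are unchanged. Next I would apply Gram--Schmidt to $A_1,\ldots,A_n$, producing orthonormal vectors $e_1,\ldots,e_r$ and coefficients $\alpha_{ik}$ with $A_i = \sum_{k\leq i}\alpha_{ik}\,e_k$ and $|\alpha_{ii}|\leq \|A_i\|$. If $r<n$ the $A_i$'s are linearly dependent, some $\alpha_{ii}$ vanishes, so $\det M=0$ and the inequality is trivial; hence one may assume $r=n$.

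Substituting into $M_{ij}=\langle A_i,B_j\rangle$ yields the matrix factorization $M = L\,E$, where $L$ is lower-triangular with diagonal entries $\overline{\alpha_{ii}}$, and $E_{kj}=\langle e_k,B_j\rangle$. Therefore
\[
|\det M|=\prod_i |\alpha_{ii}|\cdot |\det E|\leq \prod_i\|A_i\|\cdot |\det E|.
\]
To finish, apply the classical Hadamard inequality to the square matrix $E$: its $j$-th column has squared norm $\sum_k|\langle e_k,B_j\rangle|^2\leq \|B_j\|^2$ by Bessel's inequality, so $|\det E|\leq \prod_j\|B_j\|$, and combining gives the Gram--Hadamard bound.

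There is no genuine obstacle here: this is a textbook linear-algebra estimate, obtained by combining the classical Hadamard inequality with Bessel's inequality, the only mild subtlety being the dimension reduction and the handling of the degenerate case in which one of the families is linearly dependent. The argument is symmetric in $A$ and $B$, so either family can equally well be the one orthogonalized.
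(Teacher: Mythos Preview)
Your argument is correct and follows the standard route to this inequality: Gram--Schmidt on one family to obtain a triangular factor whose diagonal entries are dominated by the $\|A_i\|$, followed by the classical Hadamard bound on the remaining factor with columns controlled via Bessel's inequality. The handling of the degenerate case and the reduction to a finite-dimensional subspace are also fine.

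The paper itself does not give a proof of this lemma at all: it simply cites \cite{gentile2001renormalization}, Theorem~A.1, and moves on. So your write-up is not to be compared against any in-text argument; you have supplied a complete proof where the paper only records a reference. The approach you use is in fact essentially the one found in the cited reference and in most textbook treatments, so there is no methodological divergence to discuss.
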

We do not prove this result, and we refer {\it e.g.} to \cite{gentile2001renormalization}, Theorem A.1, but we use it to state the following Lemma.
\begin{lem}
\label{lemma_gram_hadamard_for_G}
Provided we are able to prove that $t_{j,j'}g(\bm x(j,i),\bm x(j',i'))$ can be obtained as a scalar product in a suitable Hilbert space, we can use the Gram-Hadamard inequality to bound:
\begin{equation}
||\det G^T||_\infty\leq c^N ||g||_\infty^{N+1}.
\end{equation}
 \end{lem}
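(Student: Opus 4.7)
The plan is to realize each matrix entry $[G^T(\bm t)]_{(j,i),(j',i')}=t_{j,j'}\,g(\bm x(j,i)-\bm x(j',i'))$ as a scalar product of two vectors of uniformly bounded norm in an auxiliary Hilbert space, then invoke Lemma \ref{lemma_gram_hadamard_inequality}. The $\bm t$-factor is already in Gram form by item 4 following (\ref{expectation_truncated_determinants}): $t_{j,j'}=\bm u_j\cdot\bm u_{j'}$ with unit vectors $\bm u_j\in\mathbb R^s$. So it suffices to construct $f_{\bm x},h_{\bm y}$ in some Hilbert space $\mathcal K$ satisfying $\langle f_{\bm x},h_{\bm y}\rangle_{\mathcal K}=g(\bm x-\bm y)$; then, setting $\mathcal H:=\mathbb R^s\otimes\mathcal K$ and
\begin{equation*}
A_{(j,i)}:=\bm u_j\otimes f_{\bm x(j,i)},\qquad B_{(j',i')}:=\bm u_{j'}\otimes h_{\bm x(j',i')},
\end{equation*}
one has $\langle A_{(j,i)},B_{(j',i')}\rangle_{\mathcal H}=[G^T(\bm t)]_{(j,i),(j',i')}$ and $\|A_{(j,i)}\|_{\mathcal H}=\|f_{\bm x(j,i)}\|_{\mathcal K}$, likewise for $B$, since $\|\bm u_j\|=1$.

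For the construction of $f_{\bm x},h_{\bm y}$, I would diagonalize the one-body operator in (\ref{free_hamiltonian_PBC}) on the plane-wave basis $\phi_k(x)=e^{ikx}/\sqrt L$ (eigenvalues $e(k)$) and split the resulting imaginary-time Fermi kernel $K_k(x_0,y_0)$, which from (\ref{free_propagator_PBC}) reads
\begin{equation*}
K_k(x_0,y_0)=\theta(x_0-y_0)\frac{e^{-(x_0-y_0)e(k)}}{1+e^{-\beta e(k)}}-\theta(y_0-x_0)\frac{e^{-(x_0-y_0+\beta)e(k)}}{1+e^{-\beta e(k)}},
\end{equation*}
as a finite-rank tensor product $K_k(x_0,y_0)=\sum_\sigma \alpha_k^\sigma(x_0)\overline{\beta_k^\sigma(y_0)}$, distributing the Fermi denominator $(1+e^{-\beta e(k)})^{-1}$ symmetrically between the two time variables and separating the particle channel $e(k)>0$ from the hole channel $e(k)<0$. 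This is the standard Gram representation of the imaginary-time Fermi propagator (see \cite{benfatto1995renormalization} and \cite{gentile2001renormalization}, Appendix A.2); its key feature is that the weights are uniformly bounded, $|\alpha_k^\sigma(x_0)|,|\beta_k^\sigma(y_0)|\leq C$, in $k$, $x_0,y_0\in[0,\beta)$, $\beta$ and the UV cutoff $M$. Setting $\mathcal K=\ell^2(\mathcal D_\Lambda)\otimes\mathbb C^2$ and $f_{\bm x}(k,\sigma):=\phi_k(x)\alpha_k^\sigma(x_0)$, $h_{\bm y}(k,\sigma):=\phi_k(y)\beta_k^\sigma(y_0)$, Parseval gives $\|f_{\bm x}\|_{\mathcal K}^2\leq 2C^2L^{-1}\sum_k|\phi_k(x)|^2=2C^2$ uniformly in $\bm x$, and likewise for $h_{\bm y}$.

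Gram--Hadamard now gives $|\det G^T(\bm t)|\leq\prod_{(j,i)}\|A_{(j,i)}\|_{\mathcal H}\|B_{(j,i)}\|_{\mathcal H}\leq(2C^2)^{n-(s-1)}$; since $s=N$ and each cluster contributes at most $4$ fields (a $\lambda$-vertex), we have $n\leq 2N$, so the exponent is bounded by $N+1$ and $|\det G^T(\bm t)|\leq(2C^2)^{N+1}$ uniformly in $\bm t$, $M$ and $\beta$. Finally, since the chemical potential is fixed so that $\mu_0\in[e_-+\kappa,e_+-\kappa]$, the zero-temperature filling $|g(0,0^-)|=\#\{k\in\mathcal D_\Lambda:e(k)<0\}/L$ is bounded below by a positive constant $c_0$ uniform in $\Lambda,\beta,M$; hence $\|g\|_\infty\geq c_0$, and choosing $c:=2C^2/c_0$ one gets $(2C^2)^{N+1}\leq c^N\,\|g\|_\infty^{N+1}$, as claimed. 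The main obstacle is identifying the right splitting of $K_k$: a naive $\sqrt{\hat g(\bm k)}\cdot\sqrt{\hat g(\bm k)}$ factorization in the full momentum variable $\bm k$ would produce norms proportional to $\frac{1}{L\beta}\sum_{\bm k}|\hat g(\bm k)|$, which diverges with $M$ and would waste the Gram estimate; the symmetric particle--hole splitting in the spectral variable, which absorbs the imaginary-time jump into the Fermi denominator, is what yields weights bounded uniformly in the cutoffs and so makes the final estimate scale-independent.
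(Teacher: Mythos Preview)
There is a genuine gap in your Gram representation. You assert that for each $k$ the imaginary-time kernel $K_k(x_0,y_0)$ factors as a \emph{finite-rank} tensor $\sum_\sigma \alpha_k^\sigma(x_0)\overline{\beta_k^\sigma(y_0)}$ with $\sigma\in\{1,2\}$, and accordingly take $\mathcal K=\ell^2(\mathcal D_\Lambda)\otimes\mathbb C^2$. But $K_k$ has a jump discontinuity of height $1$ along the diagonal $x_0=y_0$ (coming from the $\theta$-functions), and a finite-rank tensor product of functions of $x_0$ and $y_0$ separately can only be discontinuous along horizontal or vertical lines in the $(x_0,y_0)$-plane, never along the diagonal. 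The references you cite in fact factor $K_k$ through an \emph{infinite-dimensional} auxiliary space (an $L^2$ space over an extra continuous variable), not $\mathbb C^2$; the particle/hole separation by the sign of $e(k)$ acts on the $k$-index and does not supply a two-dimensional tensor factor. Once the auxiliary space is corrected your argument does go through, and it even yields Gram norms bounded uniformly in $M$ and $\beta$, so that your last step (inserting $\|g\|_\infty\ge c_0$ to recover the stated form) is merely cosmetic --- though the specific choice $c=2C^2/c_0$ does not cover small $N$; a slightly larger constant is needed.

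For comparison, the paper does not prove the lemma in the text but defers the Gram representation to Appendix~\ref{appendix_gram_representation}, where a \emph{Fourier-space} factorization is given for the single-scale propagator $g^{(h)}$, yielding $\|A^{(h)}\|\,\|B^{(h)}\|\le C\gamma^h\sim\|g^{(h)}\|_\infty$. Your real-time route is a genuinely different, and for the full propagator arguably cleaner, path to the same destination.
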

 Recalling that, as we already mentioned, $||g||_\infty\leq c M$
\begin{equation}
|f_{\Lambda,\beta}^{(N)}|\leq c^N\epsilon^N M^{N+1}\beta^{N-1}.
\end{equation}
The proof of the assumption is a subcase of Appendix (\ref{appendix_gram_representation}).

\begin{rem}
Now, the r.h.s. of the latter bound is summable over $N$ for $\epsilon$ small enough, even though non uniformly in $M$ and $\beta$.\\
Proving that the right side of (\ref{free_energy_determinant_expansion_gram_hadamard}) is well defined is a non trivial topic that requires a {\bf multiscale analysis} we are going to explain in the next chapter.
\label{remark_necessity_multiscale_analysis}
\end{rem}

\section{Interacting case: the multiscale analysis}
\label{section_multiscale_analysis}

In this section we explain how to set up a multiscale procedure to perform iterative resummations in order to re-express the specific free energy in terms of a modified expansion, whose $N-$th order term is summable in $N$ and uniformly convergent when the cut-offs are removed.

\subsection{Ultraviolet and infrared regimes, effective potential}

We wish to compute the {\it partition function} defined as $f_{\Lambda,\beta, M}=-\left(|\Lambda|,\beta\right)^{-1}\log \Xi_{\Lambda,\beta, M}$,
\begin{equation}
\Xi_{\Lambda,\beta,M}:=\int P_M(d\psi)e^{-\mathcal V(\psi)}.
\label{partition_function}
\end{equation}
First of all, let us fix the chemical potential: let $p_F=2\pi n_F/L$, $n_F\in\mathbb{N}$ and $\mu_0=1-\cos p_F$.\\
Dealing with fermions, what we are interested in are the excitations near the Fermi surface (which, in dimension one is the pair of points $\pm p_F$), so it is useful to look at the relative momenta with respect two $\pm p_F$:  $k =k' \pm p_F$. So we can rewrite the dispersion $e(k)$ as $$\cos p_F - \cos(k'\pm p_F)= \cos p_F- \left( \cos k'\cos p_F \mp \sin k' \sin p_F\right)$$ so that, near the singularities ({\it i.e.} for $k'\sim 0$) we can consider the linear approximation of the free propagator (\ref{free_propagator_PBC})
\begin{equation}
 \hat g(\pm p_F+k',k_0)\sim\frac{1}{-ik_0\pm k' \sin p_F}.
 \label{free_propagator_PBC_linear_approx}
\end{equation}
\begin{rem}
\label{remark_linear_part_is_luttinger_propagator}
This approximation, besides carrying the {\it physical information} of the theory being the dominant part, corresponds to the propagator of an infrared Luttinger liquid model (i.e. the Luttinger model with an ultraviolet cut-off, that we will comment a bit more when we will study the flow of the running coupling constants). Despite the fact that the infrared Luttinger model, differently from the original Luttinger model (without an ultraviolet cut-off), is not exactly solvable by bosonization, we will use it as a reference model to study the flow of the running coupling constants.
\end{rem}
In order to split the whole momentum space into the union of annuli, first of all we define
\begin{equation}
|\bm k'|=\sqrt{k_0^2+v_0||k'||_{\mathbb T}^2},
\end{equation}
where $||k'||_{\mathbb T}^2=\min_{n\in\mathbb{Z}}|k'-2\pi n|,$ and  $v_0=\sin p_F=\left .\frac{d }{dk}e(k)\right|_{k=p_F}$. So, we introduce a smooth $C^{\infty}$ function  $\chi:\mathcal{D}_{\Lambda}\times \mathcal{D}_{\beta, M}\to C^{\infty}([0,1])$ defined in such a way that
\begin{equation}
\chi(\bm k')=
\begin{cases}
1, \mbox{ if } |\bm k'|\leq \gamma^{-1} p_F/2 ,\\
0, \mbox{ if } |\bm k'|\geq p_F/2,
\end{cases}
\label{cut_off_chi_definition}
\end{equation}
where $\gamma >1$, and $|\bm k|=\sqrt{k_0^2+k^2}$. So, using $$1=1-\chi (k+p_F,k_0)-\chi(k-p_F,k_0)+\chi (k+p_F,k_0)+\chi(k-p_F,k_0)$$ we define the {\it ultraviolet and infrared propagators as follows}:
\begin{equation}
\hat{g}(\bm k)=\underbrace{\frac{1-\chi (k+p_F,k_0)-\chi(k-p_F,k_0)}{ik_0+\cos p_F-\cos k}}_{\hat g^{(u.v.)}(\bm k)}+\underbrace{\frac{\chi(k+p_F,k_0)+\chi(k-p_F,k_0)}{-ik_0+\cos p_F-\cos k}}_{\hat g^{(i.r.)}(\bm k)}.
\end{equation}

Now, using the {\it addition principle} (\ref{addition_principle}), 
we can introduce for any $\bm k\in\mathcal{D}_{\Lambda,\beta,M}$ a couple of Grassmann variables $\left(\psi^{(u.v.)}_{\bm k},\psi^{(i.r.)}_{\bm k}\right)$ with propagators respectively $\hat g^{(u.v.)}(\bm k)$ and $\hat g^{(i.r.)}(\bm k )$ so, given the potential $\mathcal{V}(\psi)$, we can split the integration as 
\begin{equation}
\int P(d\psi)e^{-\mathcal{V}\left(\psi\right)}=\int P(d\psi^{(i.r.)})\int P(d\psi^{(u.v.)})e^{-\mathcal{V}\left(\psi^{(u.v.)}+\psi^{(i.r.)}\right)}
\end{equation}
Finally, we can use the {\it invariance of exponentials} (\ref{invariance_of_exponential}) and define the {\it effective potential at scale $0$}:

\begin{equation}
\int P(d\psi) e^{-\mathcal V(\psi)}=\int P(d\psi^{(i.r.)})\int P(d\psi^{(u.v)}) e^{-\mathcal V\left(\psi^{(i.r.)}+\psi^{(u.v.)}\right)},
\end{equation}
so that
\begin{equation}
\begin{split}
e^{-\beta |\Lambda| f^{(M)}_{\Lambda,\beta}}=\int P(d\psi^{(i.r.)}) \exp \left( \sum{n\geq 1}\frac{1}{n!}\mathcal E_{u.v.}^T\left(-\mathcal V\left(\psi^{(i.r.)}+\cdot\right);n\right)\right):=\\ := e^{-\beta |\lambda| e_{M,0}}\int P(d\psi^{(i.r.)})e^{-\mathcal V^{(0)}(\psi^{(i.r.)})}.
\end{split}
\end{equation}
 
where with $\mathcal{E}_{u.v.}^T\left(\mathcal{V}\left(\cdot+\psi^{(i.r.)}\right);n\right)$ means that we are computing the truncated expectation functions with respect to the Gaussian Grassmann measure $P_{(u.v.)}$ associated to the propagator $\hat g^{(u.v.)}$ keeping the Grassmann variable $\psi^{(i.r.)}$ as an external field, and the effective potential $\mathcal V_0(\psi)$ can be written as
\begin{equation}
\mathcal V_0(\psi)=\sum_{n=1}^{\infty}\sum_{\substack{ \bm x_1,\dots,\bm x_{2n}\\ \in\\ \Lambda\times [0,\beta)}} \left(\prod_{j=1}^{n} \psi^{(i.r.)+}_{\bm x_{2j-1}}\psi^{(i.r.)-}_{\bm x_{2j}} \right) W_{M,2n} (\bm x_1,\dots,\bm x_{2n}).
\label{effective_potential_scale_0}
\end{equation}

\begin{lem}[Ultraviolet integration]
\label{lemma_ultraviolet_integration}
The kernels $W_{M,2n}(\bm x_1,\dots, \bm x_{2n})$ in the previous expansion are given by power series in $\lambda$ convergent in the complex disc $|\lambda|\leq \lambda_0$ for $\lambda_0$ small enough and independent of $M,\Lambda, \beta$, and satisfy the following bound
\begin{equation}
\frac{1}{\beta |\Lambda|}\int d\bm x_1 \dots d\bm x_{2n} \left| W_{M,2n}(\bm x_1,\dots,\bm x_{2n}) \right|\leq C^n |\lambda|^{\max\{1,n-1\}}.
\end{equation}
Moreover, the limits $e_0=\lim_{M\to \infty} e_{M,0}$ and $W_{2n}=\lim_{M\to \infty}(\bm x_1,\dots,\bm x_{2n})$ exist and are reached uniformly in M.
\end{lem}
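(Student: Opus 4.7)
My plan is to combine the determinant representation (\ref{expectation_truncated_determinants}) of truncated expectations with the Gram--Hadamard bound (Lemma \ref{lemma_gram_hadamard_inequality}), reducing the whole estimate to establishing \emph{uniform-in-$M$} bounds on the ultraviolet propagator $g^{(u.v.)}$. First I would unfold $\mathcal{V}^{(0)}$ using (\ref{invariance_of_exponential}), writing
\begin{equation*}
\mathcal V^{(0)}(\psi^{(i.r.)})\;=\;-\sum_{N\geq 1}\tfrac{1}{N!}\Big[\mathcal E^T_{u.v.}\big(-\mathcal V(\cdot+\psi^{(i.r.)});N\big)-\mathcal E^T_{u.v.}\big(-\mathcal V(\cdot);N\big)\Big],
\end{equation*}
so that the $\psi^{(i.r.)}$-independent part produces $-\beta|\Lambda|e_{M,0}$ and the rest carries $\mathcal V^{(0)}(\psi^{(i.r.)})$. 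Since each $\mathcal V$ is at most quartic in the Grassmann variables, expanding tensor powers and collecting the monomials with exactly $2n$ external $\psi^{(i.r.)}$ legs realizes $W_{M,2n}$ as a sum over $N\geq\max\{1,n\}$ of $N$-th order contributions. Applying (\ref{expectation_truncated_determinants}) to every truncated expectation rewrites each such contribution as a sum over anchored trees $T$ on $N$ vertices of
\begin{equation*}
\alpha_T\Big(\prod_{\ell\in T}g^{(u.v.)}(\ell)\Big)\int dP_T(\mathbf t)\,\det G^T(\mathbf t),
\end{equation*}
with $[G^T(\mathbf t)]_{(j,i),(j',i')}=t_{j,j'}\,g^{(u.v.)}(\bm x(j,i),\bm x(j',i'))$.

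Next I would bound $|\det G^T(\mathbf t)|$ via Gram--Hadamard after representing its entries as inner products in a suitable Hilbert space (a subcase of Appendix \ref{appendix_gram_representation}), obtaining a factor $\|g^{(u.v.)}\|_{\mathrm{Gram}}^{\mathcal O(N)}$ that, crucially, does \emph{not} carry the catastrophic $N!$ of a naive Feynman-graph expansion. I would then perform the spacetime integrations along the tree, absorbing $N-1$ of them into $\|g^{(u.v.)}\|_1$ factors. Combined with the well-known count $C^N N!$ of anchored trees on $N$ vertices with prescribed cluster sizes, this $N!$ cancels the $1/N!$ of the expansion. Inserting a factor $\epsilon^N$ (with $\epsilon=\max\{|\lambda|,|\nu|\}=\mathcal O(|\lambda|)$) from the vertex couplings and a factor $C^n$ from the combinatorics of selecting external legs, the sum over $N\geq\max\{1,n\}$ becomes a geometric series in $C|\lambda|$, convergent for $|\lambda|\leq\lambda_0$ with $\lambda_0$ small but independent of $M,\beta,\Lambda$, yielding the announced bound $C^n|\lambda|^{\max\{1,n-1\}}$.

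The main obstacle is the uniform-in-$M$ control of $g^{(u.v.)}$: everything rests on showing that $\|g^{(u.v.)}\|_1$ and $\|g^{(u.v.)}\|_{\mathrm{Gram}}$ are bounded independently of $M,\beta,\Lambda$. The cutoff $\chi$ excises the Fermi points, so $\hat g^{(u.v.)}$ is smooth in $\bm k$; integration by parts in $k_0$ against $e^{-ik_0 x_0}$ shows that $g^{(u.v.)}(\bm x)$ decays faster than any polynomial in $|x_0|$ away from the lattice $x_0\in\beta\mathbb Z$, while the slow tail $\hat g^{(u.v.)}(\bm k)\sim(ik_0)^{-1}$ at large $|k_0|$ is tamed by the $\delta_M$-regularization of the Gaussian measure. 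Once these bounds are in place, the same argument shows that $W_{M,2n}-W_{M',2n}\to 0$ in $L^1$ uniformly in $\lambda$ on the disc as $M,M'\to\infty$, so the $W_{2n}$ and $e_0$ exist as uniform analytic limits, concluding the proof.
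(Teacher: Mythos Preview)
The paper does not prove this lemma; immediately after stating it, it refers to \cite{benfatto1993beta,giuliani2009rigorous,giuliani2011ground} and notes that the ultraviolet regime is handled there ``by a multiscale analysis in order to deal with the (very mild) singularity of the free propagator at equal imaginary times,'' adding that one may alternatively follow Pedra--Salmhofer \cite{pedra2008determinant}. So there is no in-paper proof to compare against, only a pointer to where the real difficulty lies.

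Your scheme is correct up to the point you yourself flag as ``the main obstacle,'' but the resolution you sketch there does not work. The Gram representation of Appendix~\ref{appendix_gram_representation} is written for single-scale propagators, whose Fourier support sits in an annulus of width $O(\gamma^h)$; there $\|A\|\,\|B\|\sim(L\beta)^{-1}\sum_{\bm k}|\hat g^{(h)}(\bm k)|\sim\gamma^h$ is finite. For $g^{(u.v.)}$, however, the support in $k_0$ extends all the way to $|k_0|\sim M$, and since $|\hat g^{(u.v.)}(\bm k)|\sim|k_0|^{-1}$ for large $|k_0|$, the natural Gram constant behaves like $(L\beta)^{-1}\sum_{\bm k}|\hat g^{(u.v.)}(\bm k)|\sim\log M$: it is \emph{not} bounded uniformly in $M$. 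The $\delta_M$-regularization only pins down the value of $g$ at coinciding arguments; it does nothing to the $\ell^1$-in-$\bm k$ quantity entering the Gram--Hadamard bound. This slow $1/k_0$ tail is exactly the ``equal-time singularity'' the paper alludes to, and is the content of the ``Matsubara UV problem'' in the title of \cite{pedra2008determinant}.

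What the referenced proofs actually do is decompose $g^{(u.v.)}=\sum_{h=1}^{h_M} g^{(h)}$ into positive scales and run a short multiscale argument on the ultraviolet side: each $g^{(h)}$ has Gram constant $O(\gamma^{h})$ and $L^1$-norm $O(\gamma^{-h})$, and the UV power counting is super-renormalizable, so the scale sums converge and yield constants uniform in $M$. The Pedra--Salmhofer alternative avoids the multiscale step via a more refined determinant bound that does not rely on a naive Gram representation. Either way, a single-shot Gram--Hadamard on the undecomposed $g^{(u.v.)}$ is insufficient; you need to insert one of these two mechanisms at the point where you currently invoke the $\delta_M$-regularization.
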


We do not prove this Lemma because, even if the proof is not trivial, it is simpler than what we will do in studying the infrared regime, and uses the same techniques: we refer to \cite{benfatto1993beta} or to \cite{giuliani2009rigorous,giuliani2011ground}, in which the ultraviolet regime is studied by a multiscale analysis in order to deal with the (very mild) singularity of the free propagator at equal imaginary times. Anyway, the multiscale analysis for the ultraviolet regime is not stricly necessary, and it may be possible to avoid it following the ideas of \cite{pedra2008determinant}. 

\begin{rem}
The fact that the limits are reached {\it uniformely} in $M$ tells us that the infrared problem is essentially independent of M. Since in the infrared region $M$ does not play any role, from now on we drop the label $M$.
\end{rem}
What we have just explained technically, is the first step of Wilson's idea: ideed, we have integrated out the physical information coming from the high energies degree of freedom (far away from the singularities of the infinite volume free propagators), and we are left with an effective theory, described by the effective potential $\mathcal{V}^{(0)}$, describing fermions with momenta a bit closer to singularities. Of course, the information coming from higher energies degrees of freedom are averaged in the effective potential. It will be clear soon in this section how we keep this information in the effective potential by changing the so called coupling constants.

\subsection{Quasi-particles and multiscale expansion}

\paragraph{Quasi-particles in momentum space} As we have already noticed, there are two points in which $\hat g(\bm k)$ is singular and of course, having integrated a slice of momenta far away from singularity (ultraviolet integration), the infrared propagator is still singular in the same two points $\pm p_F$. So, driven by the idea of using again the {\it addition principle}, it is worth defining
\begin{equation}
\hat g^{(i.r.)}(\bm k)=\sum_{\omega=\pm 1}\frac{\chi\left(k-\omega p_F,k_0\right)}{-ik_0+\cos p_F-\cos k}=:\sum_{\omega=\pm}\hat g^{(i.r.)}_{\omega}(\bm k)
\end{equation}
allowing us to write 
\begin{equation}
\int P(\psi^{(i.r.)})e^{\mathcal{V}^{(0)}\left(\psi^{(i.r.)}\right)}=\prod_{\omega=\pm 1}\int P(d\psi_{\omega}^{(i.r.)})e^{\mathcal{V}^{(0)}\left(\psi_+^{(i.r.)}+\psi_-^{(i.r.)}\right)}
\label{quasi_partice_PBC_definition}
\end{equation}
which is the definition of the {\it quasi-particles} Grassmann fields, and the {\it label} $\omega$ is sometimes called the {\it branch label}: for the readers familiar with the Luttinger liquids theory, nearby the singularities we consider the linear approximation of the free propagator (\ref{free_propagator_PBC_linear_approx}), and $\omega=\pm$ labels the {\it right-moving and left-moving fermions}.

\paragraph{Quasi-particles in real space-time} We introduced the cutoff in momentum-space because we want to get closer and closer to the singularities in $(\pm p_F,0)$. It is worth keeping in mind that we are going to plug in the strategy we introduced in subsection (\ref{subsection_How_to_compute_truncated_expectations}), in particular formula (\ref{expectation_truncated_determinants}), so we will need to build up the matrix $G_T$ we introduced in (\ref{expectation_truncated_determinants}), and it is well known how to do it in real space. So it is convenient to define the quasi-particles in real space-time starting from the Fourier transform of the propagator $\hat g^{(i.r.)}$:

\begin{equation}
\begin{split}
g^{(i.r.)}(\bm x-\bm y)=\frac{1}{L\beta}\sum_{\omega=\pm 1} \sum_{\bm k\in \mathcal{D}_{\Lambda,\beta}}\frac{e^{-i k_0(x_0-y_0)}e^{-ik(x-y)}}{-ik_0+e(k)}\chi(k-\omega p_F,k_0)=\\
=\frac{1}{L\beta}\sum_{\omega=\pm 1}\sum_{k'\in\mathcal{D}^\omega_{\Lambda,\beta}} \frac{e^{-i k_0(x_0-y_0)}e^{-i\omega p_F(x-y)}e^{-ik'(x-y)}}{-ik_0+e(k'+\omega p_F)}\chi(k',k_0)=\\
=: \sum_{\omega = \pm 1}e^{-i\omega p_F(x-y) }g^{(i.r.)}_{\omega}(\bm x-\bm y)
\end{split}
\label{free_propagator_infrared_quasi_particles}
\end{equation}

where $\mathcal{D}^\omega_{\Lambda,\beta}=\mathcal{D}_{\Lambda,\beta}-(\omega p_F, 0)$ and 
\begin{equation}
g^{(i.r.)}_{\omega}(\bm x-\bm y)=\frac{1}{L\beta}\sum_{k'\in\mathcal{D}^\omega_{\Lambda,\beta}} \frac{e^{-i\bm k'\cdot (\bm x-\bm y)}}{-ik_0+e(k'+\omega p_F)}\chi(k',k_0).
\end{equation}

\paragraph{Multiscale expansion}

The idea is to approach the singuarities in infinitely many steps. So, we introduce the telescopic identity $$\chi (\bm k)= \sum_{h=-\infty}^0 \left(\chi(\gamma^{-h}\bm k)-\chi(\gamma^{-h+1}\bm k)\right):=\sum_{h=-\infty}^0f_h(\bm k), $$
which implies the obvious definition of {\it propagators on single scale}

\begin{equation}
\hat g^{(i.r.)}_\omega(\bm k)=\sum_{h=-\infty}^0\frac{f_h(k-\omega p_F,k_0)}{-ik_0+\cos p_F-\cos k}=:\sum_{h=-\infty}^0 \hat g^{(h)}_\omega(\bm k)=:\hat g^{(\leq h)}.
\label{propagators_splitted_on_all_scales}
\end{equation}

\begin{rem}
In fact, as far as $L$ and $\beta$ are finite, the sum on $h$ is a sum over a finitely many terms ({\it i.e.} we have a {\it natural} cut-off): indeed by the very definition of $\mathcal D_{\beta}$ (\ref{momenta_space_time}), $|k_0|\geq 2\pi/\beta$, so that $f_h(\bm k)=0$ for any $h<h_\beta$ where $$h_\beta =\min \left\{h: \gamma^{h+1}>\pi/\beta \right\},$$ {\it i.e. } $h_\beta=O\left(\log \beta\right)$ so, as already pointed out, we perform our computations keeping $L$ and $\beta$ finite and then, having obtained bounds independent of $L$ and $\beta$, we take the thermodinamic and the zero temperature limits.
\end{rem}
Again, by combining the {\it addition principle} (\ref{addition_principle}) and the {\it invariance of the exponential} (\ref{invariance_of_exponential}) we can split first of all the Grassmann field $\psi_{\omega}^{(\leq 0)}$ into to Grassmann fields $\psi_{\omega}^{(0)}$ and $\psi_{\omega}^{(\leq -1)}$ with propagators respectively $\hat g^{(0)}_\omega$ and 
\begin{equation}
\hat g_\omega^{(\leq -1)}(\bm k)=\sum_{h\leq -1}\hat g^{(h)}_{\omega}(\bm k),
\end{equation}
or, in a wider generality, $\psi^{h+1}$ and $\psi^{(\leq h)}_\omega$ with propagators respectively $\hat g^{(h+1)}$ and
\begin{equation}
\hat g_\omega^{(\leq h)}(\bm k)=\sum_{j\leq h}\hat g^{(j)}_\omega(\bm k),
\end{equation}
by which we can compute the effective potential on scale $-1$ by
\begin{eqnarray}
\begin{aligned}
\int P(d\psi^{(\leq 0)})e^{\mathcal{V}^{(0)}(\psi^{(\leq 0)})}=\int P(d\psi^{(\leq -1)})\int P(d\psi^{(0)})e^{\mathcal{V}^{(0)}(\psi^{(\leq 0)})}=\\
=: e^{|\Lambda|\beta e_{0}}\int P(d\psi^{(\leq -1)})e^{\mathcal{V}^{(-1)}(\psi^{(\leq -1)})},\\
\end{aligned}\\
\begin{aligned}
|\Lambda|\beta e_0+\mathcal{V}^{(-1)}\left(\psi^{(\leq -1)}\right)=\sum_{n=0}^{\infty}\frac{1}{n!} \mathcal{E}_0^T\left(\mathcal{V}^{(0)}\left(\cdot+\psi^{(\leq -1)}\right);n\right)=\\
=\sum_{n=0}^{\infty}\frac{1}{n!} \mathcal{E}_0^T\left(\sum_{m=0}^{\infty}\frac{1}{m!} \mathcal{E}_{(u.v.)}^T\left(\mathcal{V}\left(\cdot+\psi^{(\leq 0)}\right);m\right);n\right).
\end{aligned}
\end{eqnarray}

where $\mathcal V^{(-1)}\left(0\right)=0$ and, iteratively, for any scale $h$ we can define an effective potential by

\begin{eqnarray}
\begin{aligned}
e^{\mathcal{V}^{(h)}\left(\psi^{(\leq h)}\right)}e^{+|\Lambda|\beta e_{h+1}}=\\=\int P(d\psi^{(h+1)})\dots \int P(d\psi^{(0)}) \int P(d\psi ^{(u.v.)})e^{\mathcal{V}\left(\psi^{(\leq h)}+\psi^{(h+1)}+\dots+(\psi^{(u.v.)}\right)} ,
\end{aligned}\\
|\Lambda|\beta e_{h+1}+\mathcal{V}^{(h)}\left(\psi^{(\leq h)}\right)=\sum_{n=0}^{\infty} \frac{1}{n!}\mathcal{E}^T_{h+1}\left(\mathcal{V}^{(h+1)}(\cdot+\psi^{(\leq h)});n\right).
\label{effective_potential_scale_h_recursive}
\end{eqnarray}
\begin{figure}[htbp]
\centering
\begin{tikzpicture}
[scale=1, transform shape]
\node at (1,3) {$\mathcal V^{(-1)}$ =};
\node at (1,0) {$\mathcal V^{(0)}$ =};
\node at (4.5, 3) {+};
\node at (7.5,3) {+};
\node at (10.5,3) {...};
\node at (4,0) {=};
\node at (7.5, 0) {+};
\node at (10.5,0) {+};
\node at (13.5,0) {...};
\draw [very thick] (2,3) -- ++ (2,0) ++ (1,0) -- ++ (1,0) -- ++ (1,1) ++ (-1,-1) -- ++ (1,-1) ++ (1,1)  -- ++ (1,0) -- ++ (1,1) ++ (-1,-1) -- ++ (1,-1) ++ (-1,1) -- ++ (1,0);
\fill (2,3) circle (0.1);
\fill (2,3) ++ (1,0) circle (0.1);
\fill (2,3) ++ (1,0) ++ (1,0) circle (0.2);
\fill (2,3) ++ (1,0) ++ (1,0) ++ (1,0) circle (0.1);
\fill (2,3) ++ (1,0) ++ (1,0) ++ (1,0) ++ (1,0) circle (0.1);
\fill (2,3) ++ (1,0) ++ (1,0) ++ (1,0) ++ (1,0) ++ (1,1) circle (0.2);
\fill (2,3) ++ (1,0) ++ (1,0) ++ (1,0) ++ (1,0) ++ (1,-1) circle (0.2);
\fill (2,3) ++ (1,0) ++ (1,0) ++ (1,0) ++ (1,0) ++ (2,0) circle (0.1);
\fill (2,3) ++ (1,0) ++ (1,0) ++ (1,0) ++ (1,0) ++ (3,0) circle (0.1);
\fill (2,3) ++ (1,0) ++ (1,0) ++ (1,0) ++ (1,0) ++ (3,0) ++ (1,1) circle (0.2);
\fill (2,3) ++ (1,0) ++ (1,0) ++ (1,0) ++ (1,0) ++ (3,0)  ++ (1,-1) circle (0.2);
\fill (2,3) ++ (1,0) ++ (1,0) ++ (1,0) ++ (1,0) ++ (3,0) ++ (1,0) circle (0.2);
\draw [very thick] (2,0) -- ++ (1,0);
\fill  (2,0) circle (0.1);
\fill (2,0) ++ (1,0) circle (0.2);
\draw [very thick] (5,0) -- ++ (2,0) ++ (1,0) -- ++ (1,0) -- ++ (1,1) ++ (-1,-1) -- ++ (1,-1) ++ (1,1)  -- ++ (1,0) -- ++ (1,1) ++ (-1,-1) -- ++ (1,-1) ++ (-1,1) -- ++ (1,0);
\fill (5,0) circle (0.1);
\fill (5,0) ++ (1,0) circle (0.1);
\fill (5,0) ++ (1,0) ++ (1,0) circle (0.1);
\fill (5,0) ++ (1,0) ++ (1,0) ++ (1,0) circle (0.1);
\fill (5,0) ++ (1,0) ++ (1,0) ++ (1,0) ++ (1,0) circle (0.1);
\fill (5,0) ++ (1,0) ++ (1,0) ++ (1,0) ++ (1,0) ++ (1,1) circle (0.1);
\fill (5,0) ++ (1,0) ++ (1,0) ++ (1,0) ++ (1,0) ++ (1,-1) circle (0.1);
\fill (5,0) ++ (1,0) ++ (1,0) ++ (1,0) ++ (1,0) ++ (2,0) circle (0.1);
\fill (5,0) ++ (1,0) ++ (1,0) ++ (1,0) ++ (1,0) ++ (3,0) circle (0.1);
\fill (5,0) ++ (1,0) ++ (1,0) ++ (1,0) ++ (1,0) ++ (3,0) ++ (1,1) circle (0.1);
\fill (5,0) ++ (1,0) ++ (1,0) ++ (1,0) ++ (1,0) ++ (3,0)  ++ (1,-1) circle (0.1);
\fill (5,0) ++ (1,0) ++ (1,0) ++ (1,0) ++ (1,0) ++ (3,0) ++ (1,0) circle (0.1);
\end{tikzpicture}
\caption{Graphic representation of the first step of the iteration: the graphic expression of $\mathcal V^{(-1)}$ in the first line is the same as the graphical expression of $\mathcal V^{(0)}$ in the second line, where points have been replaced by big black docks, and the meaning of a big black dot attached to the lines is clear in the second line, {\it i.e.} it is a shortcut to write $\mathcal V^{0}$.}
\label{figure_effective_potentiale_scale_0}
\end{figure}
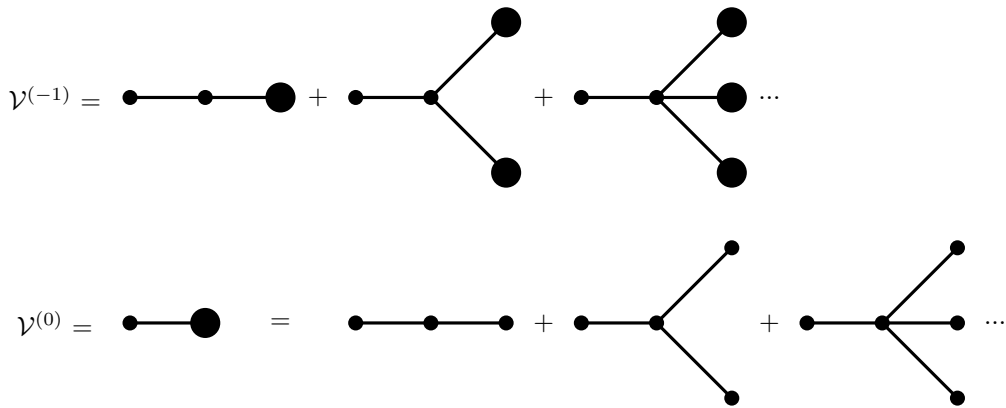
where the truncated expectation $\mathcal E^T_{h}$ (we can think that $h$ can assume also the values $h=1\equiv u.v.$, $(\leq h-1)=(\leq 0) \equiv (i.r.)$ to have a general definition), given a polynomial $F\left(\psi^{(h)}\right)$ with coefficients depending on $\psi^{(\leq h-1)}$, is defined as
\begin{equation}
\mathcal E^T_h \left(F(\cdot);n\right)=\frac{\partial^n}{\partial\lambda^n}  \int P(d\psi^{(h)}) e^{\lambda F(\psi^{(h)})}\Bigl|_{\lambda =0}.
\end{equation}
and, in the argument of the sum, we could express $\mathcal{V}^{(h+1)}$ in terms of $\mathcal{V}^{(h+2)}$ and so on until the only potential involved in the computation is the very first. The recursive structure of these formulae suggests their {\it diagrammatic representation}, known as Gallavotti-Nicolò trees \cite{gallavotti1985renormalization}. In order to understand how to draw these trees before a systematic explanation, it worths looking at the effective potential on scale $-1$ (the first non trivial one): in figure (\ref{figure_effective_potentiale_scale_0}) it is expressed in terms of the {\it previous} effective potential $\mathcal{V}^{(0)}$.\\
In general, the effective potential at scale $h$ can be written as
\begin{equation}
\mathcal V^{(h)}(\psi)=\sum_{n=1}^{\infty}\sum_{\substack{ \bm x_1,\dots,\bm x_{2n}\\ \in\\ \Lambda\times [0,\beta)}} \left(\prod_{j=1}^{n} \psi^{(\leq h)+}_{\bm x_{2j-1}}\psi^{(\leq h)-}_{\bm x_{2j}} \right) W^{(h)}_{2n} (\bm x_1,\dots,\bm x_{2n}).
\label{effective_potential_scale_h}
\end{equation}
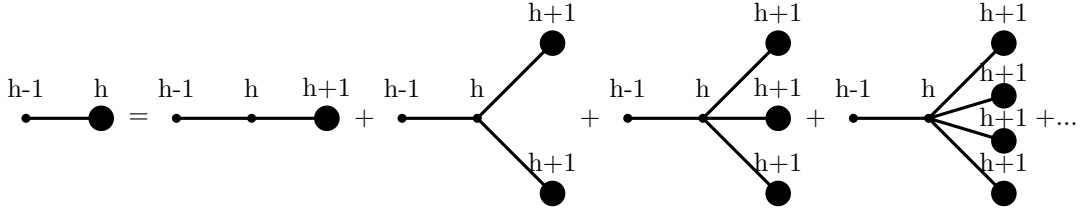
\begin{figure}[htbp]
\begin{center}
\begin{tikzpicture}
\node at (-1.5,2) {=};
\node at (1.5,2) {+};
\node at (4.5,2) {+};
\node at (7.5,2) {+};
\node at (10.7,2) {+...};
\fill (-3,2) circle (0.06);
\node at (-3,2.4) {h-1};
\fill (-2,2) circle (0.17);
\node at (-2,2.4) {h};
\fill (-1,2) circle (0.06);
\node at (-1,2.4) {h-1};
\fill (0,2) circle (0.06);
\node at (0,2.4) {h};
\fill (1,2) circle (0.17);
\node at (1,2.4) {h+1};
\fill (2,2) circle (0.06);
\node at (2,2.4) {h-1};
\fill (3,2) circle (0.06);
\node at (3,2.4) {h};
\fill (4,3) circle (0.17);
\node at (4,3.4) {h+1};
\fill (4,1) circle (0.17);
\node at (4,1.4) {h+1};
\fill (5,2) circle (0.06);
\node at (5,2.4) {h-1};
\fill (6,2) circle (0.06);
\node at (6,2.4) {h};
\fill (7,2) circle (0.17);
\node at (7,2.4) {h+1};
\fill (7,3) circle (0.17);
\node at (7,3.4) {h+1};
\fill (7,1) circle (0.17);
\node at (7,1.4) {h+1};
\fill (8,2) circle (0.06);
\node at (8,2.4) {h-1};
\fill (9,2) circle (0.06);
\node at (9,2.4) {h};
\fill (10,3) circle (0.17);
\node at (10,3.4) {h+1};
\fill (10,1) circle (0.17);
\node at (10,1.4) {h+1};
\fill (10,1.7) circle (0.17);
\node at (10,2.0) {h+1};
\fill (10,2.3) circle (0.17);
\node at (10,2.6) {h+1};
\draw [very thick] (-3,2) -- ++(1,0) ++ (1,0) -- ++ (1,0) -- ++ (1,0) ++ (1,0) 
-- ++ (1,0) -- ++ (1,1) ++ (-1,-1) -- ++(1,-1) ++ (0,1) ++ (1,0) -- ++ (1,0) -- ++ (1,1) ++ (-1,-1) -- ++ (1,-1) ++ (-1,1) -- ++ (1,0) ++ (1,0) -- ++ (1,0) -- ++ (1,1) ++ (-1,-1) -- ++ (1,-1) ++ (-1,1) -- ++ (1,0.3) ++ (-1,-0.3) -- ++ (1, -0.3);
\end{tikzpicture}
\end{center}
\caption{Graphical representation of $\mathcal V^{(h-1)}$, where the big black dot represents $\mathcal V^{(h)}$. It sould be thought of as the generalization at a generic scale $h$ of figure (\ref{figure_effective_potentiale_scale_0})}
\end{figure}

\subsection{Gallavotti-Nicolò trees}

So far we have rewritten the quantities we are interested in, as the {\it specific free energy} (\ref{free_energy_specific_PBC}) and the {\it Schwinger functions} (\ref{schwinger_function_n_points_PBC}) by combining the {\it Grassmann integrals representation} (which implies that we can express these quantities just in terms of {\it truncated expectation functions} of some simple object, as effective interaction (\ref{free_energy_as_sum_of_trunc_expec})) and a {\it multiscale representation} (based on a splitting of the momentum space due to the fact that the main ingredient of our analysis, {\it i.e.} the free propagator (\ref{free_propagator_PBC}), is singular in two points, and we want to approach these singularities following Wilson's idea of RG).\\ This led us to a recursive formula (\ref{effective_potential_scale_h_recursive})  for the effective potentials which in principle involves $h$ sums over infinitely many terms, and we have to deal with its convergence. As we will see, the important tool of Gallavotti-Nicolò trees \cite{gallavotti1985renormalization} allows us to exploit the {\it multiscale structure} of these formulae in order to study in a systematic way the convergence of the series we want to study.

\paragraph{Construction of the tree} Before starting: from now on {\it line} and {\it branch} have the same meaning.\\
Graphically, first of all we consider the plane $(x,y)$, we draw the vertical lines $x=h, h+1, h+2,\dots,0,1$, and we consider all possible graphs obtained as follows. We pick a point on the vertical line $x=h$, we call it $r$ (meaning the {\it root}  of the tree), and we draw an horizontal {\it line } starting from $r$ and leading to a point $v_0$ on the vertical line $x=h_{v_0}>h$, which is the {\it first non trivial vertex}, because it is the first (starting from the left) branching point of $s_{v_0}\geq 2$ lines, forming an angle $\vartheta_j\in (-\pi/2,\pi/2)$ with the x-axis, where $j=1,\dots,s_{v_0}$, and ending into point each of which is located on some vertical line $x=h_{v_0+1},h_{v_0+2},\dots$, which in turn will become branching points. We go on in such a way until $n$ points on the vertical line $x=1$ are reached, and we call them the {\it endpoints}. All the {\it branching points} between the root and the endpoints will be called the {\it nontrivial vertices}, while all the intersections of the lines connecting two nontrivial vertices with the vertical lines will be called {\it trivial vertices}. The integer $n$ we have already introduced, which is the number of endpoints, is the {\it order} of the tree; in sake of clarity, we will label them with numbers from $1$ to $n$ going from the top to the bottom.\\
Among all the trees, we associate a special name to the tree having only one line connecting the root to a vertex on the line $x=1$: it is the {\it trivial tree} $\tau_0$, and in such a case the root has scale $h=1$ (it is important to identify this tree among the others because it will be the starting point of an iterative procedure to rewrite the effective potentials (\ref{effective_potential_scale_h_recursive}) in terms of some numerical values we will associate to these graphical elements). The graph obtained is a {\it tree graph}, because it has no loops and it consists of a set of lines connecting a {\it partially ordered} set of points (that we call {\it vertices}). In particular, having a special point called {\it root}, it is a {\it rooted tree}. We will denote the partial ordering with the symbol $\prec$, meaning that if two vertices $v$ and $w$ are ordered as $v\prec w$, then $h_v<h_w$ (of course, since there is a one to one correspondence between branches and vertices, {\it i.e.} we can associate to each branch the vertex it enters, also the branches are ordered). By construction, to each vertex $v$ is associated an integer number $h_v$ that we call {\it scale label}. We call $\mathcal T_{h,n}$ the family of trees  with $n$ endpoints and the root at scale $h$, and we call a generic element of this family $\tau$ (see Figure (\ref{figure_gallavotti_nicolo_tree})).\\
Of course we will use the {\it Gallavotti-Nicolò trees} formalism in order to study a very precise problem, so we will need to introduce other labels to branches and/or vertices: we will call the set of all the vertices of the tree $\tau$ (included trivial vertices and endpoints) $V(\tau)$, and we introduce a special notation for the endpoints (because it will be useful in the notation we are going to use) $V_f(\tau)\in V(\tau)$. We remark that, by construction (of the tree) and by definition of the sets of vertices, for any $v\in V_f(\tau)$ $h_v=1$, while for any $w\in V(\tau)\setminus V_f(\tau)$ $h<h_w<1$.

\begin{figure}
\centering
 \begin{tikzpicture} 
[scale=1, transform shape]
\foreach \i in {1,2,3,4,5,6,7,8,9,10,11,12,13,14} {%
\draw [thick] (\i,2.9) -- (\i, 11.2); }
\foreach \j in {1,2,3,4,5} {%
\draw [very thick] (\j,7) -- ++ (1,0);
\fill (12,9) circle (0.1);
\fill (13,9.5) circle (0.1);
\fill (\j,7) circle (0.1);
\fill (6,7) circle (0.1);
}
\foreach \j in {0,1,2,3,4,5} {%
\draw [very thick] (6+\j, 7 -\j *0.5) -- +(1,-0.5);
\fill (6+\j,7-\j*0.5) circle (0.1);}
\fill (6+6, 7-3) circle (0.1);
\foreach \j in {0,1,2,3} {%
\draw [very thick] (6+\j, 7 +\j *0.5) -- +(1,+0.5);
\fill (6+\j,7+\j*0.5) circle (0.1);}
\fill (6+4, 7+2) circle (0.1);
\foreach \j in {0,1,2,3} {%
\draw [very thick] (10+\j, 9 +\j *0.5) -- +(1,+0.5);
\fill (10+\j,9+\j*0.5) circle (0.1);}
\fill (14, 11) circle (0.1);
\foreach \j in {0,1,2,3} {%
\draw [very thick] (10+\j, 9 -\j *0.5) -- +(1,-0.5);
\fill (10+\j,9-\j*0.5) circle (0.1);}
\fill (14, 7) circle (0.1);
\draw [very thick] (13,7.5) -- (14,8);
\fill (14,8) circle (0.1);
\foreach \j in {0,1} {%
\draw [very thick] (12+\j, 8 +\j *0.5) -- +(1,+0.5);
\fill (12+\j,8+\j*0.5) circle (0.1);}
\fill(14,9) circle (0.1);
\foreach \j in {0,1,1} {%
\draw [very thick] (12+\j, 4 +\j *0.5) -- +(1,+0.5);
\fill (12+\j,4+\j*0.5) circle (0.1);}
\foreach \j in {0,1,1} {%
\draw [very thick] (12+\j, 4 -\j *0.5) -- +(1,-0.5);
\fill (12+\j,4-\j*0.5) circle (0.1);}
\fill (14,3) circle (0.1);
\fill (14,4) circle (0.1);
\fill (14,5) circle (0.1);
\draw [very thick] (12,4) -- ++ (1,0) -- ++ (1,0);
\fill (13,4) circle (0.1);
\draw [very thick] (11,8.5) -- (14, 10);
\fill (14,10) circle (0.1);
\node at (1,2.7) {$\bm h$};
\node at (2,2.7) {$\bm h+1$};
\node at (3,2.7) {$\bm h+2$};
\foreach \i in {4,5,6,7,8} {%
\node at (\i,2.8) {...};}
\node at (9,2.7) {$\bm h_v$};
\node at (10,2.7) {$\bm h_v+1$};
\foreach \i in {11,12,13} {%
\node at (\i,2.8) {...};}
\node at (14,2.7) {$\bm 1$};
\node at (9,8.8) {$ v$};
\node at (1,7.3) {$ r$};
\node at (2,7.3) {$ v_0$};
\foreach \i in {0,1,2,3} {%
\draw [very thick] (8+\i,8-0.5*\i) -- ++ (1,-0.5);
\fill (8+\i,8-0.5*\i) circle (0.1);}
\fill (12,6) circle (0.1);
\foreach \i in {0,1} {%
\draw [very thick] (12+\i,6-0.25*\i) -- ++ (1,-0.25);
\fill (12+\i,6-0.25*\i) circle (0.1);}
\foreach \i in {0,1} {%
\draw [very thick] (12+\i,6+0.25*\i) -- ++ (1,+0.25);
\fill (12+\i,6+0.25*\i) circle (0.1);}
\fill (14, 6.5) circle (0.1);
\fill (14, 5.5) circle (0.1);
\end{tikzpicture}
\caption{Example of a tree $\tau\in \mathcal T_{h,n}$ where $n=10$.}
\label{figure_gallavotti_nicolo_tree}
\end{figure}
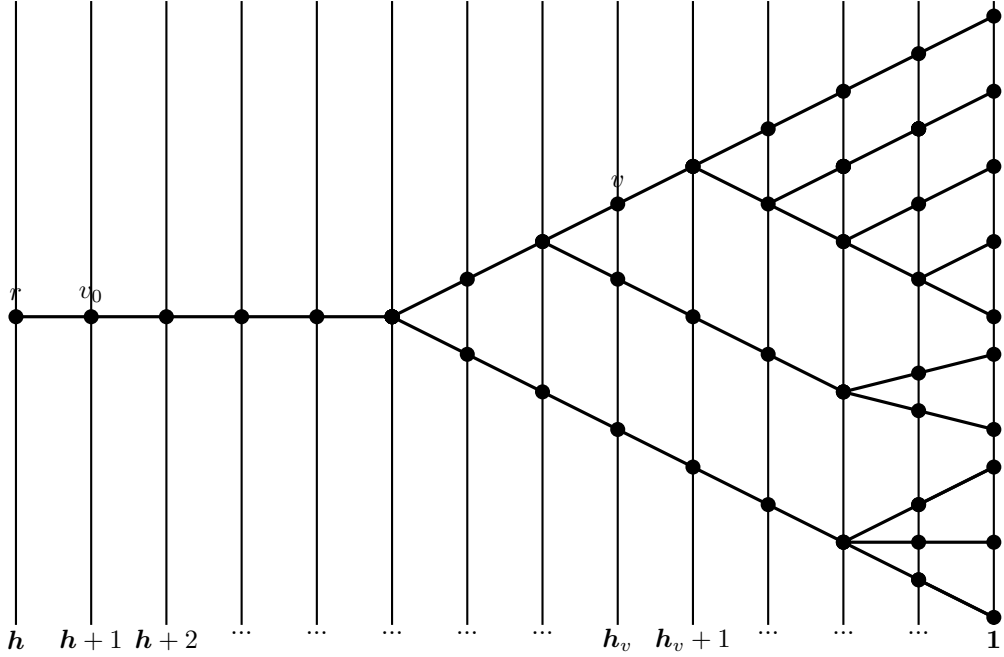

\paragraph{The "importance" of the endpoints}
Since we have built up these trees by an iterative procedure up to the endpoints, the endpoints themself have a kind of a special role: indeed they are the vertices corresponding to an interaction part of the effective potential $\mathcal V^{(0)}$ (\ref{effective_potential_scale_0}) (in fact the vertices $V(\tau)\setminus V_f(\tau)$ correspond to effective potentials).\\
Then each endpoint $v\in V_f(\tau)$ on scale $h_v=1$ is labeled by a further label $i$ which identifies in a unique way the contribution $V_i$ to the potential $\mathcal V^{(0)}$, and we will say that the endpoint $v\in V_f(\tau)$ is {\it of type} $r_i$ if $i_v=i$.\\
Besides, we assign to each endpoint $v\in V_f(\tau)$ a set of spacetime points $\{\bm x_v\}$ which are the integration variables corresponding to the particular interaction contribution $V_i$: only one integration variable if we have an endpoint of $\nu$ tipe, {\it i.e.} a counterterm $\nu \sum_{x\in\Lambda}\psi^+(\bm x)\psi^-(\bm x)$, or two integration points if we have an endpoint of type $\lambda$, {\it i.e.} a two-points interaction and so on. We extend the assignment of the index $\{\bm x_v\}$ also to the not-endpoint vertices $v\in V(\tau)\setminus V_f(\tau)$, saying that $\{\bm x_v\}$ is the family of all space-time points associated with the endpoints following $v$, {\it i.e.} with all the endpoints $w\in V_f(\tau)$ such that $v\prec w$.\\
Finally, we introduce a {\it field label} $f$ to recognize the different fields appearing in the terms associated to the endpoints, and for each endpoint $v\in V_f(\tau)$ we collect the field labels  into a further label $I_v=\{f_1^{(v)},\dots,f_s^{(v)}\}$ in the case that the endpoint $v$ is associated to $s$ fields; so the variables $\bm x(f)$, $\epsilon(f)$ and  $\omega(f)$ will indicate respectively the {\it spacetime point}, the {\it creation/annihilation index} and the {\it quasi-particle index} of the field $f$. As a concrete example, let us consider a quartic endpoint $v\in V_f(\tau)$ associated with  four different fields and four different integration points: $$\lambda \sum_{\bm x_1\dots \bm x_4} \psi^+_{\omega_1}(\bm x_1)\psi^-_{\omega_2}(\bm x_2)\psi^+_{\omega_3}(\bm x_3)\psi^-_{\omega_4}(\bm x_4)W_{4,\bm \omega}(\bm x_1,\dots,\bm x_4),$$
so $\{\bm x_v\}=\{\bm x_1, \bm x_2,\bm x_3, \bm x_4\}$, $I_v=\{f_1,f_2, f_3,f_4\}$ so that  
\begin{equation}
\begin{split}
\bm x(f_1)=\bm x_1,\hspace{3mm} \epsilon(f_1)=+,\hspace{3mm} \omega(f_1)=\omega_1,\\
\bm x(f_2)=\bm x_2,\hspace{3mm} \epsilon(f_2)=-,\hspace{3mm} \omega(f_2)=\omega_2,\\
\bm x(f_3)=\bm x_3,\hspace{3mm} \epsilon(f_3)=+,\hspace{3mm} \omega(f_3)=\omega_3, \\
\bm x(f_4)=\bm x_4,\hspace{3mm} \epsilon(f_1)=-,\hspace{3mm} \omega(f_4)=\omega_4.
\end{split}
\end{equation}
Finally, we call the family of the spacetime points $\bm x(I_v)=\{\bm x(f): f\in I_v\}$.
\paragraph{Clusters and effective potentials} Once we have introduced Gallavotti-Nicolò trees,  we can exploit this diagramatic structure to write the effective potentials on scale $h$ as
\begin{equation}
\mathcal V^{(h)}\left(\psi^{(\leq h)}\right)+L\beta e_{h+1}= \sum_{n=1}^{\infty}\sum_{\tau\in\mathcal{T}_{h,n}}\mathcal{V}^{(h)}\left(\tau, \psi^{(\leq h)}\right)
\label{effective_potential_sum_over_trees}
\end{equation}
where $e_{h+1}$ is a normalization factor for any $h\leq 1$, and of course there is a little abuse of notation in the use of the symbol $\mathcal V^{(h)}$, which has two different meanings in the left side, where $\mathcal{V}^{(h)}(\cdot)$ depends only on the fields on scale $\leq h$, and in the right hand side where the argument of the sum, which depends both on the fields on scale $\leq h$ and on a specific tree $\tau$ chosen among the trees of the family $\mathcal{T}_{h,n}$, is defined iteratively as follows:
\begin{itemize}
\item if $\tau$ is trivial, $\mathcal V^{(0)}(\tau_0,\psi^{(\leq 0)})$ is given simply by one of the contributions to $\mathcal{V}(\psi)$ of the interaction,
\item so if $\tau$ is not trivial, there is a {\it first vertex} $v_0$ the $s_{v_0}$ subtrees $\tau_1,\dots,\tau_{s_{v_0}}\subset \tau$ with root $v_0$ arise from, then
\begin{equation}
\mathcal{V}^{(h)}\left(\tau, \psi^{(\leq h)}\right)=\frac{1}{s_{v_0}!} \mathcal{E}_{h+1}^T\left(\mathcal{V}^{(h+1)}\left(\tau_1,\psi^{(\leq h+1)}\right),\dots, \mathcal{V}^{(h+1)}\left(\tau_{s_{v_0}},\psi^{(\leq h+1)}\right)\right)
\label{effective_potentials_tree_wrt_first_vertex}
\end{equation}
\item of course, each of the $s_{v_0}$ trees can be handled as the {\it original} tree $\tau$ (note that $s_v=0$ if $v\in V_f(\tau)$ is an endpoint), and we can iterate, for each argument of $\mathcal{E}_{h+1}^T$, the formula (\ref{effective_potentials_tree_wrt_first_vertex}).
We define, for any vertex $v$ of the tree, a subset $P_v$ of $I_v$, the {\it external fields} of $v$, that must satisfy some constraints:
\begin{itemize}
\item if $v$ is an endpoint, $P_v=I_v$,
\item if $v$ is not an endpoint,  $v_1,\dots, v_{s_v}$ are $s_v\geq 1$ vertices immediately following it, so $P_v\subset\cup_{i}P_{v_i}$,
\item if $v$ is not an endpoint, we define $Q_v=P_v \cap P_{v_i}$, being the set of labels of the fields associated to the external fields both of $v_i$ and of $v$, that implies $P_{v}=\cup_i^{s_v} Q_{v_i}$,
\item \begin{equation}
 \tilde \psi^{(\leq h_v)}\left(P_{v_j}\right)=\prod_{f\in P_{v_j}}\psi^{(\leq h_v)\epsilon(f)}_{\omega(f)}(\bm x(f)),
\label{tilda_psi_product_of_fields}
\end{equation}
is a product of $|P_{v_j}|$ fields on scale $\leq h_v$ (as almost all these formulae, this one can be proven by induction on the scale $h_v$).
\end{itemize}
Finally, we get the general formula
\begin{equation}
\begin{aligned}
\mathcal{V}^{(h)}\left(\tau, \psi^{(\leq h)}\right)= \left(\prod_{v\in V(\tau)}\frac{1}{s_v!}\right)\\ \mathcal{E}_{h+1}^T\left(\mathcal{E}_{h+2}^T\left(\mathcal{E}_{h+3}^T\dots \mathcal{E}_{-2}^T\left(\mathcal{E}_{-1}^T\left(\mathcal{E}_{0}^T\left(\mathcal V^{(0)}(\tau_0,\psi^{(\leq 1)}),\dots\right),\dots\right),\dots\right),\dots\right),\dots\right)
\end{aligned}
\end{equation}
where, thanks to the first step, we know $\mathcal V^{(0)}(\tau_0,\psi^{(\leq 1)})$.\\
Since the starting point of the latter iterative formula are the {\it trivial trees}, the {\it direction} to follow in order to compute the truncated expectation values is from the endpoints toward the root: once a vertex $v$ is reached, one is left with computing a quantity as
\begin{equation}
\frac{1}{s_v!}\mathcal{E}_{h_v}^T\left(\tilde \psi^{(\leq h_v)}\left(P_{v_1}\right),\dots, \tilde \psi^{(\leq h_v)}\left(P_{v_{s_v}}\right)\right).
\label{expectation_truncated_scale_h_v_gallavotti_nicolo}
\end{equation}
\end{itemize}

\begin{rem}
\label{remark_cluster_structure_gallavotti_nicolo_trees}
At this point, the intrinsic cluster structure of the Gallavotti-Nicolò trees comes out: indeed, using the {\it determinant expansion} (\ref{expectation_truncated_determinants})
$$\mathcal{E}^T\left(\tilde \psi\left( P_1\right),\dots, \tilde \psi \left( P_s\right)\right)= \sum_T \alpha_T \left(\prod_{\ell\in T}g_{\ell}\right)\int dP_T(\bm t)\det G^T(\bm t)$$
 to compute (\ref{expectation_truncated_scale_h_v_gallavotti_nicolo}), we obtain a sum over all possible Feynman diagrams obtained by contracting the half-lines coming from the sets $P_{v_1}, \dots, P_{v_{s_v}}$: when we reach, moving along the tree $\tau$ (from the endpoint towards the root), a vertex $v\in V(\tau)$, we construct a {\it diagram} formed by lines $\ell$ on scales $h_\ell\geq h_v$. Moreover, for any vertex $w\succ v$ there is a subdiagram, that we call $\Gamma_w$ such that all the lines $h_w$ form a connected set if all the further subdiagrams $\Gamma_{w_j}$, $j=1,\dots,w_{s_w}$, corresponding to the $s_w$ vertices immediately following $w$ ({\it i.e.} the roots of the subtrees arising from $w$) are seen as as single elements, which is a consequence of the very definition of truncated expectation.\\
We define a {\it cluster} on scale $h$  as a set of endpoints which are connected by lines on scale $h'\geq h$ such that there is at least one line on scale $h$. The endpoints are trivial clusters at scale $h=1$.
\end{rem}

In this way, we set up a {\it hierarchical structure} of the endpoints into {\it clusters} contained into each other following the order of scales $h\leq 1$. So
\begin{itemize}
\item We associate with each vertex $v\in V(\tau)$ the cluster $G_v$ containing all the endpoints following $v$. From this definition it follows the inclusion relation
\begin{equation}
v\prec w \Rightarrow G_v \supset G_w
\label{hierarchy_inclusion_relation_clusters}
\end{equation}
So there is a one to one map allowing us to represent a trees as a set of hierarchically organized clusters and {\it vice versa}.
\item Given the cluster structure, we can define the {\it anchored trees}: if all the maximal subclusters of $G_v$: $G_{v_1},\dots,G_{v_{s_v}}\subset G_v$ are thought as points, then the set of these points is connected. So, it is possible to select a set of $s_v-1$ lines connecting all of them. This set is, by definition, an {\it anchored tree} (which is a minimal connection between the maximal subclusters of $G_v$).
\begin{rem}
\label{independence_of_trunc_exp_functions_of_the_internal_structure_of_clusters}
Each truncated expectation function sees the clusters associated to the sets of labels $P_{v_1},\dots,P_{v_{s_v}}$ as points, so the action of truncated expectations is independent of the internal structures of the subclusters $G_{v_1},\dots,G_{v_{s_v}}$, and depends only on the external lines of these clusters.
\end{rem}
\end{itemize}

\subsection{Non-renormalized expansion, non-perturbative estimates and classification of the divergences}
\label{subsection_non-renormalized_expansion}

\paragraph{Properties of propagators}

Let us study the behaviour of the single-scale propagators.

\begin{lem}
\label{lemma_propagator_faster_any_power}
For any $N\in\mathbb{N}$ there exists a constant $C_N$ such that the {\it quasi-particle} propagator is bounded by
\begin{equation}
\left|g_{\omega}^{(h)}(\bm x-\bm y)\right|\leq \gamma^h\frac{C_N}{1+\left(\gamma^h |\bm x|\right)^N} .
\label{bound_propagator_faster_than_any_power}
\end{equation}
\end{lem}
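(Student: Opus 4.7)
The plan is to combine a naive ``size $\times$ support'' bound with a decay bound obtained by repeated integration by parts in momentum space. Starting from the definition,
\begin{equation*}
g^{(h)}_\omega(\bm x) \;=\; \frac{1}{L\beta}\sum_{\bm k'\in\mathcal{D}^\omega_{\Lambda,\beta}} e^{-i\bm k'\cdot\bm x}\,\hat g^{(h)}_\omega(\bm k'),
\qquad
\hat g^{(h)}_\omega(\bm k')\;=\;\frac{f_h(\bm k')}{-ik_0+e(k'+\omega p_F)},
\end{equation*}
the cutoff $f_h(\bm k')=\chi(\gamma^{-h}\bm k')-\chi(\gamma^{-h+1}\bm k')$ is supported in the annulus $|\bm k'|\asymp\gamma^h$. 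On this support, using $e'(\pm p_F)=\pm\sin p_F\neq 0$ and the smoothness of $e$, one has the uniform lower bound $|{-ik_0+e(k'+\omega p_F)}|\geq c\,|\bm k'|\geq c\gamma^h$. Hence the integrand is $O(\gamma^{-h})$ on a support of area $O(\gamma^{2h})$, and the trivial estimate yields $|g^{(h)}_\omega(\bm x)|\leq C\gamma^h$, which already proves the lemma for $\gamma^h|\bm x|\leq 1$.

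For $\gamma^h|\bm x|\geq 1$ I would perform $N$ integrations by parts in the component of $\bm k'$ dual to the largest component $x_j$ of $\bm x$; this produces a prefactor $|x_j|^{-N}$ and replaces $\hat g^{(h)}_\omega$ by $\partial_{k'_j}^N\hat g^{(h)}_\omega$. The key ingredient, to be proved by Leibniz/Faà di Bruno together with induction on $|\alpha|$, is the derivative estimate
\begin{equation*}
\bigl|\partial^{\alpha}_{\bm k'}\hat g^{(h)}_\omega(\bm k')\bigr|\;\leq\; C_\alpha\,\gamma^{-h(1+|\alpha|)}\,\mathbf{1}_{\mathrm{supp}\,f_h}(\bm k'),
\end{equation*}
which uses two facts: (i) each derivative of $f_h$ costs $\gamma^{-h}$ (by the chain rule, since $f_h$ lives on momentum scale $\gamma^h$); (ii) each derivative of $1/(-ik_0+e(k'+\omega p_F))$ also costs $\gamma^{-h}$, because the denominator has modulus $\geq c\gamma^h$ while its derivatives are uniformly bounded by the analyticity of $e$. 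Plugging this into the trivial bound gives $|g^{(h)}_\omega(\bm x)|\leq C_N\gamma^h(\gamma^h|x_j|)^{-N}\leq C'_N\gamma^h(\gamma^h|\bm x|)^{-N}$; merging the two regimes through $\min(1,y^{-N})\leq 2/(1+y^{N})$ yields the statement.

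The main obstacle is the clean verification of the derivative estimate on $1/(-ik_0+e(k'+\omega p_F))$: Faà di Bruno's formula generates many terms and one must show they all obey the same scaling, which requires the uniform lower bound $|{-ik_0+e(k'+\omega p_F)}|\geq c\gamma^h$ on $\mathrm{supp}\,f_h$ together with $C^\infty$-boundedness of $e$ and of $f_h$ after rescaling $\bm k'=\gamma^h\bm p$. A minor additional point is that $\bm k'$ ranges over the discrete set $\mathcal{D}^\omega_{\Lambda,\beta}$ rather than over $\mathbb{R}^2$, but since the support of $f_h$ is well separated from the boundary of the Brillouin zone for $h\geq h_\beta$, the Riemann-sum comparison between $\frac{1}{L\beta}\sum_{\bm k'}$ and $\int\frac{d\bm k'}{(2\pi)^2}$ is harmless and uniform in $L,\beta$.
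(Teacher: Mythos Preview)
Your approach is essentially the paper's: both combine the naive size$\times$support bound $|g^{(h)}_\omega|\le C\gamma^h$ with $N$ integrations by parts in $\bm k'$, using that on $\mathrm{supp}\,f_h$ the denominator is bounded below by $c\gamma^h$ and each $\bm k'$-derivative of either $f_h$ or $1/(-ik_0+e)$ costs a factor $\gamma^{-h}$.

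The only substantive difference is how discreteness is handled. Your Riemann-sum comparison is not quite the right mechanism: the decay of the continuous Fourier transform does not automatically transfer to the discrete sum through such a comparison (the Riemann-sum error has no reason to decay in $\bm x$). The paper instead works directly with discrete derivatives $\tilde\partial_i\hat F(\bm k)=\bigl(\hat F(\bm k+\Delta k_i\tilde e_i)-\hat F(\bm k)\bigr)/\Delta k_i$ and the exact summation-by-parts identity
\[
\sum_{\bm k}e^{-i\bm k\cdot\bm x}\,\tilde\partial_i\hat F(\bm k)=\frac{e^{i\Delta k_i x_i}-1}{\Delta k_i}\sum_{\bm k}e^{-i\bm k\cdot\bm x}\hat F(\bm k),
\]
together with $|x_i|\le\tfrac{\pi}{2}\bigl|\tfrac{2}{\Delta k_i}\sin(\Delta k_i x_i/2)\bigr|$. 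The finite differences obey the same scaling $|\tilde\partial^\alpha\hat g^{(h)}_\omega|\le C_\alpha\gamma^{-h(1+|\alpha|)}$ for the same reasons you give, so after this correction your argument goes through verbatim.
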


We prove this Lemma in Appendix (\ref{appendix_propagator_decay_property}).\\
It is worth underlining that the fact $C_N$ grows with $N$ is not an issue, since we will use the previous bound for $N\leq 4$. Besides, the previous bound comes from the well known result in Fourier analysis saying that the Fourier transform of a $C^\infty$ function decays faster than any power, {\it i.e.} it is just a consequence of the cut-off $f_h$ we used. 

\begin{corollary}
As a trivial consequence of Lemma \ref{lemma_propagator_faster_any_power}, we can bound the norms $||\cdot||_\infty$ and $||\cdot||_1$ of the propagator:
\begin{equation}
||g_{\omega}^{(h)}||_{\infty}:=\sup_{\bm x,\bm y}|g_{\omega}^{(h)}(\bm x-\bm y)|\leq C\gamma^h,
\label{norm_infty_PBC_propagator}
\end{equation}
and 
\begin{equation}
||g_{\omega}^{(h)}||_1=\left|\frac{1}{L\beta}\int d\bm x d\bm y g_{\omega}^{(h)}(\bm x-\bm y)\right|\leq C\gamma^{-h},
\label{norm_1_propagator}
\end{equation}
for some $C>0$.
\end{corollary}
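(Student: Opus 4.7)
The plan is to derive both bounds directly from the pointwise estimate \eqref{bound_propagator_faster_any_power} of Lemma \ref{lemma_propagator_faster_any_power}, with essentially no extra work beyond rescaling.

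For the $L^\infty$ bound, I would simply take the supremum over $\bm x - \bm y$ of the right-hand side of \eqref{bound_propagator_faster_any_power}. Since the function $t \mapsto 1/(1+t^N)$ attains its maximum (equal to $1$) at $t=0$, one obtains
\begin{equation*}
\|g_\omega^{(h)}\|_\infty \;=\; \sup_{\bm x, \bm y} |g_\omega^{(h)}(\bm x - \bm y)| \;\leq\; \gamma^h C_N,
\end{equation*}
valid for any $N \in \mathbb N$, in particular for $N=0$ which already gives the desired $C\gamma^h$.

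For the $L^1$ bound, the first step is to exploit translation invariance of $g_\omega^{(h)}(\bm x - \bm y)$: setting $\bm z = \bm x - \bm y$ one has $\int d\bm x\, d\bm y\, |g_\omega^{(h)}(\bm x - \bm y)| = L\beta \int d\bm z\, |g_\omega^{(h)}(\bm z)|$, so that after dividing by $L\beta$ the $L^1$ norm reduces to $\int d\bm z\, |g_\omega^{(h)}(\bm z)|$. Plugging in the bound \eqref{bound_propagator_faster_any_power} with a fixed choice $N \geq 3$ (large enough to ensure integrability of $1/(1+|\bm u|^N)$ in the two-dimensional space-time), and rescaling $\bm u := \gamma^h \bm z$ (so that $d\bm z = \gamma^{-2h} d\bm u$), one gets
\begin{equation*}
\int d\bm z\, |g_\omega^{(h)}(\bm z)| \;\leq\; \gamma^h\,\gamma^{-2h}\int d\bm u\,\frac{C_N}{1+|\bm u|^N} \;\leq\; C\gamma^{-h},
\end{equation*}
which is the claimed bound (for finite $L, \beta$ the integrals are really Riemann sums on a lattice of spacing $1$ in $x$ and $\beta/M$ in $x_0$, but since $\gamma^h \leq 1$ the Riemann-sum comparison is harmless and yields the same scaling, up to an $h$-independent constant).

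The only ``obstacle'' is purely bookkeeping: one must verify that the change of variables $\bm u = \gamma^h \bm z$ really produces the factor $\gamma^{-2h}$ also in the discrete/periodic setting, and that the resulting integral $\int du\,(1+|\bm u|^N)^{-1}$ is bounded uniformly in $h, \Lambda, \beta$. Both are immediate once one fixes $N \geq 3$, so no subtlety arises. The two bounds $\gamma^h$ and $\gamma^{-h}$ reflect, respectively, the height of the single-scale propagator at coincident points and its effective support of linear size $\gamma^{-h}$ in space-time, which is the heuristic content of the single-scale decomposition.
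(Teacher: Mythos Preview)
Your proposal is correct and is exactly the argument the paper has in mind: the corollary is stated in the text as a ``trivial consequence'' of Lemma~\ref{lemma_propagator_faster_any_power} with no separate proof given, and the derivation you outline---taking the supremum of \eqref{bound_propagator_faster_any_power} for the $L^\infty$ bound, and integrating it after the rescaling $\bm u = \gamma^h \bm z$ with a fixed $N\geq 3$ for the $L^1$ bound---is the intended one. Your remark on the discrete/periodic setting is a fair caveat but, as you note, introduces no new difficulty.
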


\paragraph{Estimates of kernels of effective potential} Summarizing the analysis of the previous section in a more compact notation, we can write that the {\it effective potential} at scale $h$ is
\begin{equation}
\begin{split}
\mathcal{V}^{(h)}\left(\psi^{(\leq h)}\right)=\sum_{n=1}^{\infty} \sum_{\tau\in\mathcal T_{h,n}}\mathcal{V}^{(h)}\left(\tau,\psi^{(\leq h)}\right),\\
\mathcal{V}^{(h)}\left(\tau,\psi^{(\leq h)}\right)=\int d\bm x(I_{v_0})\sum_{P_{v_0}\subset I_{v_0}}\tilde \psi^{(\leq h)}\left(P_{v_0}\right)\mathcal W^{(h)}\left(\tau, P_{v_0},\bm x(I_{v_0})\right)
\end{split}
\label{effective_potential_expanded_in_kernels}
\end{equation}
where, besides all the quantities we have already defined, $\mathcal{W}^{(h)}$ is defined by the latter expression itself, that we use to get the recursive relation 
\begin{equation}
\begin{split}
\mathcal W^{(h)}\left(\tau,P_{v_0}, \bm x(I_{v_0})\right)=\\
\sum_{P_{v_1},\dots,P_{s_{v_0}}}\left( \prod_{j=1}^{s_{v_0}} \mathcal W^{(h+1)}\left(\tau_j, P_{v_j},\bm x(I_{v_j})\right) \right)\\
\frac{1}{s_{v_0}!}\mathcal E^T_{h+1}\left(\tilde{\psi}^{(h+1)}\left(P_{v_1}\setminus Q_{v_1}, \right) \dots,\tilde{\psi}^{(h+1)}\left(P_{v_{s_{v_0}}}\setminus Q_{v_{s_{v_0}}} \right) \right)
\end{split}
\end{equation}
where 
\begin{equation}
Q_{v_j}=P_{v_0}\cap P_{v_j},\hspace{3mm} j=1,\dots,s_{v_0}.
\end{equation}
It worths noting that the sets $Q_v$ are {\it uniquely determined} by the sets $\{P_v\}$ because, for any $v\in V(\tau)$,
\begin{equation}
Q\in P_v, \mbox{ and } P_v=\bigcup_{j=1}^{s_v}Q_{v_j}\Longrightarrow Q_{v_j}=P_v\cap P_{v_j}, j=1,\dots, s_v
\end{equation}
To get an {\it explicit expression} for $\mathcal W^{(h)}$, we can iterate the latter formula going along the tree $\tau$ and getting
\begin{equation}
\begin{split}
\mathcal W^{(h)}\left(\tau,P_{v_0},\bm x(I_{v_0})\right)=\\
= \sum_{\{P_v\}_{v\in V(\tau )}} \left(\prod_{v\notin V_f(\tau)}\frac{1}{s_v!} \mathcal E^T_{h_v+1}\left(\tilde{\psi}^{(h_v)}\left(P_{v_1}\setminus Q_{v_1}, \right) \dots,\tilde{\psi}^{(h_v)}\left(P_{v_{s_{v}}}\setminus Q_{v_{s_{v}}} \right) \right)\right)\left(\prod_{v\in V_f(\tau)}r_v\right)
\end{split}
\label{kernels_as_truncated_expectation_values}
\end{equation}
where we repeat that $r_v\in\{\nu,\lambda\}$ and the sum over $\{P_v\}_{v\in V(\tau)}$ is a sum over all the possible choices of the sets $P_v$ corresponding to the vertices of $\tau$, except $P_{v_0}$ (which is fixed). Finally, we can define a more intuitive notation to rewrite (\ref{effective_potential_expanded_in_kernels}) as

\begin{equation}
\begin{split}
\mathcal{V}^{(h)}\left(\psi^{(\leq h)}\right)=\sum_{n=1}^{\infty} \sum_{\tau\in\mathcal T_{h,n}}\mathcal{V}^{(h)}\left(\tau,\psi^{(\leq h)}\right),\\
\mathcal{V}^{(h)}\left(\tau,\psi^{(\leq h)}\right)=\sum_{P_{v_0}\subset I_{v_0}} \int d\bm x(P_{v_0})\tilde \psi^{(\leq h)}\left(P_{v_0}\right)\mathcal W^{(h)}\left(\tau, P_{v_0},\bm x(P_{v_0})\right)
\end{split}
\label{effective_potential_expanded_in_kernels_improved_formalism}
\end{equation}

where the gain is that now the kernel $\mathcal{W}^{(h)}$ depends only on the variables $\bm x(P_{v_0})=\{\bm x(f)\}_{f\in P_{v_0}}$ being
\begin{equation}
\mathcal{W}^{(h)}\left(\tau, P_{v_0},\bm x(P_{v_0})\right)=\int d\bm x(I_{v_0}\setminus P_{v_0})\mathcal{W}^{(h)}\left(\tau, P_{v_0},\bm x(I_{v_0})\right).
\end{equation}

Now we can state the 
\begin{thm}
\label{theorem_bound_of_kernels}
In the framework described by the Hamiltonian (\ref{hamiltonian_PBC}), we can bound the kernels we just intruduced as
\begin{equation}
\begin{split}
\int d\bm x(P_{v_0})\left| \mathcal W^{(h)} (\tau, P_{v_0}, \bm x(P_{v_0})) \right| \leq \\ \leq \beta L \gamma^{-h D(P_{v_0})}\sum_{\{P_v\}}\left(\prod_{v\notin V_f(\tau)} \gamma^{-(h_v-h_{v'})D(P_v)} \right)\left(\prod_{v\in V_f(\tau)}\gamma^{-h_{v'}(\frac{|I_v|}{2}-2)}\right)\left(C\epsilon\right)^n
\end{split}
\label{kernels_bound}
\end{equation}
\end{thm}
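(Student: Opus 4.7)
The plan is to start from the explicit representation (\ref{kernels_as_truncated_expectation_values}) of $\mathcal{W}^{(h)}(\tau,P_{v_0},\bm{x}(I_{v_0}))$ as a product, over the non-endpoint vertices $v\in V(\tau)\setminus V_f(\tau)$, of truncated expectations of products of $\psi^{(h_v)}$ fields on single scales, times a product of endpoint factors $r_v\in\{\lambda,\nu\}$. To each such truncated expectation I will apply the Battle--Brydges--Federbush formula (\ref{expectation_truncated_determinants}), which rewrites it as a sum over anchored trees $T_v$ connecting the clusters $P_{v_1},\dots,P_{v_{s_v}}$, weighted by a product of $s_v-1$ spanning-tree propagators $g^{(h_v)}_\omega$ and a residual Gram determinant $\det G^{T_v}(\bm t_v)$ of size $n_v-(s_v-1)$, where $n_v$ is half the number of fields entering the expectation at $v$.

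The core estimate proceeds in three steps at each non-endpoint vertex $v$. First, the integration of the $s_v-1$ spanning-tree propagators over the spacetime variables of the cluster gives a factor bounded by $\prod_{\ell\in T_v}\|g^{(h_v)}_{\omega(\ell)}\|_1 \leq (C\gamma^{-h_v})^{s_v-1}$, using (\ref{norm_1_propagator}). Second, the Gram--Hadamard bound (Lemma \ref{lemma_gram_hadamard_inequality}) applied to $\det G^{T_v}(\bm t_v)$, combined with the Gram representation of the single-scale propagator on scale $h_v$ (uniformly in $\bm t_v\in[0,1]$) and the sup-norm bound (\ref{norm_infty_PBC_propagator}), gives a factor bounded by $(C\gamma^{h_v/2})^{\text{number of fields contracted at }v}$. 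Third, the number of fields contracted at $v$ equals $\sum_{j=1}^{s_v}|P_{v_j}|-|P_v|$, by the definition of $P_v$ and $Q_{v_j}$. Putting the three factors together at $v$ yields a weight
\[
C^{n_v}\,\gamma^{h_v\left[-(s_v-1)+\frac{1}{2}\bigl(\sum_j|P_{v_j}|-|P_v|\bigr)\right]},
\]
while the endpoints contribute $|\lambda|^{k}|\nu|^{n-k}\leq \epsilon^n$ and an overall spacetime integration constrained by translation invariance yielding the $\beta L$ prefactor.

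The next step is a purely combinatorial reorganization of the product of these vertex factors over the tree. Writing $-(s_v-1)+\tfrac12(\sum_j|P_{v_j}|-|P_v|)$ as a telescopic difference across the scale gap between $v$ and its predecessor $v'$, and using the identities $|P_v|=\bigl|\bigcup_j Q_{v_j}\bigr|$ along the tree, one recovers the product
\[
\gamma^{-h_v D(P_v)}\cdot \gamma^{h_{v'}D(P_v)} = \gamma^{-(h_v-h_{v'})D(P_v)}
\]
with scaling dimension $D(P)=\frac{|P|}{2}-2$, plus a boundary term at the root (producing $\gamma^{-h D(P_{v_0})}$) and a boundary term at each endpoint $v\in V_f(\tau)$ (producing $\gamma^{-h_{v'}(|I_v|/2-2)}$). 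Finally, the constants $C^{n_v}$ collapse into $C^n$ because $\sum_v n_v=\mathcal O(n)$ along the tree. The sum over field assignments $\{P_v\}$ is kept explicit, as in the statement.

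The main obstacle is the bookkeeping in the combinatorial step: one must carefully express how the powers of $\gamma^{h_v}$ coming from $\|g^{(h_v)}\|_1$ (spanning tree) and $\|g^{(h_v)}\|_\infty$ (Gram bound) telescope into the claimed product, and to verify that the Gram representation of $t_{j,j'}\,g^{(h_v)}_\omega$ admits a vector decomposition uniform in $\bm t_v$ (this is the fermionic analogue of the standard construction, cf. Appendix \ref{appendix_gram_representation}). Once this is in place, the rest is a direct scale-by-scale application of Lemma \ref{lemma_propagator_faster_any_power} and the two corollary bounds.
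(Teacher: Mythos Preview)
Your approach is exactly the one the paper uses: apply the BBF/determinant formula (\ref{expectation_truncated_determinants}) at every non-endpoint vertex, integrate the $s_v-1$ spanning-tree propagators in $L^1$, bound the residual determinant by Gram--Hadamard, collect the endpoint factors into $\epsilon^n$, and then telescope the exponents along the tree. So the strategy is correct.

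There is, however, a counting slip in your Step~2/Step~3 that propagates to the final exponent and would make your telescoping claim fail as written. The matrix $G^{T_v}(\bm t_v)$ is built only from the fields \emph{not} used by the anchored tree $T_v$; hence its size is $\tfrac12\bigl(\sum_j|P_{v_j}|-|P_v|\bigr)-(s_v-1)$, and the Gram--Hadamard factor is
\[
(C\gamma^{h_v/2})^{\sum_j|P_{v_j}|-|P_v|-2(s_v-1)},
\]
not $(C\gamma^{h_v/2})^{\sum_j|P_{v_j}|-|P_v|}$. Combining this with the $L^1$ factor $(C\gamma^{-h_v})^{s_v-1}$ from the spanning tree gives the vertex weight
\[
\gamma^{\,h_v\bigl[\tfrac12(\sum_j|P_{v_j}|-|P_v|)-2(s_v-1)\bigr]},
\]
with $-2(s_v-1)$ rather than your $-(s_v-1)$; this is precisely (\ref{bound_proof_theorem_kernel_PBC}) in the paper. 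The coefficient $2$ in front of $(s_v-1)$ is what makes the telescoping land on $D(P_v)=\tfrac{|P_v|}{2}-2$: the paper uses the tree identities
\[
\sum_{v\notin V_f(\tau)}(s_v-1)=n-1,\qquad
\sum_{v\notin V_f(\tau)} h_v(s_v-1)=h(n-1)+\sum_{v\notin V_f(\tau)}(h_v-h_{v'})(n^e_v-1),
\]
together with (\ref{number_of_integral_variables_translation_invariant_PBC}), to pass from the product of vertex weights to the stated bound. With your exponent the telescoping would instead produce $\tfrac{|P_v|}{2}-1$, which is not what is claimed. Once you correct the field count in the determinant, the rest of your outline goes through verbatim and matches the paper's proof.
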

where $D(P_v)=|P_v|/2-2$, $\epsilon=\max\{\nu,\lambda\}$ and $C=C_N$ for some fixed $N$ in (\ref{bound_propagator_faster_than_any_power}).

\begin{rem}
\label{remark_necessity_of_renormalization_procedure}
Of course the estimate (\ref{kernels_bound}) is finite, and it could not have been otherwise because, thanks to the cut-off, we are performing a power counting on finite quantities. Of course, the troubles come when, in order to reconstruct the original theory, we try to perform a sum over all the scales $h\leq 1$ and the infinite volume limit: as soon as $D(P_{v_0})\leq 0$ ({\it i.e.} when $n_{v_0}=2,4$) the infinite volume limit of the sum diverges being $h_v-h_{v'}>0$. So, despite we still have a {\it divergence problem} as in the {\it naive estimate} in Lemma (\ref{lemma_bounds_no_multiscale_no_determinants}), the advantage of using the {\it multiscale analysis} is that we can clearly identify the {\it sources} of the problem: {\bf if $n_v^e:=|P_v|\leq 4$ the above sum cannot be performed uniformly in $\beta$ and $|\Lambda|$}. So we must deal in some smart way with the clusters with $2$ or $4$ external lines. Actually, our plan consists in nothing else than a {\it slightly different} expansion of the same quantities, by which we can prove that the sum is well defined.\\
Besides the identification of the divergences, the latter estimate is useful also to understand how, thanks to the {\it Gram-Hadamard} estimate (Lemma (\ref{lemma_gram_hadamard_inequality})), we have no longer the combinatorial problem, indeed we can perform the sum in $n$ of the right hand side of (\ref{kernels_bound}) provided $\epsilon$ is small enough.
\end{rem}

The latter remark is the motivation for the following three definitions:
\begin{itemize}
\item the terms which are, after the dimensional estimate, well behaved a priori ({\it i.e.} the terms with more than six external legs) are calld {\it irrelevant};
\item the terms with $D(P_{v})=0$ are called {\it marginal};
\item the terms with $D(P_{v})=-1$ are called {\it relevant}.
\end{itemize}

The multiscale decomposition we just described involves the computation, for any vertex $v\notin V_f(\tau)$, of {\it scale $h_v$ truncated expectations $\mathcal E^T_{h_v}$}, for which the determinant expansion (\ref{expectation_truncated_determinants}) has to be rewritten as
\begin{equation}
\label{expectation_truncated_scale_h_v}
\mathcal{E}_{h_v}^{T_v}\left(\tilde \psi^{( h_v)}\left( P_1\right),\dots, \tilde \psi^{( h_v)} \left( P_s\right)\right)= \sum_{T_v} \alpha_{T_v }\left(\prod_{\ell\in T_v}g^{(h_v)}_{\ell}\right)\int dP_{T_v}(\bm t)\det G^{h_v,T_v}(\bm t),
\end{equation}
where by $g_\ell^{h_v}$ we mean that the propagators associated with lines of the spanning tree $T_v$ live at scale $h_v$, and the matrix $G^{(h_v,T_v)}$ is the analogous of the already described $G^{T}(\bm t)$, but the propagators contributing to the entries $G^{h_v,T_v}_{i,j}$ live at scale $h_V$. So, assuming that $g^{(h_v)}(\bm x,\bm y)$ can be written as a scalar product of vectors belonging to a suitable Hilbert space (as we prove in (\ref{appendix_gram_representation}), looking only at quantities $A^{(h)}_{2(L+1)}$ and $B^{(h)}_{2(L+1)}$) and using the estimate (\ref{bound_propagator_faster_than_any_power}), we can adapt Lemma (\ref{lemma_gram_hadamard_for_G}) as
\begin{equation}
\begin{split}
||\det G^{h_v,T_v}||_\infty \leq c^{\sum_{j=1}^{s_v}|P_{v_j}|-|P_v|-2(s_v-1)}||g^{(h_v)}||_\infty^{\frac{1}{2}\left(\sum_{j=1}^{s_v}|P_{v_j}|-|P_v|-2(s_v-1)\right)}\leq\\ \leq c_1^{\sum_{j=1}^{s_v}|P_{v_j}|-|P_v|-2(s_v-1)}\gamma^{\frac{h_v}{2}\left(\sum_{j=1}^{s_v}|P_{v_j}|-|P_v|-2(s_v-1)\right)},
\end{split}
\end{equation}
so that (\ref{expectation_truncated_scale_h_v}) is bounded by 
\begin{equation}
\label{expectation_truncated_scale_h_v_bound}
\begin{split}
\left|\mathcal{E}_{h_v}^{T_v}\left(\tilde \psi^{( h_v)}\left( P_1\right),\dots, \tilde \psi^{( h_v)} \left( P_s\right)\right)\right)\leq \\ \leq  \sum_{T_v} \prod_{\ell\in T_v}\left| g^{(h_v)}_{\ell}\right|c_1^{\sum_{j=1}^{s_v}|P_{v_j}|-|P_v|-2(s_v-1)}\gamma^{\frac{h_v}{2}\left(\sum_{j=1}^{s_v}|P_{v_j}|-|P_v|-2(s_v-1)\right)}.
\end{split}
\end{equation}
\begin{proof}[Proof of Theorem (\ref{theorem_bound_of_kernels})]
The integration variable $d\bm x(P_{v_0})$ means that that the integration has to be performed over all the endpoints.\\
First of all, it is convenient to decrease the number of integration points $n+n_4\rightarrow n$ by integrating, for any enpoint with $4$ external legs, the {\it finite range} potential $v(x-y)\delta(x_0-y_0)$.\\
Then, using (\ref{expectation_truncated_scale_h_v_bound})
 we get:
\begin{equation}
\begin{split}
\left| \mathcal E^T_{h_v}\left(\tilde{\psi}^{(h_v)}\left(P_{v_1}\setminus Q_{v_1}, \right) \dots,\tilde{\psi}^{(h_v)}\left(P_{v_{s_{v}}}\setminus Q_{v_{s_{v}}} \right) \right)\right|\leq\\
\leq \sum_T \left(\prod_{\ell\in T}|g_\ell|\right)(CC_n)^{\sum_{j=1}^{s_v}|P_{v_j}|-|P_v|-2(s_v-1)}\gamma^{\frac{h_v}{2}\left(\sum_{j=1}^{s_v}|P_{v_j}|-|P_v|-2(s_v-1)\right)},
\end{split}
\end{equation}
where $g_\ell$, $\ell\in T$ are propagators contracted on scale $h_v$. We used that for each anchored tree $T_v$ defined by $T=\cup_{v\notin V_f(\tau)}T_v$ contributing to the sum we can perform $s_v-1$ integrations by using the $s_v-1$ propagators $g_\ell\in T$ getting a factor (thanks to (\ref{norm_1_propagator})), a factor
\begin{equation}
\gamma^{-h_v(s_v-1)}.
\end{equation}
We rewrite the contribution $\prod_{v\notin V_f(\tau)}\gamma^{-h_v(s_v-1)}$ using the formula (that can be easily proved by induction):
\begin{equation}
\begin{split}
\sum_{v\notin V_f(\tau)}h_v(s_v-1)=h(n-1)+\sum_{v\notin V_f(\tau)}(h_v-h_{v'})(n^e_v-1),
\end{split}
\end{equation} 
where $v'$ is the vertex immediately preceeding $v$ on $\tau$, and $n^e_v$ is the number of endpoints following $v$ on $\tau$.\\
It has to be pointed out that the integral runs over $n$ variables, and that the number of variables involved in the integrals we are performing using the propagators belonging to the spanning tree is given by:
\begin{equation}
\begin{split}
\sum_{v\notin V_f(\tau)}(s_v-1)=|V_f(\tau)|-1=n-1,\\
\sum_{\bar v\in V(\tau_v)}\left[\frac{1}{2}\left(\sum_{i=1}^{s_{\bar v}}|P_{\bar v_i}|-|P_{\bar v}|\right)\right]=\frac{1}{2}\left(|I_v|-|P_v|\right).
\label{number_of_integral_variables_translation_invariant_PBC}
\end{split}
\end{equation}
This means that we exploit the {\it compact support properties} of the propagators to integrate out all the variables but one, whose integration (running over all the available space), gives a factor $\beta L$.\\
Furthermore, by definition we have that each endpoint is associated with either $\nu$ or $\lambda$, and that $|V_f(\tau)|\leq n$, so we have that $$\prod_{v\in V_f(\tau)}|r_v|\leq \epsilon^n.$$
Finally we get the bound we are interested in for the left hand side of (\ref{kernels_bound}),
\begin{equation}
\beta L \left(\prod_{v\notin V_f(\tau)}\gamma^{h_v\left[\frac{1}{2}\left(\sum_{j=1}^{s_v}|P_{v_j}|-|P_v|\right)-2(s_v-1)\right]}\right)\left(C\epsilon\right)^n.
\label{bound_proof_theorem_kernel_PBC}
\end{equation}
and using formulae (\ref{number_of_integral_variables_translation_invariant_PBC}) we get 
\begin{equation*}
\begin{split}
\int d\bm x(P_{v_0})\left| \mathcal W^{(h)} (\tau, P_{v_0}, \bm x(P_{v_0})) \right| \leq \\ \leq \beta L \gamma^{-h D(P_{v_0})}\sum_{\{P_v\}}\left(\prod_{v\notin V_f(\tau)} \gamma^{-(h_v-h_{v'})D(P_v)} \right)\left(\prod_{v\in V_f(\tau)}\gamma^{-h_{v'}\left(\frac{|I_v|}{2}-2\right)}\right)\left(C\epsilon\right)^n.
\end{split}
\end{equation*}
\end{proof}

\section{Renormalization Group}
\label{subsection_renormalization_group_PBC}
In Remark (\ref{remark_necessity_of_renormalization_procedure}) we pointed out that, after the multiscale decomposition, the {\it na\"ive cluster expansion} is not enough to conclude that the theory is well defined (meaning that the observables as the {\it specific free energy} and the {\it Schwinger functions} are expressed as finite sums), because the sum over all possible trees is not well defined due to the clusters with $2$ and $4$ external legs.\\
The key idea is that, combining {\it multiscale} and {\it cluster} expansions, we are {\it fragmenting} the quantities we are interested in into infinitely many different pieces in some way easier to control and that, once individually {\it controlled}, we would like to re-sum up in order to reconstruct the initial quantity in such a way that they are clearly well defned (in particular, as analytic functions of the perturbative parameter $\lambda$ within a radius of convergence $\lambda_0>0$). So there is not a unique way to perform this cluster expansion, and our plan is to change the cluster expansion we explained in subsection (\ref{subsection_non-renormalized_expansion}) to {\it cure} the divergences arising from $2$ and $4$ external legs diagrams.\\
Morally, there are no problems coming from the {\it harmless part} consisting of all the clusters with $6$ or more external legs; our strategy will be to identify what is the real source of divergences  in the dangerous clusters and to perform a further splitting of them into two contributions: the {\it renormalized part} which we will put once for all in the harmless part of the theory, and the {\it local part}, which is properly the dangerous part, we will {\it dress} the free theory with.

\subsection{Localization and Renormalization operator}
 \label{subsection_localization_renormalization_PBC}
It is worth by recalling that the explicit expression of the effective potential at scale $h$ (\ref{effective_potential_scale_h}). In particular, using quasi-particle decomposition of the Grassmann variables we can rewrite
\begin{equation}
\begin{split}
\mathcal V^{(h)}(\psi)=\sum_{n=1}^{\infty}\sum_{\substack{ \bm x_1,\dots,\bm x_{2n}\\ \in\\ \Lambda\times [0,\beta)}} \left(\prod_{j=1}^{n} \psi^{(\leq h)+}_{\bm x_{2j-1}}\psi^{(i\leq h)-}_{\bm x_{2j}} \right) W^{(h)}_{2n} (\bm x_1,\dots,\bm x_{2n})=\\
=\sum_{n=1}^{\infty}\sum_{\bm \omega}\sum_{\substack{ \bm x_1,\dots,\bm x_{2n}\\ \in\\ \Lambda\times [0,\beta)}} \left(\prod_{j=1}^{n} e^{-i\omega_{2j-1} x_{2j-1}p_F}\psi^{(\leq h)+}_{\omega_{2j-1}\bm x_{2j-1}}e^{i\omega_{2j}x_{2j}p_F}\psi^{(\leq h)-}_{\omega_{2j},\bm x_{2j}} \right)\cdot \\ \cdot W^{(h)}_{2n} (\bm x_1,\dots,\bm x_{2n})=\\
=:\sum_{n=1}^{\infty}\sum_{\bm \omega}\sum_{\substack{ \bm x_1,\dots,\bm x_{2n}\\ \in\\ \Lambda\times [0,\beta)}} \left(\prod_{j=1}^{n} \psi^{(\leq h)+}_{\omega_{2j-1}\bm x_{2j-1}}\psi^{(\leq h)-}_{\omega_{2j},\bm x_{2j}} \right) W^{(h)}_{2n, \bm \omega} (\bm x_1,\dots,\bm x_{2n}),
\end{split}
\label{effective_potential_quasi_particles_scale_h}
\end{equation} 
where $W^{(h)}_{2n, \bm \omega} (\bm x_1,\dots,\bm x_{2n})=\left(\prod_{j=1}^{n} e^{-i\omega_{2j-1} x_{2j-1}p_F}e^{i\omega_{2j}x_{2j}p_F} \right)W^{(h)}_{2n} (\bm x_1,\dots,\bm x_{2n})$. Analogously, we can define $\hat W^{(h)}_{2n, \bm \omega}(\bm k'_1,\dots,\bm k'_{2n-1})=\hat W^{(h)}_{2n}(\bm k'_1+\omega_1p_F,\dots,\bm k'_{2n-1}+\omega_{2n-1}p_F)$, with $\bm k'\in\mathcal D^{\omega}_{\Lambda,\beta}$, as the Fourier transform of $W^{(h)}_{2n,\bm \omega} (\bm x_1,\dots,\bm x_{2n})$, where $\hat W^{(h)}_{2n,\bm \omega}$ depends on $2n-1$ momenta because of the momentum conservation, meaning that $\sum_{i=1}^{2n}\left(\bm k_i'+\omega_ip_F\right)=0$.\\
We saw that the diagrams with $2$ and $4$ external legs are {\it dangerous} (respectively relevant and marginal terms in the RG terminology), meaning that they do not allow us to perform the sum over all the scales $h\leq 1$, so they do not allow us to conclude that the {\it specific free energy} and the {\it Schwinger functions} defined in (\ref{free_energy_specific_PBC}) and (\ref{schwinger_function_n_points_PBC}) are well defined. So we are forced to {\it manipulate} in some sense the dangerous part: the idea is to extract, first of all, the source of troubles from the $2$ and $4$ external legs diagrams. In particular, we split the quartic terms into the sum of a {\it marginal term} (that can be controlled by studying the flow of a single running coupling constant $\lambda_h$) and an {\it irrelevant one}, and the two external legs terms into three different contributions: an {\it irrelevant one}, a {\it marginal one} that we properly {\it use to dress} the theory and a {\it relevant} one, that we compensate thanks to the {\it counterterm} $\nu$ (which has to be fixed in such a way that the coupling constant $\nu_h$ vanishes when $h\to -\infty$).\\
In particular, we pointed out that the singularities of the propagator $\hat g$ are at $\bm p_F$, so the idea is to expand the kernels in Taylor series near these singularities; now we can appreciate the choice of the change of variables $\bm k=\bm k'+\omega p_F$: a Taylor expansion around the singularities is a Taylor expansion in $\bm k'\sim 0$.\\
We should keep in mind that the fact that the volume is finite gives rise to a lot of technical difficulties: in a finite volume the momenta $\bm k'\in\mathcal{D}_{\Lambda,\beta}^{\omega}$ are quantized and they are not precisely zero, so we should define a Taylor expansion at finite volume, meaning that even if we morally want to expand around the Fermi points $\pm\bm p_F$ we are forced to localize not precisely at $\pm \bm p_F$, but at {\it the closest possible point}. For pedagogical reasons, we will take care of these problems giving a precise {\it finite volume localization definition} only in Appendix (\ref{appendix_real_space-time_localization}), while here we give a more intuitive definition of localization which is correct only at {\it the limit} $\beta,|\Lambda|\nearrow \infty$ (since it neglects the finite volume corrections). \\
Let $\mathcal{L}$ be the {\it localization operator} acting on the {\it effective potentials} in the following way:
\begin{itemize}
\item the terms with more then $6$ external legs cause no problems, so we have nothing to extract:
$$\mathcal{L}W^{(h)}_{2n,\omega}(\bm k'_1,\dots,\bm k'_{2n-1})=0 \mbox{ if } 2n\geq 6,$$
\item on the terms with $4$ external legs,
\begin{equation}
\begin{aligned}
\mathcal L \left(\frac{1}{\left(\beta L\right)^4}\sum_{\bm k'_1,\bm k'_2,\bm k'_3,\bm k'_4 \in \mathcal D^{\bm \omega}_{L,\beta}}\psi^{(\leq h)+}_{\omega_1,\bm k'_1}\psi^{(\leq h)+}_{\omega_2, \bm k'_2}\psi^{(\leq h)-}_{\omega_3, \bm k'_3}\psi^{(\leq h)-}_{\omega_4, \bm k'_4}\right.\\ \left .\hat W^{(h)}_{4,\bm \omega}(\bm k'_1,\bm k'_2,\bm k'_3,\bm k'_4)\delta_{\bm k'_1+\bm k'_2,\bm k'_3+\bm k'_4}\delta_{\omega_1+\omega_2,\omega_3+\omega_4}\right)=\\
\frac{1}{\left(\beta L\right)^4}\sum_{\bm k'_1,\bm k'_2,\bm k'_3,\bm k'_4 \in \mathcal D^{\bm \omega}_{L,\beta}}\psi^{(\leq h)+}_{\omega_1,\bm k'_1}\psi^{(\leq h)+}_{\omega_2, \bm k'_2}\psi^{(\leq h)-}_{\omega_3, \bm k'_3}\psi^{(\leq h)-}_{\omega_4, \bm k'_4}\cdot \\ \cdot \hat W^{(h)}_{4,\bm \omega}(0, 0, 0, 0)\delta_{\bm k'_1+\bm k'_2,\bm k'_3+\bm k'_4}\delta_{\omega_1+\omega_2,\omega_3+\omega_4},
\end{aligned}
\label{localization_4el_finite_volume_limit}
\end{equation}
\item  on the terms with $2$ external legs
\begin{equation}
\begin{split}
\mathcal L\left( \frac{1}{L\beta}\sum_{\bm k\in\mathcal{D}^{\omega}_{L,\beta}} \psi_{\omega,\bm k'+}^{(\leq 0)+}\hat \psi_{\omega,\bm k'}^{(\leq 0)-}W_{2,\omega}(\bm k') \right)
=\frac{1}{L\beta}\sum_{\bm k\in\mathcal{D}^{\omega}_{L,\beta}} \psi_{\omega,\bm k'}^{(\leq 0)+}\hat \psi_{\omega,\bm k'}^{(\leq 0)-}\cdot \\
\cdot\left[\hat W^{(h)}_{2, \omega}(\bm 0)+k'\frac{\partial \hat{W}^{(h)}_{2,\omega}}{\partial k}(\bm 0)+k_0 \frac{\partial \hat W^{(h)}_{2,\omega}}{\partial k_0}(\bm 0)\right].
\end{split}
\label{localization_2el_infinite_volume_limit}
\end{equation}
\end{itemize}
\begin{rem}
Let us comment that we factorized $\delta(\sum_{i}\bm k_i)=\delta (\sum_i \omega_i )\delta(\sum_{i} \bm k'_i)$, which is strictly true only if the scale $h$ is small enough.
\end{rem}
Finally, we simply define the {\it renormalization operator}
\begin{equation}
\mathcal R=1-\mathcal L,
\label{renormalization_operator}
\end{equation}
where $1$ has to be read as the {\it identity operator}.\\
Summarizing, we get:p
\begin{equation}
\mathcal L \mathcal V^{(h)}\left(\psi^{(\leq h)}\right)=\gamma^hn_h F_\nu^{(\leq h)} + z_h F_{\zeta}^{(\leq h)}+ a_h F_{\alpha}^{(\leq h)}+ l_h F_{\lambda}^{(\leq h)},
\label{local_effective_potential_scale_h_PBC}
 \end{equation}
where $n_h,z_h,a_h,l_h$ are real numbers defined by the latter definition itself, and we remark that they are not yet the running coupling constants we want to study, that we will define in a while after a rescaling procedure, and 
\begin{eqnarray}
\begin{aligned}
F_\nu^{(\leq h)}&=\frac{1}{L\beta}\sum_{\omega=\pm}\sum_{\bm k'\in\mathcal{D}_{L,\beta}^{\omega}}\hat \psi_{\omega, \bm k'}^{(\leq h)+}\hat \psi_{\omega, \bm k'}^{(\leq h)-},\\
F_{\alpha}^{(\leq h)}&=\frac{1}{L\beta}\sum_{\omega=\pm} \sum_{\bm k'\in\mathcal{D}_{L,\beta}^{\omega}} \omega v_0 k'\hat \psi_{\omega, \bm k'}^{(\leq h)+}\hat \psi_{\omega, \bm k'}^{(\leq h)-},\\
F_{\zeta}^{(\leq h)}&=\frac{1}{L\beta}\sum_{\omega=\pm} \sum_{\bm k'\in\mathcal{D}_{L,\beta}^{\omega}}(-ik_0)\hat \psi_{\omega, \bm k'}^{(\leq h)+}\hat \psi_{\omega, \bm k'}^{(\leq h)-},\\
F_{\lambda}^{(\leq h)}&=\frac{1}{\left(L\beta\right)^4}\sum_{\bm k_1',\dots,\bm k'_4\in\mathcal{D}^{\omega}_{L,\beta}}\hat \psi_{+, \bm k_1'}^{(\leq h)+}\psi_{-, \bm k_2'}^{(\leq h)+}\hat \psi_{+, \bm k_3'}^{(\leq h)-}\hat \psi_{-, \bm k_4'}^{(\leq h)-}\delta(\bm k'_1+\bm k'_2-\bm k'_3-\bm k'_4),
\end{aligned}
\label{local_effective_potential_scale_h_PBC_term_by_term}
\end{eqnarray}
 \begin{rem}
\label{remark_uniqueness_of_counterterm_PBC}
We used that, in the vicinity of $(p_F,0)$, $\hat W^{(h)}_{\omega,2}(\bm k')\simeq -iz_hk_0+\omega v_0 a_h k'$ (see (\cite{benfatto1993beta} for details), and we would like to comment, especially in order to compare the system we are studying with the one we will study in the next chapter (\ref{chapter_Interacting_fermions_on_the_half_line}), the constants arising from this linearization procedure.
\begin{itemize}
\item First of all, we underline that $\gamma^h n_h$  is a constant, as expected due to the translation invariance.\\
Besides, we point out that, despite $\mathcal L$ is defined as acting on $\bm \omega$-dependent kernels (\ref{localization_4el_real_spacetime}), (\ref{localization_2el_real_spacetime}), so {\it a priori} there would be two different constants, we get a unique $\bm \omega$-independent constant. Of course, this is due to some simmetry of the kernels: indeed by the equations (\ref{local_effective_potential_scale_h_PBC}) and (\ref{local_effective_potential_scale_h_PBC_term_by_term}), we actually define
\begin{equation}
\gamma^h n_h=\hat W^{(h)}_{2,\omega}(\bm 0)=\hat W^{(h)}_{2,-\omega}(\bm 0)
\end{equation}
which is true thanks to the momentum conservation and the parity properties of the propagators (\ref{free_propagator_PBC}).
\item We defined $\partial_k\hat W^{(h)}_{\omega,2}(\bm 0)=\omega v_0 a_h$, with $v_0=\sin p_F$, using that $\partial_k\hat W^{(h)}_\omega(\bm 0)$ is odd in $\omega$.
\item In $F_{\lambda}$ we heavily exploited the anticommutation rules  of the Grassman variables and the momentum conservation to see that the only non vanishing term is associated to the choice $\bm \omega=(+,-,+,-)$. 
\end{itemize}
\end{rem}

Of course, for $h=0$ we have a quite explicit control on the constants, and one can check that:
\begin{equation}
\begin{cases}
n_0=\nu+O(\lambda),\\
a_0=O(\lambda),\\
z_0=O(\lambda),\\
l_0=\lambda\left(\hat v(0)-\hat v(2p_F)\right)+O(\lambda^2)
\end{cases}
\end{equation}

\paragraph{Dimensional gain of renormalized clusters}

\subparagraph{Momentum space} As we have seen, for marginal and relevant terms $D(P_v)$ is, respectively, $0$ and $-1$. So after the action of the operator $\mathcal R$, the goal is to gain a factor $\gamma^{-(h_v-h_{v'})z_v}$ where it is enough to have $z_v=1$ in the case of marginal clusters and $z_v=2$ in the case of relevant ones.\\
This is one of the best examples of the power of the {\it multiscale cluster expansion}, because we manage to gain the right $\gamma^{-z_v (h_v-h_{v'})}$ thanks to the hierarchical structure of the cluster which is, by definition, such that {\it the propagators belonging to a cluster on scale $h$ lives on scales $\geq h$, while the external lines are necessarily on scales $<h$, otherwise they would have been included in the cluster}. \\ 
Let us consider the case of the two external legs terms. As we pointed out, if we stay in the Fourier space we can simply represent the localization (and then the renormalization) operator as acting directly on the kernels $\hat W^{(h)}_{2n, \omega}(\bm k_1,\dots,\bm k_{2n-1})$, so in particular we are interested in the kernel $\hat W^{(h)}_{2,\omega}(\bm k')$ where, first of all, we recall that it depends on only one $\bm k'$ and $\omega$ because the kernels preserve the momentum (if $h$ is small enough, the entering momentum $\bm k'_{in}+\omega_{in} \bm p_F\bm$ must be equal to the exiting one $\bm k'_{out}+\omega_{out}\bm p_F$, {\it i.e.} $\bm k'_{in}=\bm k'_{out}=:\bm k'$ and $\omega_{in}=\omega_{out}=:\omega$ since $\mathcal D^+_{\Lambda,\beta}\cap \mathcal D_{\Lambda,\beta}^-=\emptyset$); then we recall that $\bm k'$ and $\omega$ are the momentum and the quasi-particles
 index carried by the {\it external lines}. Since we are interested in the {\it dimensional analysis} of the renormalized cluster $\mathcal R\hat W_{2,\omega}(\bm k')$, let us start with recalling that:
 \begin{equation}
 \mathcal{L}\hat W^{(h)}_{2,\omega}(\bm k')=\hat W^{(h)}_{2,\omega}(\bm  0)+\bm k' \partial_{\bm k'} \hat W^{(h)}_{2,\omega}(\bm 0),
 \end{equation}
where the notation is a bit inaccurate, but for our aim it is enough because it is based on the fact that we are expanding around $\bm k'=0$ and we used the {\it linear approximation} (\ref{free_propagator_PBC_linear_approx}). So the kernel can be rewritten as
$$\mathcal{L}\hat W^{(h)}_{2,\omega}(\bm k') + \mathcal R \hat W^{(h)}_{2,\omega}(\bm k')=\hat W^{(h)}_{2,\omega}(\bm k')=\hat W^{(h)}_{2,\omega}(\bm 0)+\bm k' \partial_{\bm k'} \hat W^{(h)}_{2,\omega}(\bm 0)+ \bm k'^2\int_0^1d t (1- t)^2\partial^2_{t} \hat W^{(h)}( t\bm k')$$
from which
\begin{equation}
\mathcal R \hat W^{(h)}_{2,\omega}(\bm k')=\bm k'^2\int_0^1d t (1- t)^2\partial^2_{t} \hat W^{(h)}( t\bm k'),
\end{equation}
where we used the subscript $t$ meaning that $\bm k'$ is considered as an external variable.\\
Looking at the cluster representation of the kernel, we see that the external momentum $\bm k'$ is associated to an external leg of the cluster $G_v$, so it is on scale $h=h_{v'}$, $|\bm k'|\sim \gamma^{h_{v'}}$. The derivative, being the kernel a {\it convolution of propagators on scales $>h_v$}, acts on a propagator with scale $\geq h_v$ and, being a derivative, in the dimensional estimate we get a (bad) scale factor $\gamma^{-h_v}$ with $h_v$. Being a second order remainder, we have a further factor with respect to the usual estimate:
\begin{equation}
\gamma^{2(h_{v'}-h_v)}
\end{equation}
so exactly $z_v=2$, as we wanted.\\ At this point of the presentation it should be clear that, besides the dimensional estimates, we have to be careful also in dealing with combinatorial problems arising from the fact that we are dealing with an infinite number of trees of order $n\to \infty$. This is the problem of the so called {\it incapsulated resonances, i.e.} a configuration such that the clusters $G_{v_1}\supset G_{v_2} \supset\dots \supset G_{v_m}$, corresponding to $v_1\prec v_2 \prec\dots\prec v_m$, have to be renormalized. So let us imagine to iteratively apply the procedure described in ({\ref{localization_4el_finite_volume_limit}) and (\ref{localization_2el_infinite_volume_limit}) starting from the most external cluster $G_{v_1}$, then $G_{v_2}$ and so on until the very last one $G_{v_m}$. The recipe we have given to renormalize the clusters says that the derivatives act on some propagator internal to the cluster, so it is possible that in renormalizing $G_{v_1}$ the derivative acts on a propagator belonging to the innermost cluster $G_{v_m}\subset G_{v_1}$, in renormalizing $G_{v_2}$ on the same and so on until the renormalization of the cluster $G_{v_m}$. After $m$ renormalization steps, all the incapsulated clusters $G_{v_1},\dots, G_{v_m}$ have been renormalized but, among all the contributions we get by the renormalization procedure, there are also terms like $\partial^m_{\bm k'}g_\ell$, $\ell\in G_{v_m}$ that, in addition to the right dimensional factor, contributes to the bound with a factor $(m!)^\alpha$, $\alpha\geq 1)$.\\
There are several ways to solve this problem, but to convince the reader that this is not a {\it real problem} we present a very simple  argument, and we refer for details to \cite{benfatto2001renormalization}. The main idea is to show that all the propagators are at most derived twice, since once a gain has been obtained corresponding to some resonance there is no need ore to  renormalize it. In the cluster configuration we have just described, let us imagine that, in renormalizing the cluster $G_{v_1}$ the derivatives acts on a propagator $g_\ell^{h_{v_n}}$ with $\ell\in G_{v_n}$ but $\ell\notin G_{v_{n+1}}$. Using the result described above, we know that we have a {\it scale jump} $\gamma^{(h_{v'_1}-h_{v_n})}$ that can be rescribed as
\begin{equation}
\gamma^{(h_{v'_1}-h_{v_n})}= \gamma^{(h_{v'_1}-h_{v_1})}\gamma^{(h_{v'_2}-h_{v_2})}\dots\gamma^{(h_{v'_{n}}-h_{v_n})}
\end{equation}
which clearly shows that, as a consequence of a single renormalization at scale $h_{v_n}$, each cluster $G_{v_n}\subset G_{v_j}\subset G_{v_1}$ has a {\it scale jump} $\gamma^{(h_{v'_j}-h_{v_j})}$, so there is no need of a further renormalization.
\subparagraph{Real space}
It is convenient (especially for the next chapter) to understand what is the corresponding gain mechanism in real space and the possible sources of problems. Let us consider the {\it first order} remainder (then we can generalize the idea to the {\it second order} remainder),that involves a difference of fields 
$$\psi^{(\leq h)\epsilon}_{\omega,\bm x_i}-\psi^{(\leq h)\epsilon}_{\omega,\bm x_4},$$
which we can formally rewrite as
\begin{equation}
\psi^{(\leq h)\epsilon}_{\omega,\bm x_i}-\psi^{(\leq h)\epsilon}_{\omega,\bm x_1}=(\bm x_i-\bm x_4)\cdot \int _0^1 ds \bm \partial\psi^{(\leq h)\epsilon}_{\omega,\bm x_1+s(\bm x_i-\bm x_4)}.
\end{equation}
where it hat to be stressed that the latter equation has to be read in the {\it weak sense}, meaning that it is properly true once we {\it contract the fields}.\\
The key idea is that the factor $\bm x_i-\bm x_4$ is associated with the kernel $W_{\bm \omega, 4}(\bm x_1-x_4,\bm x_2-\bm x_4,\bm x_3-\bm x_4)$ with $i=1,2,3$. So, we can estimate that $|\bm x_i-\bm x_4|\sim \gamma^{-h_v}$ (actually, one should first expand $|\bm x_i-\bm x_4|$ along the spanning tree, and then bound each contribution by $\gamma^{-h_v}$). Similarly the derivative $\bm \partial$, acting on $\psi_{\omega,\bm x_i}^{(\leq h)\epsilon}$, is constracted at scale $h_i\leq h_{v'}<h_v$, so using (\ref{bound_propagator_faster_than_any_power}), we get a further contribution $\gamma^{h_{v'}}$ to the usual bound. Summarizing, we get, besides the usual dimensional bound, a further term
\begin{equation}
\gamma^{h_{v'}-h_v}.
\end{equation}
This strategy can be extended also in the case of the second order remainder.\\
The analogous of what we worried about in the momentum space representation could happen: if some {\it field variable} is, at the same time, the external line of a big number $m$ of {\it dangerous} (marginal or irrelevant) clusters, it could happen that $m$ derivatives act on the same external line, giving rise (conceptually) to the same combinatorial problem as before. Exploiting the {\it freedom} that we have in choosing the localization point (it is equivalent to localize in any of the points due to the translation invariance of the theory), it is possible to define a localization procedure in such a way that at most two derivatives act on the same external line. Since the intuitive idea is similar to what we used in the momentum space representation, and the rigorous solution of this {\it problem} is well known in literature, we refer to (\cite{benfatto2001renormalization}, section $3.2-3.5$).  
\subsection{Scale h integration and dressed theory}
\label{subsection_anomalous_integration_PBC}
 The starting point to define a Gaussian Grassman integration is, of course, a quadratic operator (in our case an operator which is quadratic in the field variables, as the free initial Hamiltonian $H_0$). So far, we have identified, scale by scale, the irrelevant part of the theory (given by all the terms of degree $\geq 6$ and by the {\it renormalized part} of the $2$ and $4$ external legs terms) and the {\it local part} of the theory which is, at the same time, both problematic and the part containing the physical informations of the model. In particular, having in mind the explicit form of the local part at scale $h$ (\ref{local_effective_potential_scale_h_PBC}) and (\ref{local_effective_potential_scale_h_PBC_term_by_term}) we notice that:
\begin{itemize}
\item $l_hF^{(\leq h)}_\lambda$ reproduces, on scale $h$, the initial two points interaction with a different interaction potential encoded in $l_h\delta(\bm k'_1+\bm k'_2-\bm k'_3-\bm k'_4)$. Obviously, being true scale by scale it defines a recursive relation between the constants $\{l_h\}_{h\leq 1}$;
\item $n_hF^{(\leq h)}_{\nu}$ reproduces the counterterm operator of the initial Hamiltonian, where the constant value $\nu$ is replaced by $n_h$. As before, this explicit shape of the {\it counterterm at scale } $h$ gives us a recursive relations between $\{n_h\}_{h\leq 1}$
\item the terms $a_h F^{(\leq h)}_\alpha$ and $z_h F^{(\leq h)}_\zeta$  are at a first sight {\it new} if considered as part of the interaction, but their sum has the same shape, up to $\mathcal O(k'^2)$ terms, as 
\begin{equation*}
\begin{split}
\left(\hat g^{(h)}_\omega(\bm k')\right)^{-1}=\left(-ik_0+(1-\cos k')\cos p_F+\omega v_0\sin k'\right) f^{-1}_h(\bm k')=\\
=\left(-ik_0+\omega v_0 k'+[(1-\cos k')\cos p_F+\omega v_0(\sin k'-k')]\right) f^{-1}_h(\bm k')=:\\
=:\left(-ik_0+\omega v_0 k'+t_{0,\omega}(k')\right) f^{-1}_h(\bm k'),
\end{split}
\end{equation*} 
with constants $a_h$ and $z_h$ replacing $1$, and where we called $t_{0,\omega}(k')$ the $\mathcal O(k'^2)$ term.
\end{itemize}
The main idea is to {\it absorbe} step by step, in a sense that will be clarified during this paragraph, the quadratic terms $a_h F^{(\leq h)}_\alpha$ and $z_h F^{(\leq h)}_\zeta$ in the integration: this will have the effect to change the {\it propagator} (in RG language we will say that these terms will be used to {\it dress the propagator}) the Gaussian Grassman measure is associated with, and we will encode this {\it dressing} in a new {\it running coupling constant}, called $Z_h$ with $h\leq 0$ and $Z_0=1$, whose flow we will study again in an iterative way. In the following, we will describe the {\it generic $h-th$ step} but, to be able to handle these arguments in a technical way, we warmly raccomend to work out the very first step (from scale $h=0$ to scale $h=-1$), in which all the constants and the computations are quite explicit.\\ 
Let us introduce a sequence of constants $\left\{Z_h\right\}$, $Z_0=1$ and let us define the function $C_h(\bm k')$  by
\begin{equation}
C_h(\bm k')^{-1}=\sum_{j=h_\beta}^h f_h(\bm k').
\end{equation}
So, after the integration of the degrees of freedom on scales $>h$ we get, up to a constant, a Gaussian Grassman integral
\begin{equation}
\int P_{Z_h}(d\psi^{(\leq h)})e^{-\mathcal{V}^{(h)}\left(\sqrt{Z_h}\psi^{(\leq h)}\right)},
\label{integral_rescaled_Zh_PBC}
\end{equation}
where the Gaussian Grassman measure is defined as
\begin{equation}
\begin{split}
P_{Z_h}(d\psi^{(\leq h)})=\left(\prod_{\omega=\pm}\prod_{\bm k'\in \mathcal{D}_{L,\beta}^\omega} d\psi^{(\leq h)+}_{\omega,\bm k'}d\psi^{(\leq h)-}_{\omega,\bm k'}\right)\\
\exp\left[ -\frac{1}{L\beta}\sum_{\omega=\pm}\sum_{\bm k'\in \mathcal{D}_{L,\beta}^\omega}C_h(\bm k')Z_h \left(-ik_0+\omega v_0 k'+t_{0,\omega}(k')\right) \psi^{(\leq h)+}_{\omega,\bm k'}\psi^{(\leq h)-}_{\omega,\bm k'} \right],
\end{split}
\label{P_Zh_PBC}
\end{equation}
associated to a covariance which is the $\hat g^{(\leq h)}$ we are familiar with, except for the multiplicative factor $Z_h$ due to the {\it wave function renormalization}, as we are going to explain. As we anticipated, we want to move some terms from the interaction to the measure.\\
First of all, let us notice that the interaction is computed in $\sqrt{Z_h}\psi^{(\leq h)}$, so all the terms of the interaction are suitably multiplied by a power of $Z_h$; in particular, in (\ref{local_effective_potential_scale_h_PBC}) and (\ref{local_effective_potential_scale_h_PBC_term_by_term}) $$F_j^{(\leq h)}\left(\sqrt{Z_h}\psi^{(\leq h)}\right)= Z_h F_j^{(\leq h)}(\psi^{(\leq h)}), \hspace{4mm} F_\lambda^{(\leq h)}\left(\sqrt{Z_h}\psi^{(\leq h)}\right)= Z_h^2 F_\lambda^{(\leq h)}(\psi^{(\leq h)}),$$ for $j=\alpha, \nu, \zeta$:
\begin{itemize}
\item in order to {\it dress the propagator} ({\it i.e.} to move into the measure a part of the effective potential), we rewrite the {\it local part of the effective potential at scale $h$} (\ref{local_effective_potential_scale_h_PBC}) as
\begin{equation}
\begin{split}
  \mathcal L\mathcal V^{(\leq h)}\left(\sqrt{Z_h}\psi^{(\leq h)}\right)= \\ =\mathcal L \mathcal V^{(\leq h)}\left(\sqrt{Z_h}\psi^{(\leq h)}\right) + z_h F_{\alpha}^{(\leq h)}\left(\sqrt{Z_h}\psi^{(\leq h)}\right) - z_h F_{\alpha}^{(\leq h)}\left(\sqrt{Z_h}\psi^{(\leq h)}\right) =\\
 =\gamma^h n_h F_{\nu} ^{(\leq h)}\left(\sqrt{Z_h}\psi^{(\leq h)}\right) + z_h \left( F_{\zeta}^{(\leq h)}\left(\sqrt{Z_h}\psi^{(\leq h)}\right) +F_{\alpha}^{(\leq h)}\left(\sqrt{Z_h}\psi^{(\leq h)}\right) \right)+\\ + \left(a_h-z_h\right) F_{\alpha}^{(\leq h)} \left(\sqrt{Z_h}\psi^{(\leq h)}\right)+ l_h F^{(\leq h)}_{\lambda} \left(\sqrt{Z_h}\psi^{(\leq h)}\right)=:\\
 =: \mathcal L \tilde{ \mathcal V}^{(h)} \left(\sqrt{Z_h}\psi^{(\leq h)}\right) + z_h\left(F_\zeta^{(\leq h)}\left(\sqrt{Z_h}\psi^{(\leq h)}\right) +F_\alpha^{(\leq h)}\left(\sqrt{Z_h}\psi^{(\leq h)}\right)\right).
\end{split}
\label{local_tilde_effective_potential_definition}
\end{equation}
where it worths pointing out that 
\begin{equation*}
\begin{split}
z_h \left( F_{\zeta}^{(\leq h)}+F_{\alpha}^{(\leq h)}\right)=\\=\frac{1}{|\Lambda|\beta}\sum_{\bm k'\in\mathcal D_{\Lambda,\beta}^\omega} z_h \left(-ik_0+\omega v_0 k'\right) Z_h \psi^{(\leq h)+}_{\omega,\bm x}\psi^{(\leq h)-}_{\omega,\bm x},
\end{split}
\end{equation*}
{\it i.e.}, except for the constant $z_h$, it is the same as the exponent of the Grassmann integration.
\item Now, in the integral (\ref{integral_rescaled_Zh_PBC}), using the usual exponential properties we {\it move} the term $$z_h \left( F_{\zeta}^{(\leq h)}+F_{\alpha}^{(\leq h)}\right)$$ into the measure (\ref{P_Zh_PBC}), which becomes
\begin{equation}
\begin{split}
P_{Z_h}(d\psi^{(\leq h)})=\left(\prod_{\omega=\pm}\prod_{\bm k'\in \mathcal{D}_{L,\beta}^\omega} d\psi^{(\leq h)+}_{\omega,\bm k'}d\psi^{(\leq h)-}_{\omega,\bm k'}\right)\\
\exp\Biggl[ -\frac{1}{L\beta}\sum_{\omega=\pm}\sum_{\bm k'\in \mathcal{D}_{L,\beta}^\omega}C_h(\bm k')Z_h\left(1+ C_h(\bm k')^{-1}z_h\right)\cdot\\
\cdot \left(-ik_0+\omega v_0 k'+\vartheta_{h,\omega}(\bm k')\right) \psi^{(\leq h)+}_{\omega,\bm k'}\psi^{(\leq h)-}_{\omega,\bm k'}\Biggl].
\end{split}
\end{equation}
Since we need some recursive relations between the {\it running coupling constants}, by the latter formula we can define
\begin{equation}
Z_{h-1}(\bm k')=Z_h\left(1+C^{-1}_h(\bm k')z_h\right),
\label{Z_h-1(k')_definition}
\end{equation}
and $\vartheta_{h,\omega}(\bm k')$ is defined as
\begin{equation}
\vartheta_h(\bm k')=\begin{cases}
t_{0,\omega}(\bm k') &\mbox{ if } h=0,\\
\frac{Z_{h+1}}{Z_{h}(\bm k')}\vartheta_{h+1,\omega}(\bm k') &\mbox{ if } h<0.
\end{cases}
\end{equation}
Let us underline that we are dressing only the linear part of the covariance, so this rescaling of the $\vartheta_{h,\omega}$ terms simply allows to keep $\vartheta_{h,\omega}$ into the brackets.
\item Finally we can rewrite the integral (\ref{integral_rescaled_Zh_PBC}) as 
\begin{equation}
\int P_{Z_h}(d\psi^{(\leq h)})e^{-\mathcal{V}^{(h)}\left(\sqrt{Z_h}\psi^{(\leq h)}\right)}= \frac{1}{\mathcal N_h}\int \tilde P_{Z_{h-1}}(d\psi^{(\leq h)})e^{-\mathcal{\tilde V}^{(h)}\left(\sqrt{Z_h}\psi^{(\leq h)}\right)}
\label{integral_scale_leqh_dressed_measure_PBC}
\end{equation}
where, of course,
\begin{equation}
\begin{split}
\tilde P_{Z_{h-1}}(d\psi^{(\leq h)})=\left(\prod_{\omega=\pm}\prod_{\bm k'\in \mathcal{D}_{L,\beta}^\omega} d\psi^{(\leq h)+}_{\omega,\bm k'}d\psi^{(\leq h)-}_{\omega,\bm k'}\right)\\
\exp\Biggl[ -\frac{1}{L\beta}\sum_{\omega=\pm}\sum_{\bm k'\in \mathcal{D}_{L,\beta}^\omega}C_h(\bm k')Z_{h-1}(\bm k')\cdot \\ \cdot \left(-ik_0+\omega v_0 k'+\vartheta_{h,\omega}(\bm k')\right) \psi^{(\leq h)+}_{\omega,\bm k'}\psi^{(\leq h)-}_{\omega,\bm k'}\Biggl],
\end{split}
\label{tilde_P_z_h-1}
\end{equation}
and 
$$\tilde{\mathcal V}^{(\leq h)}\left(\sqrt {Z_h}\psi^{(\leq h)}\right)=\mathcal L \tilde{\mathcal V}^{(h)}\left(\sqrt {Z_h}\psi^{(\leq h)}\right) + \left(1-\mathcal L\right)\mathcal V^{(h)}\left(\sqrt {Z_h}\psi^{(\leq h)}\right).$$
\item It is worth noticing that, by definition, if $|\bm k'|<\gamma^{h-1}$, $Z_{h-1}(\bm k')$ assumes a constant value, properly $Z_{h-1}(\bm k')=Z_h(1+z_h)$ (this comment will become useful in performing the {\it usual} scale by scale integration, using the {\it addition principle} (\ref{addition_principle})).
\end{itemize}
As it should be clear, the power of this machinary is the possibility to integrate (\ref{integral_rescaled_Zh_PBC}) scale by scale. So we split the measure in the right hand side of (\ref{integral_scale_leqh_dressed_measure_PBC}) as
\begin{equation}
\frac{1}{\mathcal N_h}\int P_{Z_{h-1}}(d\psi^{(\leq h-1)})\int \tilde P_{Z_{h-1}}(d\psi^{( h)})e^{-\mathcal{\tilde V}^{(h)}\left(\sqrt{Z_h}\psi^{(\leq h)}\right)}
\label{integral_rescaled_splitted}
\end{equation}
which defines, first of all, the measure $P_{Z_{h-1}}(d\psi^{(\leq h-1)})$ as (\ref{tilde_P_z_h-1}) with
\begin{itemize}
\item $Z_{h-1}(\bm k')$ replaced by $Z_{h-1}$ (because of what we explained in the very last point of the latter list),
\item $C_h(\bm k')$ replaced by $C_{h-1}(\bm k')$,
\item $\psi^{(\leq h)}$ replaced by $\psi^{(\leq h-1)}$,
\end{itemize}
and the {\it single scale measure} $\tilde P_{Z_{h-1}}(d\psi^{(h)})$ is given again by (\ref{P_Zh_PBC}) with
\begin{itemize}
\item $Z_{h-1}(\bm k')$ replaced by $Z_{h-1}$,
\item $C_{h}(\bm k')$ replaced by
\begin{equation}
\tilde f_h (\bm k')=Z_{h-1}\left(\frac{C_h^{-1}(\bm k')}{Z_{h-1}(\bm k')}-\frac{C_{h-1}^{-1}(\bm k')}{Z_{h-1}}\right),
\end{equation}
\item $\psi^{(\leq h)}$ replaced by $\psi^{(h)}$.
\end{itemize}
Finally, and this is the {\it definition of the running coupling constants}, we rescale all the fields by $Z_{h-1}$, {\it i.e.} we multiply and divide by the same quantity:
$$\sqrt Z_h\psi^{(\leq h)}=\left(\frac{\sqrt Z_h}{\sqrt Z_{h-1}}\right)\sqrt Z_{h-1}\psi^{(\leq h)}$$
in order to rewrite $\mathcal L\tilde{\mathcal V}^{(h)}$ as 
\begin{equation}
\begin{split}
\mathcal{L}\hat{\mathcal V}^{(h)}\left(\sqrt {Z_{h-1}}\psi^{(\leq h)}\right)=\\
=\gamma^h\nu_hF_{\nu}^{(\leq h)}\left(\sqrt {Z_{h-1}}\psi^{(\leq h)}\right)+\delta_h F_\alpha^{(\leq h)}\left(\sqrt {Z_{h-1}}\psi^{(\leq h)}\right)+\lambda F_{\lambda_h}^{(\leq h)}\left(\sqrt {Z_{h-1}}\psi^{(\leq h)}\right),
\end{split}
\end{equation}
and we rewrite the integral (\ref{integral_rescaled_splitted}) as
\begin{equation}
\frac{1}{\mathcal N_h}\int P_{Z_{h-1}}(d\psi^{(\leq h-1)})\int \tilde P_{Z_{h-1}}(d\psi^{( h)})e^{-\mathcal{\hat V}^{(h)}\left(\sqrt{Z_{h-1}}\psi^{(\leq h)},\right)}
\end{equation}
and, by definition,
\begin{equation}
\begin{split}
\nu_h&=\frac{Z_h}{Z_{h-1}}n_h,\\
\delta_h&=\frac{Z_h}{Z_{h-1}}(a_h-z_h),\\
\lambda_h&=\left(\frac{Z_h}{Z_{h-1}}\right)^2 l_h.
\end{split}
\label{running_coupling_constants_PBC_definition}
\end{equation}
Let us introduce a compact notation: the vector $\vec v$ collects these three constants on scale $h$: $$\vec v_h=(\nu_h,\delta_h,\lambda_h).$$
Now we can perform the integration with the Gaussian Grassman measure $\tilde P_{Z_{h-1}}(d\psi^{(h)})$ associated with the propagator
\begin{equation}
\frac{g^{(h)}(\bm x-\bm y)}{Z_{h-1}}=\sum_{\omega=\pm}e^{-i\omega p_F(x-y)}\frac{g^{(h)}_\omega(\bm x-\bm y)}{Z_{h-1}}
\end{equation}
where 
\begin{equation}
\frac{g^{(h)}_\omega(\bm x-\bm y)}{Z_{h-1}}=\int \tilde P_{Z_{h-1}}(d\psi^{(h)})\psi^{(h)-}_{\omega,\bm x} \psi^{(h)+}_{\omega,\bm y}
\end{equation}
and again
\begin{equation}
g^{(h)}_\omega(\bm x-\bm y)=\frac{1}{L\beta}\sum_{\bm k'\in\mathcal{D}^{\omega}_{L,\beta}}e^{-i\bm k'\cdot (\bm x-\bm y)}\tilde{f}_{h}(\bm k')\left(\hat g^{(h)}_\omega(\bm k')\right)^{-1}
\label{dressed_propagator_scale_h_PBC}
\end{equation}
defining the {\it effective potential} on the next scale:
\begin{equation}
\int \tilde P_{Z_{h-1}}(d\psi^{(h)})e^{-\hat{\mathcal V}^{(h)}(\sqrt Z_{h-1}\psi^{(\leq h)})}=e^{L\beta  e_h-\mathcal V^{(h-1)}(\sqrt Z_{h-1}\psi^{(\leq h-1)})}
\end{equation}
where $ e_h$ is a suitable constant and 
\begin{equation}
\mathcal L\mathcal V^{(h-1)}(\psi^{(\leq h-1)})=\gamma^{h-1}n_{h-1}F_\nu^{(\leq h-1)}+a_{h-1}F_\alpha^{(\leq h-1)}+z_{h-1}F_\zeta^{(\leq h-1)}+l_{h-1}F_\lambda^{(\leq h-1)},
\end{equation}
so that we can iterate the just described procedure.
\begin{rem}
First of all, this iterative procedure gives {\it for free} the way to write the {\it running coupling constants} on scale $h$ as a function of the {\it running coupling constants} om higher scales:
\begin{equation}
\vec v_h=\vec \beta(\vec v_{h+1},\dots,\vec v_0),
\label{beta_function_definition}
\end{equation}
where $\vec \beta(\vec v_{h+1},\dots,\vec v_0)$ is called the {\it beta function}. 
\end{rem}

\subsection{The renormalized tree expansion and renormalized bounds}

It is convenient to directly look at Fig. (\ref{figure_renormalized_trees}): we write $\mathcal{V}^{(0)}$ knowing that there can be endpoints representing contributions from $\mathcal{L} \mathcal{V}^{(1)}$. 
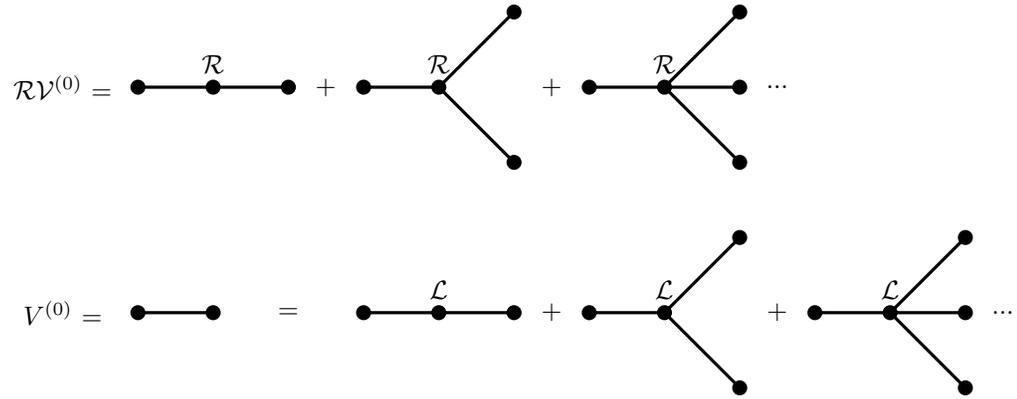
\begin{figure}[htbp]
\centering
\begin{tikzpicture}
[scale=1, transform shape]
\node at (1,3) {$\mathcal R \mathcal V^{(0)}$ =};
\node at (1,0) {$\mathcal \mathcal V^{(0)}$ =};
\node at (4.5, 3) {+};
\node at (7.5,3) {+};
\node at (10.5,3) {...};
\node at (4,0) {=};
\node at (7.5, 0) {+};
\node at (10.5,0) {+};
\node at (13.5,0) {...};
\draw [very thick] (2,3) -- ++ (2,0) ++ (1,0) -- ++ (1,0) -- ++ (1,1) ++ (-1,-1) -- ++ (1,-1) ++ (1,1)  -- ++ (1,0) -- ++ (1,1) ++ (-1,-1) -- ++ (1,-1) ++ (-1,1) -- ++ (1,0);
\fill (2,3) circle (0.1);
\fill (2,3) ++ (1,0) circle (0.1);
\node at (3,3.3) {$\mathcal R$};
\fill (2,3) ++ (1,0) ++ (1,0) circle (0.1);
\fill (2,3) ++ (1,0) ++ (1,0) ++ (1,0) circle (0.1);
\node at (6,3.3) {$\mathcal R$};
\node at (9,3.3) {$\mathcal R$};
\fill (2,3) ++ (1,0) ++ (1,0) ++ (1,0) ++ (1,0) circle (0.1);
\fill (2,3) ++ (1,0) ++ (1,0) ++ (1,0) ++ (1,0) ++ (1,1) circle (0.1);
\fill (2,3) ++ (1,0) ++ (1,0) ++ (1,0) ++ (1,0) ++ (1,-1) circle (0.1);
\fill (2,3) ++ (1,0) ++ (1,0) ++ (1,0) ++ (1,0) ++ (2,0) circle (0.1);
\fill (2,3) ++ (1,0) ++ (1,0) ++ (1,0) ++ (1,0) ++ (3,0) circle (0.1);
\fill (2,3) ++ (1,0) ++ (1,0) ++ (1,0) ++ (1,0) ++ (3,0) ++ (1,1) circle (0.1);
\fill (2,3) ++ (1,0) ++ (1,0) ++ (1,0) ++ (1,0) ++ (3,0)  ++ (1,-1) circle (0.1);
\fill (2,3) ++ (1,0) ++ (1,0) ++ (1,0) ++ (1,0) ++ (3,0) ++ (1,0) circle (0.1);
\draw [very thick] (2,0) -- ++ (1,0);
\fill  (2,0) circle (0.1);
\fill (2,0) ++ (1,0) circle (0.1);
\node at (6,0.3) {$\mathcal L$};
\node at (9,0.3) {$\mathcal L$};
\node at (12,0.3) {$\mathcal L$};
\draw [very thick] (5,0) -- ++ (2,0) ++ (1,0) -- ++ (1,0) -- ++ (1,1) ++ (-1,-1) -- ++ (1,-1) ++ (1,1)  -- ++ (1,0) -- ++ (1,1) ++ (-1,-1) -- ++ (1,-1) ++ (-1,1) -- ++ (1,0);
\fill (5,0) circle (0.1);
\fill (5,0) ++ (1,0) circle (0.1);
\fill (5,0) ++ (1,0) ++ (1,0) circle (0.1);
\fill (5,0) ++ (1,0) ++ (1,0) ++ (1,0) circle (0.1);
\fill (5,0) ++ (1,0) ++ (1,0) ++ (1,0) ++ (1,0) circle (0.1);
\fill (5,0) ++ (1,0) ++ (1,0) ++ (1,0) ++ (1,0) ++ (1,1) circle (0.1);
\fill (5,0) ++ (1,0) ++ (1,0) ++ (1,0) ++ (1,0) ++ (1,-1) circle (0.1);
\fill (5,0) ++ (1,0) ++ (1,0) ++ (1,0) ++ (1,0) ++ (2,0) circle (0.1);
\fill (5,0) ++ (1,0) ++ (1,0) ++ (1,0) ++ (1,0) ++ (3,0) circle (0.1);
\fill (5,0) ++ (1,0) ++ (1,0) ++ (1,0) ++ (1,0) ++ (3,0) ++ (1,1) circle (0.1);
\fill (5,0) ++ (1,0) ++ (1,0) ++ (1,0) ++ (1,0) ++ (3,0)  ++ (1,-1) circle (0.1);
\fill (5,0) ++ (1,0) ++ (1,0) ++ (1,0) ++ (1,0) ++ (3,0) ++ (1,0) circle (0.1);
\end{tikzpicture}
\caption{Effective potential on scale $h=0$, $\mathcal{V}^{(0)}$ splitted into the localized and renormalized contribute.}
\label{figure_renormalized_trees}
\end{figure}
Finally, plugging this splitting of $\mathcal V^{(0)}$ into the graphical representation we have given of $\mathcal{V}^{(-1)}$ in Fig. (\ref{figure_effective_potentiale_scale_0}) we get the expansion of Fig (\ref{figure_local_renormalized_trees_scale_-1}) 
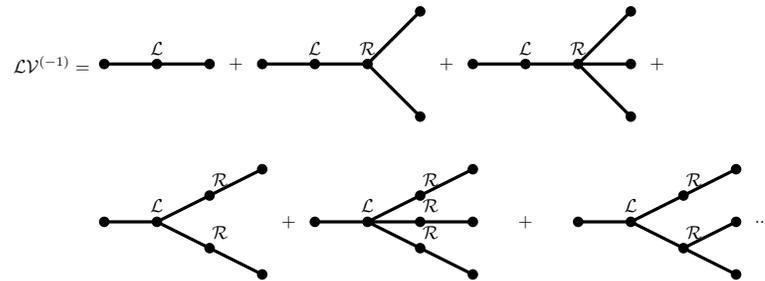
\begin{figure}
\centering
\begin{tikzpicture}
[scale=0.7, transform shape]
\node at (0,11) {$\mathcal L\mathcal V^{(-1)}=$};
\draw [very thick] (1,11) -- ++ (2, 0) ++ (1, 0) -- ++ (2,0)  --  ++ (1,1) ++ (-1,-1)  -- ++ (1, -1) ++ (1,1)  -- ++ (2,0)  -- ++ (1,1) ++ (-1,-1)  -- ++ (1,0) ++ (-1,0) -- ++ (1,-1);
\draw [very thick] (1,8) -- ++ (1,0)  --  ++ (2,1) ++ (-2,-1)  -- ++ (2, -1) ++ (1,1)  -- ++ (1,0)  -- ++ (2,1) ++ (-2,-1)  -- ++ (2,0) ++ (-2,0) -- ++ (2,-1);
\draw [very thick] (1,8) ++ (9,0) -- ++ (1,0)  -- ++ (2,1) ++ (-2,-1) -- ++ (2,-1) ++ (-1,0.5) -- ++ (1,0.5);
\foreach \i in {1,2,3,4,5,6,8,9,10,11} {%
\fill (\i, 11) circle (0.1);
}
\fill (7,12) circle (0.1);
\fill (7,10) circle (0.1);
\fill (11,12) circle (0.1);
\fill (11,10) circle (0.1);
\foreach \i in {1,2,5,6,7,8,10,11}{%
\fill (\i,8) circle (0.1);
}
\fill (4,9) circle (0.1);
\fill (4,7) circle (0.1);
\fill (3,8.5) circle (0.1);
\fill (3,7.5) circle (0.1);
\fill (8,9) circle (0.1);
\fill (8,7) circle (0.1);
\fill (7,8.5) circle (0.1);
\fill (7,7.5) circle (0.1);
\fill (13,9) circle (0.1);
\fill (13,7) circle (0.1);
\fill (12,8.5) circle (0.1);
\fill (12,7.5) circle (0.1);
\fill (13,8) circle (0.1);
\foreach \i in {2,5,9}{
\node at (\i, 11.3) {$\mathcal L$};
}
\foreach \i in {6,10}{
\node at (\i, 11.3) {$\mathcal R$};
}
\foreach \i in {2,6,11}{
\node at (\i, 8.3) {$\mathcal L$};
}
\foreach \i in {2,6,11}{
\node at (\i+1.2, 8.8) {$\mathcal R$};
\node at (\i+1.2, 7.8) {$\mathcal R$};
}
\node at (7.2,8.3) {$\mathcal R$};
\node at (3.5,11) {+};
\node at (7.5,11) {+};
\node at (11.5,11) {+};
\node at (4.5,8) {+};
\node at (9,8) {+};
\node at (13.5,8) {...};
\end{tikzpicture}
\caption{Localized part of the effective potential $\mathcal{V}^{(-1)}$. The renormalized one is exactly the same except for the first vertex following the root, wich is associated to a label $\mathcal R$.}
\label{figure_local_renormalized_trees_scale_-1}
\end{figure}
which can be described as follows:
\begin{itemize}
\item we associate with each vertex $v\in V(\tau)\setminus V_f(\tau)$ a renormalization operator $\mathcal R$ up to the very first vertex $v_0$, which can have associated either an operator $\mathcal R$ or an operator $\mathcal L$ (contributing respectively to the {\it renormalized part} or to the {\it local part} of the effective potential).
\item It is no longer true that each endpoint is at scale $h_{v_f}=1$, indeed there can be endpoints at generic scale $h_v$:
\begin{itemize}
\item $h_v<1$, means that a contribution $\mathcal L\mathcal V^{(h_v)}$ is associated to the vertex $v$,
\item $h_v=1$ means that either a contribution $\mathcal L\mathcal{V}^{(0)}$ or a contribution $\mathcal R\mathcal{V}^{(0)}$ is associated with the vertex $v$.
\end{itemize}
\item If $v$ is an endpoint on scale $h_v\leq -1$, so $h_v=h_{v'}+1$, where $v'$ is the nontrivial vertex immediately preceding $v$.
\item The running coupling constants will be denoted by the variable $\rho_v$: for instance if $h=h_{v'}$, and the contribution to the local part of the effective potential $\mathcal{L}\mathcal{V}^{(h)}$ represented by the endpoint is $F_\nu^{(\leq h)}$, we have $\rho_v=\nu_h$, and so on.
\item The Feynman diagram expansion corresponds, in this case, to a usual expansion in which each cluster value is written as a Taylor expansion $\hat W^{(h)}=\mathcal{L}\hat W^{(h)}+\left(1-\mathcal{L}\right)\hat W^{(h)}$ in such a way that the bound for the remainder $\left(1-\mathcal{L}\right)\hat W^{(h)}$ has the gain we have just discussed $\gamma^{z_v\left(h_v-h_{v'}\right)}$ where
\begin{equation}
z_v=\begin{cases}
1 \mbox{ if } n_v^e=4,\\
2 \mbox{ if } n_v^e=2,\\
0 \mbox{ else},
\end{cases}
\end{equation}
so that $n_v^e/2+m_{2,v}-2+z_v>0$.
\end{itemize}
\paragraph{Renormalized values of the clusters} Obviously the renormalization procedure we described reflects on the bounds of the kernels (values of the clusters). In particular, in the definition (\ref{effective_potential_scale_h_recursive}) we have to replace $\psi^{(\leq h)}\to \sqrt{Z_h}\psi^{(\leq h)}$ and the kernels $\hat W^{(h)}_{2n, \bm \omega}(\bm k_1', \dots,\bm k'_{2n})$ have to be computed taking into account the {\it renormalization procedure} on previous (higher) scales: we call them the {\it renormalized values of the clusters}. In particular, we can rewrite the effective potential as
\begin{equation}
\begin{split}
\mathcal V^{(h)}(\sqrt{Z_h}\psi^{(\leq h)})&=\sum_{n=1}^\infty\sum_{\tau\in\mathcal T_{h,n}}\mathcal V^{(h)}(\tau, \sqrt{Z_h}\psi^{(\leq h)}),\\
\mathcal V^{(h)}(\tau, \sqrt{Z_h}\psi^{(\leq h)})&=\int d\bm x(I_{v_0})\sum_{P_{v_0}\in I_{v_0}}\sqrt{Z_h}^{|P_{v_0}|}\tilde \psi^{(\leq h)}(P_{v_0})\mathcal W^{(h)}(\tau, P_{v_0}, \bm x(I_{v_0})),
\end{split}
\end{equation}
where the kernels
\begin{equation}
\mathcal W^{(h)}(\tau, P_{v_0},\bm x(P_{v_0}))=\int d\bm x(I_{v_0\setminus P_{v_0}})\mathcal W^{(h)}(\tau, P_{v_0}, \bm x(I_{v_0})),
\end{equation}
are the Fourier transforms of the {\it renormalized values} $\hat W^{(h)}_{2n}(\bm k_1,\dots,\bm k_{2n})$ iteratively defines as follows:
\begin{equation}
\begin{split}
\mathcal R\mathcal V^{(h)}\left(\tau,\sqrt{Z_h}\psi^{(\leq h)}\right)=\\=\int d\bm x(I_{v_0})\sum_{P_{v_0}\subset I_{v_0}}\sum_{T\in\bm T}\sum_{\alpha\in A_T}\sqrt{Z_h}^{|P_{v_0}|}\cdot \\ cdot\left[\prod_{f\in P_{v_0}}\partial^{b(f)}_{j(f)} \psi^{(\leq h)\epsilon(f)}_{\bm x(f)}(P_{v_0})\right] \mathcal RW^{(h)}_T(\tau, P_{v_0},\bm x(I_{v_0})),
\end{split}
\end{equation}
where $b(f)\in\{0,1,2\}$, $j(f)\in\{0,1\}$ and $\bm T$ is the set of tree graphs on $\bm x_{v_0}$, obtained by putting together an anchored tree graph $T_v$ for each non trivial vertex $v$. $A_T$ is the set of indices which allows to distinguish the different terms produced by the non trivial $\bm R$ operations and the iterative decomposition of the zeroes. Finally the kernels $W^{(h)}(\tau, P_{v_0},\bm x(I_{v_0}))$ have to be read as the {\it renormalized values of the clusters}:
\begin{equation}
\begin{split}
\mathcal R W^{(h)}_T(\tau, P_{v_0},\bm x(I_{v_0}))=\\=\left[\prod_{v\notin V_f(\tau)}\left(\frac{Z_{h_v}}{Z_{h_{v}-1}}\right)^{\frac{|P_v|}{2}}\right]\left[\prod_{i=1}^{n}(\bm x^i-\bm y^i)^{b(v^*_i)}_{j(v^*_i)}K^{(h_i)}_{{v^*_{i}}}(\bm x_{v^*_i}))\right]\cdot\\
\cdot \left\{\prod_{v\notin V_f(\tau)}\frac{1}{s_v!}\int dP_{T_v}(\bm t_v) \left( \det G_\alpha^{h_v, T_v}(\bm t_v)\right)\cdot\right.\\
\left.\cdot \left[\prod_{\ell \in T_v}(\bm x_\ell- \bm y_\ell)^{b(\ell)}_{j(\ell)}\partial^{q(f_\ell^1)}_{j(f_\ell^1)}\partial^{q(f_\ell^2)}_{j(f_\ell^2)} g^{(h_\ell)}_{\ell}\right]\right\}
\end{split}
\label{renormalized_kernels_explicit_expression_first_version}
\end{equation}  
where $n$ is the number of endpoints being $\tau\in\mathcal T_{h,v}$, $v_1^*,\dots, v_n^*$ are the endpoints of $\tau$, $K^{(h_i)}_{{v^*_{i}}}$ is one of the terms of the local effective potential $\mathcal L \mathcal V^{(h_i)}$, $f_\ell^1$ and $f_\ell^2$ are the labels of the two fields forming the line $\ell$, $b_\alpha(\ell), b_\alpha(v_i^*), q_\alpha(\ell), q_\alpha(v_i^*)\in\{1,2\}$, and the fact that there are as many derivatives as {\it "zeroes"} is technically expressed by the constraint $\sum_{\ell, i}\left(b_\alpha(\ell)+ b_\alpha(v_i^*)- q_\alpha(f_\ell^{(1)})- q_\alpha(f_\ell^{(2)})\right)=0$, while $(\bm x_\ell- \bm y_\ell)^{b_\alpha(\ell)}_{j_\alpha(\ell)}$ are the zeroes we introduced in the renormalization procedure definition, where $j_\alpha\in \{0,1\}$ denotes the component of the vector, and $G^{h_v,T_v}$ has to be read by interpreting
\begin{equation}
\begin{split}
G^{h_v,T_v}_{\alpha;ij,i'j'}= t_{v,i,i'}\partial_{j(f_{ij}^1)}^{q(f_{ij}^1)}\partial_{j(f_{ij}^2)}^{q(f_{ij}^2)}g_{\omega_\ell}^{h_v}(\bm x_{ij}-\bm y_{i'j'}).
\end{split}
\end{equation}
It has to be stressed that this latter expression does not break the Gram structure of the matrix, see \cite{benfatto1993beta} for details. The latter formula is a heavy but schematic representation of how the renormalization acts on the clusters.\\
We can now state the {\it main theorem}, {\it i.e.} the bounds on the renormalized expansion we introduced. We will assume some {\it a priori} bounds on the {\it running coupling constants} we use to prove the estimates on the renormalized kernels. After that, we will check that the bounds we assumed hold, and we will fix the counterterm in the initial Hamiltonian.

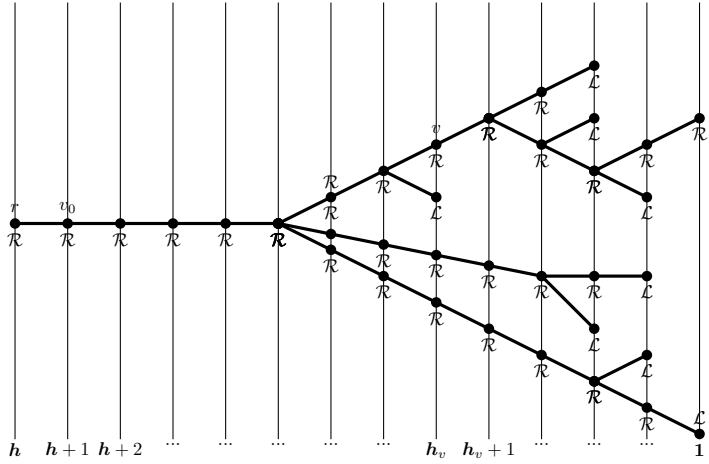
\begin{figure}
\centering
 \begin{tikzpicture} 
[scale=0.7, transform shape]
\foreach \i in {1,2,3,4,5,6,7,8,9,10,11,12,13,14} {%
\draw  (\i,2.9) -- (\i, 11.2); }
\foreach \j in {1,2,3,4,5} {%
\draw [very thick] (\j,7) -- ++ (1,0);
\fill (\j,7) circle (0.1);
\node at (\j, 6.7) {$\mathcal R$};
\fill (6,7) circle (0.1);
\node at (6, 6.7) {$\mathcal R$};
}
\foreach \j in {0,1,2,3,4,5} {%
\draw [very thick] (6+\j, 7 -\j *0.5) -- +(1,-0.5);
\fill (6+\j,7-\j*0.5) circle (0.1);
\node at (6+\j, 6.7-\j*0.5) {$\mathcal R$};}
\fill (6+6, 7-3) circle (0.1);
\node at (12, 4.7) {$\mathcal L$};
\foreach \j in {0,1,2,3} {%
\draw [very thick] (6+\j, 7 +\j *0.5) -- +(1,+0.5);
\fill (6+\j,7+\j*0.5) circle (0.1);
\node at (6+\j, 6.7+\j*0.5) {$\mathcal R$};}
\fill (6+4, 7+2) circle (0.1);
\node at (10, 8.7) {$\mathcal R$};
\foreach \j in {0,1} {%
\draw [very thick] (10+\j, 9 +\j *0.5) -- +(1,+0.5);
\fill (10+\j,9+\j*0.5) circle (0.1);
\node at (10+\j, 8.7+\j*0.5) {$\mathcal R$};}
\fill (12, 10) circle (0.1);
\node at (12, 9.7) {$\mathcal L$};
\foreach \j in {0,1,2} {%
\draw [very thick] (10+\j, 9 -\j *0.5) -- +(1,-0.5);
\fill (10+\j,9-\j*0.5) circle (0.1);
\node at (10+\j, 8.7-\j*0.5) {$\mathcal R$};}
\fill (13,7.5) circle (0.1);
\node at (13, 7.2) {$\mathcal L$};
\foreach \j in {0,1} {%
\draw [very thick] (12+\j, 8 +\j *0.5) -- +(1,+0.5);
\fill (12+\j,8+\j*0.5) circle (0.1);
\node at (12+\j, 7.7+\j*0.5) {$\mathcal R$};
}
\fill(14,9) circle (0.1);
\node at (14, 8.7) {$\mathcal R$};
\foreach \j in {0} {%
\draw [very thick] (12+\j, 4 +\j *0.5) -- +(1,+0.5);
\fill (12+\j,4+\j*0.5) circle (0.1);
\node at (12+\j, 3.7+\j*0.5) {$\mathcal R$};
}
\foreach \j in {0,1} {%
\draw [very thick] (12+\j, 4 -\j *0.5) -- +(1,-0.5);
\fill (12+\j,4-\j*0.5) circle (0.1);
\node at (12+\j, 3.7-\j*0.5) {$\mathcal R$};}
\fill (14,3) circle (0.1);
\node at (14, 3.3) {$\mathcal L$};
\fill (13,4.5) circle (0.1);
\node at (13, 4.2) {$\mathcal L$};
\draw [very thick] (8,8) -- (9, 7.5);
\fill (9,7.5) circle (0.1);
\node at (9, 7.2) {$\mathcal L$};
\draw [very thick] (11,8.5) -- (12, 9);
\fill (12,9) circle (0.1);
\node at (12, 8.7) {$\mathcal L$};
\draw [very thick] (6,7) -- (11,6);
\fill (11,6) circle (0.1);
\node at (11, 5.7) {$\mathcal R$};
\draw [very thick] (11,6) -- (12, 5);
\fill (12,5) circle (0.1);
\draw [very thick]  (11, 6) -- ++ (2,0);
\fill (13,6) circle (0.1);
\node at (13, 5.7) {$\mathcal L$};
\node at (1,2.7) {$\bm h$};
\node at (2,2.7) {$\bm h+1$};
\node at (3,2.7) {$\bm h+2$};
\foreach \i in {4,5,6,7,8} {%
\node at (\i,2.8) {...};}
\node at (9,2.7) {$\bm h_v$};
\node at (10,2.7) {$\bm h_v+1$};
\foreach \i in {11,12,13} {%
\node at (\i,2.8) {...};}
\node at (14,2.7) {$\bm 1$};
\node at (9,8.8) {$ v$};
\node at (1,7.3) {$ r$};
\node at (2,7.3) {$ v_0$};
\fill (7,6.8) circle (0.1);
\node at (7, 7.8) {$\mathcal R$};
\fill (8,6.6) circle (0.1);
\node at (8, 6.3) {$\mathcal R$};
\fill (9,6.4) circle (0.1);
\node at (9, 6.1) {$\mathcal R$};
\fill (10,6.2) circle (0.1);
\node at (10, 5.9) {$\mathcal R$};
\fill (12, 6) circle (0.1);
\node at (12, 5.7) {$\mathcal R$};
\end{tikzpicture}
\caption{Example of a renormalized tree, with $n=9$ endpoints at scales $\leq 1$.}
\label{figure_renormalized_tree}
\end{figure}

\begin{thm}[Renormalized bounds]
\label{theorem_renormalized_bounds}
For renormalized clusters, the {\it renormalized bounds}
\begin{equation}
\begin{split}
\int d\bm x(P_{v_0})\left|\mathcal{W}^{(h)}(\tau, P_{v_0},\bm x (P_{v_0}))\right|\leq C^n \gamma^{-h\left[D(P_{v_0})+z_{v_0}(P_{v_0})\right]}\\
\left(\prod_{v\notin V_f(\tau)}  \gamma^{-\left[D(P_{v})+z_{v}(P_v)\right](h_v-h_{v'})} \right)\left(\prod_{v\in V_f(\tau)\setminus V^*_f(\tau)}|\rho_v|\right)
\end{split}
\end{equation}
where $V^*_f(\tau)$ is the set of endpoints such that no running coupling constants is associated to them, $|\rho_v|\in\{|\nu_{h_v}|,|\delta_{h_v}|, |\lambda_{h_v}|\}$ while $m_{2,v}$ has already been defined as $1$ for  $\nu-$type endpoints and $0$ otherwise.
\end{thm}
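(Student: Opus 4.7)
The plan is to start from the explicit representation (\ref{renormalized_kernels_explicit_expression_first_version}) of the renormalized kernels and combine it with the non-renormalized dimensional bound of Theorem (\ref{theorem_bound_of_kernels}), systematically tracking the gains produced by the action of $\mathcal R$ at each non-trivial vertex $v$ with $D(P_v)\leq 0$. The heart of the argument is the mechanism sketched at the end of Section (\ref{subsection_localization_renormalization_PBC}): each application of $\mathcal R$ at a marginal or relevant vertex produces, after Taylor-expansion, a set of zero factors $(\bm x_\ell-\bm y_\ell)^{b(\ell)}_{j(\ell)}$ and $(\bm x^i-\bm y^i)^{b(v_i^*)}_{j(v_i^*)}$ paired with derivatives $\partial^{q(f)}_{j(f)}$ acting on internal tree-line propagators, which converts dimensionally into a net gain $\gamma^{-z_v(P_v)(h_v-h_{v'})}$ that exactly upgrades the non-renormalized exponent $-D(P_v)(h_v-h_{v'})$ into the renormalized one $-[D(P_v)+z_v(P_v)](h_v-h_{v'})$.

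More concretely, I would first bound each determinant $\det G^{h_v,T_v}_\alpha(\bm t_v)$ by Gram--Hadamard (Lemma (\ref{lemma_gram_hadamard_inequality})), after verifying that the differentiated propagators still admit a scalar-product representation in a suitable Hilbert space, which is a minor variant of Appendix (\ref{appendix_gram_representation}); together with Lemma (\ref{lemma_propagator_faster_any_power}) this produces the same factor $\gamma^{(h_v/2)(\sum_j|P_{v_j}|-|P_v|-2(s_v-1))}$ as in the non-renormalized proof. Next I would integrate the spatial variables along each anchored spanning tree $T_v$ using the $L_1$ estimate (\ref{norm_1_propagator}), carefully coupling each zero $(\bm x_\ell-\bm y_\ell)^{b(\ell)}$ to a derivative on the same line: the zero is of order $\gamma^{-h_v}$ since it spans the cluster $G_v$, while the derivative on a propagator ultimately contracted at some scale $\leq h_{v'}$ supplies an extra $\gamma^{h_{v'}}$ relative to the un-differentiated estimate, yielding the announced $\gamma^{-(h_v-h_{v'})}$ gain per zero-derivative pair (with $z_v=1$ if $|P_v|=4$ and $z_v=2$ if $|P_v|=2$, corresponding respectively to first- and second-order Taylor remainders). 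Collecting all such gains, together with the wave-function-renormalization ratios $(Z_{h_v}/Z_{h_v-1})^{|P_v|/2}$ (assumed uniformly bounded under an a priori flow hypothesis that $z_h=O(\lambda^2)$ uniformly in $h$, to be verified in the subsequent flow analysis) and the endpoint factors $|\rho_v|$, one recovers the claimed product $\prod_{v\notin V_f(\tau)}\gamma^{-[D(P_v)+z_v(P_v)](h_v-h_{v'})}$ multiplied by $\prod_{v\in V_f(\tau)\setminus V_f^*(\tau)}|\rho_v|$.

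It only remains to sum over the decorations of the tree: the field labels $\{P_v\}$, the anchored graph $T=\cup_v T_v\in \bm T$ and the indices $\alpha\in A_T$ distinguishing the different distributions of zeros and derivatives. The $\{P_v\}$-sum is controlled by sacrificing an arbitrarily small fraction $\epsilon>0$ of the strictly positive exponent $D(P_v)+z_v(P_v)\geq 1$ to produce a convergent factor $\prod_v e^{-\epsilon|P_v|(h_v-h_{v'})}$, exactly as in Lemma 2.4 of \cite{benfatto2001renormalization}; the sum over $T$ yields $\prod_v s_v!\,C^{s_v}$, which cancels the $\prod_v(s_v!)^{-1}$ in the truncated-expectation formula (\ref{effective_potentials_tree_wrt_first_vertex}); and $|A_T|\leq C^n$ follows from an elementary count of the ways to distribute the $b(\ell)$'s, $b(v_i^*)$'s, $q(f)$'s subject to the constraint $\sum_\ell(b(\ell)+b(v_i^*)-q(f_\ell^1)-q(f_\ell^2))=0$. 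Together with $|\rho_v|\leq \epsilon$ and $|V_f(\tau)\setminus V_f^*(\tau)|\leq n$, these combine into the overall constant $C^n$ of the statement.

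The main obstacle I anticipate is the treatment of the encapsulated resonances flagged at the end of Section (\ref{subsection_localization_renormalization_PBC}): if the same external leg is shared by $m$ hierarchically ordered dangerous clusters $G_{v_1}\supset\cdots\supset G_{v_m}$, a naive iterated application of $\mathcal R$ from the root inward can force up to $m$ derivatives to act on the same internal propagator, generating a factorial $(m!)^\alpha$ that defeats the dimensional gains. The cure, implemented in detail in \cite{benfatto2001renormalization}, has two ingredients: first, exploit the freedom in the choice of the localization point (available by translation invariance in the present chapter) to arrange that no propagator is differentiated more than twice; second, prove by induction on the tree the telescopic factorization $\gamma^{-(h_{v'_1}-h_{v_m})}=\prod_{j=1}^m\gamma^{-(h_{v'_j}-h_{v_j})}$, which shows that a single renormalization at the innermost scale simultaneously dresses all the encapsulated resonances at no additional combinatorial cost.
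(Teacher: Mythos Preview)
Your proposal is correct and follows essentially the same approach as the paper. The paper does not give a standalone proof of Theorem~\ref{theorem_renormalized_bounds}; the argument is contained in the discussion of ``Renormalized values of the clusters'' (culminating in formula~(\ref{renormalized_kernels_explicit_expression_first_version})) together with the proof of Corollary~\ref{theorem_fundamental_PBC}, which simply says to repeat the proof of Theorem~\ref{theorem_bound_of_kernels} with the replacement $\gamma^{-D(P_v)(h_v-h_{v'})}\to (Z_{h_v}/Z_{h_v-1})^{|P_v|/2}\gamma^{-[D(P_v)+z_v](h_v-h_{v'})}$, referring to~\cite{benfatto2001renormalization} for the sum over $\{P_v\}$ and for the resolution of the encapsulated-resonance problem --- exactly the ingredients you have spelled out.
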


\begin{corollary}
Let, $h>h_\beta$. If, for some constant $c_1>0$ these bounds are verified:
\begin{equation}
\sup_{h'>h}|\vec v_{h'}|\equiv \epsilon_h,\hspace{3mm} \sup_{h'>h}\left|\frac{Z_{h'}}{Z_{h'-1}}\right|\leq e^{c_1\epsilon_h^2},
\end{equation}
and if there exists a constant $\bar \epsilon$, depending on $c_1$, such that $\epsilon_h\leq \bar \epsilon$, then, for another suitable constant $c_0$ uniform in $c_1, L$ and $\beta$ the following bounds are true:
\begin{eqnarray}
\sum_{\tau\in\mathcal T_{h,n}}\left[|n_h(\tau)|+|z_h(\tau)|+|a_h(\tau)|+|l_h(\tau)|\right]\leq \left(c_0\epsilon_h\right)^n,
\label{bound_coupling_constants_trees_theorem_PBC}
\\
\sum_{\tau\in\mathcal T_{h,n}}\left| \tilde e_{h+1}(\tau) \right|\leq \gamma^{2h}\left(c_0\epsilon_h\right)^n,
\label{bound_tilde_E_PBC}
\\
\frac{1}{L\beta}\sum_{\tau\in\mathcal T_{h,n}} \int d\bm x(P_{v_0})\left| \mathcal R\mathcal W^{(h)}(\tau, P_{v_0},\bm x(P_{v_0})) \right|\leq \gamma^{-\left(D(P_{v_0})+z_{v_0}\right)h}\left(c_0\epsilon_h\right)^n
\label{renormalized_values_PBC}
\end{eqnarray}
\label{theorem_fundamental_PBC}
\end{corollary}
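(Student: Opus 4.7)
The plan is to derive all three bounds as consequences of the renormalized cluster bound in Theorem \ref{theorem_renormalized_bounds}, by exploiting the fact that after renormalization every non-trivial vertex $v$ has strictly positive effective scaling dimension $D(P_v)+z_v(P_v)>0$, together with the smallness assumption on the running coupling constants.

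First I would plug the two a priori bounds into the master estimate of Theorem \ref{theorem_renormalized_bounds}. For the wave-function renormalization factors hidden in the renormalized kernels (see the ratios $Z_{h_v}/Z_{h_v-1}$ appearing in \eqref{renormalized_kernels_explicit_expression_first_version}), the assumption $|Z_{h'}/Z_{h'-1}|\leq e^{c_1\epsilon_h^2}$ gives a total factor $\exp\bigl(c_1\epsilon_h^2\sum_{v\notin V_f(\tau)}|P_v|/2\bigr)\leq C^n$ for $\epsilon_h$ sufficiently small, since $\sum_{v}|P_v|$ is linear in $n$. For the endpoint factors, the assumption $|\vec v_{h'}|\leq \epsilon_h$ gives $\prod_{v\in V_f(\tau)\setminus V_f^*(\tau)}|\rho_v|\leq \epsilon_h^{n-n^*}$, while the remaining endpoints $v\in V_f^*(\tau)$ come from $\mathcal V^{(0)}$ and carry their own factor of $\epsilon_h$ via the ultraviolet bound of Lemma \ref{lemma_ultraviolet_integration}.

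Next I would perform the sum over the tree structure and over the field labels $\{P_v\}$. The key observation is that at every $v\notin V_f(\tau)$ the exponent $D(P_v)+z_v(P_v)\geq 1/2$ (since the renormalization operator $\mathcal R$ is designed precisely so that $n_v^e/2+m_{2,v}-2+z_v>0$); hence each factor $\gamma^{-(h_v-h_{v'})[D(P_v)+z_v]}$ can be split as the product of two halves, one half used to beat the combinatorial sum $\sum_{|P_v|}\binom{\sum_j|P_{v_j}|}{|P_v|}$ (via the identity $\sum_{p\geq 1}p^a\gamma^{-p\eta\Delta h}\leq C_a$ uniformly in $\Delta h\geq 1$), and the other half used to sum freely over the scale labels $h_v$. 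The residual sum over tree shapes is controlled by the classical estimate $\sum_{\tau\in\mathcal T_{h,n}}(\text{shape weight})\leq C^n$, which uses the $1/s_v!$ combinatorial factors already present in \eqref{effective_potentials_tree_wrt_first_vertex}.

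The three announced inequalities then follow by specializing the result to the relevant choice of external field set. For \eqref{bound_coupling_constants_trees_theorem_PBC}, the coupling constants are extracted from $\mathcal L\mathcal V^{(h)}$ with $|P_{v_0}|=2$ or $4$, so that $D(P_{v_0})+z_{v_0}(P_{v_0})=0$ and the prefactor $\gamma^{-h[D(P_{v_0})+z_{v_0}]}=1$; moreover the momentum-space derivatives needed to isolate $z_h$ and $a_h$ contribute a harmless factor absorbed into $c_0$. For \eqref{bound_tilde_E_PBC}, one takes $|P_{v_0}|=0$, so $D(P_{v_0})=-2$ and $z_{v_0}=0$, which produces exactly the stated $\gamma^{2h}$. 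For \eqref{renormalized_values_PBC}, the bound is simply Theorem \ref{theorem_renormalized_bounds} after summing over trees of order $n$ as above.

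The main obstacle I anticipate is the combinatorial estimate on $\sum_{\{P_v\}}$: it requires that the positive fraction of the scaling dimension gained by $\mathcal R$ at every vertex strictly dominates the entropy of choosing external field subsets. This is where the split of $D(P_v)+z_v$ into a small $\eta$-portion (absorbed by $|P_v|$ entropy) and a large $(1-\eta)$-portion (absorbed by scale summation) is delicate, and where it is crucial that the gain $z_v$ from renormalization is at least $1$ for marginal vertices and at least $2$ for relevant ones, with no possibility of ``accumulating derivatives'' on a single propagator (the encapsulated-resonance issue discussed around \cite{benfatto2001renormalization}). Once this bookkeeping is in place, the conclusion follows by standard geometric summation, and the constant $c_0$ is manifestly independent of $c_1$, $L$ and $\beta$.
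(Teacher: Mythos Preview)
Your overall strategy is the same as the paper's: start from the master bound of Theorem~\ref{theorem_renormalized_bounds}, insert the a priori hypotheses, split the strictly positive renormalized dimension $D(P_v)+z_v$ into a portion that controls the sum over $\{P_v\}$ and a portion that controls the sum over scale labels, then specialize to $|P_{v_0}|=2,4$ (local part), $|P_{v_0}|=0$ (vacuum), and general $P_{v_0}$ (renormalized kernels). The paper carries this out with the explicit numerical split $1/40$ and refers to \cite{benfatto2001renormalization} for the combinatorial sum, exactly as you anticipate.

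There is one genuine slip in your argument. You bound the wave-function factors by $\exp\bigl(c_1\epsilon_h^2\sum_{v\notin V_f(\tau)}|P_v|/2\bigr)\leq C^n$ ``since $\sum_v|P_v|$ is linear in $n$''. This is false: the product runs over \emph{all} (trivial and non-trivial) vertices of $\tau$, and between two consecutive non-trivial vertices at scales $h_{\bar v'}<h_{\bar v}$ there are $h_{\bar v}-h_{\bar v'}-1$ trivial vertices, each carrying the same $|P_v|$. So $\sum_v|P_v|$ can be of order $|h|\cdot n$, not $n$, and your global bound blows up. The correct mechanism, which the paper uses, is vertex-by-vertex absorption: since $h_v-h_{v'}=1$ at every vertex, one has
\[
\Bigl(\tfrac{Z_{h_v}}{Z_{h_v-1}}\Bigr)^{|P_v|/2}\gamma^{-[D(P_v)+z_v]}\leq e^{c_1\epsilon_h^2|P_v|/2}\,\gamma^{-[D(P_v)+z_v]},
\]
and because $D(P_v)+z_v\geq c\,|P_v|$ for a universal $c>0$ (check the three cases $|P_v|=2,4,\geq 6$), the exponential is swallowed by a small fraction of the $\gamma$-factor once $c_1\epsilon_h^2$ is small enough (the paper takes $c_1\epsilon_h^2\leq 1/16$). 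After this correction your splitting argument in the final paragraph goes through unchanged.
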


Since we already discussed the {\it non-renormalized bounds}, we comment only the differences with respect to them.
\begin{proof}
Exploiting the dimensional gains coming from the operator $\mathcal R$ acting as described in equation (\ref{renormalized_kernels_explicit_expression_first_version}), we can repeat the proof of Theorem (\ref{theorem_bound_of_kernels}) by replacing
\begin{equation}
\prod_{v\notin V_f(\tau)}\gamma^{-D(v)(h_v-h_{v'})}\rightarrow \prod_{v\notin V_f(\tau)}\left(\frac{Z_{h_v}}{Z_{h_v-1}}\right)^{|P_v|/2}\gamma^{-[D(v)+z_v](h_v-h_{v'})}
\end{equation}
By the assumption $\sup_{h'>h}Z_{h'}/Z_{h'-1}\leq e^{c_1\epsilon_h^2}\leq$, taking $c_z\epsilon_h^2\leq 1/16$, one gets that
\begin{equation}
\prod_{v\notin V_f(\tau)}(Z_{h_v}/Z_{h_v-1})^{|P_v|/2}\gamma^{-[-2+|P_v|/2+z_v]}\leq \left(\prod_{\bar v }\gamma^{-\frac{1}{40}(h_{\bar v}-h_{\bar v'})}\right)\left(\prod_{v\notin V_f(\tau)}\gamma^{-|P_v|/40}\right)
\label{bound_product_z_h/z_h-1_gamma}
\end{equation}
where $\bar v$ are the non-trivial vertices, and $\bar v'$ is the non tricial vertex immediately preceding $\bar v$. Thanks to the product into the first bracket we bound the sum over the scale labels by $(const.)^n$. The second factor can be used to bound the sums, using
\begin{equation}
\sum_{\tau\in\mathcal T_{h,n}}\sum_{P_v}\sum_T\prod_{v\notin V_f(\tau)}\frac{1}{s_v!}\gamma^{-|P_v|/40}\leq C^n,
\end{equation}
we refer to \cite{benfatto2001renormalization} for details.
\end{proof}
\begin{rem}
As expected, we have just a relevant running coupling constant: $\nu_h$, coming from the fact that each endpoint $v$ with $m_{2,v}=1$ carries a factor $\gamma^{-h_{v'}}$. To have a renormalizable power counting, we {\it hope} to kill it putting a factor $\gamma^{h_{v'}}$ in front of the corresponding running coupling constant, and the strategy is to prove that $n_h$ remains bounded if we fix in a proper way the counterterm $\nu$ in the Hamiltonian.\\
\end{rem}

\begin{rem}
The trees involved in (\ref{renormalized_values_PBC}) are the trees such that a renormalization $\mathcal R$ operator is associated to the first vertex, while the trees involved in (\ref{bound_coupling_constants_trees_theorem_PBC}) correspond to trees such that a $\mathcal L$ operation is associated to the first vertex. The bound (\ref{bound_tilde_E_PBC}) represents the bound of the constant, {\it i.e.} field independent, contribution to the effective potential.
\end{rem}

\paragraph{Short memory property}
The {\it renormalized bounds} have an important consequence: for any $0<\kappa<1$ fixed a priori, the sum over all the trees with root scale $h$ having at least a vertex such that $h_v=k>h_v$ is $O|\lambda|\gamma^{\kappa(h-k)}$: in fact what we need to prove the convergence of the expansion is $-2+|P_v|/2+z_v>0$, and we can {\it rewrite} from $\gamma^{-[-2+|P_v|/2+z_v]}=\gamma^{-\kappa[-2+|P_v|/2+z_v]}\gamma^{-(1-\kappa)[-2+|P_v|/2+z_v]}$, where $\kappa$ has to be chosen in such a way that the bounds over the sums we just described are still valid.\\
As we will see in the next Subsection, this in particular tells us that $\lambda_h$ and $\delta_h$ stay constant because their beta functions vanish.

\subsection{Flow of running coupling constants}
\label{subsection_flow_of_running_coupling_constants_PBC}

From the iterative procedure we set up in this chapter, the flow equations for the running coupling constants $\vec v_h$ ({\it i.e.} the equations linking $\vec v_h$ to $\vec v_k, k\geq h+1$) are
\begin{equation}
\begin{split}
\nu_{h-1}=\gamma \nu_h+\beta_\nu^h (\vec v_h,\dots, \vec v_0),\\
\lambda_{h-1}=\lambda_h+\beta_\lambda^h (\vec v_h,\dots, \vec v_0),\\
\delta_{h-1}=\delta_h+\beta_\nu^h (\vec v_h,\dots, \vec v_0),\\
\frac{Z_{h-1}}{Z_h}=1+\beta_z^h (\vec v_h,\dots, \vec v_0).
\end{split}
\label{running_coupling_constants_flow_PBC}
\end{equation}
The {\it a priori} bounds on the running coupling constants we assumed in Theorem (\ref{theorem_fundamental_PBC}) implies first of all that the {\it absolute summability and analyticity} of the tree expansion kernels, and also that the beta function itself (\ref{beta_function_definition}) is analytic: being the beta function defined in terms of the {\it local parts of the quadratic and quartic kernels of the effective potential $\mathcal V^{(h)}$}.\\
The analyticity of the beta function would suggest, as a natural way to study the flow of the running coupling constant, to truncate the Taylor expansion for the beta function at the lowest non trivial order, try to check whether the {\it approximate flow} verifies the hypothesis of Theorem (\ref{theorem_fundamental_PBC}) and, if so, to prove that the solution is stable under the addition of higher order Taylor approximations. For a qualitative understanding, let us consider the flow equation of $\lambda_h$, assuming that the second order Taylor approximation is non trivial:
$$\lambda_{h-1}=\lambda_h+a_h\lambda_h^2+\dots$$
Of course, the main role is played by $a_h$: if $a_h>a>0$ uniformly in $h$, the truncated flow is divergent as $h\to -\infty$ and the same would be true for the non-truncated flow (in this case, one should introduce a critical scale, below which it is no more possible to apply perturbation theory in $\lambda_h$). If $a_h\leq -a\leq 0$ uniformely in $h$, the truncated flow would be convergent $\lambda_h\to 0$ as $h\to -\infty$, and also the non-truncated flow would be convergent: in this case, we would talk of {\it asymptotic freedom in RG sense}.\\
The fact that the ststem we are studying (\ref{hamiltonian_PBC}) belongs to the Luttinger universality class means that this system realized an intermediate {\it scenario}: one can check that, asymptotically for $h\to -\infty$, $a_h\to 0$, meaning that the truncated flow equation remains analitically close to the initial datum $\lambda_0$ uniformely in $h$. The problem, in this case, is the instability of the truncated flow, so one must show that similar cancellations take place at all orders in perturbation theory. Of course it is a {\it non trivial} and actually {\it very hard} problem, so it is necessary to use some {\it deep argument}, being direct computations not enough.\\
The strategy relies on the fact that the model described by the Hamiltonian $H$ (\ref{hamiltonian_PBC}) is, in a RG sense, {\it close to} the {\it Luttinger model}, which verifies a bunch of symmetries which are not verified by the {\it not-solvable} model we are dealing with (as discussed in the introduction). \\
The idea is to keep as a {\it reference model} the Luttinger model, being able to quantify in a rigorous way this {\it closeness} getting rigorous estimates on the size of the corrections. The first technical step is to recognize, that it is possible to rewrite the propagators $g_\omega^{(i.r.)}$ (and all the single scale propagators $g_\omega^{(h)}$) as the propagator of the {\it infrared Luttinger model} and a remainder.

\begin{lem}
\label{lemma_propagator_luttinger_+_remainder_PBC}
The propagator $g^{(h)}_\omega(\bm x-\bm y)$ in (\ref{dressed_propagator_scale_h_PBC}) can be rewritten as
\begin{equation}
g^{(h)}_\omega(\bm x-\bm y)=g^{(h)}_{0;\omega}(\bm x-\bm y)+C^{(h)}_\omega(\bm x-\bm y),
\label{propagator_as_luttinger_plus_remainder}
\end{equation}
where $C^{(h)}_\omega$ is the remainder of the {\it linear approximation}
\begin{equation}
g^{(h)}_{0;\omega}(\bm x-\bm y)=\frac{1}{L\beta}\sum_{\bm k'\in\mathcal{D}_{L,\beta}^\omega}e^{i\bm k'(\bm x-\bm y)}\frac{\tilde{f}_h(\bm k')}{-ik_0+\omega v_0 k'},
\end{equation}
such that, for any integer $N>1$ we have
\begin{equation}
\left|g_{0;\omega}^{(h)}(\bm x-\bm y)\right|\leq \frac{\gamma^hC_N}{1+(\gamma^h\left|\bm x-\bm y\right|)^N},
\end{equation}
and, with the further assumption $|x-y|\leq L/2$ ans $|x_0-y_0|\leq \beta/2$, we can bound the remainder as
\begin{equation}
|C_\omega^{(h)}(\bm x-\bm y)|\leq \frac{\gamma^{2h}C_N}{1+\left(\gamma^h\left|\bm x-\bm y\right|\right)^N}.
\end{equation}
\end{lem}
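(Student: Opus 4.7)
The plan is to split $g^{(h)}_\omega$ at the level of its momentum-space integrand, using the elementary algebraic identity
\begin{equation*}
\frac{1}{-ik_0+\omega v_0 k' + \vartheta_{h,\omega}(k')} \;=\; \frac{1}{-ik_0+\omega v_0 k'} \;-\; \frac{\vartheta_{h,\omega}(k')}{(-ik_0+\omega v_0 k')\,(-ik_0+\omega v_0 k' + \vartheta_{h,\omega}(k'))},
\end{equation*}
where $\vartheta_{h,\omega}(k')$ is the $O(k'^2)$ correction that, by its recursive definition in Subsection \ref{subsection_anomalous_integration_PBC}, differs from $t_{0,\omega}(k')$ only by the bounded ratio $Z_{h+1}/Z_h(\bm k')$. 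Plugging this decomposition into (\ref{dressed_propagator_scale_h_PBC}), the first summand reproduces exactly $g^{(h)}_{0;\omega}(\bm x-\bm y)$ by the very definition given in the lemma, while the second summand is taken as the definition of the remainder $C^{(h)}_\omega(\bm x-\bm y)$.

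The bound on $g^{(h)}_{0;\omega}$ is then essentially a rerun of Lemma \ref{lemma_propagator_faster_any_power}. The point is the scaling change of variables $\bm k' = \gamma^h \bm q$, after which (up to negligible Riemann-sum corrections, controlled by $h > h_\beta$) one gets
\begin{equation*}
g^{(h)}_{0;\omega}(\bm x-\bm y) \;=\; \gamma^{h}\int d\bm q\; e^{-i\gamma^h \bm q\cdot(\bm x-\bm y)}\;\frac{\tilde f_0(\bm q)}{-iq_0+\omega v_0 q}.
\end{equation*}
The function $\tilde f_0(\bm q)/(-iq_0+\omega v_0 q)$ is $C^\infty$ with compact support, because $\tilde f_0$ vanishes in a neighborhood of the origin. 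Integrating by parts $N$ times in $\bm q$ produces the decay factor $(\gamma^h|\bm x-\bm y|)^{-N}$ with an $N$-dependent constant $C_N$, yielding the claimed estimate.

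The bound on $C^{(h)}_\omega$ is where the gain of one power of $\gamma^h$ enters. On the support of $\tilde f_h$, $|\bm k'|\sim \gamma^h$, and hence $\vartheta_{h,\omega}(k')=O(\gamma^{2h})$ while each of the two factors in the denominator is of order $\gamma^h$. Performing again the change of variables $\bm k'=\gamma^h\bm q$ one obtains
\begin{equation*}
C^{(h)}_\omega(\bm x-\bm y) \;=\; \gamma^{2h}\int d\bm q\; e^{-i\gamma^h \bm q\cdot(\bm x-\bm y)}\;\tilde f_0(\bm q)\, R_h(\bm q),
\end{equation*}
with
\begin{equation*}
R_h(\bm q)\;:=\;\frac{\gamma^{-2h}\vartheta_{h,\omega}(\gamma^h q)}{(-iq_0+\omega v_0 q)\big(-iq_0+\omega v_0 q+\gamma^{-h}\vartheta_{h,\omega}(\gamma^h q)\big)}.
\end{equation*}
Using the Taylor expansion $(1-\cos k')\cos p_F + \omega v_0(\sin k' - k')=O(k'^2)$ and the uniform bound $Z_{h+1}/Z_h(\bm k')=O(1)$, the function $R_h$ and all its $\bm q$-derivatives are bounded uniformly in $h$ on the support of $\tilde f_0$. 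Integration by parts $N$ times then gives the decay $(\gamma^h|\bm x-\bm y|)^{-N}$, producing the extra $\gamma^h$ relative to the Luttinger part.

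The main technical obstacle, which I would treat but not dwell on, is the passage from Riemann sums on the discrete lattice $\mathcal D^\omega_{L,\beta}$ to the continuum integrals used above: this is precisely where the hypotheses $|x-y|\leq L/2$ and $|x_0-y_0|\leq \beta/2$ enter, guaranteeing that the difference between the sum and its continuum limit produces errors that are negligible compared to $(\gamma^h|\bm x-\bm y|)^{-N}$, and hence that the estimates are uniform in $L$ and $\beta$.
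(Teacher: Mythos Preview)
Your argument is correct and follows the natural route: split the symbol algebraically, then run the same ``smooth compactly supported integrand $\Rightarrow$ fast decay'' machinery of Lemma~\ref{lemma_propagator_faster_any_power}, with the extra $\gamma^h$ for the remainder coming from $\vartheta_{h,\omega}(k')=O(k'^2)=O(\gamma^{2h})$ on the support of $\tilde f_h$. The paper does not spell out a proof of this lemma; it is treated as a standard consequence of the techniques in Appendix~\ref{appendix_propagator_decay_property}.

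One stylistic remark: the passage to continuum integrals via the rescaling $\bm k'=\gamma^h\bm q$, together with the hand-wave about Riemann-sum corrections, is unnecessary and is the weakest link in your write-up. The cleaner route, and the one actually implemented in Appendix~\ref{appendix_propagator_decay_property}, is to stay on the discrete lattice throughout and use discrete derivatives $\tilde\partial_i$ in momentum space; then the bounds $|\bm x-\bm y|^N|C^{(h)}_\omega(\bm x-\bm y)|\le C_N\gamma^{2h}\gamma^{-Nh}$ follow from the dimensional estimates $|\tilde\partial_{\bm k'}^m \hat g_\omega(\bm k')|\le C\gamma^{-(m+1)h}$, $|\tilde\partial_{\bm k'}^m\vartheta_{h,\omega}|\le C\gamma^{(2-m)h}$ on $\mathrm{supp}\,\tilde f_h$, together with the volume $\sim\gamma^{2h}$ of that support. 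This avoids having to justify the sum-to-integral replacement and makes the role of the hypotheses $|x-y|\le L/2$, $|x_0-y_0|\le\beta/2$ more transparent (they guarantee $|\bm x-\bm y|\le (\pi/2)\,d_{L,\beta}(\bm x-\bm y)$, which is what the discrete integration-by-parts actually controls).
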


An immediate consequence of the latter Lemma is that, on scale $h$, any observable can be naturally decomposed as the sum of a dominant part, expressed in terms of Gallavotti-Nicolò trees whose values is computed considering all the single-scale propagators as $g_{0,\omega}^{(h)}$, and a remainder, that can be written as a sum of trees "containing" at least a propagator $C_\omega^{(h)}$. In particular, we group the running coupling constants of the {\it infrared Luttinger model} into the two components vector 
\begin{equation}
\mu_h=(\lambda_h,\delta_h),
\label{running_coupling_constants_infrared_luttinger_model}
\end{equation}
in order to split the $\beta$-functions $\beta_i^{(h)}$ into a {\it Luttinger model part} plus a {\it remainder}: Lemma (\ref{lemma_propagator_luttinger_+_remainder_PBC}) allows us to rewrite the $\beta$-functions as the Luttinger's ones {\it plus} a remainder as follows: to get the {\it Luttinger model beta function}, first of all we split
\begin{equation}
\beta_i^{(h)}(\mu_h,\nu_h;\dots;\mu_0,\nu_0)=\bar\beta_i^{(h)}(\mu_h;\dots;\mu_1)+\hat  \beta_i^{(h)}(\mu_h,\nu_h;\dots;\mu_1,\nu_1),
\end{equation}  
where $i=\mu,\nu$ where the first term in the right hand side is obtained by putting $\nu_k=0$, $k\geq h$, and then we estract from the first term of the right hand side the Luttinger model $\beta$-function:
\begin{equation}
\bar \beta_i^{h}(\mu_h;\dots;\mu_0)=\hat \beta_i^{h,l}(\mu_h;\dots;\mu_0)+\hat \beta_i^{h,nl}(\mu_h;\dots;\mu_0),
\end{equation}
where the further labels $l$ and $nl$ we introduced mean trivially Luttinger and non-Luttinger, and the first one is obtained simply considering each propagator as the Luttinger one $g_{0,\omega}^{(h)}(\bm x-\bm y)$ so that the $\beta$-function coincides exactly with the $\beta$-function of the infrared Luttinger model.\\
The universal part $\hat \beta_i^{h,l}$ has been studied in deep detail in several papers, so we do not give the complicated details, we refer to \cite{benfatto2005ward}, but we recall here the main result of this paper: the so called {\it asymptotic vanishing of the beta function} (\cite{benfatto2005ward}, Theorem 2 and formula (57)).
\begin{prop}
Let $\mu_h:=(\lambda_h,\delta_h)$ and $|\mu_h|$ small enough. Then
\begin{equation}
|\hat \beta_i^{h,l}(\mu_h,\dots,\mu_h)|\leq C_\alpha |\lambda_h|^2\gamma^{\eta h},
\end{equation}
for $0<\alpha<1$ and a suitable $C_\alpha>0$.
\end{prop}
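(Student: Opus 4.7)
The plan is to exploit the exact solvability of the Luttinger model via bosonization in order to convert the non-trivial statement about cancellations in the beta function into a consequence of Ward identities. Concretely, I would first recall that the Luttinger model (the one whose single-scale propagator is $g^{(h)}_{0;\omega}$) is bilinear in the bosonized variables, so that its two-point Schwinger function and all multi-point correlations can be computed in closed form. The exact Schwinger function has the form $S_{\text{ex}}(\bm x)\sim |\bm x|^{-1-\eta(\lambda)}$ with anomaly $\eta(\lambda)=a\lambda^2+O(\lambda^3)$, and the exact effective coupling (the renormalized four-point vertex at zero external momenta) is asymptotically $\lambda_h\to \lambda_{-\infty}(\lambda)$, a non-trivial analytic function of $\lambda$ with $\lambda_{-\infty}=\lambda+O(\lambda^2)$. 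Hence, if the multiscale RG construction is run for the Luttinger model itself, the resulting sequence $\{\mu_h\}$ must reproduce this behaviour: the flow is bounded and asymptotically constant.

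The second step is to derive the required cancellation at the level of the beta function. Writing $\mu_h=\mu_{h-1}+\hat\beta^{(h),l}_i(\mu_h,\dots,\mu_0)$ along the flow of the infrared Luttinger model, and using that the sequence $\{\mu_h\}$ obtained from the exact solution is convergent, one gets that the \emph{same} $\hat\beta^{(h),l}_i$, evaluated on the \emph{constant} sequence $(\mu_h,\dots,\mu_h)$, must differ from the sequence $\mu_h-\mu_{h-1}$ by a small quantity. The quantitative input here is the Ward identity associated with the local $U(1)$ gauge symmetry of the Luttinger action, which is only \emph{formally} valid since the scale decomposition breaks gauge covariance. After restoring gauge invariance via the reference model, one obtains corrections to the naive WI that, by the short memory property discussed above, are bounded by $|\lambda_h|^2\gamma^{\eta h}$ for any $0<\eta<1$ (by sacrificing a fraction of the dimensional gain $\gamma^{-(h_v-h_{v'})[|P_v|/2-2+z_v]}$ in the renormalized bound of Corollary~\ref{theorem_fundamental_PBC}).

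The third, and most delicate, step is to convert these formal identities into rigorous estimates uniformly in the scale index. I would proceed by induction: assuming the analytic bound $|\mu_{h'}-\mu_{h'-1}|\leq C|\lambda|^2\gamma^{\eta h'}$ for $h'>h$, the tree expansion for $\hat\beta^{(h),l}_i(\mu_h,\dots,\mu_h)$ can be split into (a) the contribution obtained by freezing all the $\mu_{h'}$ to $\mu_h$, and (b) a correction coming from the differences $\mu_{h'}-\mu_h=\sum_{k=h+1}^{h'}(\mu_k-\mu_{k-1})$; the inductive hypothesis controls (b) by $O(|\lambda|^2\gamma^{\eta h})$, and the Ward identities applied at the fixed-point level produce the corresponding bound for (a). The main obstacle, as usual in this line of work, is to organize the cancellations so that the formal gauge-invariance identities survive the presence of the momentum cutoffs $\tilde f_h$; this is precisely the core of \cite{benfatto2005ward} and requires the introduction of correction counterterms (the chiral anomaly) that are themselves explicitly computable in the Luttinger reference model. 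Once this is achieved, the bound in the statement is obtained by iterating down from the scale $h=0$, where $|\mu_0|=O(|\lambda|)$ by construction.
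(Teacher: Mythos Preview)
The paper does not give a proof of this proposition: it explicitly quotes the result from \cite{benfatto2005ward} (Theorem~2 and formula~(57) there) and refers the reader to that paper for the ``complicated details''. So strictly speaking there is nothing to compare your argument against in the present text.

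That said, your sketch conflates two historically and technically distinct routes to the vanishing of the Luttinger beta function. Your first step --- using bosonization and the exact Mattis--Lieb solution to conclude a priori that the flow $\{\mu_h\}$ is bounded and convergent, and then inferring that $\hat\beta_i^{h,l}$ must be asymptotically small --- is the strategy of \cite{benfatto1993beta}. Your second and third steps --- Ward identities, the chiral anomaly arising from the cutoff breaking local gauge invariance, and the inductive control of the correction terms via the short memory property --- are the strategy of \cite{benfatto2005ward}, which the paper actually cites and which was developed precisely to avoid any appeal to the exact solution. In the latter approach bosonization plays no role: one works entirely within the multiscale fermionic expansion, derives the (anomalous) Ward identities for the reference model with cutoffs, and uses them to exhibit the cancellations in $\hat\beta^{h,l}$ order by order. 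Either route is valid, but your presentation uses the conclusion of the first (convergence of the flow from the exact solution) as an input to the second, which makes the logic somewhat circular: if you already know from the exact solution that $\mu_h\to\mu_{-\infty}$ with exponential rate, the Ward-identity machinery is redundant for the purpose of bounding $\hat\beta^{h,l}$; conversely, if you run the Ward-identity argument as in \cite{benfatto2005ward}, you do not need Step~1 at all.
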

Finally, we can state the following
\begin{thm}\label{theorem_lambda_nu_solutions}
If $|\lambda |\leq \lambda_0$ with $\lambda_0$ small enough, we can fix once for all a counterterm $\nu^*(\lambda)=:\nu_1$, analytic in $\lambda$, such that the running coupling constants $\{\lambda_h,\nu_h\}_{\leq 1}$, verify $|\nu_h|\leq c |\lambda|\gamma^{(\theta/2)h}$ and $|\lambda_h|\leq c|\lambda|$. Moreover, 
$$z_h\leq 1/2\hspace{3mm} \mbox{ and } \hspace{3mm}
e^{-c|\lambda|^2}\leq\left| \frac{Z_h}{Z_{h-1}}\right|\leq e^{c|\lambda|^2}.
$$
\end{thm}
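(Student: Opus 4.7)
The plan is to establish the existence of the counterterm $\nu^*(\lambda)$ by a contraction argument on a suitable Banach space of sequences $\{\nu_h\}_{h\leq 1}$, carried out simultaneously with an inductive control of $\lambda_h$ and $Z_h/Z_{h-1}$ via the asymptotic vanishing of the beta function. Let $\mathcal{B}_\theta$ be the space of sequences $\underline{\nu}=\{\nu_h\}_{h\leq 0}$ equipped with the norm $\|\underline{\nu}\|_\theta=\sup_{h\leq 0}|\nu_h|\gamma^{-(\theta/2)h}$, and let $\mathcal{B}_\theta(r)$ be the closed ball of radius $r=c|\lambda|$. The inductive hypothesis at scale $h$ will be: $|\lambda_k|\leq 2c|\lambda|$ for $1\geq k>h$, $|\nu_k|\leq c|\lambda|\gamma^{(\theta/2)k}$, $|\delta_k|\leq c|\lambda|$, and $|Z_k/Z_{k-1}|\in[e^{-c|\lambda|^2},e^{c|\lambda|^2}]$. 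Under this hypothesis, Corollary \ref{theorem_fundamental_PBC} applies, so all the beta functions $\beta^h_i(\vec v_h,\dots,\vec v_0)$ are analytic in $\lambda$ and in $\underline{\nu}$ throughout the induction.

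First I would handle the flow of $\lambda_h$, $\delta_h$ and $Z_h/Z_{h-1}$. Decompose
\begin{equation*}
\beta^h_\lambda=\hat\beta^{h,l}_\lambda(\mu_h,\dots,\mu_1)+\hat\beta^{h,nl}_\lambda(\mu_h,\dots,\mu_1)+\tilde\beta^h_\lambda(\vec v_h,\dots,\vec v_1),
\end{equation*}
where the last piece collects the contributions containing at least one $\nu$-endpoint, the Luttinger part $\hat\beta^{h,l}_\lambda$ is bounded by $C|\lambda|^2\gamma^{\eta h}$ via the cited proposition of \cite{benfatto2005ward}, and the non-Luttinger part $\hat\beta^{h,nl}_\lambda$ contains at least one remainder propagator $C^{(h)}_\omega$, which by Lemma \ref{lemma_propagator_luttinger_+_remainder_PBC} carries an extra $\gamma^{h}$ relative to the Luttinger bound and so is bounded by $C|\lambda|^2\gamma^{\eta h}$ as well. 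The piece $\tilde\beta^h_\lambda$ has at least one $\nu_k$ factor with $k>h$, hence by the short memory property it is bounded by $C|\lambda|\cdot c|\lambda|\gamma^{(\theta/2)h}$. Telescoping the flow gives $|\lambda_h-\lambda|\leq C|\lambda|^2\sum_{k\leq 0}\gamma^{\min(\eta,\theta/2)k}\leq C'|\lambda|^2$, so $|\lambda_h|\leq c|\lambda|$ with room to spare; an identical analysis yields the bounds on $\delta_h$ and on $\log(Z_h/Z_{h-1})$, the latter being $O(|\lambda|^2)$ at each step and summable.

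The main step is the fixed-point argument for $\nu_h$. Rewriting $\nu_{h-1}=\gamma\nu_h+\beta^h_\nu$ backwards gives $\nu_h=\gamma^{-1}(\nu_{h-1}-\beta^h_\nu)$. Iterating and imposing $\gamma^{-n}\nu_{h-n}\to 0$ as $n\to\infty$ (enforced by the inductive bound $|\nu_{h-n}|\leq c|\lambda|\gamma^{(\theta/2)(h-n)}$) yields the equivalent form
\begin{equation*}
\nu_h=-\sum_{j=0}^{\infty}\gamma^{-(j+1)}\beta^{h-j}_\nu(\underline{\nu};\lambda),\qquad \nu_1\equiv \nu^*(\lambda):=-\sum_{j=0}^{\infty}\gamma^{-(j+1)}\beta^{1-j}_\nu(\underline{\nu};\lambda),
\end{equation*}
which defines a map $T:\mathcal{B}_\theta(r)\to\mathcal{B}_\theta$. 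To show that $T$ stabilises $\mathcal{B}_\theta(r)$ and is a contraction, one splits $\beta^h_\nu$ exactly as above: contributions with no $\nu$-endpoint are $O(|\lambda|^2\gamma^{\eta h})$ (the Luttinger piece vanishes by parity/oddness of the kernel at zero momentum, leaving only the non-Luttinger remainder with its $\gamma^h$ gain), while contributions with at least one $\nu$-endpoint are linear in $\underline{\nu}$ with coefficient $O(|\lambda|)$ and carry the short-memory decay. Summing over $j$ shows $|(T\underline{\nu})_h|\leq C|\lambda|^2\gamma^{(\theta/2)h}+\tfrac{1}{2}\|\underline{\nu}\|_\theta\gamma^{(\theta/2)h}$ provided $\theta<2\eta$, and a parallel estimate on $T\underline{\nu}-T\underline{\nu}'$ gives Lipschitz constant $O(|\lambda|)<1/2$ for $|\lambda|\leq\lambda_0$ small. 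Banach's theorem yields a unique fixed point $\underline{\nu}^*\in\mathcal{B}_\theta(r)$; analyticity of $\nu^*(\lambda)$ follows from the uniform analyticity of $T$ in $\lambda$ on the disc $|\lambda|\leq\lambda_0$. The main obstacle is precisely verifying this dimensional gain $\gamma^{(\theta/2)h}$ in the $\nu$-independent part of $\beta^h_\nu$: one needs to check that the Luttinger contribution (with translation and chiral symmetries intact) produces no constant term in the flow of the mass, so that every nonzero contribution either carries a $\nu$-endpoint or a non-Luttinger remainder propagator.
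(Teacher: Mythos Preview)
Your proposal is correct and follows essentially the same route as the paper: the Banach space $\mathcal{B}_\theta$ of weighted sequences, the fixed-point operator $T$ obtained by iterating the $\nu$-flow backwards, the decomposition of each beta function into Luttinger, non-Luttinger remainder, and $\nu$-endpoint pieces, and the inductive control of $\lambda_h,\delta_h,Z_h/Z_{h-1}$ via the asymptotic vanishing of the Luttinger beta function together with the $\gamma^h$ gain from $C^{(h)}_\omega$. The one organisational difference is that the paper first proves, for \emph{any} $\underline\nu\in\mathcal M_\theta$ treated as a free parameter, that $\lambda_h(\underline\nu)$ is well-defined, satisfies $|\lambda_h|\le c|\lambda|$, and is Lipschitz in $\underline\nu$ (this Lipschitz continuity is what makes the contraction estimate for $T$ go through, since $\beta^k_\nu$ depends on $\underline\nu$ also through $\lambda_k(\underline\nu)$); you fold this into a simultaneous induction, which works but hides this dependence. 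Your identification of the key obstacle---that the $\nu$-independent part of $\beta^h_\nu$ must carry a $\gamma^{\theta h}$ gain, coming from parity cancellation in the Luttinger piece plus the remainder-propagator gain in the non-Luttinger piece---is exactly what the paper invokes under the phrase ``thanks to parity cancellations''.
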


Before proving the theorem, it is worth commenting this result: this tells us that the running coupling constants $\lambda_h$ and $\delta_h$ stay asymptotically constant, provided we fix an initial datum for $\{\nu_h\}_{h\leq 1}$ such that $\nu_h\to 0$ as $h\to -\infty$ exponentially fast and $\lambda_h, \delta_h$ do not exceed $\epsilon$: in fact we use the freedom of changing the chemical potential {\it correction} $\nu$ to make sure that this happens. Finally, the vanishing of the beta function tells us that the sequence of running coupling constants $\vec v_h=(\nu_h,\delta_h\lambda_h)$ exists and converges exponentially fast to $\vec v_{-\infty}=(0,\delta_{-\infty},\lambda_{-\infty})$, where in particular $\delta_{-\infty},\lambda_{-\infty}$ are analytic in $\lambda$ if $\lambda$ is small enough.

\begin{proof}[Proof of theorem \ref{theorem_lambda_nu_solutions}]
Let us consider the Banach space $\mathcal B_\theta$ of {\it real sequences} $\underline \nu=\{\nu_h\}_{h\leq 1}$ with the norm $||\cdot ||_\theta$ defined by
\begin{equation}
||\underline \nu ||_\theta :=\sup_{k\leq 1}|\nu_k|\gamma^{-k \theta/2}.
\end{equation}
 Actually, we are interested in a {\it closed ball}, so let us consider the ball
\begin{equation}
\mathcal{M}_\theta:=\{\underline \nu=\{\nu_h\}_{h\leq 1}: |\nu_h|\leq c|\lambda|\gamma^{\theta/2}\}.
\end{equation}
The strategy of the proof is the following:
\begin{enumerate}
\item we show that for any $\underline \nu\in\mathcal M_\theta$, both the flow equation for $\nu_h$ and the property $|\lambda_h(\nu)|\leq c|\lambda|$ for some $c>0$ are verified uniformely in $\underline \nu$,
\item we fix the counterterm $\underline \nu\in\mathcal M_\theta$ via an exponentially convergent iterative procedure in such a way that the flow equation for $\nu_h$ is verified.
\item finally, we solve the flow of $Z_h$.
\end{enumerate}
So let us start:
\begin{enumerate}
\item given $\underline \nu\in\mathcal M_\theta$, let us iteratively suppose
\begin{equation}
||\lambda_{k-1}(\underline \nu)-\lambda_k(\underline \nu)||\leq c_0 |\lambda|^2\gamma^{(\theta/2)k}, \mbox{ for } c_0>0, k > h+1.
\end{equation}
First of all, it is true for $h=1$ and, besides, if it is true for any $k>h$, it implies $|\lambda_k|\leq c|\lambda|$.\\
Looking at the flow equation for $\lambda_h$ and the comments about the beta function written as Luttinger's one {\it plus a remainder}, we can further write,
\begin{equation}
\begin{split}
\beta^h_\lambda(\lambda_h,\nu_h; \dots ; \lambda_1, \nu_1)=\\
=\beta^{h,l}_\lambda(\lambda_h,\dots,\lambda_h)+\sum_{k=h+1}^1 D_\lambda^{h,k}+\beta^{h,nl}_\lambda(\lambda_h,\dots,\lambda_1)+\sum_{k\geq h}\nu_k\tilde \beta_\lambda^{h,k}(\lambda_k,\nu_k;\dots;\lambda_1,\nu_1),
\end{split}
\end{equation}
where
\begin{equation}
\begin{split}
|\beta^{h,l}_\lambda|\leq c|\lambda|^2\gamma^{\theta h},\hspace{3mm} |D_{\lambda}^{h,k}|\leq c|\lambda|\gamma^{\theta(h-k)}|\lambda_k-\lambda_h|,\\
|\beta^{h,nl}_\lambda|\leq c|\lambda|^2\gamma^{(\theta/2)h}, \hspace{3mm}  |\tilde \beta_\lambda^{h,k}|\leq c|\lambda|\gamma^{\theta(h-k)}
\end{split}
\end{equation}
It worths remarking that we have the first of these inequalities by the assumption of the vanishing of the Luttinger beta funtion. So
\begin{equation}
\begin{split}
|\lambda_h(\underline \nu)- \lambda_{h+1}(\underline \nu)|\leq c|\lambda^2|\gamma^{\theta(h+1)}+\sum_{k\geq h+2}c|\lambda\gamma^{\theta(h+1-k)}|\sum_{k'=h+2}^k c_0 |\lambda^2|\gamma^{(\theta/2)k'}+\\
c|\lambda|^2\gamma^{(\theta/2)(h+1)}+\sum_{k\geq h+1}c^2|\lambda|^2\gamma^{(\theta/2)k}\gamma^{(\theta(h+1-k))}\leq c_0|\lambda|^2\gamma^{(\theta/2)h}.
\end{split}
\end{equation}
for some $c_0$ large enough. Thanks to the iterative assumption, we get also
\begin{equation}
|\lambda_h(\underline \nu)-\lambda_1(\underline \nu)|\leq c_0|\lambda|^2
\end{equation}
Now, we are left with proving that $\lambda(\underline \nu)$ is a continuous function of $\underline \nu\in\mathcal M_\theta$:
\begin{equation}
\begin{split}
\lambda_h(\underline\nu)-\lambda_k(\underline\nu')=\lambda_1(\underline\nu)-\lambda_1(\underline\nu')+\\+\sum_{h+1\leq k\leq 1}\left[\beta_\lambda^k(\lambda_k(\underline \nu),\nu_k;\dots;\lambda_1\underline \nu),\nu_1)-\beta_\lambda^k(\lambda_k(\underline \nu'),\nu'_k;\dots;\lambda_1\underline \nu'),\nu'_1)\right]
\end{split}
\end{equation}
First of all, we have $|\lambda_1(\underline\nu)-\lambda_1(\underline\nu')|\leq c_0|\lambda||\nu_1-\nu'_1|$. Furthermore, defining $||\underline \nu||_0=\sup_{h\leq 1}|\nu_h|$, if we assume that inductively $|\lambda_k(\underline \nu-)\lambda_k(\underline \nu')|\leq 2c_0 |\lambda| ||\underline \nu-\underline \nu'||_0$, we find (usinge the same decomposition strategy as before) that
\begin{equation}
\begin{split}
|\lambda_h(\underline \nu)-\lambda_h(\underline \nu')|\leq c|\lambda| |\nu_1-\nu'_1|+\\+c|\lambda|\sum_{k\geq h+1}\gamma^{(\theta/2)k}\sum_{k'\geq k}\gamma^{\theta(k-k')}\left(2c_0 |\lambda| ||\underline \nu-\underline \nu'||_0+|\nu_k-\nu'k|\right).
\end{split}
\end{equation}
So, we can choose $c_0$ in such a way that
\begin{equation}
|\lambda_h(\underline \nu)-\lambda_h(\underline \nu')|\leq c|\lambda|||\underline \nu-\underline \nu'||_0.
\end{equation}
\item In order to fix the counterterm, we use a {\it fixed point argument}: indeed we will look
at the recursive relation for $\nu_h$ as the result of the action of an operator acting on a Banach space, and we will prove that the operator {\it generatinf the flow} is a {\it contraction} in this space, so there exists a unique fixed point which will be properly the counterterm.\\
Let us consider the Banach space $\mathcal B_\theta$ of {\it real sequences} $\underline \nu=\{\nu_h\}_{h\leq 1}$ with the norm $||\cdot ||_\theta$ defined by
\begin{equation}
||\underline \nu ||_\theta :=\sup_{k\leq 1}|\nu_k|\gamma^{-k \theta/2}.
\end{equation}
In the context of Banach spaces, we can apply the {\it fixed point theorem for contractions}. Actually, we are interested in a {\it closed ball}, but by the closeness the fixed point argument is valid within it (of course by definition of closeness). So let us consider the ball
\begin{equation}
\mathcal{M}_\theta:=\{\underline \nu=\{\nu_h\}_{h\leq 1}: |\nu_h|\leq c|\lambda|\gamma^{\theta/2}\},
\end{equation}
we will  fix $\underline \nu$ via an exponentially convergent iterative procedure in such a way that the flow equation for $\nu_h$ is satisfied.\\
Let us start from the recursive relation
$$\nu_{h-1}=\gamma \nu_h+\beta_\nu^h(\vec v_h;\dots;\vec v_0)$$
which can be iterated until $h=1$, getting 
$$\nu_{h-1}=\gamma^{2-h}\nu_1+\sum_{k=h}^{1}\gamma^{k-h}\beta_\nu^k(\vec v_k;\dots;\vec v_0),$$ 
meaning that
$$\nu_1=\gamma^{h-2}\nu_{h-1}+\sum_{k=0}^{1-h}\gamma^{k-2}\beta_\nu^k(\vec v_k;\dots;\vec v_0).$$
The latter equation, since we are trying to fix $\underline \nu$ in such a way that $\nu_{-\infty}=0$, can be read as:
\begin{equation}
\nu_1=-\sum_{k=-\infty}^1\gamma^{k-2}\beta_\nu^k(\vec v_k;\dots;\vec v_1),
\end{equation}
from which we should get
\begin{equation}
\nu_h=-\sum_{k\leq h}\gamma^{k-h-1} \beta_\nu^k(\vec v_k;\dots; \vec v_1).
\end{equation}
In order to look at this equation from a {\it fixed point theorem} point of view, let us introduce the operator $\bm T:\mathcal M_\theta\to \mathcal M_\theta$ defined as 
\begin{equation}
\left(\bm T \underline \nu\right)_h=\nu_h=-\sum_{k\leq h}\gamma^{k-h-1} \beta_\nu^k(\vec v_k(\underline \nu);\dots; \vec v_1(\underline \nu)),
\end{equation}
where $\vec v_k(\underline \nu)$ is the vector solution of the equations (\ref{running_coupling_constants_flow_PBC}) as functions of the {\it parameter} $\underline{\nu}$. In this way, we have translated our problem into a fixed point problem for this operator.\\
First of all, we check that the operator is well defined, meaning that it really sends $\mathcal M_\theta$ into itself: thanks to parity cancellations, it is true that 
\begin{equation}
\beta_\nu^h(\vec v_h;\dots;\vec v_1)=\beta_{\nu,1}^h(\mu_h;\dots;\mu_1)+\sum_k\nu_k \tilde{\beta}_{\nu}^{h,k}(\mu_h,\nu_h;\dots;\mu_1\nu_1)
\end{equation}
with 
\begin{equation}
|\beta_{\nu,1}^h|\leq c_1|\lambda|\gamma^{\theta h},\hspace{3mm}|\tilde \beta_{\nu}^{h,k}|\leq c_2 |\lambda|\gamma^{\theta(h-k)}
\end{equation}
where $c_1, c_2$ are suitable constants greater then zero. If we fix $c=2c_1$, we get
\begin{equation}
|(\bm T\underline\nu)_h|\leq \sum_{k\leq h}2c_1|\lambda|\gamma^{k(\theta/2+1)-h}\leq c|\lambda|\gamma^{h\theta/2}.
\end{equation}
Finally, we check that it is actually a contraction: $||(\bm T\underline \nu)-(\bm T \underline \nu')||_\theta\leq c''|\lambda|||\underline \nu-\underline \nu'||_\theta$, indeed
\begin{equation}
\begin{split}
|(\bm T\underline \nu)_h-(\bm T\underline \nu')_h|\leq \sum_{k\leq h}\gamma^{k-h-1}|\beta_\nu^{k}(\vec v_k;\dots;\vec v_1)-\beta_\nu'^{k}(\vec v'_k;\dots;\vec v'_1)|\leq\\
\leq c \sum_{k\leq h}\gamma^{k-h-1}\left[ \gamma^{\theta k}|\lambda_k'(\underline \nu)-\lambda_k'(\underline{\nu}')|+\sum_{k'=k}^1\gamma^{\theta(k-k')}|\lambda||\nu_{k'}-\nu'_{k'}|\right]\leq\\
\leq  c \sum_{k\leq h}\gamma^{k-h-1}\left[ |k|\gamma^{\theta k}|\lambda| ||\underline \nu-\underline{\nu}'||_0+\sum_{k'=k}^1\gamma^{\theta(k-k')}|\lambda|\gamma^{k'\theta/2}||\underline \nu-\underline \nu'|\right]\leq\\
\leq c''|\lambda|\gamma^{h\theta/2}||\underline \nu-\underline \nu '||_\theta.
\end{split}
\end{equation}
So $\bm T$ is a contraction, and there exists a unique ficed point $\underline{\nu}^*$ for $\bm T$ in the closed ball $\mathcal M_\theta$.
\item Now we can use the previous results to claim that there exist two $O(\lambda)$ functions $\eta_z, F_\zeta^h$ such that
\begin{equation}
Z_h=\gamma^{\eta_z(h-1)+F_\zeta^h}.
\end{equation}
Indeed, knowing that $|z_h|\leq c|\lambda|^2$ uniformly in $h$, we can define 
$$\gamma^{-\eta_z}:=\lim_{h\to -\infty}1+z_h,$$
so that 
\begin{equation}
\log_\gamma Z_h=\sum_{k\geq h+1}\log_\gamma\left(1+z_k\right)=\eta_z(h-1)+\sum_{k\geq h+1}r_\zeta^h, \hspace{3mm} r_\zeta^k:=\log_\gamma\left(1+\frac{z_k-z_{-\infty}}{1+z_{-\infty}}\right).
\end{equation}
Now, knowing that $z_{k-1}-z_k$ is either proportional to $\lambda_{k-1}-\lambda_k$ or to $\nu_{k-1}-\nu_k$, we can bound
\begin{equation}
|r_\zeta^k|\leq c \sum_{k'\leq k}|z_{k'-1}-z_{k'}|\leq c|\lambda|^2\gamma^{(\theta/2)k}.
\end{equation}
Finally, if we define $F_\zeta^h:=\sum_{k\geq h+1}r_\zeta^k$ and $F_\zeta^1=0$, then $F_\zeta^h=O(\lambda)$ and $Z_h=\gamma^{\eta_z(h-1)+F_\zeta^h}$.
\end{enumerate}
\end{proof}

\begin{rem}
In light of that, in Corollary (\ref{theorem_bounds_kernels}), we can replace the assumption $Z_h/Z_{h-1}\leq e^{c_1\epsilon^2}$ by $Z_h\simeq A\gamma^{-h\eta}$ for some suitable $A>0$, where $\eta=a\lambda^2+\mathcal O(\lambda^3)$ and the symbol $\simeq$ means that the equivalence is asymptotically true for $h\to -\infty$, and improve the bounds we got in this chapter.
\end{rem}

\chapter{Interacting Fermions on the half line}
\label{chapter_Interacting_fermions_on_the_half_line}
\section{The model}
\label{the_model_DBB}
\subsection{Definition and main result}
We are interested in constructing the ground state of interacting spinless fermions living in a discrete one-dimensional box of mesh size $a=1$ and volume $L\gg 1$ with {\it open boundary conditions}, meaning that the system is defined on a segment instead of on a torus. \\ 
Let $\mathcal F=\oplus_{n=0}^\infty H^{\wedge n}$ be the standard {\it antisymmetric Fock space}, where $\wedge$ denotes the antisymmetric tensor product, and let  $\psi^\pm_x$ be the {\it fermionic creation and annihilation} operators defined on $\mathcal F$, where $x$ is the space coordinate and $\Lambda:=\left\{x\in\mathbb Z: 1\leq x\leq L\right\}$,  $L\in \mathbb N$. Let us define the Hamiltonian
\begin{equation}
H=H_0+\lambda V+ \varpi \mathcal N,
\end{equation}
where
\begin{equation}
\begin{split}
H_0&=T_0-\mu_0 N_0,\\
T_0&=\sum_{x\in\Lambda}\psi^+_x\left(-\Delta^d  \psi^-_x\right)=\sum_{x\in\Lambda}\frac{1}{2}\left(-\psi^+_{x+1}\psi^-_x-\psi^+_{x-1}\psi^-_x+2 \psi^+_x\psi^-_x\right),\\
N_0&=\sum_{x\in \Lambda}\psi^+_x\psi^-_x,
\end{split}
\end{equation}
where, in the formula of $T_0$, we have to interpret $\psi^{\pm}_0=\psi^{\pm}_{L+1}=0$, $\mu_0$ is the chemical potential choosen in such a way that, if we call $\sigma(T_0):=[e_-, e_+]$ the spectral band of the kinetic operator, $\mu\in [e_-+\kappa, e_+-\kappa]$ for some $\kappa>0$ fixed once for all, the interaction of {\it strenght} $\lambda$ is 
\begin{equation}
V=\sum_{x,y \in\Lambda}\psi^+_x\psi^-_x v(x,y)\psi^+_y\psi^-_y,\
\end{equation}
where $v(x,y)=v(y,x)$ is a real, compactly supported function, and satisfies what we call {\it Dirichlet property}, {\it i.e.} it can be written as
\begin{equation}
v(x,y)=\frac{2}{L+1}\sum_{k \in \mathcal{D}^d_{\Lambda}}\sin(kx)\sin(ky)\hat v(k),
\label{potential_v_DBC}
\end{equation}
where $\mathcal D_\Lambda^d=:\left\{k=\frac{n\pi}{L+1}, n=1,\dots,L\right\}$. We stress that the {\it Dirichlet property} of $v( x, y)$ (\ref{potential_v_DBC}) is not crucial at all but it simplifies some technical aspects of the proof.\\
 Finally, the {\it boundary defect} of size $\varpi = \mathcal O(\lambda)$ is
\begin{equation}
\mathcal N =\sum_{x,y\in \Lambda}\psi^+_x\psi^-_y \pi(x,y),
\end{equation}
where $\pi(x,y)$ is a Hermitian matrix such that $\sup_{x\in\Lambda}\sum_{y\in\Lambda}|\pi(x,y)|=1$.\\
We recall here the main result we prove in this section: let $\beta\geq 0$ be the {\it inverse temperature} and let
\begin{equation}
f_{\Lambda,\beta}=-\frac{1}{|\Lambda|\beta}\log \left(Tr \left(e^{-\beta H}\right)\right)
\end{equation}
be the {\it finite volume specific free energy}. Let also
\begin{equation}
f_{\Lambda}=-\frac{1}{|\Lambda|}\lim_{\beta\nearrow \infty}\frac{1}{\beta}\log \left(Tr \left(e^{-\beta H}\right)\right),\hspace{3mm} f_{\infty}=-\lim_{|\Lambda|\nearrow \infty}\frac{1}{|\Lambda|}\lim_{\beta\nearrow \infty} \frac{1}{\beta}\log \left(Tr \left( e^{-\beta H}\right)\right);
\end{equation}
we prove the following result.
\begin{thm}
\label{theorem_main_DBC}
In this framework, there exists a radius $\lambda_0>0$ such that, for any $|\lambda|\leq \lambda_0$ it is possible to fix the {\it boundary defect} $\pi(x,y)$ and its strenght $\varpi=\varpi(\lambda)$ in such a way that, for any $\theta\in (0,1)$, there exists a constant $C_\theta$ such that 
\begin{equation}
\sum_{y\in\Lambda} \left|\pi(x,y)\right| \leq C_\theta \left(\frac{1}{\left(1+|x|\right)^\theta}+\frac{1}{\left(1+|L-x|\right)^\theta}\right),
\end{equation}
and in such a way that $f_\Lambda$ admits a convergent expansion in $\lambda$ and $\varpi$.\\
Moreover
\begin{equation}
\left| f_\Lambda-f_\infty \right|\leq |\lambda|\frac{C_\theta}{L^\theta}.
\end{equation}
\end{thm}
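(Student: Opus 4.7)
The plan is to follow, with substantial modifications, the RG scheme reviewed in Chapter~\ref{chapter_fermions_PBC}, taking the translation invariant theory as a reference. First I would rewrite $Tr(e^{-\beta H})$ as a Grassmann integral with propagator $g^{D}$ equal to the Dirichlet propagator, and decompose
\begin{equation}
g^{D}(\bm x,\bm y) \;=\; g^{P}(\bm x-\bm y) \;-\; g^{P}(\bm x+\bm y^*),
\end{equation}
where $g^{P}$ is a translation invariant propagator on a doubled torus and $\bm y^* $ is the reflected coordinate, so that the second summand depends on the \emph{sum} of the spatial arguments. After the ultraviolet integration (which is essentially unchanged with respect to the PBC case, see Lemma~\ref{lemma_ultraviolet_integration}), the multiscale decomposition into quasi-particle slices $\psi^{(h)}_\omega$ is performed exactly as in Section~\ref{section_multiscale_analysis}, inherited from the bulk propagator; the single-scale Dirichlet propagator inherits the same splitting into a bulk part $g^{(h)}_\omega(\bm x-\bm y)$ and a ``reflected'' remainder $r^{(h)}_\omega(\bm x,\bm y)$.

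The key technical point would be to prove that a reflected propagator $r^{(h)}_\omega$ carries a dimensional gain of $\gamma^{h}$ in the $L^1$ norm with respect to the bulk propagator $g^{(h)}_\omega$: intuitively, at scale $h$ the reflected propagator forces one of the two space arguments to lie within distance $\gamma^{-h}$ from the boundary, which reduces the available volume of integration from $\beta L$ to $\beta\gamma^{-h}$. Combined with the tree expansion of Section~\ref{section_multiscale_analysis}, this extra $\gamma^{h}$ translates into a gain $\gamma^{(h_v-h_{v'})}$ for every cluster $G_v$ containing at least one reflected line (a ``boundary cluster''). I would therefore split the kernel of each effective vertex as $W^{(h)} = W^{(h)}_{\mathrm{bulk}} + W^{(h)}_{\partial}$, with $W^{(h)}_{\partial}$ built from graphs having at least one reflected propagator: the above gain promotes the scaling dimension of $W^{(h)}_{\partial}$ by one. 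In particular, the boundary part of the quartic terms becomes irrelevant (no renormalization needed), and $W^{(h)}_{4,\mathrm{bulk}}$ is governed by the usual bulk running coupling constant $\lambda_h$ of Section~\ref{subsection_flow_of_running_coupling_constants_PBC}. Similarly, the bulk part of the quadratic term reproduces the familiar constants $\nu_h, \delta_h, Z_h$, which can be dressed into the propagator exactly as in Subsection~\ref{subsection_anomalous_integration_PBC}.

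What survives as a genuine novelty is the boundary quadratic term: even after the $\gamma^{h}$ gain, $W^{(h)}_{2,\partial}$ remains marginal, and it is \emph{not} a constant but a non-translation-invariant kernel $\pi_h(x,y)$. The role of the bulk running coupling constants is then played by running coupling \emph{functions} $\pi_h(x,y)$. I would introduce a redefined localization operator $\mathcal L$ which, on the boundary quadratic terms, extracts the kernel on a suitable ``diagonal'' that captures the marginal part, and use the boundary counterterm $\varpi \mathcal N$ (with $\varpi$ and $\pi(x,y)$ still to be chosen) to compensate its flow. The flow equation for $\pi_h$ takes the schematic form
\begin{equation}
\pi_{h-1}(x,y) \;=\; \pi_h(x,y) + \beta^{h}_{\pi}\bigl(\{\pi_k,\lambda_k,\nu_k,\delta_k\}_{k\geq h}\bigr),
\end{equation}
and the main obstacle is to show that this flow admits a solution with the prescribed algebraic decay. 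Concretely, I would work in the Banach space of Hermitian kernels with norm
$\|\pi\|_\theta \;=\; \sup_{x}\sum_{y}\bigl|\pi(x,y)\bigr|\bigl[(1+|x|)^{-\theta}+(1+|L-x|)^{-\theta}\bigr]^{-1}$
and set up a fixed point argument analogous to the one in Theorem~\ref{theorem_lambda_nu_solutions}: the dimensional gain coming from reflected propagators is responsible for the factor $(1+|x|)^{-\theta}$ (with $\theta<1$ coming from the short memory property applied to the boundary scale associated with $x$), and it is precisely what makes the map contractive. The hard part is to prove the required decay on $\beta^{h}_\pi$: one needs a careful analysis of how many reflected propagators appear in the trees contributing to $\beta^{h}_\pi$, and to combine it with short memory on the bulk running coupling constants $\lambda_h, \nu_h, \delta_h$, whose control is inherited from Theorem~\ref{theorem_lambda_nu_solutions} (this is why $\theta<1$ rather than $\theta=1$).

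Once the flow of $\pi_h$ is under control and all the other running coupling constants are fixed as in the PBC case, absolute convergence of the renormalized tree expansion for $f_\Lambda$ follows from the usual Gram--Hadamard bounds adapted to the determinant representation~(\ref{expectation_truncated_determinants}). The estimate $|f_\Lambda - f_\infty|\leq |\lambda| C_\theta / L^\theta$ follows by isolating, in the tree expansion for $f_\Lambda-f_\infty$, those trees which contain at least one reflected propagator (only such trees produce a boundary-dependent contribution); each of them carries at least one gain $\gamma^{h_v}$ in the $L^1$ norm together with a spatial restriction to within $\gamma^{-h_v}$ of the boundary, and summing over scales $h_v$ produces the factor $L^{-\theta}$ (with $\theta<1$ arising, as above, from the short memory property).
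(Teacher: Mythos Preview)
Your outline is essentially the paper's strategy: the reflection splitting $g = g_P + g_R$, the $\sigma\in\{P,R\}$ labelling of field lines, the observation that a remainder propagator gains one scaling dimension in $L^1$ (the ``anchorage property'' in Corollary~\ref{corollary_norms_propagators_DBC}), the resulting irrelevance of the boundary quartic terms, the dressing of bulk marginal quadratic terms as in the PBC case, and finally a fixed-point argument for a running coupling \emph{function} $\varpi_h(x,y)$ controlling the boundary quadratic part. So the architecture is right.

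Three technical points, however, are not quite as you describe them and matter for the argument to close. First, the flow of the boundary coupling is not marginal in form: in the paper's normalisation one has $\varpi_{h-1}(x,y)=\gamma\,\varpi_h(x,y)+\beta^h_\varpi(\ldots)$, with the factor $\gamma$ inherited from the relevant scaling of two-leg kernels; the counterterm is therefore fixed by a $\nu$-type argument (summing $\sum_{k\le h}\gamma^{k-h-1}\beta^k_\varpi$), not by a marginal-type iteration $\pi_{h-1}=\pi_h+\beta^h_\pi$. Second, the Banach norm in which the contraction is proved is \emph{scale-dependent}, $\|\varpi_h\|^{(\theta)}_{\infty,1}=\sup_x(1+\gamma^h|x|)^\theta\sum_y|\varpi_h(x,y)|$; your scale-free norm would not close the estimate because the decay length of $\varpi_h$ genuinely grows like $\gamma^{-h}$. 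Third, your sentence ``usual Gram--Hadamard bounds'' hides a real issue: when all remainder propagators sit inside the determinant $G^{h_v,T_v}$ and none on the spanning tree, one cannot simply invoke Gram--Hadamard and still extract the boundary weight; the paper expands the determinant along one row of $R$-entries (Lemma~\ref{lemma_bound_determinants_inside_renormalized_kernels}) and then needs a combinatorial bound $\prod_{v\in V_{\mathcal B}}(\ldots)\le c\,e^n$ to avoid losing the $1/n!$. Relatedly, the mechanism by which a single weight function at scale $h_v$ renormalises \emph{all} ancestor clusters is not automatic: it requires the ``transfer'' and ``weighted integration'' Lemmas~\ref{lemma_transfer}--\ref{lemma_integral_w_g}, which move $w_{h}(\cdot)$ along the spanning tree and convert it into the chain of gains $\gamma^{\theta(h_{v'}-h_v)}$; your proposal alludes to this but does not isolate it as a separate step.
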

Even though it is not explicitly investigated in this thesis, we stress that a straightforward extension of the proof of this theorem would allow one to construct the correlation functions of the Hamiltonian and to control their boundary corrections.


\subsection{Free Hamiltonian diagonalization and free propagator}
\label{section_the_non_interacting_system_DBC}

It is well known that the Laplacian problem with DBC is {\it diagonalized} by a {\it sine Fourier transform}. Indeed, if we introduce the transformation:
\begin{equation}
\hat \psi^\pm_k=\sum_{x\in\Lambda}\sin (kx)\psi^\pm_x, \hspace{5mm} \psi^\pm_x=\frac{2}{L+1}\sum_{k\in \mathcal D^d_\Lambda}\sin (kx)\hat\psi^\pm_k,
\label{sine_fourier_transform_dbc}
\end{equation}
where $\hat \psi_k^\pm$ creates and annihilates a spinless electron with momentum $k$, the Hamiltonian $H_0$ can be written as a diagonal matrix in the {\it dual space}:
\begin{equation}
H_0=\frac{2}{L+1}\sum_{k\in\mathcal D^d_\Lambda}\hat \psi^+_ke(k)\hat \psi_k^-,
\end{equation}

where $e(k)$ is the dispersion relation:
\begin{eqnarray}
e(k)=1-\cos k -\mu_0,\\
\label{dispersion_relation_DBC}
\mathcal D^d_\Lambda=\left\{k=\frac{n\pi}{L+1}, n\in \mathbb{Z}, n=1,\dots, L\right\}.
\label{momentum_space_DBC}
\end{eqnarray}
In particular, we choose $\mu_0$ in such a way that there exists $p_F\in\mathcal D^d_{\Lambda,\beta}$ such that $e(p_F)=\mathcal O(1/L)$.
\begin{rem}
\label{remark_quasi_particles_issue_DBC}
When we perform the limit $L\to \infty$, $\mathcal{D}^d_\Lambda\to [0,\pi]$ and of course $e(k)$, which is a cosine up to a constant, becomes a function defined in a semi-period of the cosine: $e(\cdot):[0,\pi]\to [-\mu_0, 2-\mu_0]$. This means  that there is a unique point of the domain, and we call it $p_F\in [0,\pi]$, such that $e(p_F)=0$. In light of the previous chapter, it is clear that the zeros of the dispersion relation are fundamental because they correspond to the singularities at zero temperature of the propagator and, since the interesting physics happens near the Fermi points, we are interested in the excitations around these Fermi points.\\
We stress that, while in the translation invariant system there are two symmetric Fermi points $\pm p_F$ and we introduced two different quasi-particles $\{\hat \psi^\pm_\omega\}_{ \omega=\pm}$, in this case the theory naturally suggests the definition of a unique quasi-particle around the unique Fermi point $p_F$.
\end{rem}

\paragraph{Schwinger functions and Free Propagator}
Let $x_0\in[0,\beta)$ be the {\it imaginary time}, let $\bm x=(x,x_0)\in\Lambda\times [0,\beta)$ and let us consider the {\it time-evolved operator} $\psi^{\pm}_{\bm x}=e^{Hx_0}\psi^\pm_x e^{-Hx_0}$. So we can define the $m$-point Schwinger function at finite temperature $T=\beta^{-1}$ as
\begin{equation}
S_{\Lambda,\beta}(\bm x_1, \epsilon_1;\dots;\bm x_m, \epsilon_m):=\left< \psi^{\epsilon_1}(\bm x_1)\dots \psi^{\epsilon_m}(\bm x_m)\right>_{\Lambda,\beta}:=\frac{Tr\left( e^{-\beta H}\bm T \left( \psi^{\epsilon_1}(\bm x_1)\dots \psi^{\epsilon_m}(\bm x_m)\right)\right)}{Tr\left( e^{-\beta H}\right)},
\label{schwinger_function_n_points_DBC}
\end{equation}
where $\epsilon_i\in \{\pm\}$ for $i=1,\dots, m$ and $\bm T$ is the {\it Fermionic time ordering operator}, and where we have introduced a collection $\left\{t_1,\dots,t_m\right\}$ of {\it time variables} such that $t_i\in \left[0,\beta\right)$ $\forall i=1,\dots,m$. The strategy we follow is the same as the previous chapter: we want to derive {\it convergent expansions} for  $f_{\Lambda,\beta}$, uniformly in the volume $|\Lambda|$ and in the inverse temperature $\beta$, and then to take the infinite volume and zero temperature limit $|\Lambda|,\beta\to \infty$ (thermodinamic limit in in the statistical mechanics point of view). In particular, we want to keep track of the {\it finite volume boundary corrections}.
\paragraph{Free Propagator}

The non interacting model described by the Hamiltonian $H_0$ is exactly solvable, meaning that all the Schwinger functions can be exactly computed by simply using the anticommutative (fermionic) {\it Wick rule}, starting from the {\it two point Schwinger function}, {\it i.e}. the {\it propagator}:
\begin{equation}
\begin{split}
\left<\bm T\left(\psi^{-}_{\bm x_1}\dots \psi^{+}_{\bm x_m}\right)\right>_{0,\Lambda,\beta} =\det G ,\\
G_{ij}=\left<\bm T\left(\psi^-_{\bm x_i}\psi^+_{\bm x_j}\right)\right>_{0,\Lambda,\beta}=S^0_{L,\beta}(\bm x, -;\bm y, +),
\end{split}
\end{equation}
where the subscript $0$ means that the expectation value is calculated with respect to the {\it free measure}, and in the first line there are as many creation as annhilation operators. We stress that every $n$-point Schwinger function with $\sum_{i=1}^n\epsilon_i\neq 0$ is identically zero. \\
We do not repeat the discussion of the {\it two point Schwinger function}, which is exaclty the same as the previous chapter (\ref{subsection_free_propagator}), with the only difference that
\begin{equation}
\psi^{\pm}_x=\frac{2}{L+1}\sum_{k\in\mathcal D^d_{\Lambda}}\sin(kx) \hat \psi^{\pm}_{k}.
\end{equation}
So if we use the notation $\bm x=(x,x_0), \bm y=(y,y_0)\in\Lambda\times [0,\beta)$ and $\bm k=(k,k_0)\in \mathcal D_L^d\times \mathcal D_{\beta,M}=:\mathcal D^d_{\Lambda,\beta,M}$, where $\mathcal{D}_{\beta,M}$ has already been defined in (\ref{momenta_space_time}) and $\mathcal D_L^d$ in (\ref{momentum_space_DBC}),
\begin{equation}
\begin{split}
S^0_{L,\beta}(\bm x,-;\bm y,+):= g(\bm x,\bm y)=\\=\frac{2}{\beta(L+1)}\lim_{M\to \infty}\sum_{\bm k\in\mathcal D^d_{\Lambda,\beta,M}}e^{i\delta_Mk_0}e^{-ik_0(x_0-y_0)}\sin(kx)\sin(ky)\hat g(\bm k)
\end{split}
\label{free_propagator_DBC}
\end{equation}
where 
\begin{equation}
\hat g(\bm k):=\frac{1}{-ik_0+e(k)}, \hspace{5mm}\bm k\in\mathcal{D}_{\Lambda,\beta}^d
\label{propagator_momentum_DBC}
\end{equation}
is the same function as the translation invariant case, but the domain changes as already commented: $\hat g$ is singular only in $\bm p_F=(p_F,0)$.\\
As in the previous chapter, the constant $\delta_M=\beta/\sqrt{M}$ is introduced in order to take correctly into account the discontinuity of the propagator $g(\bm x,\bm y)$ at $\bm x=\bm y$, where it has to be defined as $\lim_{x_0-y_0\to 0^-}g(x,x;x_0-y_0)$, in fact the latter definition guarantees that $\lim_{M\to \infty}g_M(\bm x,\bm y):=g(\bm x,\bm y)$ for $\bm x\neq\bm y$, while $\lim_{M\to \infty}g_M(\bm x,	\bm x):=g(x,x;0^-)$ at equal points.\\
As we already commented in Remark (\ref{remark_quasi_particles_issue_DBC}), $\hat g(\bm k)$ is singular when $\bm k=(p_F, 0)$. Since the introduction of an interaction between  the fermions could move this singularity, it is convenient to rewrite
$$\mu_0=\mu+\nu,$$
where $\nu$ is a counterterm which will be eventually suitably chosen in order to fix the position of the singularity
at some interaction-indipendent point.
\paragraph{Symmetries and Fermi points}
\begin{lem}[Reflection rule]
$\forall$ $\bm x,\bm y \in \Lambda\times\left[0,\beta\right)$
\begin{equation}
g(\bm x,\bm y)=g_{2(L+1)}(x-y, x_0-y_0)-g_{2(L+1)}(x+y, x_0-y_0),
\end{equation}
where $g_{2(L+1)}$ is the free propagator of a system described by a hopping Hamiltonian $H_0$ defined on a box of size $2(L+1)$ with {\it periodic boundary conditions, i.e. }
\begin{equation} 
g_{2(L+1)}(x, x_0):= \frac{1}{\beta 2(L+1)}\lim_{M\to \infty}\sum_{k_0\in\mathcal{D}_{\beta,M}}\sum_{k\in\mathcal{D}_{2(L+1)}}e^{-i\bm k\cdot \bm x}\hat g(\bm k),
\end{equation}
where $\mathcal{D}_{2(L+1)}:=\left\{k=\frac{n\pi}{L+1}, n=-(L+1),\dots,L\right\}$ and $\mathcal D_{\beta,M}$ has already been defined in (\ref{momenta_space_time}).
\label{lemma_reflection_trick}
\end{lem}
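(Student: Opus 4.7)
The plan is to verify the identity by expanding both sides in the same Fourier basis and matching them term by term, exploiting the elementary trigonometric relation
\[
\sin(kx)\sin(ky)=\tfrac{1}{2}\bigl[\cos(k(x-y))-\cos(k(x+y))\bigr]
\]
to split the Dirichlet propagator (\ref{free_propagator_DBC}) into two pieces that are, respectively, functions of $x-y$ and of $x+y$ alone. First I would rewrite
\[
g(\bm x,\bm y)=\frac{1}{\beta(L+1)}\lim_{M\to\infty}\sum_{\bm k\in\mathcal D^d_{\Lambda,\beta,M}}e^{i\delta_M k_0}e^{-ik_0(x_0-y_0)}\bigl[\cos(k(x-y))-\cos(k(x+y))\bigr]\hat g(\bm k),
\]
so that the task reduces to showing that each cosine sum over the Dirichlet momenta $\mathcal D^d_\Lambda=\{n\pi/(L+1):n=1,\dots,L\}$ agrees, up to unpaired boundary momenta, with the corresponding PBC sum over $\mathcal D_{2(L+1)}=\{n\pi/(L+1):n=-(L+1),\dots,L\}$.

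The key observation is that the dispersion $e(k)=1-\cos k-\mu_0$ is even in $k$, hence $\hat g(k,k_0)=\hat g(-k,k_0)$; combined with $\cos(ka)=\cos(-ka)$ this pairs the momenta $n$ and $-n$ in the doubled-box sum for $n=1,\dots,L$ and exactly doubles the Dirichlet sum. Only the values $n=0$ (giving $k=0$) and $n=-(L+1)$ (giving $k=-\pi$) remain unpaired; their contributions to $g_{2(L+1)}(a,\cdot)$ are $\hat g(0,k_0)$ and $(-1)^a\hat g(\pi,k_0)$ respectively. Since $x-y$ and $x+y$ differ by $2y$, they have the same parity, so $(-1)^{x-y}=(-1)^{x+y}$ and these boundary contributions \emph{cancel} in the difference $g_{2(L+1)}(x-y,x_0-y_0)-g_{2(L+1)}(x+y,x_0-y_0)$.

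It then remains to reconcile the prefactors: the symmetrization produces an extra factor of $2$ on the Dirichlet side, which combined with the $1/\bigl(\beta\cdot 2(L+1)\bigr)$ in the definition of $g_{2(L+1)}$ yields the $2/\bigl(\beta(L+1)\bigr)$ appearing in (\ref{free_propagator_DBC}). The Matsubara frequency sum and the regulator $e^{i\delta_M k_0}$ depend only on $k_0$, so they pass through the manipulations in $k$ untouched, and the $M\to\infty$ limit can be taken at the end.

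There is no substantive obstacle: once the product-to-sum identity is in place the argument is pure bookkeeping. The only step that deserves care is the cancellation of the $k=-\pi$ contribution, which hinges specifically on the parity coincidence valid for integer $x,y$; this is precisely the algebraic reason why the method of images works for Dirichlet boundary conditions on a lattice of size $L$ embedded in a torus of doubled size $2(L+1)$.
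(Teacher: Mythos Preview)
Your proof is correct and follows essentially the same approach as the paper: both rely on the evenness $\hat g(-k,k_0)=\hat g(k,k_0)$ to symmetrize the sum, and on the vanishing of the summand at the extra momenta $k=0$ and $k=-\pi$ needed to pass between $\mathcal D^d_\Lambda$ and $\mathcal D_{2(L+1)}$. Your explicit treatment of the $k=-\pi$ cancellation via the parity coincidence $(-1)^{x-y}=(-1)^{x+y}$ spells out what the paper compresses into the remark that the summand ``vanishes if $k\in(L+1)\mathbb Z$'', but the argument is the same.
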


\begin{proof}
Let us note first of all that 
\begin{equation}
\hat{g}(-k,k_0)=\hat g(k,k_0).
\label{hat_g(k)_parity} 
\end{equation}
So (\ref{free_propagator_DBC}) is
\begin{equation}
\begin{split}
g(\bm x,\bm y)=\frac{2}{\beta (L+1)}\lim_{M\to \infty}\sum_{k \in\mathcal{D}^d_{\Lambda,\beta,M}}e^{-ik_0(x_0-y_0)}\sin(kx)\sin(k y)\hat g(\bm k)=\\
=\frac{2}{\beta (L+1)}\lim_{M\to \infty}\sum_{k_0\in\mathcal{ D}_{\beta,M}}e^{-ik_0(x_0-y_0)}\cdot \\ \cdot \sum_{k \in\mathcal{D}^d_{L}}\frac{e^{ik(x-y)}+e^{-ik(x-y)}-e^{ik(x+y)}-e^{-ik(x+y)}}{4}\hat g(\bm k).
\end{split}
\end{equation}
Using formula (\ref{hat_g(k)_parity}) and that the argument of the second sum vanishes if $k\in (L+1)\mathbb Z$, we can rewrite
\begin{equation}
\begin{split}
g(\bm x,\bm y)= \frac{1}{\beta 2(L+1)}\lim_{M\to \infty}\sum_{k_0\in\mathcal{D}_{\beta,M}}e^{-ik_0(x_0-y_0)}\sum_{k\in\mathcal{D}_{2(L+1)}}e^{-i k(x-y)}\hat g(\bm k)+\\
- \frac{1}{\beta 2(L+1)}\lim_{M\to \infty}\sum_{k_0\in\mathcal{D}_{\beta,M}}e^{-ik_0(x_0-y_0)}\sum_{k\in\mathcal{D}_{2(L+1)}}e^{-i k(x+y)}\hat g(\bm k)=\\
=: g_{2(L+1)}(x-y, x_0-y_0)-g_{2(L+1)}(x+y, x_0-y_0).
\end{split}
\end{equation}
\label{proof_lemma_replicas}
\end{proof}

Let us call
\begin{eqnarray}
g_{2(L+1)}(x-y, x_0-y_0):=g_P(\bm x, \bm y),
\label{propagators_P_definition_noscales}\\
-g_{2(L+1)}(x+y, x_0-y_0):=g_R(\bm x, \bm y),
\label{propagators_PR_definition_noscales}
\end{eqnarray}
where $P$ stays for {\it periodic}, referring to the $2(L+1)$ periodicity in the real-space direction, while $R$ stays for {\it remainder} (we will clarify why it is a remainder after the multiscale decomposition), in such a way that 
\begin{equation}
g(\bm x,\bm y)=\sum_{\sigma\in\{P,R\}}g_\sigma(\bm x,\bm y).
\label{propagator_DBC_as_sum_PR}
\end{equation}
\begin{rem}
Since the parameter $L$ enters only in the real-space component of the problem, from now on, whenever we will mention the $2(L+1)-$periodicity, it will stay for "$2(L+1)-$periodicity in the real-space direction", even when not-explicitly specified.
\end{rem}
\begin{rem}
Following the same ideas used in proof of Lemma (\ref{lemma_reflection_trick}) we can rewrite, $\forall \bm x,\bm y\in \Lambda\times \left[ 0,\beta \right)$,
\begin{equation}
g(\bm x,\bm y)=\sum_{n\in\mathbb {Z}}(-1)^n g_{\infty}(\bm x-r_n\bm y),
\end{equation}
where $g_{\infty}$ is the propagator of a system described by a hopping Hamiltonian defined on $\mathbb Z$ (so translation invariant) and the operator $r_n: \Lambda \times \left[0,\beta\right)\to \mathbb Z \times \left[0,\beta\right)$ is defined as follows

\begin{equation}
\begin{split}
r_ny&=\begin{cases}
y+n(L+1) \mbox{ if } n \mbox{ is even},\\
-y+(1+n)(L+1) \mbox{ if } n \mbox{ is odd}, 
\end{cases}\\
r_ny_0&=y_0 \hspace{3mm} \forall n \in \mathbb{Z}.
\end{split}
\end{equation}
\end{rem}

\begin{rem}
\label{antisymmertic_reflection_remark}
It is worth noting that we could have obtained the same result acting directly on the Grassmann variables $\hat \psi^{\pm}_{\bm k}, \bm k \in \mathcal D_{\Lambda}\times\mathcal D_{\beta}$. Indeed, let us imagine to extend the grassmann variables on $\mathcal D_{2(L+1)}$, defining $\hat \psi^{\pm}_{2(L+1)}(\bm k)$ in such a way that
\begin{equation}
\begin{cases}
\hat \psi^{\pm}_{2(L+1)}(k,k_0)=\hat\psi^{\pm}(k,k_0), \mbox{ if } k \in\{ k \in  \mathcal D_{2(L+1)}\cap k\geq 0\}\equiv \mathcal{D}_{\Lambda},\\
\hat \psi^{\pm}_{2(L+1)}(k,k_0)=-\hat \psi^{\pm}_{2(L+1)}(-k,k_0) \mbox{ if } k \in\{ k \in  \mathcal D_{2(L+1)}\cap k<0\}\equiv \mathcal{D}_{\Lambda}.
\end{cases}
\label{antisymmetric_reflection_particles}
\end{equation}
Because of this symmetry property from now on, with a little abuse of notation, we will call all the momenta space Grassmann variables $\hat \psi^{\pm}_{\bm k}$, and we will take care to specify the domain of $\bm k$ in order to distinguish the original variables from the extended ones.
\end{rem}

\section{Interacting case}
\label{section_the_interacting_case_DBC}
\subsection{Trotter's formula and Grassmann integration}
\paragraph{Formal perturbation theory}
After switching on the interaction, the first step is to derive a  {\it formal perturbation theory} for the specific free energy: we want to compute the generic perturbative order in $\lambda$ of 
$$f_{\Lambda,\beta}:=-\frac{1}{|\Lambda|\beta}\lg \left(Tr \left(e^{-\beta H}\right)\right).$$
Recalling that $H=H_0+\lambda V+\nu N+\mathcal N=: H_0+ U$, where after the substitution $\mu_0=\mu+\nu$ we re-define $H_0=T_0-\mu N$, we use Trotter's product formula
\begin{equation}
\label{trotter's_formula}
e^{-\beta H}=\lim_{n\to \infty} \left[e^{-\beta H_0/n}\left(1-\frac{\beta}{n} U \right)\right]^n
\end{equation}
so that, if we define $$U(t):=e^{t H_0}U e^{-t H_0},$$
we can rewrite
\begin{equation}
\begin{split}
\frac{Tr\left(e^{-\beta H}\right)}{Tr\left(e^{-\beta H_0}\right)}
=1+\sum_{N\geq 1}\left(-1\right)^N \int_0^\beta dt_1\int_0^{t_1} dt_2 \dots \int_0^{t_{N-1}}dt_N \frac{Tr \left(e^{t H_0}U e^{-t H_0}\right)}{Tr\left(e^{-\beta H_0}\right)}.
\end{split}
\end{equation}
The {\it fermionic time-ordering operator} allows us to further rewrite
\begin{equation}
\frac{Tr\left(e^{-\beta H}\right)}{Tr\left(e^{-\beta H_0}\right)}=1+\sum_{N\geq 1}\frac{\left(-1\right)^n}{N!}\left<\bm T \left(\left( U_\beta(\psi)\right)^N\right)\right>_{0,\Lambda,\beta},
\label{expansion_trotter_formula}
\end{equation}
where again $\left<\cdot \right>_{0,\Lambda,\beta}=Tr\left(e^{-\beta H_0}\cdot\right)/Tr\left(e^{-\beta H_0}\right)$, and 
\begin{equation}
\begin{split}
U_\beta(\psi)=\lambda \int_{[0,\beta)}d x_0 \sum_{x\in\Lambda}\int_{[0,\beta)}d y_0 \sum_{y\in\Lambda} \psi^+_{\bm x}\psi^-_{\bm x}v(x,y)\delta_{x_0,y_0}\psi^{+}_{\bm y}\psi^-_{\bm y}+\\
+\varpi \int_{[0,\beta)}d x_0 \sum_{x\in\Lambda}\int_{[0,\beta)}d y_0 \sum_{y\in\Lambda} \psi^+_{\bm x} \pi(x,y)\delta_{x_0,y_0}\psi^{-}_{\bm y}+ \nu\int_{[0\beta)}dx_0\sum_{x\in \Lambda}\psi^+_{\bm x}\psi^-_{\bm x}.
\end{split}
\end{equation}
Now, the $N$-th term of (\ref{expansion_trotter_formula}) can be computed by the {\it fermionic Wick rule} knowing explicitly the free propagator, and following the Feynman rules.

\subparagraph{Feynman rules} 
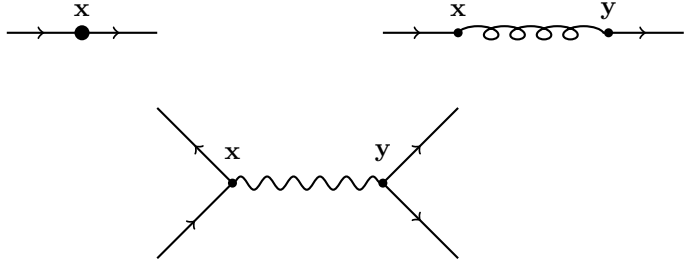
\begin{figure}
\begin{center}
\begin{tikzpicture} 
 [thick,decoration={
    markings,
    mark=at position 0.5 with {\arrow{>}}}] 
\node  at (-2,1.4) {{\bf x}};
\node at (0,1.4) {{\bf y}};
\fill (-2,1) circle (0.06);
\fill (0,1) circle (0.06);
\draw [postaction={decorate}] (-3,0) -- ++(1,1);
\draw [postaction={decorate}](-3,0) ++ (1,1)-- ++ (-1,1);
\draw [postaction={decorate}] (-3,0) ++ (1,1)++ ( 2,0) ++(1,1) ++ (-1,-1) --++ (1,-1);
\draw [postaction={decorate}] (-3,0) ++ (1,1)++ ( 2,0) ++(1,1) ++ (-1,-1) ++ (1,-1) ++ (-1,1) -- ++(1,1);
\draw [-,decorate,decoration=snake] (-3,0) ++ (1,1) ++ (-1,1)++ (1,-1) -- ++(2,0);
\node at (-4,3.3) {\bf x};
\fill (-4,3) circle (0.1);
\draw [postaction={decorate}] (-5,3) -- ++ (1,0);
\draw [postaction={decorate}] (-4,3) -- ++ (1,0);
\node at (1,3.3) {\bf x};
\fill (1,3) circle (0.06);
\node at (3,3.3) {\bf y};
\fill (3,3) circle (0.06);
\draw [postaction={decorate}] (0,3) -- ++ (1,0);
\draw [-,decorate,decoration={coil, aspect=2}] (0,3) ++ (1,0) --++ (2,0);
\draw [postaction={decorate}](0,3) ++ (1,0) ++ (2,0) --++(1,0);
\end{tikzpicture}
\end{center}
\caption{Graph elements, note that there is one element more than (\ref{figure_graph_elements_PBC}), which represents the {\it boundary defect} $\varpi \int_{[0,\beta)} dx_0dy_0\sum_{x,y\in\Lambda}\psi_{\bm x}^+\pi(x,y)\delta_{x_0,y_0}\psi_{\bm y}^-$.}
\label{figure_graph_elements_DBC}
\end{figure}

In order to compute $\left<\bm T\left(U_\beta(\psi))^N\right)\right>_0$, it is easy to check that one can follow these steps:
\begin{itemize}
\item $\forall k,h,l$ such that $0 \leq k,h,l\leq N $ and $k+h+l=N$, draw $k$ graph elements consisting of {\it four legged vertices}, $l$ graph elements consisting of {\it two legged local vertices} and $h$ graph elements consisting of {\it two legged non-local vertices} with the vertices associated to labels $\bm x_i$, $i=1,\dots,N$, in such a way that the {\it four legged vertices} are composed by two entering and to exiting fields, while the {\it two legged vertices} are associated with one exiting and one entering leg, but in the case of the {\it local vertices} the lines touches the same point (so one line enters the same point the same point the other exits), while in the case of the non local vertices the two legs are linked by a further graph element $(\bm x,\bm y)$ (Figure (\ref{figure_graph_elements_DBC}));
\item pair the fields in all possible ways, in such a way that every pair is obtained by contracting an entering and an exiting leg;
\item associate to every pairing the {\it right sign}, which is the sign of the permutation needed to bring every pair of contracted fields next to each other;
\item associate to every linked pair of fields $\left(\psi^-(\bm x_i),\psi^+(\bm x_j)\right)$ an {\it oriented} line connecting the $i-$th with the $j-$th vertex, oriented from $j$ to $i$ ({\it i.e.} from $+$ to $-$ field);
\item associate to every oriented line from $j$ to $i$ value $g(\bm x_i,\bm x_j)$ given by (\ref{free_propagator_DBC});
\item associate to every configuration of pairings, which is called {\it Feynman graph} a value, equal to the product of the sign of the pairing, times $\lambda^k\varpi^h\nu^l$ times the product of the values of all the oriented lines (see, for instance, Figure (\ref{figure_second_order_feynman_graph}));
\item integrate over $\bm x_i$, then perform the sum over all the possible pairings, over $k, h, l$ and over N;
\end{itemize}

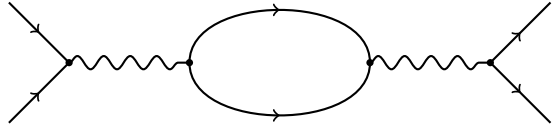
\begin{figure}[h!]
\begin{center}
\begin{tikzpicture}
[scale=0.8, transform shape, thick,decoration={
    markings,
    mark=at position 0.5 with {\arrow{>}}}] 
\fill (-1,0) circle (0.06);
\fill (2,0) circle (0.06);
\fill (-3,0) circle (0.06);
\fill (4,0) circle (0.06);
\draw [postaction={decorate}] (-4,1) -- ++(1,-1);
\draw [postaction={decorate}] (-4,-1)--++ (1,1);
\draw [-,decorate, decoration={snake}] (-3,0) -- ++(2,0);
\draw [postaction={decorate}] (-1,0) to [out=90, in=90, looseness=1] (2,0);
\draw [postaction={decorate}] (-1,0) to [out=-90, in=-90, looseness=1] (2,0);
\draw [-,decorate, decoration={snake}] (2,0) -- ++(2,0);
\draw [postaction={decorate}] (4,0) -- ++(1,-1);
\draw [postaction={decorate}] (4,0)--++ (1,1);
\end{tikzpicture}
\end{center}
\caption{Example of a second order Feynman graph, obtained by contracting two $\lambda$-type endpoints.}
\label{figure_second_order_feynman_graph}
\end{figure}

\paragraph{Grassmann integration}

As explained  in the previous chapter, it is convenient to re-write the {\it free energy} and the {\it Schwinger functions} in terms of Grassmann integrals: first of all we introduce a finite set of {\it Grassmann variables} $\{\hat \psi^{\pm}_{\bm k}\}_{\bm k \in \mathcal D^d_{\Lambda,\beta, M}}$, hence we define the {\it Grassmann integration}

\begin{equation}
P_M(d\psi)=\left(\prod_{\bm k\in\mathcal D^d_{\Lambda,\beta,M}} \left(\frac{\beta(L+1)}{2}\hat g(\bm k)\right)\hat \psi^+_{\bm k}\hat \psi^-_{\bm k} \right) e^{-\sum_{\bm k\in\mathcal D^d_{\Lambda,\beta,M}}\left(\frac{\beta(L+1)}{2}\hat g(\bm k)\right)^{-1}\hat \psi^+_{\bm k}\hat \psi^-_{\bm k}},
\end{equation}

and by, introducing the sine Fourier transform

\begin{equation}
\psi^+_{\bm x}=\frac{2}{\beta(L+1)}\sum_{\bm k\in \mathcal D^d_{\Lambda,\beta,M}}\hat \psi^\pm_{\bm k} e^{-ik_0x_0}\sin (kx),
\end{equation}

we can define the {\it integral}

\begin{equation}
\int P_M(d\psi) \psi^-_{\bm x}\psi^+_{\bm y}=\frac{2}{\beta(L+1)}\sum_{\bm k\in \mathcal D_{\Lambda,\beta,M}}\hat g(\bm k) e^{-ik_0(x_0-y_0)}\sin(kx)\sin(ky),
\end{equation}

while the average of any monomial in the Grassmann variables with respect to the Grassmann integration $P(d\psi)$ is given by the fermionic Wick rule with propagator $g(\bm x,\bm y)$. Using the definitions of Grassmann integration and the Feynman rules just described, it comes out that 
\begin{equation}
\frac{Tr\left(e^{-\beta H}\right)}{Tr\left(e^{-\beta H_0}\right)}=\int P(d\psi)e^{-\mathcal V(\psi)},
\end{equation}
where
\begin{equation}
\begin{split}
\mathcal V(\psi)=\lambda\int_{[0,\beta]}dx_0\sum_{x\in\Lambda}\int_{[0,\beta]}dy_0\sum_{y\in\Lambda}\psi^+_{\bm x}\psi^-_{\bm x}v(x,y)\delta_{x_0,y_0}\psi^+_{\bm y}\psi^-_{\bm y}+\\
+\varpi\int_{[0,\beta]}dx_0\sum_{x\in\Lambda}\int_{[0,\beta]}dy_0\sum_{y\in\Lambda}\psi^+_{\bm x}\pi(x,y)\delta_{x_0,y_0}\psi^-_{\bm y}+\nu\int_{[0,\beta)}\sum_{x\in\Lambda}\psi^+_{\bm x}\psi^-_{\bm x}.
\end{split}
\end{equation} 

\begin{rem}
Starting from that, we can repeat formally the same construction as the previous chapter, getting first of all the formal equation for the free energy (\ref{free_energy_as_sum_of_trunc_expec}) that we recall
\begin{equation*}
f_{\Lambda,\beta}=-\frac{1}{|\Lambda|\beta}\sum_{N\geq 1}\frac{(-1)^N}{N!}\mathcal E^T(\mathcal V;N)=:\sum_{N\geq 1}f^{(N)}_{\Lambda,\beta}.
\end{equation*}
Of course, we can repeat the same argument we used to prove Lemma (\ref{lemma_bounds_no_multiscale_no_determinants}) to get the same rough bound. As in the previous chapter, we can solve the combinatorial problem by using the detereminant expansion (the fact that the free propagator (\ref{free_propagator_DBC}) can be expressed as a proper scalar product is proved in Appendix (\ref{appendix_gram_representation}), looking only at the functions $A^{d}$ and $B^{d}$):
\begin{equation*}
f_{\Lambda,\beta}^{(N)}=-\frac{1}{\beta|\Lambda|}\frac{(-1)^N}{N!}\epsilon^N \sum_{T\in \mathcal T_N}\alpha_T\int d\bm x_1\dots d\bm x_N \prod_{\ell\in T} g_\ell\int dP_T(\bm t)\det G^T(\bm t).
\end{equation*}
In order to solve the divergence problem arising when we take the infinite volume limit, we introduce a multiscale expansion.
\end{rem}

As we already commented in the previous chapter, with a little abuse of notation we called $f_{\Lambda,\beta}$ the difference between the specific free energy of the interacting system and the one of the free system.

\subsection{Multiscale decomposition}

\paragraph{Multiscale decomposition and quasi-particles}
\label{ingoing_outgoing_quasiparticles_subsection}
Let us recall the result of Lemma \ref{lemma_reflection_trick}: 

\begin{equation*}
g(\bm x,\bm y)=g_P(\bm x,\bm y)+g_R(\bm x,\bm y),
\end{equation*}
where $g_P$ and $g_R$ have been defined in (\ref{propagators_P_definition_noscales}) and (\ref{propagators_PR_definition_noscales}), both starting from the periodic propagator on the extended box:
\begin{equation}
g_{2(L+1)}(\bm x)=\frac{1}{\beta2(L+1)}\lim_{M\to \infty}\sum_{\bm k\in\mathcal{D}_{2(L+1)}\times\mathcal{D}_{\beta,M}} e^{-i\bm k\cdot \bm x}\hat g(\bm k).
\label{free_propagator_2(L+1)_PBC)}
\end{equation}
Given that, we can separately perform, on both the propagators $g_P$ and $g_R$, first of all the multiscale decomposition and then the quasi-particle decomposition.
\subparagraph{Infrared and ultraviolet regime}
First of all, let us introduce a smooth $C^{\infty}$ function $\chi:\mathcal{D}_{2(L+1)}\times \mathcal{D}_{\beta, M}\to C^{\infty}([0,1])$ defined in such a way that
\begin{equation}
\chi(\bm k')=
\begin{cases}
1, \mbox{ if } |\bm k'|\leq \gamma^{-1} p_F/2 ,\\
0, \mbox{ if } |\bm k'|\geq p_F/2,
\end{cases}
\label{cut_off_chi_definition_DBC}
\end{equation}
where $\gamma >1$, and $|\bm k|=\sqrt{k_0^2+k^2}$. We know $\cos p_F-\cos k=0$ if  $k=\pm p_F\in \mathcal{D}_{2(L+1)}$, so we rewrite the propagator as:
\begin{equation}
\hat g(\bm k)= \frac{1-\chi(k_0,k+p_F)-\chi(k_0,k-p_F)}{-ik_0+\cos p_F-\cos k}+\frac{\chi(k_0,k+p_F)+\chi(k_0,k-p_F)}{-ik_0+\cos p_F-\cos k}
\end{equation}
defining the {\it ultraviolet} and the {\it infrared} propagator, respectively $\hat g^{(u.v)}$ and $\hat g^{(i.r.)}$:
\begin{eqnarray}
\hat g^{(u.v.)}(\bm k)=\frac{1-\chi(k_0,k+p_F)-\chi(k_0,k-p_F)}{-ik_0+\cos p_F-\cos k},
\label{ultraviolet_momentum_propagator}
\\
\hat g^{(i.r.)}(\bm k)=\frac{\chi(k_0,k+p_F)+\chi(k_0,k-p_F)}{-ik_0+\cos p_F-\cos k}.
\label{infrared_momentum_propagator}
\end{eqnarray}
This decomposition induces a natural decomposition in the {\it real-space}
\begin{equation}
\begin{split}
g^{(u.v.)}_P(\bm x, \bm y)=\frac{1}{\beta 2(L+1)} \lim_{M\to \infty}\sum_{\substack{\bm k \in \mathcal D_{2(L+1)}^{\beta,M}}} e^{-i k_0(x_0-y_0)} e^{-i k(x-y)}\hat g^{(u.v.)}(\bm k),\\
g^{(i.r.)}_P(\bm x, \bm y)=\frac{1}{\beta 2(L+1)} \lim_{M\to \infty}\sum_{\substack{\bm k \in \mathcal D_{2(L+1)}^{\beta,M}}} e^{-i k_0(x_0-y_0)} e^{-i k(x-y)}\hat g^{(i.r.)}(\bm k),\\
g^{(u.v.)}_R(\bm x, \bm y)=\frac{1}{\beta 2(L+1)} \lim_{M\to \infty}\sum_{\substack{\bm k \in \mathcal D_{2(L+1)}^{\beta,M}}} e^{-i k_0(x_0-y_0)} e^{-i k(x+y)}\hat g^{(u.v.)}(\bm k),\\
g^{(i.r.)}_R(\bm x, \bm y)=\frac{1}{\beta 2(L+1)} \lim_{M\to \infty}\sum_{\substack{\bm k \in \mathcal D_{2(L+1)}^{\beta,M}}} e^{-i k_0(x_0-y_0)} e^{-i k(x+y)}\hat g^{(i.r.)}(\bm k).
\end{split}
\end{equation}
where we defined $\mathcal D_{2(L+1)}^{\beta,M}:=\mathcal D_{2(L+1)}\times \mathcal D_{\beta,M},$
and in particular we can introduce the label $\sigma$ in such a way that:
\begin{equation}
\begin{split}
g^{(u.v.)}(\bm x,\bm y)=\sum_{\sigma\in\{P,R\}}g_\sigma^{(u.v.)}(\bm x,\bm y),\\
g^{(i.r.)}(\bm x,\bm y)=\sum_{\sigma\in\{P,R\}}g_\sigma^{(i.r.)}(\bm x,\bm y).
\end{split}
\end{equation}
Using the addition principle (\ref{addition_principle}) we introduce two different sets of Grassmann fields $\{\psi_{\bm x}^{(u.v.)\pm}\}$ and $\{\psi_{\bm x}^{(i.r.)\pm}\}$, with $\bm x\in \Lambda\times [0,\beta)$ and the Gaussian integrations $P_M(\psi^{(u.v.)})$ and $P_M(\psi^{(i.r.)})$ in such a way that
\begin{equation}
\begin{split}
\int P(d \psi^{(u.v.)})\psi^{(u.v.)-}_{\bm x}\psi^{(u.v.)+}_{\bm y}= g^{(u.v.)}(\bm x,\bm y),\\
\int P(d \psi^{(i.r.)})\psi^{(i.r.)-}_{\bm x}\psi^{(i.r.)+}_{\bm y}= g^{(i.r.)}(\bm x,\bm y),
\end{split}
\end{equation}
implying that 
\begin{equation}
\int P(d\psi) e^{-\mathcal V(\psi)}=\int P(d\psi^{(i.r.)})\int P(d\psi^{(u.v)}) e^{-\mathcal V\left(\psi^{(i.r.)}+\psi^{(u.v.)}\right)},
\end{equation}
so that
\begin{equation}
\begin{split}
e^{-\beta |\Lambda| f^{(M)}_{\Lambda,\beta}}=\int P(d\psi^{(i.r.)}) \exp \left( \sum_{n\geq 1}\frac{1}{n!}\mathcal E_{u.v.}^T\left(-\mathcal V\left(\psi^{(i.r.)}+\cdot\right);n\right)\right):=\\ := e^{-\beta |\Lambda| e_{M,0}}\int P(d\psi^{(i.r.)})e^{-\mathcal V_0(\psi^{(i.r.)})}.
\end{split}
\end{equation}
 
where the effective potential $\mathcal V_0(\psi)$ can be written as
\begin{equation}
\mathcal V_0(\psi)=\sum_{n=1}^{\infty}\sum_{\substack{ \bm x_1,\dots,\bm x_{2n}\\ \in\\ \Lambda\times [0,\beta)}} \left(\prod_{j=1}^{n} \psi^{(i.r.)+}_{\bm x_{2j-1}}\psi^{(i.r.)-}_{\bm x_{2j}} \right) W_{M,2n} (\bm x_1,\dots,\bm x_{2n}).
\label{effective_potential_scale_0_DBC}
\end{equation}

\begin{lem}[Ultraviolet integration]
The kernels $W_{M,2n}(\bm x_1,\dots, \bm x_{2n})$ in the previous expansion are given by power series in $\lambda$ convergent in the complex disc $|\lambda|\leq \lambda_0$ for $\lambda_0$ small enough and independent of $M,\Lambda, \beta$, and satisfy the following bound
\begin{equation}
\frac{1}{\beta |\Lambda|}\int d\bm x_1 \dots d\bm x_{2n} \left| W_{M,2n}(\bm x_1,\dots,\bm x_{2n}) \right|\leq C^n |\lambda|^{\max\{1,n-1\}}.
\end{equation}
Moreover, the limits $e_0=\lim_{M\to \infty} e_{M,0}$ and $W_{2n}=\lim_{M\to \infty}(\bm x_1,\dots,\bm x_{2n})$ exist and are reached uniformly in M.
\end{lem}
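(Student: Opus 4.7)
My plan is to prove this lemma by adapting the UV multiscale analysis of \cite{benfatto1993beta} to the half-line setting. The reflection identity (\ref{propagator_DBC_as_sum_PR}) reduces every manipulation on the Dirichlet propagator $g(\bm x,\bm y)$ to manipulations on the periodic propagator $g_{2(L+1)}$ on the doubled box, so the bounds established for the translation-invariant case in Chapter \ref{chapter_fermions_PBC} apply verbatim to both $g_P$ and $g_R$. The UV multiscale proceeds by splitting the cutoff factor $1-\chi(k+p_F,k_0)-\chi(k-p_F,k_0)$ in (\ref{ultraviolet_momentum_propagator}) into positive single-scale contributions $f_h(\bm k)$ supported on $|\bm k|\sim\gamma^h$, with $h=1,\dots,h_M$ and $\gamma^{h_M}\sim M$, and integrating out the corresponding Grassmann fields one scale at a time, exactly as in (\ref{propagators_splitted_on_all_scales}) but with positive rather than negative scale labels.

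The single-scale propagators $g_\sigma^{(h)}$ with $\sigma\in\{P,R\}$ satisfy the standard bounds $\|g_\sigma^{(h)}\|_\infty\leq C\gamma^h$ and $\|g_\sigma^{(h)}\|_1\leq C\gamma^{-h}$, via the same integration-by-parts argument proving (\ref{bound_propagator_faster_than_any_power}) adapted to the UV regime, and they admit a Gram representation of the type established in Appendix (\ref{appendix_gram_representation}). I would then organize the expansion of $\mathcal{V}_0$ via Gallavotti-Nicol\`o trees with scale labels in $\{1,\dots,h_M\}$, apply the determinant expansion (\ref{expectation_truncated_determinants}) to the truncated expectation appearing at each vertex $v$, and bound each spanning-tree propagator in $L^1$ and each remaining Gram determinant through the Gram-Hadamard inequality, in complete analogy with the proof of Theorem (\ref{theorem_bound_of_kernels}). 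The crucial feature of the UV regime is that the bare interaction $\lambda V$ is local in imaginary time (through $\delta_{x_0,y_0}$), which trivializes one time integration per quartic vertex and yields a strictly positive UV scaling dimension for every non-trivial cluster; the corresponding scale sums $\sum_{\{h_v\}}\prod_v\gamma^{-h_v D_v^{uv}}$ are therefore absolutely convergent uniformly in $h_M$. The power $|\lambda|^{\max\{1,n-1\}}$ follows from the standard vertex-counting estimate: a tree producing $2n$ external legs contains at least $\max\{1,n-1\}$ endpoints, and each endpoint carries a factor of size $O(\lambda)$ since $\nu,\varpi=O(\lambda)$.

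The existence of the limit $M\to\infty$ and its uniformity are then immediate, since the tail $\sum_{h>h_M^*}\gamma^{-hD_v^{uv}}$ decays exponentially in $h_M^*$, producing a Cauchy sequence for each kernel $W_{M,2n}(\bm x_1,\dots,\bm x_{2n})$ in the $L^1$ sense, while $e_{M,0}$ converges by the same argument applied to the field-independent contribution. The main technical obstacle I anticipate is the careful treatment of the equal-time discontinuity of the propagator, regularized by the $\delta_M=\beta/\sqrt M$ prescription following (\ref{free_propagator_DBC}). Individually, tadpole contractions $g^{(h)}(\bm x,\bm x^-)$ are bounded only by $C\gamma^h$, which would diverge if summed naively over scales. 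The resolution, standard in the literature (\cite{benfatto1993beta}, see also \cite{pedra2008determinant}), is to resum the tadpole subgraphs scale by scale from the start, observing that $\sum_h g^{(h)}(\bm x,\bm x^-)$ equals the bare density and is therefore uniformly bounded; the resulting local part is absorbed into a redefinition of the chemical potential, and the remainder feeds into the power counting without spoiling the absolute convergence of the tree expansion.
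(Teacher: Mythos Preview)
The paper does not actually prove this lemma: both in Chapter~\ref{chapter_fermions_PBC} and here it simply refers to \cite{benfatto1993beta} (and, in the PBC chapter, also to \cite{giuliani2009rigorous,giuliani2011ground}, with \cite{pedra2008determinant} mentioned as a possible alternative avoiding the multiscale analysis). Your proposal is a correct outline of precisely the approach those references carry out, adapted to the half-line via the reflection identity of Lemma~\ref{lemma_reflection_trick}: positive-scale multiscale decomposition of $g^{(u.v.)}$, Gallavotti--Nicol\`o tree organization, determinant expansion with Gram--Hadamard bounds, and the observation that the time-locality of the bare vertex produces a strictly positive UV scaling dimension. Your identification of the equal-time tadpole as the only delicate point, and of its resolution by resumming $\sum_h g^{(h)}(\bm x,\bm x^-)$ into the bare density, is also the standard one. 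So there is nothing to compare: you have sketched the proof the paper delegates to the literature.
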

\begin{rem}
The fact that the limits are reached {\it uniformly} in $M$ tells us that the infrared problem is essentially independent of M. Since in the infrared region $M$ does not play any role, from now on we drop the label $M$.
\end{rem}
As in the previous chapter, we will not prove this Lemma, and we refer, for instance, to \cite{benfatto1993beta}.
\paragraph{Multiscale expansion of the infrared scales and quasi-particles}
\subparagraph{Infrared regime and quasi-particles}

After having integrated the ultraviolet degrees of freedom, we are left with the {\it infrared propagator}

\begin{equation}
\hat g^{(i.r.)}(\bm k)=\frac{\chi(k+p_F,k_0)+\chi(k-p_F,k_0)}{-ik_0+\cos p_F-\cos k}= :\sum_{\omega\in\{\pm1\}}\hat g^{(i.r.)}_{\omega}(\bm k),
\end{equation}
where 
\begin{equation}
\hat{g}^{(i.r.)}_{\omega}(\bm k):=\frac{\chi(k-\omega p_F,k_0)}{-ik_0+\cos p_F-\cos k}.
\end{equation}

We can now define the {\it infrared propagator in real space-time}

\begin{equation}
\begin{split}
g^{(i.r.)}_{2(L+1)}(\bm x)=\frac{1}{\beta2(L+1)}\sum_{\omega\in\{\pm 1\}}\sum_{\bm k\in \mathcal{D}_{2(L+1)}\times\mathcal{D}_{\beta}}e^{-i\bm k\bm x}\frac{\chi(k-\omega p_F,k_0)}{-ik_0+\cos p_F-\cos k}=\\
=\frac{1}{\beta2(L+1)}\sum_{\omega\in\{\pm 1\}}\sum_{\bm k'\in \mathcal{D}^{\omega}_{2(L+1)}\times\mathcal{D}_{\beta}}e^{-i\omega p_F x}e^{-i\bm k'\bm x}\hat g^{(i.r.)}_\omega(\bm k'),
\end{split}
\end{equation} 
 
where we have performed the change of variables $k-\omega p_F=k'$, so $\mathcal{D}^{\omega}_{2(L+1)}=\mathcal{D}_{2(L+1)}-\omega p_F$ and
\begin{equation}
g^{(i.r.)}_\omega(\bm k')=g^{(i.r.)}(k-\omega p_F, k_0).
\end{equation}
Finally, we can define
\begin{equation}
g^{(i.r.)}_{2(L+1)}(\bm x)=\sum_{\omega\in\{\pm 1\}}e^{-i\omega p_F x}g^{(i.r.)}_{2(L+1),\omega}(\bm x),
\end{equation}
where
\begin{equation}
g^{(i.r.)}_{2(L+1),\omega}(\bm x)=\frac{1}{\beta2(L+1)}\sum_{\bm k'\in \mathcal{D}^{\omega}_{2(L+1)}\times\mathcal{D}_{\beta}}e^{-i\bm k'\bm x}\hat g^{(i.r.)}_\omega(\bm k').
\end{equation}

\begin{rem}
From the latter expression we understand the behaviour of the propagator nearby the singularity. Indeed, when the momentum is close to the Fermi momentum $p_F$, ({\it i.e.} $k'\sim 0$), $g^{(i.r.)}_{\omega}(\bm k')\sim \frac{\chi(k_0,k')}{-ik_0+\omega k' \sin p_F}$, so we have a quasi-linear dispersion near the singularity.
\end{rem}
So, we can finally decompose the propagator $g^{(i.r.)}(\bm x,\bm y)$ as
\begin{equation}
\begin{split}
g^{(i.r.)}(\bm x,\bm y)=g_P^{(i.r.)}(\bm x,\bm y)+g_R^{(i.r.)}(\bm x,\bm y)=\\
=\sum_{\omega=\pm}e^{-i\omega p_F(x-y)}g_{P,\omega}^{(i.r.)}(\bm x,\bm y)+\sum_{\omega=\pm}e^{-i\omega p_F(x+y)}g_{R,\omega}^{(i.r.)}(\bm x,\bm y),
\end{split}
\end{equation}
where 
\begin{eqnarray}
g^{(i.r)}_{P,\omega}(\bm x, \bm y):=g^{(i.r.)}_{2(L+1),\omega}(x-y, x_0-y_0),
\label{propagators_P_i.r._defn}\\
g^{(i.r)}_{R,\omega}(\bm x, \bm y):=g^{(i.r.)}_{2(L+1),\omega}(x+y, x_0-y_0),
\label{propagators_PR_i.r._defn}
\end{eqnarray}
that suggests to rewrite the propagator as
\begin{equation}
g^{(i.r.)}(\bm x, \bm y)=\sum_{\sigma\in\{P,R\}}\sum_{\omega=\pm}g^{(i.r.)}_{\sigma,\omega}(\bm x,\bm y)\left(e^{-i\omega p_F(x-y)}\delta_{\sigma,P}+e^{-i\omega p_F(x+y)}\delta_{\sigma,R}\right).
\label{propagator_DBC_as_sum_of_propagators_sigma_omega_infrared}
\end{equation}
Using the {\it addition property} of Gaussian Grassmann measures (\ref{addition_principle}), we can split
\begin{equation}
\psi^{(i.r.)\pm}_{\bm x}=\sum_{\sigma\in\{P,R\}}\sum_{\omega=\pm}e^{\mp ip_F \omega x}\psi^{(i.r.)\pm}_{\sigma,\omega,\bm x},
\label{quasi_particles_decomposition_infrared_DBC}
\end{equation}
associated with the Feynman contraction rule
\begin{equation}
\left<\psi^{(i.r.)-}_{\sigma,\omega}(\bm x)\psi^{(i.r.)+}_{\sigma',\omega'}(\bm y)\right>=g^{(i.r.)}_{\sigma,\omega}(\bm x,\bm y)\delta_{\sigma,\sigma'}\left(\delta_{\omega,\omega'}\delta_{\sigma,P}+\delta_{\omega,-\omega'}\delta_{\sigma,R} \right).
\label{feynman_rules_propagators_sigma_omega_infrared}
\end{equation}

\subparagraph{Decomposition on scales $h\leq 0$}

Once we have defined the infrared scale, we can take a step beyond and rewrite the propagators $\{g^{(i.r.)}_{\sigma,\omega}(\bm x,\bm y)\}_{\sigma\in \{P,R\}}^{\omega=\pm}$ as an infinite sum of {\it single scale propagators} we are going to define. The only trick we use is rewriting the cutoff function $\chi$ as the telescopic series:
\begin{equation}
\chi\left(\bm k'\right)=\sum_{h\leq 0}\left[\chi\left(\gamma^{-h}\bm k'\right)-\chi\left(\gamma^{-h+1}\bm k'\right)\right]=:\sum_{h\leq 0}f_h(\bm k').
\end{equation}

Using it in the very definition of infrared quasi particle propagator we get
\begin{equation}
\hat g_{\omega}^{(i.r.)}( \bm k')=\frac{\chi( k',k_0)}{-ik_0+\cos p_F-\cos (k'-\omega p_F)}=\sum_{h\leq 0}f_{h}(k',k_0)\hat g_{\omega}(\bm k') =:\sum_{h\leq 0}\hat g^{(h)}_{\omega}(\bm k').
\label{propagator_decomposition_scale_h_momentum_space}
\end{equation}
A direct consequence of the latter decomposition is the possibility to define
\begin{equation}
\hat g_{\omega}^{(\leq h)}(\bm k')=\sum_{j\leq h}\hat g_{\omega}^{(j)}(\bm k').
\label{propagator_leq_h_definition_momentum_space}
\end{equation}
It is appropriate now to introduce the real space-time representation of the single scale propagators by Fourier transforming $\hat g^{(h)}_\omega(\bm k)$, 
\begin{equation}
g_{2(L+1),\omega}^{(h)}(\bm x)=\frac{1}{\beta 2 (L+1)}\sum_{\bm k'\in\mathcal{D}^{\omega}_{2(L+1)}\times\mathcal{D}_{\beta}}e^{-i\bm k'\cdot \bm x}\hat g^{(h)}_{\omega}(\bm k'),
\label{propagator_single_scale_real_space}
\end{equation}
which implies, of course, 
\begin{equation}
\label{g^d(h)_definition}
g^{(h)}(\bm x,\bm y)=g_P^{(h)}(\bm x,\bm y)+g_R^{(h)}(\bm x,\bm y)
\end{equation}
and, in a way formally analogous to what we did in the infrared region:
\begin{equation}
g^{(h)}(\bm x,\bm y)=\sum_{\sigma\in\{P,R\}}\sum_{\omega\in \pm}g^{(h)}_{\sigma,\omega}(\bm x, \bm y)\left(e^{-i p_F\omega (x-y)}\delta_{\sigma,P}+e^{-i p_F\omega (x+y)}\delta_{\sigma,R}\right),
\label{gd_sum_of_quasi_particle_propagators}
\end{equation}
where $g^{(h)}_{\sigma,\omega}$ is the analogous of the elements appearing in (\ref{propagators_PR_i.r._defn}) if we replace $\chi_{i.r.}(\bm k)\to f_h(\bm k)$. Of course, we can introduce the quasi-particles fields analogously to what we did in the infrared case (\ref{quasi_particles_decomposition_infrared_DBC}):
\begin{equation}
\psi_{\bm x}^{(h)\pm}=\sum_{\omega=\pm}\sum_{\sigma\in\{P,R\}}e^{\mp i\omega p_F x}\psi^{(h)\pm}_{\sigma,\omega,\bm x},
\label{quasi_particles_decomposition_scale_h_DBC}
\end{equation}
contracting with the Feynman contraction rule
\begin{equation}
\left<\psi^{(h)-}_{\sigma,\omega,\bm x}\psi^{(h)+}_{\sigma',\omega',\bm y}\right>=g^{(h)}_{\sigma,\omega}(\bm x, \bm y)\delta_{\sigma,\sigma'}\left(\delta_{\omega,\omega'}\delta_{\sigma,P}+\delta_{\omega,-\omega'}\delta_{\sigma,R} \right).
\label{feynman_rules_scale_h_DBC}
\end{equation}
Finally, it is useful to decompose:
\begin{eqnarray}
g^{(\leq h)}(\bm x,\bm y)=\sum_{j\leq h}g^{(j)}(\bm x,\bm y),\\
g^{(\leq h)}_{\sigma,\omega}(\bm x,\bm y)=\sum_{j\leq h}g^{(j)}_{\sigma,\omega}(\bm x,\bm y).
\label{propagator_scale_decomposition_DBC_sigma_omega}
\end{eqnarray}
\begin{rem}
When we switch from the original representation in terms of the Grassmann variables $\{\psi^{\pm}\}$ to the {\it quasi particles representation} in terms of $\{\psi^{\pm}_{\sigma,\omega}\}$, we break the Dirichlet boundary conditions: indeed the only  information we get from (\ref{remark_quasi_particles_issue_DBC}), knowing that $\psi^{\pm}(0, x_0)=\psi^{\pm}(L+1,x_0)=0$ $\forall x_0\in \left[0,\beta\right)$ is 
\begin{equation}
\begin{split}
\sum_{\omega}\sum_{\sigma}\psi^{\pm}_{\sigma, \omega}(0,x_0)=0,\\
\sum_{\omega}\sum_{\sigma}e^{\mp ip_F\omega (L+1)}\psi^{\pm}_{\sigma, \omega}(L+1,x_0)=0.\\
\end{split}
\end{equation}
\end{rem}

\subsection{Properties of single-scale free propagators}
\label{subsection_estimates_on_single-scale_free_propagator}

\paragraph{Estimates}
The multiscale analysis involves the norms of the single scale propagators $g^{(h)}$, so it is useful to note that, in the case of $g_{2(L+1),\omega}^{(h)}$ we can directly apply the result of Lemma (\ref{lemma_propagator_faster_any_power}) that we recall here:
\begin{equation}
\left| g^{(h)}_{2(L+1),\omega}(\bm x)\right|\leq \gamma^{h} \frac{C_N}{1+\left(\gamma^h|\bm x|\right)^N}, \hspace{3mm} \forall N\in \mathbb{N}.
\end{equation}
 In subsection (\ref{ingoing_outgoing_quasiparticles_subsection}) we showed how, at each scale $h$, we can rewrite 
\begin{equation}
g^{(h)}(\bm x,\bm y)=\sum_{\omega\in\{\pm 1\}}\left[e^{-i\omega(x-y)}g_{P,\omega}^{(h)}(\bm x, \bm y)+e^{-i\omega(x+y)}g_{R,\omega}^{(h)}(\bm x,\bm y)\right]
\label{propagator_decomposed_in_quasiparticles_DBC}
\end{equation}
where $g^{(h)}_{P,\omega}$ and $g^{(h)}_{R,\omega}$ are defined in terms of $g_{2(L+1),\omega}^{(h)}$ so, again $\forall N\in\mathbb{N}$, we can estimate,
\begin{equation}
\begin{cases}
| g_{P,\omega}^{(h)}(x-y,x_0-y_0)|\leq \gamma^h\frac{C_N}{1+\left(\gamma^h \left|(x-y,x_0-y_0)\right|\right)^N},\\
| g_{R,\omega}^{(h)}(x+y,x_0-y_0)|\leq \gamma^h\frac{C_N}{1+\left(\gamma^h \left|(x+y,x_0-y_0)\right|\right)^N}.
\end{cases}
\label{bounds_propagator_faster_than_any_power_DBC}
\end{equation}

\begin{corollary}
\label{corollary_norms_propagators_DBC}
Thanks to (\ref{bounds_propagator_faster_than_any_power_DBC}), we can bound the norms $||\cdot||_\infty$ and $||\cdot||_1$ of the quasi particles propagators $\{g_{\sigma,\omega}\}_{\sigma\in\{P,R\}}^{\omega=\pm}$ as
\begin{eqnarray}
||g^{(h)}_{P,\omega}||_\infty:=\sup_{\bm x,\bm y}|g^{(h)}_{P,\omega}(\bm x,\bm y)|\leq C\gamma^h, \label{infty_norm_diagonal_DBC}\\
||g^{(h)}_{R,\omega}||_\infty:=\sup_{\bm x,\bm y}|g^{(h)}_{R,\omega}(\bm x,\bm y)|\leq C\gamma^h,\label{infty_norm_off_diagonal_DBC}\\
||g^{(h)}_{P,\omega}||_1:=\left|\frac{1}{2(L+1)\beta}\int d\bm x d\bm y g^{(h)}_{P,\omega}(\bm x,\bm y)\right|\leq C\gamma^{-h},\label{1_norm_diagonal_DBC}\\
||g^{(h)}_{R,\omega}||_1:=\left|\frac{1}{2(L+1)\beta}\int d\bm x d\bm y g^{(h)}_{P,\omega}(\bm x,\bm y)\right|\leq C\gamma^{h_L-h}\gamma^{-h},\label{1_norm_off_diagonal_DBC}
\end{eqnarray}
where ${h_L}:=\lfloor \log_\gamma \frac{1}{L+1}\rfloor$, {\it i.e.} $\gamma^{h_L}\sim \frac{1}{L+1}$ and $C>0$. 
\end{corollary}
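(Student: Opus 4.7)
The proof of the corollary is a direct consequence of the pointwise bound (\ref{bounds_propagator_faster_than_any_power_DBC}), which I plan to integrate in four different ways corresponding to the four estimates.

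The $L^\infty$ bounds (\ref{infty_norm_diagonal_DBC})--(\ref{infty_norm_off_diagonal_DBC}) are immediate: since $\gamma^h C_N/(1+(\gamma^h r)^N) \leq C_N\gamma^h$ for any $r\geq 0$, it is enough to take the supremum of (\ref{bounds_propagator_faster_than_any_power_DBC}) over $\bm x,\bm y$. I would pick $N$ to be some fixed integer ($N=6$, say) sufficiently large to make the later integrations convergent.

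For the $L^1$ bound (\ref{1_norm_diagonal_DBC}) on the translation-invariant part $g^{(h)}_{P,\omega}(\bm x,\bm y)=g^{(h)}_{2(L+1),\omega}(x-y,x_0-y_0)$, the plan is the usual change of variables $(x,y)\to(z=x-y,y)$ and $(x_0,y_0)\to(z_0=x_0-y_0,y_0)$. Integrating out the ``free'' variables $y,y_0$ gives a factor $|\Lambda|\beta$, which compensates the prefactor $1/(2(L+1)\beta)$ up to a constant, while the residual integral $\int dz\,dz_0\,\gamma^h/(1+\gamma^h|(z,z_0)|)^N$ is evaluated by rescaling $(z,z_0)\to \gamma^{-h}(z,z_0)$, producing the expected $\gamma^{-h}$.

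The bound (\ref{1_norm_off_diagonal_DBC}) on $g^{(h)}_{R,\omega}(\bm x,\bm y)=g^{(h)}_{2(L+1),\omega}(x+y,x_0-y_0)$ is the more delicate one, and it is where the boundary-localization gain $\gamma^{h_L-h}$ arises. My plan is to reorganize the sum via $u:=x+y\in\{2,\dots,2L\}$, noting that the number of lattice pairs $(x,y)\in\Lambda^2$ with $x+y=u$ is $N(u)=\min(u-1,2L-u+1)\le \min(u,L)$. Using the factorized form of the pointwise bound (obtained from $1+(\gamma^h|(u,t)|)^N\ge c(1+\gamma^h|u|)^{N/2}(1+\gamma^h|t|)^{N/2}$, which follows from $(a+b)(c+d)\le 2\max$-type inequalities), I reduce the estimate to
\begin{equation*}
\sum_{u\ge 2}\frac{N(u)}{(1+\gamma^h u)^{N/2}}\cdot\int_{-\beta}^{\beta}\frac{dt}{(1+\gamma^h|t|)^{N/2}}.
\end{equation*}
The time integral yields $O(\gamma^{-h})$. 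The key observation is that using $N(u)\le u$ rather than the crude bound $N(u)\le L$, and rescaling $v=\gamma^h u$, the spatial sum behaves like $\gamma^{-2h}\int_0^\infty v(1+v)^{-N/2}dv$, which is convergent for $N\ge 6$. Multiplying by $\gamma^h$ (from the propagator prefactor), by $\beta$, and dividing by $2(L+1)\beta$, one obtains $C\gamma^{-2h}/(L+1)=C\gamma^{h_L}\gamma^{-2h}=C\gamma^{h_L-h}\gamma^{-h}$, which is exactly (\ref{1_norm_off_diagonal_DBC}). The contribution from $u$ near $2L$ is handled symmetrically by the substitution $u\to 2L-u$. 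The main conceptual point — which is the source of the ``dimensional gain'' announced in the introduction — is that the support of $g^{(h)}_R$ in the $(x,y)$ plane is effectively localized in a strip of width $\gamma^{-h}$ near the diagonal $x+y=0$ (and its image $x+y=2L$), so that the effective volume of this strip is $\min(\gamma^{-h},L)$ rather than $L$.
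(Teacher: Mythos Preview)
Your proposal is correct and follows essentially the same approach as the paper, which does not give a detailed proof of the corollary but only indicates the mechanism in the subsequent Remark~\ref{remark_anchorage_property_norm_1_infty}: since $g^{(h)}_{R,\omega}$ depends on $x+y$ rather than $x-y$, both spatial integrations exploit the decay of the propagator and neither produces a free volume factor. Your change of variable $u=x+y$ together with the multiplicity bound $N(u)\le u$ (and its reflected counterpart near $u=2L$, corresponding to the $2(L+1)$--periodicity of $g^{(h)}_{2(L+1),\omega}$) is exactly a clean way to formalize this ``anchorage property'', and your bookkeeping of the powers of $\gamma^{h}$ is accurate.
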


In light of that, we can define, for each $N=1,2,\dots$, a function that will be useful in the following
\begin{equation}
\label{definition_rho_h^N}
\rho^{(N)}_h(x)=\sup_{\substack{y\in\Lambda, \\ x_0,y_0\in [0,\beta)}}\frac{C_N}{\left(1+\gamma^h|(x+y,x_0-y_0)|\right)^N}\leq \frac{C_N}{(1+\gamma^hd_{L}(x))^N},
\end{equation}
where $d_L(x)=\min_{x\in \Lambda}\{|x|,|x-L||\}$, so that it holds
\begin{equation}
\left| g^{(h)}_R(\bm x,\bm y)\right|\leq ||g^{(h)}||_\infty \rho^{(N)}_h(x).
\label{bound_g_R_g_infty_rho}
\end{equation}
From now on, with a slight abuse of notation, we will denote 
$$\frac{1}{(1+\gamma^hd_{L}(x))} =: \frac{1}{(1+\gamma^h|x|)}.$$
\begin{rem}
\label{remark_anchorage_property_norm_1_infty}
We will say that the dimensional gain $\gamma^{h_L-h}$ (scale jump) we have in (\ref{1_norm_off_diagonal_DBC}) with respect to (\ref{1_norm_diagonal_DBC}) and (\ref{norm_1_propagator}) is due to the {\it "anchorage property"}: the propagator $g^{(h)}_{R,\omega}$ does not depend on the distance of the arguments, so we can perform both the integrals over the positions using the decay properties of the propagator, without getting a volume factor. 
\end{rem}

\paragraph{Gram representation} The propagator $g^{(h)}_P{(\bm x-\bm y)}$ has already been studied in the previous chapter, but we have to check that also $g^{(h)}_R(\bm x,\bm y)$ admits a Gram representation.

\begin{lem}{Gram estimate}
Let $M$ be a square matrix whose entries are $M_{ij}=\left<A_i, B_j\right>$ where $A_i$ and $B_j$ are vectors in a Hilbert space with scalar product $\left<\cdot,\cdot \right>$, then
\begin{equation}
|\det M|\leq \prod_{i}||A_i|| ||B_j||
\end{equation} 
where $||\cdot ||$ is the norm induced by the scalar product.
\end{lem}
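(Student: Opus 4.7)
The plan is to reduce the claim to Hadamard's inequality for positive semi-definite matrices by means of the Cauchy--Binet formula. First, I would observe that the inequality only involves the $2n$ vectors $\{A_i,B_j\}_{i,j=1}^n$, so we may replace the ambient Hilbert space by the finite-dimensional subspace $V=\mathrm{span}(A_1,\dots,A_n,B_1,\dots,B_n)$, of some dimension $N\leq 2n$. Fixing an orthonormal basis of $V$, I write $\mathcal{A}$ for the $n\times N$ matrix whose $i$-th row holds the coordinates of $A_i$, and analogously $\mathcal{B}$ for $B_j$. In this notation the defining relation $M_{ij}=\langle A_i,B_j\rangle$ reads $M=\mathcal{A}\mathcal{B}^{*}$, with $\mathcal{B}^{*}$ the conjugate transpose.

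The next step is to apply the Cauchy--Binet formula, which gives
\begin{equation*}
\det M \;=\; \sum_{\substack{S\subset\{1,\dots,N\}\\ |S|=n}} \det(\mathcal{A}_S)\,\overline{\det(\mathcal{B}_S)},
\end{equation*}
where $\mathcal{A}_S$ (resp.\ $\mathcal{B}_S$) denotes the $n\times n$ submatrix built out of the columns indexed by $S$. Cauchy--Schwarz on the sum over $S$, followed once more by Cauchy--Binet (this time applied to $\mathcal{A}\mathcal{A}^{*}$ and $\mathcal{B}\mathcal{B}^{*}$), then yields
\begin{equation*}
|\det M|^2 \;\leq\; \Big(\sum_S |\det\mathcal{A}_S|^2\Big)\Big(\sum_S |\det\mathcal{B}_S|^2\Big) \;=\; \det(\mathcal{A}\mathcal{A}^{*})\,\det(\mathcal{B}\mathcal{B}^{*}).
\end{equation*}

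Finally, since $(\mathcal{A}\mathcal{A}^{*})_{ij}=\langle A_i,A_j\rangle$, the matrix $\mathcal{A}\mathcal{A}^{*}$ is a positive semi-definite Gram matrix, so Hadamard's inequality (in its form for positive semi-definite matrices) gives $\det(\mathcal{A}\mathcal{A}^{*})\leq\prod_i(\mathcal{A}\mathcal{A}^{*})_{ii}=\prod_i\|A_i\|^2$, and symmetrically for $\mathcal{B}$. Substituting into the previous inequality and extracting the square root delivers the Gram bound
\begin{equation*}
|\det M|\;\leq\;\prod_{i=1}^n \|A_i\|\,\|B_i\|.
\end{equation*}
The only non-routine input is Hadamard's inequality itself, which may be considered the main obstacle if one insists on a fully self-contained argument; it admits a short proof either via Gram--Schmidt orthogonalization of the columns of $\mathcal{A}$ (so that the diagonal of the resulting triangular factor is bounded in modulus by the column norms) or via AM--GM applied to the eigenvalues of $\mathcal{A}\mathcal{A}^{*}$ together with the identity $\mathrm{tr}(\mathcal{A}\mathcal{A}^{*})=\sum_i\|A_i\|^2$. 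Either route is standard, so the real content of the lemma lies entirely in the Cauchy--Binet/Cauchy--Schwarz reduction sketched above.
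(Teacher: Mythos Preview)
Your argument is correct, and in fact it supplies more than the paper does: the paper does not prove this lemma at all, but simply refers the reader to Theorem~A.1 of \cite{gentile2001renormalization} (the same citation appears both here and at the earlier Lemma~\ref{lemma_gram_hadamard_inequality}). So there is no ``paper's own proof'' to compare against; your Cauchy--Binet/Cauchy--Schwarz reduction to Hadamard's inequality for positive semi-definite Gram matrices is one of the standard routes and is perfectly sound.

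One small quibble on your closing aside: the sentence ``AM--GM applied to the eigenvalues of $\mathcal{A}\mathcal{A}^{*}$ together with the identity $\mathrm{tr}(\mathcal{A}\mathcal{A}^{*})=\sum_i\|A_i\|^2$'' does not by itself yield $\det(\mathcal{A}\mathcal{A}^{*})\leq\prod_i\|A_i\|^2$; it only gives the weaker bound $\big(\tfrac{1}{n}\sum_i\|A_i\|^2\big)^n$, which goes the wrong way. The usual AM--GM proof of Hadamard first rescales to the correlation matrix $C=D^{-1/2}(\mathcal{A}\mathcal{A}^{*})D^{-1/2}$ with $D=\mathrm{diag}(\|A_i\|^2)$, so that $\mathrm{tr}\,C=n$ and hence $\det C\leq 1$. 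Your Gram--Schmidt alternative is fine as stated, so this does not affect the validity of the main argument.
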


We do not prove this lemma, and we refer to \cite{gentile2001renormalization}, Theorem A.1.

\section{Non-renormalized expansion and properties of kernels}
\label{section_Non-renormalized expansion and properties of kernels}
By combining the multiscale expansion of the free propagator and the properties of the Grassmann integration, our goal is to compute
\begin{equation}
\begin{split}
e^{-\beta |\Lambda| f_{\Lambda,\beta}}=\int P(d \psi^{(\leq h)})\int P(d \psi^{(h+1)})\int P(d \psi^{(h+2)})\dots \int P(d\psi^{(0)})e^{-\mathcal V_0(\psi^{(0)}+\psi^{(\leq 1)})}=\\
=e^{-\beta |\Lambda| e_h}\int P(d\psi^{(\leq h)})e^{\mathcal V^{(h)}(\psi^{(\leq h)})}
\end{split}
\end{equation}
where the Grassmann integration is 
\begin{equation}
\begin{split}
P(d\psi^{(\leq h)})=\prod_{\sigma\in\{P,R\}}\prod_{\omega=\pm}\prod_{\bm x\in\mathcal{D}^{d}_{\Lambda,\beta}}\left(\left[ \det g^{(\leq h)}\right]^{-1} \psi^{(\leq h)+}_{\sigma,\omega}(\bm x) \psi^{(\leq h)-}_{\sigma,\omega} (\bm x)\right)\\ \exp\left[-\sum_{\sigma\in\{P,R\}}\sum_{\omega=\pm}\sum_{\bm x,\bm y\in \Lambda\times [0,\beta)}\psi^{(\leq h)+}_{\sigma,\omega}(\bm x)\left[g^{(\leq h)}\right]^{-1}_{\sigma,\omega}(\bm x,\bm y)\psi_{\sigma,\omega}^{(\leq h)-}(\bm y)\right],
\label{measure_quasi_particles_real_space_DBC}
\end{split}
\end{equation}
and the effective potentials can be written as:
\begin{equation}
\label{non_renormalized_effective_potential_PBD}
\mathcal V^{(h)}(\psi^{\leq h})=\sum_{n=1}^{\infty}\int d\bm x_1\dots d\bm x_{2n}\left(\prod_{j=1}^{\infty}\psi^{(\leq h)+}_{\bm x_{2j-1}}\psi^{(\leq h)-}_{\bm x_{2j}}\right)  W_2^{(h)}(\bm x_1,\dots, \bm x_{2n})
\end{equation}
where the integrals has to be interpreted as
$$\int d\bm x=\int_{[0,\beta)}dx_0\sum_{x\in\Lambda}.$$
From now on, each integral has to be interpreted in this way.
\begin{rem}
\label{remark_comparison_norms_diagonal_off_diagonal_DBC}
From corollary (\ref{corollary_norms_propagators_DBC}) we recognize that, while in norm-$\infty$ there is no difference between translation invariant propagators $g^{(h)}_{P,\omega}$ and the non translation invariant ones $g^{(h)}_{R,\omega}$, in particular, there is no difference between the {\it dimensional bound} $||g^{(h)}_{P,\omega}||_\infty$ (which scales as the translation invariant one $||g^{(h)}_{\omega}||_\infty$) and $||g^{(h)}_{R,\omega} ||_\infty$, by comparing the $||\cdot||_1$ norms we recognize that $||g^{(h)}_{R,\omega}||_1$ has a scale gain $\gamma^{h_L-h}$ with respect to $||g^{(h)}_{P,\omega}||_1$, which scales as the translation invariant one $||g^{(h)}_\omega||$.\\
Looking at the measure (\ref{measure_quasi_particles_real_space_DBC}) associated with a propagator labeled by two indices and using the {\it addition principle }(\ref{addition_principle}) we recognize that, in constructing the trees, we have to assign to each half-line a further label $\sigma\in\{P,R\}$.
\end{rem}

\paragraph{Non-renormalized expansion}

Taking into account what we have pointed out in remarks (\ref{remark_comparison_norms_diagonal_off_diagonal_DBC}), it is trivial to claim that the theorem (\ref{theorem_bound_of_kernels}) holds also in the framework described by the Hamiltonian $H$, where $\epsilon=\max\{|\nu|,|\lambda|,|\varpi|\}$. Indeed, it is enough to notice that:
\begin{itemize}
\item we can roughly localize the non-local endpoints: 
$$ \int dy \left| \pi(x,y)\right|\leq  \frac{C_\theta}{1+|x|^\theta}\leq  C_\theta, \hspace{5mm} \forall \hspace{3mm} 0<\theta\leq 1, $$
and consider the $\varpi$-type endpoints as $\nu$-type endpoints, where just the associated constant changes ($\varpi$ instead of $\nu$). 
\item after remark (\ref{remark_comparison_norms_diagonal_off_diagonal_DBC}), it is clear that the dominant part of the propagators (as expected) is the {\it translation-invariant one} scaling, in the sense of the norms $||\cdot||_\infty$ and $||\cdot||_1$, as the translation invariant propagator the dimensional analysis of section (\ref{section_multiscale_analysis}) is based on. So, in order to get the dimensional estimate for {\it non-renormalized kernels} as in theorem (\ref{theorem_bound_of_kernels}), we can roughly consider each propagator as a $g^{(h)}_{P,\omega}$ one. This choice corresponds to estimate only the dominant part (the part which survives when $\beta, L\to \infty$). Obviously, being interested in the finite size correction to the specific free energy and the Schwinger functions, it will be necessary to consider in a more sophisticated way the off-diagonal propagators, as it will be clear in the {\it renormalization procedure}, in which we will be forced to take into account the $\bm x$-dependence of propagators,
\end{itemize}
After these two simplifications, we can retrace the proof of Theorem (\ref{theorem_bound_of_kernels}), after having redefined $\epsilon=\max\{|\nu|,|\lambda|,|\varpi|\}$, and get exactly the same bounds. Consequently, we have exactly the same {\it a priori} classification of the clusters into {\it marginal, relevant} and {\it irrelevant ones}. So again, in order to express the {\it specific free energy} and the {\it Schwinger functions} as convergent series we have to expand in a different way the kernels with two and four external legs, indentifying properly the source of troubles. We stress that we cannot use as a black box the machinery we set up in the previous chapter (\ref{chapter_fermions_PBC}) because of the localization procedure is crucially based on the translation invariance of the system.

\subsection{Properties of the kernels}

First of all, let us stress that we can rewrite the non-renormalized effective potential (\ref{non_renormalized_effective_potential_PBD}) using (\ref{quasi_particles_decomposition_scale_h_DBC}):

\begin{equation}
\begin{split}
\mathcal V^{(h)}(\psi^{(\leq h)})=\sum_{n=1}^\infty\sum_{\bm \omega}\sum_{\bm \sigma}\int \bm x_1\dots\bm x_{2n}\cdot \\ \cdot\left(\prod_{j=1}^n\psi^{(\leq h)+}_{\omega_{2j-1},\sigma_{2j-1},\bm x_{2j-1}}\psi^{(\leq h)-}_{\omega_{2j},\sigma_{2j},\bm x_{2j}}e^{-ip_F(\omega_{2j-1}x_{2j-1}-\omega_{2j}x_{2j})}\right) W^{(h)}_{2n}(\bm x_1,\dots,\bm x_{2n}),
\end{split}
\label{effective_potential_scale_h_DBC}
\end{equation}
where $\bm \omega=\left\{\omega_1,\dots,\omega_{2n}\right\}$, $\bm \sigma=\left\{\sigma_1,\dots,\sigma_{2n}\right\}$.
\begin{rem}\label{remark_kernels_independent_of_quasi_particles} By construction the kernels are independent of the {\it external configuration of $\bm \omega$ and $\bm \sigma$}. 
The only thing that could break this structure is the renormalization and localization procedure, so we will take care of proving that in fact it is not the case.
\end{rem}
\paragraph{Boundary conditions not-invariant under RG integration}

\begin{prop}
\label{proposition_kernels_not_diagonal_in_k_DBC}
The Dirichlet boundary conditions are not invariant under a single {\it renormalization group step}, meaning that $W^{(h)}_2(\bm x,\bm y)$ in (\ref{effective_potential_scale_h_DBC}) {\it  is not diagonal in the sine Fourier base}, {\it i.e.}:
\begin{equation}
W^{(h)}_2(\bm x,\bm y)=\left[\frac{2}{\beta(L+1)}\right]^2\sum_{\bm k_1,\bm k_2\in\mathcal D^d_{\Lambda,\beta}}e^{i(k_{1_0}x_0-k_{2_0}y_0)}\sin (k_1x) \sin (k_2y)\hat W^{(h)}(\bm k_1, \bm k_2)\delta_{k_{1_0},k_{2_0}}
\end{equation}
where $\hat W_2(\bm k_1,\bm k_2)\neq 0$  for some suitable $\bm k_1,\bm k_2$, and we used the notation $\bm k_1=(k_1,k_{1_0})$ and $\bm k_2=(k_2,k_{2_0})$.
\end{prop}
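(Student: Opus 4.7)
The plan is to exhibit, at the lowest nontrivial order in perturbation theory, a contribution to $W_2^{(h)}(\bm x,\bm y)$ that is not of the sine-Fourier-diagonal form. A useful preliminary reformulation: a kernel $K(\bm x,\bm y)$ satisfies $\hat K(\bm k_1,\bm k_2) \propto \delta_{k_1,k_2}\,\delta_{k_{10},k_{20}}$ (in the sense of the proposition) if and only if it can be written as $K(\bm x,\bm y)=F(x-y,x_0-y_0)-F(x+y,x_0-y_0)$ with $F(\,\cdot\,,x_0)$ a $2(L+1)$-periodic function of its spatial argument. This follows from the identity $\sin(k_1 x)\sin(k_2 y)=\tfrac12[\cos(k_1 x-k_2 y)-\cos(k_1 x+k_2 y)]$ together with sine-Fourier inversion, and it is exactly the structure that Lemma \ref{lemma_reflection_trick} isolates for the free propagator.

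Next, I would compute the order-$\lambda$ tadpole contribution to $W_2^{(h)}$. Expanding $e^{-\mathcal V}$ to first order in $\lambda$ and carrying out the integration over scales $>h$ using the Wick rule produces, among other pieces, a local term
\begin{equation*}
W_2^{(h),\mathrm{tad}}(\bm x,\bm y) \;=\; \lambda\,\delta(\bm x-\bm y)\,A_h(x), \qquad A_h(x) \;=\; c\!\int\! dz\, v(x,z)\,g^{(>h)}(\bm z,\bm z)\big|_{x_0\to 0^-},
\end{equation*}
where $c$ is a combinatorial constant and $g^{(>h)}(\bm z,\bm z)=\sum_{j>h}g^{(j)}(\bm z,\bm z)$. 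The crucial point is that $A_h(x)$ is \emph{not} constant in $x$. Indeed, by (\ref{propagator_DBC_as_sum_PR}) one has $g^{(>h)}(\bm z,\bm z)=g_P^{(>h)}(0,0^-)+g_R^{(>h)}(\bm z,\bm z)$: the first summand is $z$-independent by translation invariance of $g_P$, while $g_R^{(>h)}(\bm z,\bm z)=-g_{2(L+1)}^{(>h)}(2z,0^-)$ depends genuinely on $z$, with the decay behavior dictated by (\ref{bound_g_R_g_infty_rho}). Even if the $z$-independent piece is reabsorbed in the renormalization of $\nu$, the $z$-dependent tail produced by the reflection term $g_R$ remains, and it survives the convolution with $v(x,\cdot)$ because the row sums $\int dz\,v(x,z)$ themselves have nontrivial $x$-dependence for generic Dirichlet-diagonal $v$.

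Finally, I would read off that a local kernel of the form $A_h(x)\,\delta(\bm x-\bm y)$ has sine-Fourier transform
\begin{equation*}
\hat W_2^{(h),\mathrm{tad}}(\bm k_1,\bm k_2) \;\propto\; \Bigl[\sum_{x\in\Lambda}A_h(x)\sin(k_1 x)\sin(k_2 x)\Bigr]\,\delta_{k_{10},k_{20}} \;=\;\tfrac12\Bigl[\widetilde A_h(k_1-k_2)-\widetilde A_h(k_1+k_2)\Bigr]\delta_{k_{10},k_{20}},
\end{equation*}
with $\widetilde A_h$ the cosine transform of $A_h$; this is not concentrated on $k_1=k_2$ precisely when $A_h$ has non-zero Fourier modes, which is the generic situation established in the previous step. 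The main obstacle to rule out is a possible cancellation by some other order-$\lambda$ two-leg contribution: but the only remaining tree-level two-leg contributions at this order come from the counterterms $\nu$ and $\varpi\,\pi(x,y)$, which are independent of $\lambda$ and hence cannot cancel a term proportional to $\lambda$, while at higher orders in $\lambda$ any cancellation would have to respect the separation of powers. This suffices to conclude that $\hat W_2^{(h)}(\bm k_1,\bm k_2)\neq 0$ for some pair with $k_1\neq k_2$, proving the proposition.
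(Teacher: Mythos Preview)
Your approach differs from the paper's: the paper exhibits the \emph{non-local} tadpole (Figure~\ref{figure_tadpoles}, right), namely the contribution $\psi^+_{\bm x}\psi^-_{\bm y}\,v(x,y)\,\delta_{x_0,y_0}\,g^{(h)}(\bm x,\bm y)$, and computes its sine-Fourier transform explicitly (Appendix~\ref{appendix_non_local_tadpole}), showing that the product $\sin(k_1 x)\sin(k_3 x)\sin(k_4 x)\cdot\sin(k_2 y)\sin(k_3 y)\sin(k_4 y)$ summed over $x,y$ does not enforce $k_1=k_2$. You instead use the \emph{local} tadpole $\lambda\,\delta(\bm x-\bm y)A_h(x)$ and argue that $A_h$ is non-constant via the reflection decomposition of $g$. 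Both are legitimate counter-examples in the spirit of the proposition (``it is enough to show a counter-example''), and your reformulation in the first paragraph is a nice way to phrase what diagonality means.

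However, your discussion of possible cancellations contains an error. You write that ``the only remaining tree-level two-leg contributions at this order come from the counterterms $\nu$ and $\varpi\,\pi(x,y)$''. This is false: at order $\lambda$ the four-field vertex $\psi^+_x\psi^-_x v(x,y)\psi^+_y\psi^-_y$ generates \emph{two} topologically distinct two-leg contractions, the local tadpole you analyze \emph{and} the non-local tadpole with propagator $g^{(>h)}(\bm x,\bm y)$ connecting the two vertices. You have simply omitted the latter. If you insist on ruling out cancellations, you must include it; alternatively (and this is what the paper tacitly does) you can argue that since no symmetry enforces sine-Fourier diagonality, exhibiting any single off-diagonal diagram suffices, and drop the cancellation discussion altogether. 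Either way, the sentence as written should be corrected.
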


Of course it is enough to show a counter-example: in fact the first order {\it non-local tadpole} (see Figure (\ref{figure_tadpoles}), the graph on the right) already breaks the diagonal form. We construct in full detail the counter-example in Appendix (\ref{appendix_non_local_tadpole}).

\begin{figure}
\centering
\begin{tikzpicture}
 [thick,decoration={
    markings,
    mark=at position 0.5 with {\arrow{>}}}] 
\fill (1,1) circle (0.06);
\fill (1,2) circle (0.06);
\fill (6,1) circle (0.06);
\fill (8,1) circle (0.06);
\node at (1,0.8) {\bf x};
\node at (6,0.8) {\bf x};
\node at (8,0.8) {\bf y};
\draw [postaction={decorate}] (0,1) -- ++(1,0);
\draw [postaction={decorate}] (1,1) -- ++(1,0);
\draw [-, decorate, decoration={snake}] (1,1) -- ++(0,1);
\draw [->] (1,2) to [out=0, in=0, looseness=1] (1,3);
\draw  (1,3) to [out=180, in=180, looseness=1] (1,2);
\draw [postaction={decorate}] (5,1) -- ++ (1,0);
\draw [-,decorate,decoration={snake}] (6,1) -- ++ (2,0);
\draw [postaction={decorate}] (8,1) -- ++ (1,0);
\draw  [postaction={decorate}] (6,1) to [out=60, in=120, looseness=1.5] (8,1);
\end{tikzpicture}
\caption{Two first order Feynman diagrams: the local tadpole on the left, and the non-local tadpole on the right.}
\label{figure_tadpoles}
\end{figure}
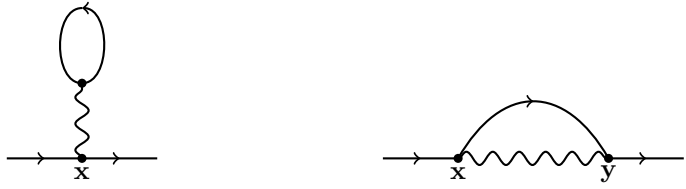

\begin{rem}
\label{remark_scale_decomposition_main_difference_DBC}
This deep difference with respect to the translation invariant theory we studied in Chapter (\ref{chapter_fermions_PBC}) underlines a technical problem. We know that there exists a Fourier base such that the free propagator can be written in a diagonal form (\ref{free_propagator_DBC}). This allows us to define the single scale propagators in a very natural way by rewriting the two dimensional space $\mathcal D^d_{\Lambda,\beta}$ as union of annuli of radii $\gamma^{h}$ and $\gamma^{h-1}$, exactly in the same conceptual way we followed in the translation invariant case (\ref{cut_off_chi_definition}).\\
The novelty comes when we try to dress the theory: the previous Proposition (\ref{proposition_kernels_not_diagonal_in_k_DBC}) tells us that we cannot trivially extend this localization procedure to the case we are dealing with in this chapter:
\begin{itemize}
\item in the translation invariant case, the $\delta(\cdot)$ function in the definition (\ref{effective_potential_scale_h_recursive}) guarantees that the entering and exiting fields $\{\hat \psi^{(\leq h)\pm}_{\bm k}\}$ carry the same momentum so, in particular, live at the same scale $h$. The renormalization process preserves this structure, so in the dressing procedure (see Subsection (\ref{subsection_anomalous_integration_PBC})) we move the first-order localized term from the interaction to the Grassman integration in a trivial way, because it has the same property of the covariance and there is no ambiguity about the momentum and the scale splitting;
\item in the case we are treating in this chapter, on the one hand the free propagator, so the covariance of the Gaussian Grassman integration, is diagonal in $\bm k$, and consequently we can rewrite it via a trivial {\it scale decomposition}, as in fact we have done; on the other hand the quadratic term we would like (by analogy with the already known case) to move from the interaction to the measure, is not diagonal in $\bm k$ space so there is not a well defined notion of scale comparable with the notion we used for the propagator.
\end{itemize}
\end{rem}

\section{Renormalization Group}
\label{section_renormalization_group_DBC}
The idea we are led by is the same as the previous Chapter (\ref{chapter_fermions_PBC}): so far we have expanded the quantities we are interested in using a trivial cluster expansion, which does not conclude the analysis because of the divergences that arise when we perform the sum, over all the possible trees, of the {\it marginal} and {\it relevant} contributions, corresponding respectively to the {\it quartic} and {\it quadratic} diagrams in the expansion. The trick will be to change a bit our point of view on this expansion.\\
We will use the results we obtained as a black box for the {\it dominant theory}, and some new ideas will be introduced in order to deal with the technical problems arising from the presence of the boundary. The strategy we will follow is the following:
\begin{itemize}
\item in order to use the dimensional bounds we showed in Corollary (\ref{corollary_norms_propagators_DBC}), we rewrite each propagator as the combination of four quasi-particles propagators as in (\ref{gd_sum_of_quasi_particle_propagators}). This means that, in Gallavotti-Nicolo's trees, the {\it field labels} $f$ are associated with $\left\{\bm x(f), \epsilon(f),\omega(f), \sigma(f)\right\}$, where $\bm x, \epsilon$ and $\omega$ are the same labels we introduced in the previous chapter, while $\sigma(f)\in\{P,R\}$,
\item we will rewrite the quartic terms as the sum of the {\it bulk quartic terms} ({\it i.e.} the translation invariant ones we introduced in the previous chapter (\ref{chapter_fermions_PBC})), consisting of the clusters containing only $P-$type propagators integrated over a properly extended domain in order to get the right quantities, and a {\it remainder}, consisting of clusters containing at least either one $R$-labeled propagator or a $\varpi-$type endpoint. \\
We will show that the  presence of the $R$-labeled propagators makes the quartic clusters irrelevant, so we do not need to renormalize them. So the study of the running coupling constant associated to the quartic term $\lambda_h$ will be exactly the same of the translation invariant part;
\item we apply the same idea to the quadratic term, but we discover a more complicated situation: 
\begin{itemize}
\item by construction, the {\it bulk quadratic term} will be treated as in the previous chapter: we will split it in an {\it order zero localized term} that will be compensated by a properly chosen constant counterterm $\nu$, and an {\it order one localized term} we will dress the propagator with,
\item by a more sophisticated dimensional analysis, we will show that the remainder part of the quadratic terms is {\it marginal} (not {\it relevant}) so, differently from the quartic term case, it is still necessary to perform an {\it order zero localization}. At this point the boundary shows its effects: since the kernels are not translation invariant (Propostion (\ref{proposition_kernels_not_diagonal_in_k_DBC})), the localization process gives rise to a {\it running coupling function} instead of a running coupling constant. The role played by the counterterm $ \mathcal N$ is to compensate, step by step, these {\it marginal non-local} terms.
\end{itemize}

\end{itemize}

We want to look at the {\it localization operator} as the {\it composition of two localization operators}: the one extracting the bulk contribution, the other extracting the dominant contributions to the Taylor expansion, as we did in the previous chapter. In particular:
\begin{itemize}
\item we want to keep as reference a theory with Dirichlet boundary conditions, so we want to extract from the quadratic terms the contribution that can be written as
\begin{equation*}
W^{d(h)}(\bm x,\bm y)=\frac{2}{\beta(L+1)}\sum_{\bm k\in\mathcal D^d_{\Lambda,\beta}} e^{-ik_0(x_0-y_0)} \sin (kx) \sin(ky)\hat W^{d(h)}(\bm k),
\end{equation*}
for some $\hat W^{d(h)}$ we will define in the next section.
\item since, as it will be clear a posteriori, the boundary corrections to the quartic term are irrelevant, we want to extract from the quartic kernel
$$\bar W^{(h)}_4(\bm x_1,\bm x_2,\bm x_3,\bm x_4)=\bar W^{(h)}_4(\bm x_1-\bm x_4, \bm x_2-\bm x_4,\bm x_3- \bm x_4),$$
which is $2(L+1)$ periodic in the space-direction of all the variables.
\end{itemize}

\subsection{"Preliminar" localization: $\mathcal L_{\mathcal B}$}

Let us first define the localization operator extracting the bulk terms $\mathcal L_{\mathcal B}$ and the renormalization operator $\mathcal R_{\mathcal B}$, where $\mathcal B$ stays for {\it bulk}, and then explain how they operate on the trees.

\begin{itemize}
\item
if $2n=2$
\begin{eqnarray}
\mathcal L_{\mathcal B} W^{(h)}_2(\bm x,\bm y):= &W_2^{d(h)}(\bm x,\bm y), \label{localization_L_D_definition}\\
\mathcal R_{\mathcal B}W^{(h)}_2(\bm x,\bm y) :=&\mathcal W_2^{(h)}(\bm x,\bm y)\label{remainder_R_D_definition},
\end{eqnarray}
where $W_2^{d(h)}(\bm x, \bm y)$ can be written as
\begin{equation}
W_2^{d(h)}(\bm x,\bm y)=\frac{2}{\beta(L+1)}\sum_{\bm k\in\mathcal D_{\Lambda,\beta}^d}e^{-ik_0(x_0-y_0)}\sin(kx) \sin(ky) \hat W^{d(h)}_2(\bm k),
\end{equation}
for a suitable $\hat W^{d(h)}(\cdot)$ that we will explicitly build up in the next paragraph, while $$\mathcal R_{\mathcal B}W_2^{(h)}=\mathcal W^{(h)}_2(\bm x,\bm y)=W^{(h)}_2(\bm x,\bm y)-W^{d(h)}_2(\bm x,\bm y)$$ contains, in its cluster representation, at least one {\it non-translation invariant} graph element, {\it i.e.} either a remainder propagator $g^{(k)}_R(\bm x,\bm y)$, $k\geq h$, or a $\varpi$-type endpoint.
\item if $2n=4$,
\begin{eqnarray}
\mathcal L_{\mathcal{B}}W^{(h)}_4(\bm x_1,\bm x_2,\bm x_3, \bm x_4):=&\bar W^{(h)}_4(\bm x_1-\bm x_4,\bm x_2-\bm x_4,\bm x_3-\bm x_4),\label{localization_L_T_definition}\\
\mathcal R_{\mathcal{B}}W^{(h)}_4(\bm x_1,\bm x_2,\bm x_3, \bm x_4):=&\mathcal W_4^{(h)}(\bm x_1, \bm x_2,\bm x_3, \bm x_4),\label{remainder_R_T_definition}
\end{eqnarray}
while $\mathcal R_{\mathcal B}W_4^{(h)}=\mathcal W^{(h)}_4(\bm x_1,\bm x_2,\bm x_3,\bm x_4)=W^{(h)}_4(\bm x_1,\bm x_2,\bm x_3,\bm x_4)-\bar W^{(h)}_4(\bm x_1,\bm x_2,\bm x_3,\bm x_4)$ contains, in its cluster representation, at least one {\it non-translation invariant} graph element, {\it i.e.} either a remainder propagator $g^{(k)}_R(\bm x,\bm y)$, $k\geq h$, or a $\varpi$-type endpoint
\end{itemize}

\paragraph{Rigorous definition of $\mathcal L_{\mathcal B}$}

Let us recall that 
$$g^{(h)}(\bm x,\bm y)=g^{(h)}_P(\bm x-\bm y)+g^{(h)}_R(\bm x, \bm y),$$
where $g_P(\bm x-\bm y)$ is $2(L+1)$-periodic in the {\it real-space direction}.  Besides, we stress that
\begin{equation}
\begin{split}
g_P^{(h)}((x,x_0),(y,y_0))=-g_R^{(h)}((-x,x_0),(y,y_0))=\\
=-g_R^{(h)}((x,x_0),(-y,y_0))=g_P^{(h)}((-x,x_0),(-y,y_0)).
\end{split}
\label{g_P_g_R_symmetry}
\end{equation}
\subparagraph{Quadratic terms}

Let us first write down a complete decomposition of $W_2^{(h)}$, that we are going to comment term by term in the following:

\begin{equation}
\label{expansion_of_quadratic_terms}
\begin{split}
W_2^{(h)}(\bm x,\bm y)= W_2^{diff(h)}(\bm x,\bm y)+\left(W_2^{(h)}(\bm x,\bm y)-W_2^{diff(h)}(\bm x,\bm y)\right)=\\
=\bar W_2^{(h)}(\bm x-\bm y)+\left(W_2^{(h)}(\bm x,\bm y)-W_2^{diff(h)}(\bm x,\bm y)\right)+\left(W_2^{diff(h)}(\bm x,\bm y)-\bar W_2^{(h)}(\bm x-\bm y)\right)=\\
=W_2^{d(h)}(\bm x,\bm y)+\left(W_2^{(h)}(\bm x,\bm y)-W_2^{diff(h)}(\bm x,\bm y)\right)+\\+ 
\left(W_2^{diff(h)}(\bm x,\bm y)-\bar W_2^{(h)}(\bm x-\bm y)\right)+ \bar W_2^{(h)}(x+y, x_0-y_0),
\end{split}
\end{equation}
where
\begin{itemize}
\item $W^{diff(h)}_2(\bm x,\bm y)$ is defined as the kernel associated to {\it the sum of all those trees such that there are no $\varpi$-type endpoints and $\sigma(f)=P$ for each $f\in\bigcup_{v\in V_f(\tau)} I_v$}. This implies, by construction, that $$\left(W_2^{(h)}(\bm x,\bm y)-W_2^{diff(h)}(\bm x,\bm y)\right)$$ is associated with the trees such that there are {\it at least either two field labels associated with $\sigma(f)=R$ or a $\varpi$-type endpoint and, using Lemma (\ref{lemma_gram_hadamard_scalar_product_off_diagonal_propagators}), it can be rewritten using the determinant expansion we introduced in order to overcome the combinatorial problem}. We anticipate, and it will be discussed later, that there might be the case in which there is no remainder propagator associated with any $\ell\in T$, but the remainder propagators belong only to the matrix $G^{h_v,T_v}$.\\
Recall that $g^{(h)}_P$ is $2(L+1)$-periodic in the space direction, while the integrals over the {\it inner points} of the cluster are performed on $\Lambda$. So, in order to get a translation invariant $2(L+1)$ periodic function, we have to extend in a proper way the integration domain for any inner point $\bm z_i$
\begin{equation*}
\sum_{z_i\in \Lambda}\to \sum_{z_i=-L-1}^L.
\end{equation*}
\item $\bar W^{(h)}_2(\bm x-\bm y)$ is obtained starting from $W^{diff(h)}$ in the following way: 
\begin{enumerate}
\item let $\{\bm x, \bm z_1,\dots, \bm z_n, \bm y\}$ be the {\it space-time} variables associated with the endpoints of the tree related to $W^{diff(h)}$,
\item keeping fixed the endpoints associated with $\bm x$ and $ \bm y$, for each possible unordered ({\it i.e.} the order of the elements does not play any role) $k$-tuple, $1\leq k \leq n$, of endpoints we perform the following operation on the position labels:
$$\left\{(z_{i_1},z_{{i_1}_0});\dots; (z_{i_k},z_{{i_k}_0})\right\}\to \left\{(-z_{i_1},z_{{i_1}_0});\dots; (-z_{i_k},z_{{i_k}_0})\right\}, \hspace{3mm} \forall \hspace{3mm} 1\leq k \leq n,$$
\item if we add all the $(\bm x,\bm y)$-depending kernels associated with the trees obtained in the previous point to $W^{(diff(h))}$, we obtain  $\bar W^{(h)}_2(\bm x,\bm y)=\bar W^{(h)}_2(\bm x-\bm y)$ that, by construction, is a $2(L+1)$ periodic function, in the space direction, depending on the difference of its arguments. 
\end{enumerate}
By construction, and thanks to the symmetry (\ref{g_P_g_R_symmetry}), $\bar W^{(h)}_2(\bm x-\bm y)-W^{diff(h)}_2(\bm x,\bm y)$ contains at least a remainder propagator $g_R(\bm z_i,\bm z_j)$.
\item $\bar W_2^{(h)}(x+y, x_0-y_0)$ is simply obtained changing $(y, y_0)\to (-y, y_0)$ in all the trees involved in the previous step, so also $\bar W_2^{(h)}(x+y, x_0-y_0)$ contains at least a remainder propagator. 
\end{itemize}
So, by a straightforward calculation analogous to what we showed in the proof of Lemma (\ref{lemma_reflection_trick}), we obtain that
\begin{equation}
\begin{split}
W^{d(h)}_2(\bm x,\bm y):=&\bar W^{(h)}_2(\bm x- \bm y)- \bar W^{(h)}_2(x+y; x_0-y_0)=\\=&\frac{2}{\beta(L+1)}\sum_{\bm k\in\mathcal D^d_{\Lambda,\beta}} e^{-ik_0(x_0-y_0)} \sin(kx) \sin(k y) \hat W^{d(h)}(\bm k),
\end{split}
\end{equation}
thanks to the same argument we used in Lemma (\ref{lemma_reflection_trick}).
\paragraph{Quartic term}

Since the manipulations of the tree are basically the same as in the quadratic case, let us underline only the differences:
\begin{itemize}
\item now there are four external points to be kept fixed: $(\bm x_1,\bm x_2, \bm x_3, \bm x_4)$,
\item we can analogously define $W^{diff(h)}_4$ as the sum of all the trees such that there are no $\varpi-$type endpoints and such that $\sigma(f)=P$ for all the field labels $f\in\cup_{v\in V_f(\tau)} I_v$.
\item we can define $\bar W^{(h)}_4$ following  the same procedure over the endpoints as before, so that
\begin{equation}
\label{explicit_expansion_quartic_terms}
\begin{split}
W^{(h)}_4(\bm x_1,\bm x_2, \bm x_3, \bm x_4)= \bar W^{(h)}_4(\bm x_1-\bm x_4, \bm x_2-\bm x_4, \bm x_3-\bm x_4)+\\+ \left( W^{diff(h)}(\bm x_1,\bm x_2, \bm x_3, \bm x_4)-\bar W^{(h)}_4(\bm x_1-\bm x_4, \bm x_2-\bm x_4, \bm x_3-\bm x_4)\right)+\\+ \left(W^{(h)}_4(\bm x_1,\bm x_2, \bm x_3, \bm x_4)- W^{diff(h)}(\bm x_1,\bm x_2, \bm x_3, \bm x_4)\right).
\end{split}
\end{equation}
\end{itemize}

\begin{rem}
\label{remark_R_B_decomposition_R1_R2}
We introduce a notation that will make the proof of the main theorem more readable:
\begin{equation}
\mathcal R_\mathcal B=\mathcal R^{(1)}_\mathcal B+\mathcal R^{(2)}_\mathcal B,
\end{equation}
acting as follows:
\begin{eqnarray}
\begin{aligned}
\mathcal{R}^{(1)}_\mathcal B W_2^{(h)}(\bm x,\bm y)&=W_2^{(h)}(\bm x,\bm y)-W_2^{diff(h)}(\bm x,\bm y),\\
\mathcal{R}^{(2)}_\mathcal B W_2^{(h)}(\bm x,\bm y)&= W_2^{diff(h)}(\bm x,\bm y)-\bar W_2^{(h)}(\bm x-\bm y)+\bar W^{(h)}_2(x+y,x_0-y_0),\\
\mathcal{R}^{(1)}_\mathcal B W_4^{(h)}(\bm x_1,\bm x_1,\bm x_3, \bm x_4)&=W_4^{(h)}(\bm x_1,\bm x_1,\bm x_3,\bm x_4)-W_4^{diff(h)}(\bm x_1,\bm x_2,\bm x_3,\bm x_4),\\
\mathcal{R}^{(2)}_\mathcal B W_4^{(h)}(\bm x_1,\bm x_2,\bm x_3, \bm x_4)&= W_4^{diff(h)}(\bm x_1,\bm x_2,\bm x_3,\bm x_4)-\bar W_4^{(h)}(\bm x_1,\bm x_2,\bm x_3,\bm x_4).
\end{aligned}
\end{eqnarray}
Furthermore, we underline that 
\begin{enumerate}
\item the operator $\mathcal R^{(1)}_\mathcal B$ simply {\bf selects} the trees such that at least
\begin{itemize}
\item either two $f\in I_{v_0}$ are associated with $\sigma (f)=R$,
\item or an endpoint is a $\varpi-$type endpoint,
\end{itemize}
but it does not modify anything of the tree,
\item the operator $\mathcal R^{(2)}_\mathcal B$ operates on the trees that do not contain neither field labels $f\in I_{v_0}$ associated with $\sigma(f)=R$ nor $\varpi$-type end points, and {\bf modifies} the coordinates-labels of the tree.
\end{enumerate}

\end{rem}

\subsection{Definition of localization}

\paragraph{$\mathcal L_\mathcal T$ and $\tilde{\mathcal L}_\mathcal T$ localization operators}

Let us recall that
\begin{equation*}
\begin{split}
\mathcal L_{\mathcal B}W^{(h)}_2(\bm x,\bm y)&=W^{d(h)}_2(\bm x,\bm y),\\
\mathcal R_{\mathcal B}W^{(h)}_2(\bm x,\bm y)&=\mathcal W^{(h)}_2(\bm x,\bm y),\\
\mathcal L_{\mathcal B}W^{(h)}_4(\bm x_1,\bm x_2,\bm x_3,\bm x_4)&=\bar W^{(h)}_4(\bm x_1-\bm x_4,\bm x_2-\bm x_4,\bm x_3-\bm x_4),\\
\mathcal R_{\mathcal B}W^{(h)}_4(\bm x_1,\bm x_2,\bm x_3,\bm x_4)&=\mathcal W^{(h)}_4(\bm x_1,\bm x_2,\bm x_3,\bm x_4),
\end{split}
\end{equation*}
where we recall that $W^{d(h)}_2$ is {\it diagonal in the Dirichlet basis} and $\bar W^{(h)}_4$ is translation invariant and $2(L+1)\times \beta$-periodic. Plugging these decompositions in the expression of the effective potential, 
\begin{equation}
\begin{split}
\mathcal V^{(h)}(\psi^{(\leq h)})=\int d\bm x d\bm y\psi^{(\leq h)+}_{\bm x}\psi^{(\leq h)-}_{\bm y}W^{d(h)}_2(\bm x,\bm y)+
\int d\bm x d\bm y\psi^{(\leq h)+}_{\bm x}\psi^{(\leq h)-}_{\bm y}\mathcal W^{(h)}_2(\bm x,\bm y)+\\
+\int d\bm x_1\dots d\bm x_4\psi^{(\leq h)+}_{\bm x_1}\psi^{(\leq h)+}_{\bm x_2}\psi^{(\leq h)-}_{\bm x_3}\psi^{(\leq h)-}_{\bm x_4} \bar W_4^{(h)}(\bm x_1-\bm x_4,\bm x_2-\bm x_4,\bm x_3-\bm x_4)+\\
+\int d\bm x_1\dots \bm x_4\psi^{(\leq h)+}_{\bm x_1}\psi^{(\leq h)+}_{\bm x_2}\psi^{(\leq h)-}_{\bm x_3}\psi^{(\leq h)-}_{\bm x_4}\mathcal W_4^{(h)}(\bm x_1,\bm x_2,\bm x_3,\bm x_4)+\\
+\sum_{n\geq 3}\int d\bm x_1\dots d\bm x_{2n}\left[\prod_{j=1}^n\psi^{(\leq h)+}_{\bm x_{2j-1}}\psi^{(\leq h)-}_{\bm x_{2j}}\right] W^{(h)}_{2n}(\bm x_1,\dots,\bm x_{2n}).
\end{split}
\label{effective_potential_decomposed_2-4_kerlels_DBC}
\end{equation}
Let us recall that,
\begin{equation}
\mathcal L_{\mathcal B}W^{(h)}_{2}(\bm x,\bm y)=W^{d(h)}_2(\bm x,\bm y)=\frac{2}{\beta(L+1)}\sum_{\bm k\in \mathcal{D}^d_{\Lambda,\beta}}e^{ik_0(x_0-y_0)}\sin (kx)\sin (ky) \hat W^{d(h)}_2(\bm k),
\end{equation}
that implies that the quadratic term can be rewritten in momentum representation as
\begin{equation}
\int d\bm x d\bm y\psi^{(\leq h)+}_{\bm x}\psi^{(\leq h)-}_{\bm y}W^{d(h)}_2(\bm x,\bm y)=\frac{2}{\beta(L+1)}\sum_{\bm k\in\mathcal{D}^d_{\Lambda,\beta}}\hat \psi^{(\leq h)+}_{\bm k}\hat \psi^{(\leq h)-}_{\bm k}\hat W^{d(h)}_{2}(\bm k).
\label{2el_W^d_in_diagonal_form}
\end{equation}
It is worth noting that, by writing this quadratic term in this diagonal form, we overcome the {\it quasi-particles definition} issue we pointed out in Remark (\ref{remark_quasi_particles_issue_DBC}): indeed we know that in the {\it original dual space} $\mathcal D^d_{\Lambda,\beta}$ there is only one Fermi point $\bm p_F=(p_F,0)$, so we can perform the change of variable $\bm k=\bm k'+\bm p_F$ and rewrite the latter expression as
\begin{equation}
\begin{split}
\frac{2}{\beta(L+1)}\sum_{\bm k'\in \mathcal D'^{d}_{\Lambda,\beta}}\hat \psi^{(\leq h)+}_{\bm k'+\bm p_F}\hat \psi^{(\leq h)-}_{\bm k'+\bm p_F}\hat W^{d(h)}_{2}(\bm k'+\bm p_F).
\end{split}
\label{2el_W^d_in_diagonal_quasiparticles_form}
\end{equation}
Of course this expression suggests us a {\it natural way to localize} directly in the {\it momentum-space} by Taylor expanding the kernel around the Fermi point $\bm p_F$, analogously to what we have done in (\ref{localization_2el_infinite_volume_limit}).\\
As we have done in the previous Chapter (\ref{chapter_fermions_PBC}), in sake of simplicity we give here the definition of Localization at infinite volume, and we refer to Appendix (\ref{appendix_finite_volume_loc_DBC}) to the rigorous definitions that take into account the finite volume corrections.
\begin{itemize}
\item {\bf Case $2n=2$, kernel $W^{d(h)}_2=\mathcal L_\mathcal BW^{(h)}_2$} As we already commented, formulae (\ref{2el_W^d_in_diagonal_form}) and (\ref{2el_W^d_in_diagonal_quasiparticles_form}) allow us to localize proceeding by analogy with the translation invariant case, defining a localization procedure directly in the dual space:
\begin{equation}
\begin{split}
\mathcal L_\mathcal T \left[\mathcal L_\mathcal B\left(\int d\bm x d\bm y \psi^{(\leq h)+}_{\bm x}\psi^{(\leq h)-}_{\bm y}W^{(h)}_2(\bm x,\bm y)\right)\right]=\\
\\\mathcal L_\mathcal T  \left(\frac{2}{\beta(L+1)}\sum_{\bm k'\in \mathcal D'^{d}_{\Lambda,\beta}} \hat \psi^{(\leq h)+}_{\bm k'+p_F}\hat \psi^{(\leq h)-}_{\bm k'+p_F}\hat W^{d(h)}_2(\bm k'+\bm p_F)\right)=\\
=\frac{2}{\beta(L+1)}\sum_{\bm k'\in \mathcal D'^{d}_{\Lambda,\beta}} \hat \psi^{(\leq h)+}_{\bm k'+p_F}\hat \psi^{(\leq h)-}_{\bm k'+p_F}\left(\hat W^{d(h)}_2(\bm p_F)+k_0\partial_{k_0} \hat W^{d(h)}_2(\bm p_F)+k'\partial_k \hat W^{d(h)}_2(\bm p_F)\right)
\end{split}
\end{equation}
where we stress once more that this localization definition is independent of the quasi-particles.
\item {\bf Case $2n=2$, kernel $\mathcal W^{(h)}_2=\mathcal R_\mathcal B W^{(h)}_2$} Because of the non-diagonality of the kernel $\mathcal W^{(h)}_2$, there is no advantage in defining a localization procedur in $\bm k$ space, so we work directly in the real spacetime:
\begin{equation}
\begin{split}
\tilde{\mathcal L}_\mathcal T \mathcal L_\mathcal B\left(\int d\bm x d\bm y \psi^{(\leq h)+}_{\bm x}\psi^{(\leq h)-}_{\bm y}W^{(h)}_2(\bm x,\bm y)\right)=\\
=\tilde{\mathcal L}_\mathcal T\int d\bm x d\bm y \psi^{(\leq h)+}_{\bm x}\psi^{(\leq h)-}_{\bm y}\mathcal W^{(h)}_2(\bm x,\bm y)=\\
=\left.\int d\bm x d\bm y \psi^{(\leq h)+}_{\bm x}\psi^{(\leq h)-}_{\bm y}\right|_{x_0=y_0} \mathcal W^{(h)}_2(\bm x,\bm y).
\end{split}
\end{equation}
We used the symbol $\tilde{•}$ to stress the fact that $\tilde{\mathcal L}_\mathcal T$ operates only on the time variables.
\item {\bf Case $2n=4$, kernel $\bar W^{(h)}_4=\mathcal L_\mathcal B W^{(h)}_4$}
In this case, as in the previous chapter, we define
\begin{equation}
\begin{split}
\mathcal L_\mathcal T\left( \mathcal L_\mathcal B \int d\bm x_1\dots d\bm x_4 W_4^{(h)}(\bm x_1,\bm x_2,\bm x_3, \bm x_4) e^{-ip_F(\omega_1 x_1 +\omega_2 x_2 - \omega_3 x_3 - \omega_4 x_4)}\right. \cdot \\\left. \cdot 
\psi^{(\leq h)+}_{\sigma_1,\omega_1,\bm x_1}\psi^{(\leq h)+}_{\sigma_2,\omega_2,\bm x_2}\psi^{(\leq h)-}_{\sigma_3,\omega_3,\bm x_3}\psi^{(\leq h)-}_{\sigma_4,\omega_4,\bm x_4}\right)=\\
=\int d\bm x_1\dots d\bm x_4 \mathcal L_\mathcal B W_4^{(h)}(\bm x_1,\bm x_2,\bm x_3, \bm x_4) e^{-ip_F(\omega_1 x_1 +\omega_2 x_2 - \omega_3 x_3 - \omega_4 x_4)} \cdot \\ \cdot 
\psi^{(\leq h)+}_{\sigma_1,\omega_1,\bm x_4}\psi^{(\leq h)+}_{\sigma_2,\omega_2,\bm x_4}\psi^{(\leq h)-}_{\sigma_3,\omega_3,\bm x_4}\psi^{(\leq h)-}_{\sigma_4,\omega_4,\bm x_4}.
\end{split}
\end{equation}
\item {\bf Case $2n=4$, kernel $\mathcal W^{(h)}_4=\mathcal R_\mathcal B W^{(h)}_4$}
In this case, we do not need to renormalize
\begin{equation}
\begin{split}
\mathcal L_\mathcal T\left( \mathcal R_\mathcal B \int d\bm x_1\dots d\bm x_4 W_4^{(h)}(\bm x_1,\bm x_2,\bm x_3, \bm x_4) e^{-ip_F(\omega_1 x_1 +\omega_2 x_2 - \omega_3 x_3 - \omega_4 x_4)}\right. \cdot \\ \cdot 
\psi^{(\leq h)+}_{\sigma_1,\omega_1,\bm x_1}\psi^{(\leq h)+}_{\sigma_2,\omega_2,\bm x_2}\psi^{(\leq h)-}_{\sigma_3,\omega_3,\bm x_3}\psi^{(\leq h)-}_{\sigma_4,\omega_4,\bm x_4}=0,
\end{split}
\end{equation}
so $\mathcal R_\mathcal T\mathcal R_\mathcal B=\mathcal R_\mathcal B$ if it operates on a quartic term (the same holds for $\tilde{\mathcal L}_\mathcal T$, so for $\tilde{\mathcal R}_\mathcal T\mathcal R_\mathcal B$).
\item Finally, if $2n\geq 6$
\begin{equation}
\mathcal L_\mathcal T\int d\bm x_1\dots d \bm x_{2n}\left(\prod_{j=1}^n\psi^{(\leq h)+}_{\bm x_{2j-1}}\psi^{(\leq h)-}_{\bm x_{2j}}\right)W_{2n}^{(h)}(\bm x_1,\dots, \bm x_{2n})=0,
\end{equation}
and the same holds for $\tilde {\mathcal L}_\mathcal T$.
\end{itemize}
\paragraph{Composition of $\mathcal L_{\mathcal B}$, $\mathcal L_\mathcal T$  and $\tilde{\mathcal L}_\mathcal T$} 
By composing the operators we introduced, we define a linear operator $\mathcal L$, and consequently $\mathcal R=\left(1-\mathcal L\right)$, acting as follows:
\begin{itemize}

\item if $2n=2$:
\begin{eqnarray}
\mathcal L=\mathcal L_\mathcal T\mathcal L_{\mathcal B}+\tilde {\mathcal L}_\mathcal T\mathcal R_{\mathcal B},\\
\mathcal R= \mathcal R_\mathcal T\mathcal L_{\mathcal B}+\tilde {\mathcal R}_\mathcal T\mathcal R_{\mathcal B},
\end{eqnarray}
\item if $2n=4$
\begin{eqnarray}
\mathcal L=\mathcal L_\mathcal T\mathcal L_{\mathcal B},\\
\mathcal R= \mathcal R_{\mathcal T}\mathcal L_{\mathcal B}+\mathcal R_{\mathcal B},
\end{eqnarray}
\item if $2n=6$ $\mathcal R$ acts as the identity.
\end{itemize}

\begin{rem}\label{remark_commutation_renormalization operators}
As we commented in Remark (\ref{remark_R_B_decomposition_R1_R2}), the operator $\mathcal R^{(1)}_\mathcal B$ does not modify the trees, so it holds
$$\tilde{\mathcal R}_\mathcal T\mathcal R^{(1)}_\mathcal B=\mathcal R^{(1)}_\mathcal B\tilde{\mathcal  R}_\mathcal T.$$
On the other hand, also the operator $\tilde {\mathcal R}_\mathcal T$ commutes with $\mathcal R^{(2)}_\mathcal B$ since the first one acts on the time-component of the coordinates, while the second one on the space-component, so
$$\tilde{\mathcal R}_\mathcal T\mathcal R^{(2)}_\mathcal B=\mathcal R^{(2)}_\mathcal B\tilde{\mathcal  R}_\mathcal T.$$
\end{rem}

The local part of the effective potential at scale $h$ is
\begin{equation}
\mathcal L\mathcal V^{(h)}(\psi^{(\leq h)})=\gamma^h n_h F_\nu^{(\leq h)}+\gamma^h \left(\pi_h\ast F_{\varpi}^{(\leq h)}\right)+z_h F_{\zeta}^{(\leq h)}+a_h F_{\alpha}^{(\leq h)}+ l_h F_{\lambda}^{(\leq h)},
\label{linearized_effective_potential_DBC}
\end{equation}
where 
\begin{eqnarray}
F_\nu^{(\leq h)}=\frac{2}{\beta(L+1)}\sum_{\bm k'\in\mathcal D'^d_{\Lambda,\beta}}\hat \psi^{(\leq h)+}_{\bm k'+\bm p_F}\hat \psi^{(\leq h)-}_{\bm k'+\bm p_F}\\
\label{linearized_effective_potentia_nu_DBC}
 \pi_h\ast F_{\varpi}^{(\leq h)}=\int d\bm x\int dy\psi^{(\leq h)+}_{\bm x}\psi^{(\leq h)-}_{\bm y}\left.\right|_{y_0=x_0} \pi_h(x,y)\\
\label{linearized_effective_potentia_tildenu_DBC}
F_{\zeta}^{(\leq h)}=\frac{2}{\beta(L+1)}\sum_{\bm k'\in \mathcal D'^d_{\Lambda,\beta}}(-ik_0)\hat \psi^{(\leq h)+}_{\bm k'+\bm p_F}\hat \psi^{(\leq h)-}_{\bm k'+\bm p_F}\\
\label{linearized_effective_potentia_zeta_DBC}
F_{\alpha}^{(\leq h)}=\frac{2}{\beta(L+1)}\sum_{\bm k'\in\mathcal D'^d_{\Lambda,\beta}} v_0k'\hat \psi^{(\leq h)+}_{\bm k'+\bm p_F}\hat \psi^{(\leq h)-}_{\bm k'+\bm p_F},\\
\label{linearized_effective_potentia_alpha_DBC}
F_{\lambda}^{(\leq h)}=\sum_{\bm \omega}\sum_{\bm \sigma}\int_{[0,\beta)} dx_0\sum_{ x\in\Lambda}\psi^{(\leq h)+}_{\sigma_1,\omega_1,\bm x}\psi^{(\leq h)+}_{\sigma_2,\omega_2,\bm x}\psi^{(\leq h)-}_{\sigma_3,\omega_3,\bm x}\psi^{(\leq h)-}_{\sigma_4,\omega_4,\bm x}.
\label{linearized_effective_potentia_lambda_DBC}
\end{eqnarray}

where $v_0=\sin p_F$.\\
While the constants $(n_h, z_h, a_h, l_h)$ behave as the analogous quantities appearing in (\ref{local_effective_potential_scale_h_PBC}), so we will call them the {\it (non-rescaled) bulk running coupling constants}, the novelty is the function $\pi_h(x, y)$, due to the boundary effects, defined as
\begin{equation}
\gamma^h \pi_h(x,y):=\int_{[0,\beta)} dy_0\mathcal W^{(h)}_2(x,y;x_0-y_0).
\label{definition_tilde_n}
\end{equation}

\begin{rem}
\label{remark_arbitrariness_localization_point_DBC}
We point out once more that we define the linear operator $\mathcal L$ to rewrite $\mathcal V^{(h)}(\psi^{(h)})=\mathcal L\mathcal V^{(h)}(\psi^{(h)})+\left(1-\mathcal L\right)\mathcal V^{(h)}(\psi^{(h)})$ in such a way that the term $\left(1-\mathcal L\right)\mathcal V^{(h)}(\psi^{(h)})$ is irrelevant: in defining $\pi_h(x,y)$, we have choosen to localize only in the time variable, while we could have choosen a localization consisting in a combination of zeroth order localization both in space and in time: since, in order to prove the theorem, this reduces to an aesthetic choice, we prefer the definition we defined in order to drop the dependence on quasi-particle labels, and to have, scale by scale, the term $\pi_h\ast F_{\varpi}^{d(\leq h)}$ having formally the same shape as $ \mathcal N$.
\end{rem}

\subsection{Scale h integration and dressed theory}
\label{subsection_anomalous_integration_and_dressed_theory_DBC}

Let us comment the terms which the local part consists of:
\begin{itemize}
\item $l_h F_\lambda^{(\leq h)}$ reproduces, on scale $h$, the initial two points interaction with a different interaction potential, which effectively behaves as  {\it bulk potential}, being the same we found in the translation invariant setting;
\item $n_hF_\nu^{(\leq h)}$ reproduces the counterterm operator of the initial $\mathcal V(\psi)$, where the constant value $\nu$ is replaced by $n_h$;
\item the sum of $a_h F_\alpha^{(\leq h)}$ and $z_hF_\zeta^{(\leq h)}$  has the same shape, up to $\mathcal O(k'^2)$ terms, as 
\begin{equation*}
\begin{split}
\left(\hat g^{(h)}(\bm k'+\bm p_F)\right)^{-1}=\left(-ik_0+(1-\cos k')\cos p_F+ v_0\sin k'\right) f^{-1}_h(\bm k')=\\
=\left(-ik_0+v_0k'+(1-\cos k')\cos p_F+ v_0(\sin k'-k')\right) f^{-1}_h(\bm k')=:\\
=:\left(-ik_0+ v_0 k'+t_{h}(k')\right) f^{-1}_h(\bm k'),
\end{split}
\end{equation*} 
with constants $a_h$ and $z_h$ replacing $1$, and where we called $t_{h}(k')$ the $\mathcal O(k'^2)$ term (even though the subscript $h$ does not play any rule at the moment, it is anyway a convenient notation for what we will do in the next steps). 
\item the term $\pi_h \ast F_{\varpi}^{(\leq h)}$ deserves some deeper comment. Indeed, its role is the same role played by the counterterm 
$$\varpi \int_{[0,\beta)}dx_0\sum_{ x\in\Lambda}\int_{[0,\beta)}dy_0\sum_{ y\in\Lambda}\psi^{(\leq 0)}_{\bm x}\psi^{(\leq 0)}_{\bm y} \pi(x,y)\delta_{x_0,y_0},$$
in the scale zero effective potential $\mathcal V^{(0)}$. The reader may be surprised by the fact that we localized only in the {\it time variable} keeping track of a {\it non-local} counterterm, since a localization in real-space variables (analogous to what we defined in $F^{(\leq h)}_\nu$) not only would give us the right scale gain ({\it i.e.} an irrelevant remainder), but it would produce a {\it local} counterterm. However, one should recall that the {\it real space localization} has to be performed, in order to get the right scale gains, on the {\it quasi-particle} fields: so a priori, after a real-space localization, being the momentum no longer preserved, we would get {\it four different} running coupling functions $\{\pi_{\omega,\omega'}\}_{\omega,\omega'\in\{\pm\}}$.  So, analogously to what we did in Remark (\ref{remark_uniqueness_of_counterterm_PBC}), we should exploit some symmetry of the kernels to find out that actually these four functions originate from the same function: in fact it is not the case, as it can be checked at the lowest order (the tadpole Feynman graph), and the fact that these functions are different reflects obviously the fact that these kernels are non-local so, in order to compensate them, it is necessary to introduce a non-local counterterm. We refer to the next chapter for some heuristic discussion about how to deal with these terms.
\end{itemize}

\paragraph{Dressed theory}

So, we can iteratively perform the integral
$$\int P(d\psi^{(\leq h)})e^{-\mathcal V^{(h)}(\psi^{\leq h})},$$
by including, step by step, the terms (\ref{linearized_effective_potentia_zeta_DBC}) and (\ref{linearized_effective_potentia_alpha_DBC}) into the Gaussian Grassman measure, as follows.\\
Let us introduce a sequence of constants $\{Z_h\}_{h\leq 0}$, and $Z_0=1$, and let us define a function
\begin{equation}
C_h(\bm k')^{-1}=\sum_{j=h_L}^hf_j(\bm k')
\end{equation}
where, as already mentioned, $h_L$ is defined by $-\lfloor \log_\gamma L \rfloor =h_L$. Let us imagine that the Grassman variables $\psi^{(0)},\dots, \psi^{(h+1)}$ have been already integrated, so that we are left with
\begin{equation}
\int P_{Z_h}\left(d\psi^{(\leq h)}\right)e^{-\mathcal V^{(h)}\left(\sqrt{ Z_h} \psi^{(\leq h)}\right)},
\end{equation}
where, up to a constant, the Gaussian Grassman measure is 
\begin{equation}
\begin{split}
P_{Z_h}\left(d\psi^{(\leq h)}\right)=\left(\prod_{\bm k'\in \mathcal D'^{d}_{\Lambda,\beta}}d\psi^{(\leq h)+}_{\bm k',+}d\psi^{(\leq h)-}_{\bm k',+}\right)\\
\exp\left\{-\frac{2}{\beta(L+1)} \sum_{\bm k'\in \mathcal D'^{d}_{\Lambda,\beta}}C_h(\bm k')Z_h \left[-ik_0+v_0 k'+t_{h}(k')\right]\psi^{(\leq h)+}_{\bm k',+}\psi^{(\leq h)-}_{\bm k',+}\right\}.
\end{split}
\end{equation}
Now we recall that, in the previous subsection, we rewrote 
$$\mathcal V^{(h)}=\mathcal L\mathcal V^{(h)}+\mathcal R \mathcal V^{(h)},$$
with $\mathcal R:=\left( 1-\mathcal L\right)$, and up to the linear coefficients, the linear combination $a_hF^{(\leq h)}_\alpha+z_h F^{(\leq h)}_{\zeta}$ has the same structure as the linear part of the covariance the integration is associated with. So we move, after proper manipulation that are included in the following definitions, these to terms into the measure, leaving in the interaction  the terms $n_hF^{(\leq h)}_\nu$ and $\pi_h\ast F^{(\leq h)}_\varpi$, besides the (harmless) renormalized part $\mathcal R \mathcal V^{(h)}$.\\
So, if $\mathcal N_h$ is a suitable renormalization constant, we rewrite
\begin{equation}
\int P_{Z_h}(d\psi^{(\leq h)})e^{-\mathcal V^{(h)}\left(\sqrt{Z_h}\psi^{(\leq h)}\right)}= \frac{1}{\mathcal N_h} \int \tilde P_{Z_{h-1}}(d\psi^{(\leq h)})e^{-\tilde{\mathcal V}^{(h)}\left(\sqrt{Z_h}\psi^{(\leq h)}\right)},
\label{integral_dressed_theory_leqh_DBC}
\end{equation}
with
\begin{equation}
\begin{split}
\tilde P_{Z_{h-1}}\left(d\psi^{(\leq h)}\right)=\left(\prod_{\bm k'\in \mathcal D'^{d}_{\Lambda,\beta}}d\psi^{(\leq h)+}_{\bm k'+p_F}d\psi^{(\leq h)-}_{\bm k'+p_F}\right)\\
\exp\left\{-\frac{2}{\beta(L+1)} \sum_{\bm k'\in \mathcal D'^{d}_{\Lambda,\beta}}C_h(\bm k')Z_{h-1}(\bm k') \left[-ik_0+v_0 k'+\vartheta_h(\bm k')\right]\psi^{(\leq h)+}_{\bm k'+p_F}\psi^{(\leq h)-}_{\bm k'+p_F}\right\},
\end{split}
\label{measure_tildePZ_h-1_DBC}
\end{equation}
where:
\begin{equation}
\label{dressing_measure_defn_tildeV_DBC}
\begin{split}
Z_{h-1}(\bm k')&=Z_h\left(1+C_h^{-1}(\bm k')z_h\right),\\
Z_{h-1}&=Z_h(1+z_h),\\
\tilde{\mathcal V}^{(h)}&=\mathcal L\tilde{\mathcal V}^{(h)}+\left(1-\mathcal L\right)\mathcal V^{(h)},\\
\mathcal L\tilde{\mathcal V}^{(h)}&=\gamma^hn_h F^{(\leq h)}_\nu+\gamma^h \pi_h\ast F^{(\leq h)}_\varpi+(a_h-z_h)F^{(\leq h)}_\alpha+l_hF^{(\leq h)}_\lambda.
\end{split}
\end{equation}
and $\vartheta_h(\bm k')$ is iteratively defined as follows
\begin{equation}
\vartheta_h(\bm k')=
\begin{cases}
t_0( k') &\mbox{ if } h=0,\\
\frac{Z_{h+1}}{Z_{h}(\bm k')}\vartheta_{h+1}(\bm k') &\mbox{ if } h<0.
\end{cases}
\end{equation}
Of course, in this way we {\it dressed} the whole infrared theory ({\it i.e.} the linear part of the {\it whole propagator} $g^{(\leq h)}$), so in order to perform the {\it next single scale integration} in the iterative procedure we have to define a single scale measure, meaning that, using the addition principle (\ref{addition_principle}) and the change of the Gaussian Grassman integration measure (\ref{change_of_integration_measure_property})  we have to rewrite the integral (\ref{integral_dressed_theory_leqh_DBC}) as
\begin{equation}
\frac{1}{\mathcal N_h}\int P_{Z_{h-1}}(d\psi^{(\leq h-1)})\int \tilde P_{Z_{h-1}}(d\psi^{(h)})e^{-\tilde{\mathcal V}^{(h)}\left(\sqrt{Z_h}\psi^{(\leq h)}\right)}
\end{equation}
where, on the one hand, $P_{Z_{h-1}}(d\psi^{(\leq h-1)})$ is given by formula (\ref{measure_tildePZ_h-1_DBC}) with
\begin{itemize}
\item $Z_{h-1}(\bm k')$ replaced by $Z_{h-1}$,
\item $C_h(\bm k')$ replaced by $C_{h-1}(\bm k')$,
\item $\psi^{(\leq h)}$ replaced by $\psi^{(\leq h-1)}$,
\end{itemize}
while on the other hand the {\it single scale dressed measure} $\tilde P_{Z_{h-1}}(d\psi^{(h)})$ is also given by (\ref{measure_quasi_particles_real_space_DBC}) with
\begin{itemize}
\item $Z_{h-1}(\bm k')$ replaced by $Z_{h-1}$,
\item $C_h(\bm k')$ replaced by $\tilde f^{-1}_h(\bm k')$, where
\begin{equation}
\tilde f^{-1}_h(\bm k')=Z_{h-1}\left(\frac{C^{-1}_h(\bm k')}{Z_{h-1}(\bm k')}-\frac{C^{-1}_{h-1}(\bm k')}{Z_{h-1}}\right),
\end{equation}
\item $\psi^{(\leq h)}$ replaced by $\psi^{(h)}$.
\end{itemize}
It is worth remarking that the {\it scaling properties} of $\tilde f_h(\bm k')$ are the same as $f_h(\bm k')$, {\it i.e.} $\tilde f_h(\bm k')$ is a compact support function, with support of width $O\left(\gamma^h\right)$ and at a distance $O\left(\gamma^h\right)$ from $\bm p_F$ .\\
Finally, we rescale the Grassman fields in such a way that (\ref{integral_dressed_theory_leqh_DBC}) can be rewritten as
\begin{equation}
\frac{1}{\mathcal N_h}\int P_{Z_{h-1}}\left(d\psi^{\leq (h-1)}\right)\int \tilde P_{Z_{h-1}}\left(d\psi^{(h)}\right)e^{\hat{\mathcal V}^{(h)}\left(\sqrt{Z_{h-1}}\psi^{(\leq h)}\right)}
\end{equation}
where $\hat{ \mathcal V}^{(h)}$ is such that its local part is given by
\begin{equation}
\begin{split}
\mathcal L\hat{\mathcal V}^{(h)}\left(\sqrt{Z_{h-1}}\psi^{(\leq h)}\right)=\gamma^h\nu_h F_\nu^{(\leq h)}\left(\sqrt{Z_{h-1}}\psi^{(\leq h)}\right)+\\+\gamma^h\left(\varpi_h\ast F_\varpi^{(\leq h)}\right)\left(\sqrt{Z_{h-1}}\psi^{(\leq h)}\right)+\delta_h F_\alpha^{(\leq h)}+\lambda_h F_\lambda^{(\leq h)}\left(\sqrt{Z_{h-1}}\psi^{(\leq h)}\right)
\end{split}
\label{localized_effective_potential_DBC}
\end{equation}
defining the {\it running coupling constants} and the new {\it running coupling functions}:
\begin{equation}
\label{running_coupling_functions_DBC}
\begin{split}
\nu_h&=\frac{Z_h}{Z_{h-1}}n_h,\\
\delta_h&=\frac{Z_h}{Z_{h-1}}(a_h-z_h),\\
\lambda_h&=\left(\frac{Z_h}{Z_{h-1}}\right)^2l_h,\\
\varpi_h(x,y)&=\frac{Z_h}{Z_{h-1}}\pi_{h}(x,y), \hspace{3mm}.
\end{split}
\end{equation}
which we group together defining
\begin{equation}
\label{vec_v_h(x)}
\vec v_h(x,y)=\left(\nu_h, \delta_h, \lambda_h, \varpi_{h}(x,y)\right).
\end{equation}
At this point, we can perform the {\it single scale integration}
\begin{equation}
\int \tilde P_{Z_{h-1}}(d\psi^{(h)})e^{-\hat{\mathcal V}^{(h)}\left(\sqrt{Z_{h-1}}\psi^{(\leq h)}\right)}=e^{-\mathcal{V}^{(h-1)}\left(\sqrt{Z_{h-1}}\psi^{(\leq h-1)}\right)+L\beta {e}_h}
\end{equation}
where of course we have reconstructed the formal situation of (\ref{linearized_effective_potential_DBC}), with $h\to h-1$:
\begin{equation}
\begin{split}
\mathcal L\mathcal V^{(h-1)}(\psi^{(\leq h-1)})=\gamma^{h-1} n_{h-1} F_\nu^{(\leq h-1)}(\psi^{(\leq h-1)})+\gamma^{h-1} \left(\pi_{h-1}\ast F_{\varpi}^{(\leq h-1)}(\psi^{(\leq h-1)})\right)+\\+z_{h-1} F_{\zeta}^{(\leq h-1)}(\psi^{(\leq h-1)})+a_{h-1} F_{\alpha}^{(\leq h-1)}(\psi^{(\leq h-1)})+ l_{h-1} F_{\lambda}^{(\leq h-1)}(\psi^{(\leq h-1)})
\end{split}
\end{equation}
so that we can apply iteratively the same scheme. We remark that, iterating this procedure, we can write $\vec v_h(x,y)$ in terms of $\vec v_h'(x,y)$, $h'\geq h+1$:
\begin{equation}
\vec v_h=\vec \beta(\vec v_{h+1}(x,y),\dots,\vec v_0(x,y); x,y)
\end{equation}
where $\vec \beta\left(\vec v_{h+1}(x,y),\dots,\vec v_0(x,y); x,y\right)$ is called the {\it beta function}.\\
In order to set up an iterative integration process, we have to be sure that the {\it dressing procedure} we just defined does not break th property of the kernels we pointed out in Remark (\ref{remark_kernels_independent_of_quasi_particles}), {\it i.e.} the fact that the kernels are independent of the {\it quasi-particles}. Indeed:
\begin{prop}
The interaction $\hat{\mathcal V}$ has the form
\begin{equation}
\hat{\mathcal V}\left(\psi^{(\leq h)}\right)=\sum_{n\geq 1}\int d\bm x_1\dots d\bm x_{2n}\left(\prod_{j=1}^n \psi^{(\leq h)+}_{\bm x_{2j-1}}\psi^{(\leq h)-}_{\bm x_{2j}}\right)W^{(h)}_{2n}(\bm x_1,\dots,\bm x_{2nv}),
\end{equation}
{\it i.e.} the kernels $W_{2n}^{(h)}$ are independent of quasi-particles.
\end{prop}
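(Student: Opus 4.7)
The plan is to proceed by induction on the scale index $h$, showing that each stage of the multiscale integration preserves the structural property that the kernels of the effective potential are functions of the spacetime arguments alone, with no dependence on the quasi-particle labels $\omega,\sigma$.

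For the base case, $\mathcal{V}^{(0)}$ arises from integrating out the ultraviolet degrees of freedom from the original potential $\mathcal{V}(\psi)$. Since $\mathcal{V}(\psi)$ is written purely in terms of the fields $\psi^{\pm}_{\bm x}$ (with no quasi-particle labels) and the ultraviolet propagator $g^{(u.v.)}(\bm x,\bm y)$ depends only on its spacetime arguments, the truncated expectations defining $\mathcal{V}^{(0)}$ produce kernels depending only on $\bm x_1,\dots,\bm x_{2n}$. The same remark applies to the boundary counterterm $\mathcal{N}$, whose kernel $\pi(x,y)$ is a function of spatial arguments alone.

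For the inductive step, assume $\mathcal{V}^{(h)}$ has the desired form and verify that $\hat{\mathcal{V}}^{(h)}$, and consequently $\mathcal{V}^{(h-1)}$, has it too. First, the bulk/remainder operators $\mathcal{L}_\mathcal{B}, \mathcal{R}_\mathcal{B}$ (see Remark \ref{remark_R_B_decomposition_R1_R2}) act on the kernels by selecting subclasses of trees according to the propagator types $P/R$ along internal lines or according to the presence of $\varpi$-type endpoints, and by reflecting inner spatial coordinates; none of these manipulations touches the field labels $\omega,\sigma$, so quasi-particle independence is preserved. The Taylor-type localizations $\mathcal{L}_\mathcal{T},\tilde{\mathcal{L}}_\mathcal{T}$ act on spacetime variables only: $\tilde{\mathcal{L}}_\mathcal{T}$ equates $x_0 = y_0$ in the remainder quadratic kernel, while $\mathcal{L}_\mathcal{T}$ collapses four spacetime arguments to one. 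The apparent quasi-particle dependence appearing in the formulas (\ref{linearized_effective_potentia_nu_DBC})--(\ref{linearized_effective_potentia_lambda_DBC}) is merely an artifact of rewriting the local terms in the quasi-particle basis via (\ref{quasi_particles_decomposition_scale_h_DBC}); each $F$-term can be re-expressed as an integral of a product of full fields $\psi^{(\leq h)\pm}_{\bm x}$ against a distributional, quasi-particle-independent kernel. Finally, the dressing step from $\mathcal{V}^{(h)}$ to $\tilde{\mathcal{V}}^{(h)}$ only moves the scalar-coefficient combination $z_h(F_\zeta^{(\leq h)}+F_\alpha^{(\leq h)})$ into the Gaussian measure and rescales the fields by $\sqrt{Z_{h-1}/Z_h}$, both of which act multiplicatively on the kernels. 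The single-scale integration against $\tilde{P}_{Z_{h-1}}(d\psi^{(h)})$ generates contractions governed by the dressed propagator $g^{(h)}(\bm x,\bm y)/Z_{h-1}$, whose quasi-particle decomposition (\ref{gd_sum_of_quasi_particle_propagators}) is internal bookkeeping and does not attach $\omega,\sigma$ labels to the external fields $\psi^{(\leq h-1)}_{\bm x}$. Hence the new kernels $W^{(h-1)}_{2n}$ remain functions of spacetime alone.

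The main obstacle is verifying in detail the point made in the second paragraph of the inductive step: rewriting each of $F_\nu^{(\leq h)}, F_\alpha^{(\leq h)}, F_\zeta^{(\leq h)}, F_\lambda^{(\leq h)}$ and the non-local boundary term $\pi_h\ast F_\varpi^{(\leq h)}$ in the canonical form $\int d\bm x_1\dots d\bm x_{2n}\prod_j\psi^{(\leq h)+}_{\bm x_{2j-1}}\psi^{(\leq h)-}_{\bm x_{2j}} \,W(\bm x_1,\dots,\bm x_{2n})$ with $W$ explicitly free of $\omega,\sigma$. For the quadratic bulk terms this reduces, after inverse sine Fourier transform from $\mathcal{D}'^d_{\Lambda,\beta}$ to real space, to a straightforward identification of an integral kernel diagonal in the Dirichlet basis; for $\pi_h\ast F_\varpi^{(\leq h)}$ the kernel is by construction (\ref{definition_tilde_n}) the function $\pi_h(x,y)\delta(x_0-y_0)$, manifestly quasi-particle independent; for $F_\lambda^{(\leq h)}$ one has to perform the sum over $\bm\omega,\bm\sigma$ of the quasi-particle fields at coincident point using (\ref{quasi_particles_decomposition_scale_h_DBC}) in reverse and check that the resulting delta-supported kernel is independent of the quasi-particle labels of the external fields. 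Once these identifications are in place, the induction closes and the proposition follows.
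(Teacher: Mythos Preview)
Your inductive strategy matches the paper's, but you take a harder route than necessary and it leads you into a real difficulty. The paper's proof never checks the individual pieces of $\mathcal L\hat{\mathcal V}^{(h)}$ and $\mathcal R\mathcal V^{(h)}$ for quasi-particle independence. Instead it observes that, by (\ref{dressing_measure_defn_tildeV_DBC}), the \emph{only} difference between $\mathcal V^{(h)}$ and $\tilde{\mathcal V}^{(h)}$ (and hence, up to field rescaling, $\hat{\mathcal V}^{(h)}$) is the subtraction of $z_h(F_\zeta^{(\leq h)}+F_\alpha^{(\leq h)})$, which is diagonal in the Dirichlet basis and therefore manifestly expressible in terms of the full fields $\hat\psi^{(\leq h)\pm}_{\bm k}$. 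Since the dressed propagator is likewise quasi-particle-independent, the induction closes immediately.

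Your plan, by contrast, requires showing that each of $F_\nu,F_\alpha,F_\zeta,F_\lambda,\pi_h\ast F_\varpi$ \emph{and} each renormalized remainder is separately quasi-particle-independent. The quartic piece is the genuine problem you flag but do not resolve: in $\mathcal L_\mathcal T\mathcal L_\mathcal B$ on $W_4^{(h)}$ the fields are localized at $\bm x_4$ while the phase factors $e^{-ip_F\omega_j x_j}$ stay at the original points, so after summing over $\bm\omega,\bm\sigma$ you do \emph{not} simply reconstruct $\psi^{(\leq h)+}_{\bm x_4}\psi^{(\leq h)+}_{\bm x_4}\psi^{(\leq h)-}_{\bm x_4}\psi^{(\leq h)-}_{\bm x_4}$ (which is zero by anticommutation anyway). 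Thus $l_hF_\lambda^{(\leq h)}$ and the corresponding $\mathcal R_\mathcal T$ remainder are not individually of the required form; only their sum is. The paper's argument sidesteps this entirely: whatever the $\mathcal L/\mathcal R$ split does, the sum is still $\mathcal V^{(h)}$ minus the Dirichlet-diagonal quadratic piece, so quasi-particle independence of $\hat{\mathcal V}^{(h)}$ follows directly from the inductive hypothesis without ever dissecting the local part term by term.
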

\begin{proof}
We can proceed iteratively. By construction, $\mathcal V^{(\leq 0)}$ is independent of quasi-particles and, in general, if we go on integrating with respect to the {\it bare integration} $P_h(\psi^{(h)})$, all the effective interactions $\mathcal V^{(\leq h)}$ are independent of 	quasi-particles.\\ 
So let us suppose that $\hat {\mathcal V}^{(\leq h)}$ is independent of quasi-particles, and let us check that the dressing procedure we just described does not break this structure.
First of all, let us note that the integration $P(\psi^{(\leq h)})$ (\ref{measure_quasi_particles_real_space_DBC}) is associated with the propagator $g^{(\leq h)}$, independent of quasi-particles. Besides, we dress this propagator with the local part (\ref{localization_W^d_DBC}) that, as we already pointed out, is independent of quasi particles, as is the remainder 
$$\left(1-\mathcal L_\mathcal T\right)\sum_{\bm k\in\mathcal D^d_{\Lambda,\beta}}\hat\psi^{(\leq h)+}_{\bm k}\hat\psi^{(\leq h)-}_{\bm k}\hat W^{d(h)}(\bm k),$$
that we left in the effective interaction $\tilde{\mathcal V}^{(h)}$.\\
Finally:
\begin{itemize}
\item the dressed measure $\tilde P(\psi^{(\leq h)})$ is still associated with a propagator independent of quasi-particles,
\item the effective potential $\tilde{\mathcal V}^{(h)}$ is obtained, starting from $\mathcal{V}^{(h)}$, by replacing $$\sum_{\bm k\in\mathcal D^d_{\Lambda,\beta}}\hat\psi^{(\leq h)+}_{\bm k}\hat\psi^{(\leq h)-}_{\bm k}\hat W^{d(h)}(\bm k)\to \left(1-\mathcal L_\mathcal T\right)\sum_{\bm k\in\mathcal D^d_{\Lambda,\beta}}\hat\psi^{(\leq h)+}_{\bm k}\hat\psi^{(\leq h)-}_{\bm k}\hat W^{d(h)}(\bm k),$$ which is also independent of quasi-particles as we just commented.
\end{itemize}
\end{proof}

\paragraph{Renormalized propagator and renormalized effective potential} The propagator associated with the {\it dressed Gaussian Grassman measure} $\tilde P_{Z_{h-1}}$, in the real-space representation, is
\begin{equation}
\begin{split}
\frac{g^{(h)}(\bm x,\bm y)}{Z_{h-1}}=\sum_{\omega=\pm}\frac{\left(e^{-i\omega (x- y)}g_{P,\omega}^{(h)}(\bm x-\bm y)+e^{-i\omega (x+ y)}g_{R,\omega}^{(h)}(\bm x,\bm y)\right)}{Z_{h-1}},
\end{split}
\end{equation}
where, with a slight abuse of notation, we call $g_{\sigma,\omega}^{(h)}$ the analogous of the already defined propagator (\ref{propagator_decomposed_in_quasiparticles_DBC}) where we replace $f_h(\bm k')$ by $\tilde f_h(\bm k')$:
\begin{equation}
\begin{split}
g_{P,\omega}^{(h)}(\bm x,\bm y)=\frac{1}{\beta 2(L+1)}\sum_{\bm k'\in \mathcal D_{\Lambda,\beta}}e^{-i\bm k'\cdot (\bm x-\bm y)}\frac{\tilde f_h(\bm k')}{-ik_0+e(\bm k'+\omega p_F)}.\\
g_{R,\omega}^{(h)}(\bm x,\bm y)=\frac{1}{\beta 2(L+1)}\sum_{\bm k'\in \mathcal D_{\Lambda,\beta}}e^{-ik'(x+y)}e^{-ik'_0(x_0-y_0)}\frac{\tilde f_h(\bm k')}{-ik_0+e(\bm k'+\omega p_F)}.
\end{split}
\end{equation}
Besides, it is important to remember that, if we did not perform the renormalization procedure just described, the effective potential $\mathcal V^{(h)}$ would have the same shape we have already shown in (\ref{effective_potential_decomposed_2-4_kerlels_DBC}). Actually, the renormalization procedure we just described  produces a new sequence of effective potential, that we call {\it renormalized effective potentials}, being of the same form of (\ref{effective_potential_decomposed_2-4_kerlels_DBC}) where we replace $\psi^{(\leq h)}$ by $\sqrt{Z_h}\psi^{(\leq h)}$, and the kernels $W^{(h)}_{2n}$ by what we call the {\it renormalized values of the clusters}. Properly, the effective potentials can be written as
\begin{equation}
\begin{split}
\mathcal V^{(h)}\left(\sqrt{Z_h}\psi^{(\leq h)}\right)=\sum_{n=1}^{\infty}\sum_{\tau\in\mathcal T_{h,n}}\mathcal V^{(h)}\left(\tau, \sqrt{Z_h}\psi^{(\leq h)}\right),\\
\mathcal V^{(h)}\left(\tau,\sqrt{Z_h}\psi^{(\leq h)}\right)=\int d\bm x(I_{v_0})\sum_{P_{v_0}\subset I_{v_0}}\sqrt{Z_h}^{|P_{v_0}|}\tilde \psi^{(\leq h)}(P_{v_0}) W^{(h)}(\tau, P_{v_0},\bm x(I_{v_0}))
\end{split}
\end{equation}
where, in fact, the kernels $W^{(h)}(\tau, P_{v_0},\bm x(I_{v_0}))$ have to be read as the {\it renormalized values of the clusters}, that we discuss in the following subsection.

\subsection{The renormalized tree expansion}
\label{subsection_the_renormalized_tree_expansion}
\begin{figure}
\centering
 \begin{tikzpicture} 
[scale=0.7, transform shape]
\foreach \i in {1,2,3,4,5,6,7,8,9,10,11,12,13,14} {%
\draw  (\i,2.9) -- (\i, 11.2); }
\foreach \j in {1,2,3,4,5} {%
\draw [very thick] (\j,7) -- ++ (1,0);
\fill (\j,7) circle (0.1);
\node at (\j, 6.7) {$\mathcal R$};
\fill (6,7) circle (0.1);
\node at (6, 6.7) {$\mathcal R$};
}
\foreach \j in {0,1,2,3,4,5} {%
\draw [very thick] (6+\j, 7 -\j *0.5) -- +(1,-0.5);
\fill (6+\j,7-\j*0.5) circle (0.1);
\node at (6+\j, 6.7-\j*0.5) {$\mathcal R$};}
\fill (6+6, 7-3) circle (0.1);
\node at (12, 4.7) {$\mathcal L$};
\foreach \j in {0,1,2,3} {%
\draw [very thick] (6+\j, 7 +\j *0.5) -- +(1,+0.5);
\fill (6+\j,7+\j*0.5) circle (0.1);
\node at (6+\j, 6.7+\j*0.5) {$\mathcal R$};}
\fill (6+4, 7+2) circle (0.1);
\node at (10, 8.7) {$\mathcal R$};
\foreach \j in {0,1} {%
\draw [very thick] (10+\j, 9 +\j *0.5) -- +(1,+0.5);
\fill (10+\j,9+\j*0.5) circle (0.1);
\node at (10+\j, 8.7+\j*0.5) {$\mathcal R$};}
\fill (12, 10) circle (0.1);
\node at (12, 9.7) {$\mathcal L$};
\foreach \j in {0,1,2} {%
\draw [very thick] (10+\j, 9 -\j *0.5) -- +(1,-0.5);
\fill (10+\j,9-\j*0.5) circle (0.1);
\node at (10+\j, 8.7-\j*0.5) {$\mathcal R$};}
\fill (13,7.5) circle (0.1);
\node at (13, 7.2) {$\mathcal L$};
\foreach \j in {0,1} {%
\draw [very thick] (12+\j, 8 +\j *0.5) -- +(1,+0.5);
\fill (12+\j,8+\j*0.5) circle (0.1);
\node at (12+\j, 7.7+\j*0.5) {$\mathcal R$};
}
\fill(14,9) circle (0.1);
\node at (14, 8.7) {$\mathcal L$};
\foreach \j in {0} {%
\draw [very thick] (12+\j, 4 +\j *0.5) -- +(1,+0.5);
\fill (12+\j,4+\j*0.5) circle (0.1);
\node at (12+\j, 3.7+\j*0.5) {$\mathcal R$};
}
\foreach \j in {0,1} {%
\draw [very thick] (12+\j, 4 -\j *0.5) -- +(1,-0.5);
\fill (12+\j,4-\j*0.5) circle (0.1);
\node at (12+\j, 3.7-\j*0.5) {$\mathcal R$};}
\fill (14,3) circle (0.1);
\node at (14, 3.3) {$\mathcal L$};
\fill (13,4.5) circle (0.1);
\node at (13, 4.2) {$\mathcal L$};
\draw [very thick] (8,8) -- (9, 7.5);
\fill (9,7.5) circle (0.1);
\node at (9, 7.2) {$\mathcal L$};
\draw [very thick] (11,8.5) -- (12, 9);
\fill (12,9) circle (0.1);
\node at (12, 8.7) {$\mathcal L$};
\draw [very thick] (6,7) -- (11,6);
\fill (11,6) circle (0.1);
\node at (11, 5.7) {$\mathcal R$};
\draw [very thick] (11,6) -- (12, 5);
\fill (12,5) circle (0.1);
\draw [very thick]  (11, 6) -- ++ (2,0);
\fill (13,6) circle (0.1);
\node at (13, 5.7) {$\mathcal L$};
\node at (1,2.7) {$\bm h$};
\node at (2,2.7) {$\bm h+1$};
\node at (3,2.7) {$\bm h+2$};
\foreach \i in {4,5,6,7,8} {%
\node at (\i,2.8) {...};}
\node at (9,2.7) {$\bm h_v$};
\node at (10,2.7) {$\bm h_v+1$};
\foreach \i in {11,12,13} {%
\node at (\i,2.8) {...};}
\node at (14,2.7) {$\bm 1$};
\node at (9,8.8) {$ v$};
\node at (1,7.3) {$ r$};
\node at (2,7.3) {$ v_0$};
\fill (7,6.8) circle (0.1);
\node at (7, 7.8) {$\mathcal R$};
\fill (8,6.6) circle (0.1);
\node at (8, 6.3) {$\mathcal R$};
\fill (9,6.4) circle (0.1);
\node at (9, 6.1) {$\mathcal R$};
\fill (10,6.2) circle (0.1);
\node at (10, 5.9) {$\mathcal R$};
\fill (12, 6) circle (0.1);
\node at (12, 5.7) {$\mathcal R$};
\end{tikzpicture}
\label{figure_renormalized_tree_DBC}
\caption{Example of a renormalized tree, with $n=9$ endpoints at scales $\leq 1$.}
\end{figure}
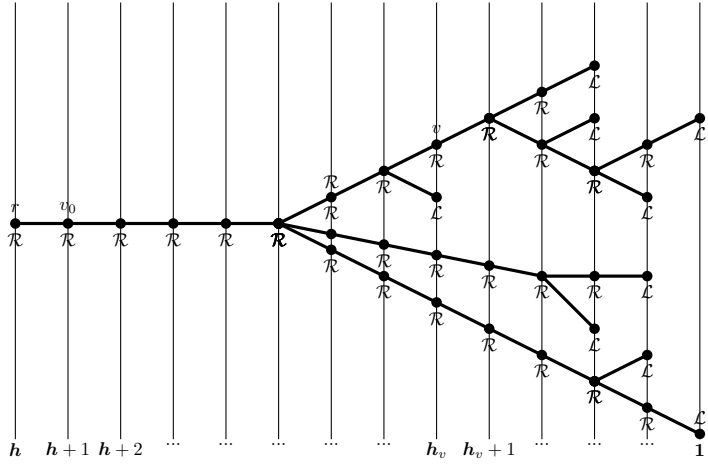
As usual it is convenient to give a graphical representation of the renormalized and localized effective potentials in terms of trees. One starts drawing $\mathcal V^{(-1)}$ as in figure (\ref{figure_effective_potentiale_scale_0}), using the representation for $\mathcal V^{(0)}$ as the sum of renormalized and localized part, and then iterates the procedure on $\mathcal V^{(-1)}$ itself. Finally, one gets the family of renormalized trees given by the same trees we got expanding naively $\mathcal V^{(h)}$ with the following differences:
\begin{itemize}
\item Each vertex $v\neq V_f(\tau)$ is labeled by an operator 
$$\mathcal R\in\left\{\mathcal R_\mathcal T\mathcal L_{\mathcal B}, \tilde{\mathcal R}_\mathcal T\mathcal R_{\mathcal B}, \mathcal R_\mathcal B\right\},$$
 up to the first vertex $v_0$ which can be labeled either by $$\mathcal R\in\left\{\mathcal R_\mathcal T\mathcal L_{\mathcal B}, \tilde{\mathcal R}_\mathcal T\mathcal R_{\mathcal B}, \mathcal R_\mathcal B\right\},\mbox{ or by }\mathcal L\in \left\{ \mathcal L_\mathcal T\mathcal L_\mathcal{B}, \tilde{ \mathcal L}_\mathcal T\mathcal L_\mathcal{B}\right\}.$$
\item There are endpoints $v\in V_f(\tau)$ scale label $h_v\leq 1$ (while, in the non-renormalized expansion, each endpoint lives at scale $h_v=1$). If $v\in V_f(\tau)$ and $h_v<0$, a constribution $\mathcal L \mathcal V^{(h)}$ among (\ref{linearized_effective_potentia_nu_DBC})-(\ref{linearized_effective_potentia_lambda_DBC}) is associated with $v$, while if $h_v=0$ either a contribution $\mathcal L \mathcal V^{(0)}$ or a contribution $\mathcal R\mathcal V^{(0)}$ is associated with $v$. Of course, in this way the endpoints are associated with running coupling constants and functions by the label $v$, meaning that if $r_v=\nu_h$ and $h=h_{v'}$, $v$ is associated with $F^{(\leq h)}_\nu$, if $r_v=\varpi_h$ and $h=h_{v'}$, $v$ is associated with $\left(\varpi \ast F^{(\leq h)}_\varpi\right)$ and so on.
\item the hierarchical structure of the tree, and the  very definition of the operators $\mathcal L_{\mathcal B}$, $\mathcal R_{\mathcal B}$, implies some ordering constraints on the remainder operators labeling the vertices: let us suppose that some vertex $v\in V(\tau)$ is labeled by a renormalization operator $\mathcal R_\mathcal T\mathcal R_{\mathcal B}$ or $\mathcal R_\mathcal B$. So, for each $w\prec v$,
\begin{equation}
\mathcal L_{\mathcal B} \mathcal W^{(h_w)}(\tau_w,P_w,\bm x(P_w))= 0.
\end{equation}
This means that, given $v\in V(\tau)$ labeled by $\mathcal R_\mathcal T\mathcal R_{\mathcal B}$ or $\mathcal R_\mathcal B$, any $w\prec v$ is necessairly labeled by $\mathcal R_0\mathcal R_{\mathcal B}$ or $\mathcal R_\mathcal B$. {\it Vice versa} it is not possible that a vertex labeled by $\mathcal R\in\{\mathcal R_\mathcal T\mathcal L_{\mathcal B}, \mathcal L_{\mathcal B}\}$ is an ancestor of vertices labeled by $\mathcal R\in\{\mathcal R_\mathcal T\mathcal R_B\}$.
\end{itemize}

This inductive definition of the renormalized effective potential is convenient since it is possible to get some estimates on the kernels of the effective potential which we use to show that the multiscale expansion is well behaved. We already pointed out that the multiscale integration induces a natural definition of the so called {\it running coupling constants and functions} $\vec v_k(x,y)$, $h<k\leq 1$ the effective kernels can be thought functions of. Our strategy is to:
\begin{enumerate}
\item first of all, we will consider $\vec v_k$ as an arbitrary sequence we will do some smallness assumptions on, without requiring that they are solution of the beta function (\ref{vec_v_h(x)}),
\item once we know that, under these smallness assumptions, the kernels of the effective potential are well defined, we prove that in particular the {\it running coupling constants and functions} are solutions of the beta function. 
\end{enumerate}

\subsection{Renormalized bounds}

Let us recap what we have done so far and what we want to do. In Section (\ref{section_Non-renormalized expansion and properties of kernels}) we inferred, knowing that the {\it bulk contributions dominate with respect to the remainder contributions}, that the {\it sources of problems} for the convergence of the expansion are the {\it quadratic and quartic kernels}, as in the previous chapter (Theorem (\ref{theorem_bound_of_kernels})). In light of this fact we set up a renormalization procedure, consisting in dressing, scale by scale, the Grassmann integration with a marginal contribution coming from the effective potential. Of course we are left with proving that this procedure improves the bounds of the kernels in such a way that we can perform the sum over all the scales, {\it i.e. we want to prove the following theorem}.

\begin{thm}
\label{theorem_renormalized_bounds_DBC}
Let $\tau\in\mathcal T_{h,n}$ a renormalized tree, $h>h_L$, and $\mathcal W^{(h)}(\tau, P_{v_0}, \bm x(P_{v_0}))$ the respective renormalized kernel. So it holds:
\begin{equation}
\label{bounds_renormalized_kernels_DBC}
\begin{split}
\frac{1}{|\Lambda|\beta}\int d\bm x(P_{v_0})\left|\mathcal W^{(h)}(\tau,P_{v_0},\bm x(P_{v_0}))\right|\leq C^n\gamma^{-h[D(P_{v_0})+z_{v_0}]}\\
\left(\prod_{v\neq V_f(\tau)}\gamma^{-[D(P_{v})+z_{v}](h_v-h_v')}\right)\left(\prod_{v\in V_f(\tau)}r_v\right)
\end{split}
\end{equation}
where $r_v\in \{|\nu_h|,|\lambda_h|, |\delta_h|, \sup_{x\in\Lambda}\int_\Lambda dy|\varpi_h(x,y)|\},$
$z_v=\theta$ if $G_v$ has four external lines, $z_v=1+\theta$ if $G_v$ has two external lines and $C$ is an independent constant.
\end{thm}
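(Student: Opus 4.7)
The plan is to adapt the proof of Theorem 2.4.1 from the translationally-invariant setting of Chapter 2 to the present boundary setting, accounting for the new ingredients in the localization operator (the splitting $\mathcal{L} = \mathcal{L}_\mathcal{T}\mathcal{L}_\mathcal{B} + \tilde{\mathcal{L}}_\mathcal{T}\mathcal{R}_\mathcal{B}$) and for the decomposition $g^{(h)} = g_P^{(h)} + g_R^{(h)}$ of each single-scale propagator. The overall scheme will follow the proof of Theorem 2.2.6, namely: expand each truncated expectation via the determinant formula (\ref{expectation_truncated_determinants}), select an anchored tree $T_v$ at each non-trivial vertex, bound the product of tree propagators by their $L^1$ norm and the determinant factor by Gram--Hadamard (both available here thanks to the Gram representations in Appendix \ref{appendix_gram_representation}, applied scale by scale to both $g_{P,\omega}^{(h)}$ and $g_{R,\omega}^{(h)}$). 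This yields a starting dimensional estimate analogous to (\ref{kernels_bound}), in which each half-line carries an additional label $\sigma(f) \in \{P,R\}$.

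The next step is to partition the trees according to whether they are ``purely bulk'' (every endpoint is of type $\nu,\delta,\lambda$ and every $\sigma(f)$ equals $P$) or ``of remainder type'' (at least one $\varpi$-endpoint, or at least one field label with $\sigma(f)=R$). For purely bulk trees the renormalization operator acts as $\mathcal{R}_\mathcal{T}\mathcal{L}_\mathcal{B}$ and, by construction, $\mathcal{L}_\mathcal{B}$ is the identity on such contributions; the standard PBC renormalization then produces the stronger scale gain $\gamma^{-z_v(h_v-h_{v'})}$ with $z_v \in \{1,2\}$, which is a fortiori bounded by the required factor with $z_v \in \{\theta, 1+\theta\}$, $\theta<1$. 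For remainder trees, the action of $\mathcal{R}_\mathcal{B}^{(1)}$ and $\mathcal{R}_\mathcal{B}^{(2)}$ forces at least one graph element inside the cluster $G_{v_0}$ to be either a $g_R^{(k)}$ or a $\pi$-kernel; by the anchorage bound (\ref{1_norm_off_diagonal_DBC}) and the hypothesis $\sum_y |\pi(x,y)| \le C_\theta [(1+|x|)^{-\theta} + (1+|L-x|)^{-\theta}]$, such an element provides a dimensional improvement of order $\gamma^{h_L - h_*}$ where $h_*$ is the scale on which it is integrated. We will then interpolate this gain as $\gamma^{h_L-h_*} = \gamma^{(1-\theta)(h_L-h_*)}\gamma^{\theta(h_L-h_*)}$, spend a fraction $(1-\theta)$ to absorb the difference between the full bulk gain $z_v = 1$ (resp.\ $2$) and the claimed gain $z_v = \theta$ (resp.\ $1+\theta$), and leave the remaining $\theta(h_L-h_*) \le 0$ factor as a harmless extra suppression. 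The distribution of this single gain along the chain of ancestors of the vertex carrying the non-translation-invariant element is done telescopically, exactly as in the short-memory argument recalled at the end of Subsection 2.4.3.

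Once the per-tree bound is established, the sum over the anchored tree structure $T$, over the field-assignments $\{P_v\}$ and over the scale labels is controlled by the standard combinatorial estimate (\ref{bound_product_z_h/z_h-1_gamma}), using the assumed a priori bound $|Z_h/Z_{h-1}| \le e^{c_1 \epsilon^2}$ together with the smallness of the running coupling constants and functions. The extra label $\sigma(f) \in \{P,R\}$ multiplies the number of trees by $2^{|I_{v_0}|}$, which is reabsorbed into the constant $C^n$.

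The main technical obstacle, which will be addressed through a series of Lemmas in Subsection \ref{section_renormalization_group_DBC}, is the careful bookkeeping in the remainder case: one must show that the anchorage gain $\gamma^{h_L-h_*}$ available from a single $g_R^{(h_*)}$ or $\varpi$-endpoint can actually be \emph{transferred} to every ancestor vertex $v$ of that element, in the form of a per-scale factor $\gamma^{-(1-\theta)(h_v-h_{v'})}$, without spoiling the Gram representation of the determinant (and hence the combinatorial gain from the Gram--Hadamard bound). A related subtlety is that when the remainder element sits inside the determinant $G^{h_v,T_v}$ rather than on a tree line, the gain must be extracted through an explicit interpolation of the entries of $G^{h_v,T_v}$, which in turn requires showing that the interpolated propagator still admits a bounded Gram representation; this is the content of the lemma generalizing Lemma \ref{lemma_gram_hadamard_scalar_product_off_diagonal_propagators} from a single to an arbitrary number of scales.
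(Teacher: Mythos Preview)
Your broad outline is right: separate purely-bulk trees from trees containing at least one non-translation-invariant element, and for the latter extract a dimensional gain that is then propagated to every $\mathcal R_{\mathcal B}$-labeled ancestor. However, the mechanism you sketch for that propagation is not the one the paper uses, and as written it contains a confusion. You propose to take the single anchorage factor $\gamma^{h_L-h_*}$ from the $L^1$ bound on $g_R^{(h_*)}$, split it as $\gamma^{(1-\theta)(h_L-h_*)}\gamma^{\theta(h_L-h_*)}$, and ``spend a fraction $1-\theta$ to absorb the difference between the full bulk gain $z_v\in\{1,2\}$ and the claimed $z_v\in\{\theta,1+\theta\}$''. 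But on a $\mathcal R_{\mathcal B}$ cluster there is no full bulk gain to begin with: the only Taylor operator acting is $\tilde{\mathcal R}_{\mathcal T}$, which gives $1$ on two-legged and $0$ on four-legged clusters. The missing piece at every such vertex is exactly $\theta$, not $1-\theta$, and it has to be \emph{produced}, not ``absorbed''. A single global factor $\gamma^{h_L-h_*}$ telescoped along the ancestry is not enough to do this bookkeeping cleanly, especially when several branches carry remainder elements at different scales.

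The paper's device is more local. Each $g_R^{(k)}$ (via the pointwise bound $|g_R^{(k)}(\bm x,\bm y)|\le \|g^{(k)}\|_\infty\,\rho_k(x)$) and each $\varpi$-endpoint is converted into a \emph{weight function} $w_k(x)\in\{\rho_k(x),\varpi_k(x)\}$ attached to a vertex of the spanning tree. Two short lemmata then drive the argument: a transfer lemma showing $w_k$ can be moved to any other vertex inside the same cluster at cost $C^{n_v}$ (by expanding $|x|^\theta$ along the spanning tree), and an integration lemma showing $\int d\bm y\, w_h(y)\,g_P^{(\bar h)}(\bm y-\bm x)\le C\gamma^{-\bar h}\gamma^{\theta(\bar h-h)}w_{\bar h}(x)$. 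The second lemma is the engine: integrating against the outgoing propagator both produces the per-vertex gain $\gamma^{\theta(\bar h-h)}$ \emph{and} hands the weight function down to the next scale, so the process iterates all the way to the root. This is what replaces your telescoping of a single scalar factor.

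On the determinant issue: when the only $g_R$ lines sit inside $G^{h_v,T_v}$ rather than on $T_v$, the paper does not interpolate. It performs a cofactor expansion of $\det G^{h_v,T_v}$ along a row of $g_R$ entries, pulls out one $g_R$ explicitly, bounds it by $\|g^{(h_v)}\|_\infty\rho_{h_v}(x)$, and applies Gram--Hadamard to the remaining minor. This costs a factor equal to the matrix dimension at each such vertex, and a separate combinatorial estimate shows $\prod_{v\in V_{\mathcal B}(\tau)} C_v\le c\,e^n$. Your proposed ``interpolation preserving the Gram structure'' is a different (and unspecified) route; if you pursue it you would need to exhibit the interpolating Gram vectors, which is not obvious.
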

It is convenient to start with looking at the {\it explicit formula} of the renormalized effective potentials:
\begin{equation}
\begin{split}
\mathcal V^{(h)}\left(\sqrt{Z_h}\psi^{(\leq h)}\right)=\sum_{n=1}^{\infty}\sum_{\tau\in\mathcal T_{h,n}}\mathcal V^{(h)}\left(\tau, \sqrt{Z_h}\psi^{(\leq h)}\right),
\end{split}
\end{equation}
where, if $v_0$ is the first vertex of $\tau$ and $\tau_1,\dots, \tau_{s_{v_0}}$ are the subtrees of $\tau$ with root $v_0$, and $\mathcal V^{(h)}$ is inductively defined as
\begin{equation}
\begin{split}
\mathcal V^{(h)}(\tau,\sqrt{Z_h}\psi^{(\leq h)})=\\
\frac{(-1)^{s_{v_0}+1}}{s_{v_0}!}\mathcal E_{h+1}^T\left(\mathcal V^{(h+1)}(\tau_1, \sqrt{Z_h}\psi^{(\leq h+1)});\dots ; \mathcal V^{(h+1)}(\tau_1, \sqrt{Z_h}\psi^{(\leq h+1)})\right)
\end{split}
\end{equation}
where
\begin{equation}
\mathcal E_{h_v}^T\left(\tilde \psi(P_{v_1}), \dots, \tilde \psi(P_{v_{s_0}})\right)=\sum_{T_v}\alpha_{T_v}\prod_{\ell \in T_v}g^{(h_\ell)}_{\ell}\int dP_{T_{v}}(\bm t) \det G^{T_v}(\bm t)
\end{equation}
where $T_v$ is a set of lines forming an {\it anchored tree} between the clusters of points $P_{v_1}, \dots, P_{v_{s_v}}$, $\alpha_T$ is a sign, and $g^{(h_\ell)}_{\ell}=g_{\sigma(\ell),\omega(\ell)}^{(h_\ell)}(\bm x(f_\ell),\bm y(f'_{\ell}))$, so:
\begin{equation}
\begin{split}
\mathcal R\mathcal V^{(h)}\left(\tau,\sqrt{Z_h}\psi^{(\leq h)}\right)=\\=\int d\bm x(I_{v_0})\sum_{P_{v_0}\subset I_{v_0}}\sum_{T\in\bm T}\sum_{\alpha\in A_T}\sqrt{Z_h}^{|P_{v_0}|}\left[\prod_{f\in P_{v_0}}\partial^{b_\alpha(f)}_{j_\alpha(f)} \psi^{(\leq h)\epsilon(f)}_{\bm x(f)}(P_{v_0})\right] \mathcal R_{\alpha}W^{(h)}(\tau, P_{v_0},\bm x(I_{v_0})),
\end{split}
\end{equation}
where $b_\alpha(f)\in\{0,1,2\}$, $j_\alpha(f)\in\{0,1\}$, and $A_T$ is a set of indices that formally allow to distinguish the different terms produced by the non trivial $\mathcal R\in\{\mathcal R_\mathcal T\mathcal L_\mathcal B, \tilde{\mathcal R}_\mathcal T\mathcal R_B,\mathcal R_\mathcal B\}$.
\paragraph{Explicit expression of renormalized kernels}

\begin{rem}When $\mathcal R=\mathcal R_\mathcal T\mathcal L_\mathcal B$, thanks to the very structure of the renormalized trees we explained in Subsection (\ref{subsection_the_renormalized_tree_expansion}), we are in the same case of Theorem (\ref{theorem_renormalized_bounds}), there is nothing new to comment.
So we are left with controlling the cases $\mathcal R_\alpha\in\{\tilde{\mathcal R}_\mathcal T\mathcal R^{(1)}_\mathcal B, \tilde{\mathcal R}_\mathcal T\mathcal R^{(2)}_	\mathcal B, \mathcal R^{(1)}_\mathcal B, \mathcal R^{(2)}_\mathcal B\}$.
\end{rem}
Let us comment properly the multiscale structure of $\mathcal R_\alpha W^{(h)}$ in these four cases.\\
In general, using Remarks (\ref{remark_R_B_decomposition_R1_R2}) and (\ref{remark_commutation_renormalization operators}), we can re-write
\begin{equation}
\begin{split}
\mathcal R_\alpha W^{(h)}(\tau, P_{v_0},\bm x(I_{v_0}))=\\=\left[\prod_{v\notin V_f(\tau)}\left(\frac{Z_{h_v}}{Z_{h_{v}-1}}\right)^{\frac{|P_v|}{2}}\right]
\mathcal R^{(\tau)}_{v_0,\alpha,\mathcal B}\left( \left\{\prod_{v\notin V_f(\tau)}\frac{1}{s_v!}\int dP_{T_v}(\bm t_v) \left( \det G_\alpha^{h_v, T_v}(\bm t_v)\right)\cdot\right.\right.\\\left.
\left.\cdot \left[\prod_{\ell \in T_v}(\bm x_\ell- \bm y_\ell)^{b_\alpha(\ell)}_{j_\alpha(\ell)}\partial^{q_\alpha(f_\ell^1)}_{j_\alpha(f_\ell^1)}\partial^{q_\alpha(f_\ell^2)}_{j_\alpha(f_\ell^2)} g^{(h_\ell)}_{\ell}\right]\right\}\left[\prod_{i=1}^{n}(\bm x^i-\bm y^i)^{b_\alpha(v^*_i)}_{j_\alpha(v^*_i)}K^{(h_i)}_{{v^*_{i}}}(\bm x_{v^*_i}))\right]\right)
\end{split}
\label{renormalized_kernels_explicit_expression}
\end{equation}
where 
\begin{itemize}
\item we used the commutation of the renormalization operators (Remark (\ref{remark_commutation_renormalization operators})), acting first with the {\it Taylor renormalization operators}, so that  $b_\alpha(\ell), b_\alpha(v_i^*), q_\alpha(\ell), q_\alpha(v_i^*)\in\{1,2\}$, and the fact that there are as many derivatives as {\it "zeroes"} is technically expressed by the constraint $\sum_{\ell, i}\left(b_\alpha(\ell)+ b_\alpha(v_i^*)- q_\alpha(f_\ell^{(1)})- q_\alpha(f_\ell^{(2)})\right)=0$, while $(\bm x_\ell- \bm y_\ell)^{b_\alpha(\ell)}_{j_\alpha(\ell)}$ are the zeroes we introduced in the renormalization procedure definition, where $j_\alpha\in \{0,1\}$ denotes the component of the vector, $K^{(h_i)}_{{v^*_{i}}}$ is one of the terms of the local effective potential $\mathcal L \mathcal V^{(h_i)}$, and $G_\alpha^{h_v,T_v}$ is the matrix whose entries are
\begin{equation}
\begin{split}
G^{h_v,T_v}_{\alpha, ij,i'j'}= t_{v,i,i'}\partial_{j_\alpha(f_{ij}^1)}^{q_\alpha(f_{ij}^1)}\partial_{j_\alpha(f_{ij}^2)}^{q_\alpha(f_{ij}^2)}g_{\sigma_\ell,\omega_\ell}^{h_v}(\bm x_{ij},\bm y_{i'j'}).
\end{split}
\label{matrix_G_h_v_t_v}
\end{equation}
\item $R_{v_0,\alpha,\mathcal B}^{(\tau)}$ is a formal way to represent the {\it bulk renormalization} operations, so it has to be interpreted as iteratively defined in a way dependent on the structure of the renormalized tree, as discussed in defining the {\it localization and renormalization operators}, Subsection (\ref{subsection_the_renormalized_tree_expansion}).
\end{itemize}
In particular, $R_{v_0,\alpha,\mathcal B}^{(\tau)}$ has to be thought of as a {\it composition} of $\mathcal R^{(i)}_\mathcal B$, $i=1,2$ operators acting on the vertices of $V(\tau)$ in the following way:
\begin{itemize}
\item {\bf Case $\mathcal R^{(1)}_\mathcal B$} As we commented in Remark (\ref{remark_R_B_decomposition_R1_R2}), $\mathcal R^{(1)}_\mathcal B$ does not modify the trees, but it selects the trees having at least a non-translation-invariant element (either $g_R$ propagators or $\varpi$-endpoints). 
\item {\bf Case $\mathcal R^{(2)}_\mathcal B$} As we commented in (\ref{remark_R_B_decomposition_R1_R2}), $\mathcal R^{(2)}_\mathcal B$ modifies the trees that do not contain any non-translation-invariant element. So, in the right hand side of (\ref{renormalized_kernels_explicit_expression_first_version}), the product within the last brackets does not include, by construction, $\varpi$-type endpoints;  $\mathcal R_\mathcal B^{(2)}$, by modifying the coordinates associated with the endpoints, modifies {\it  in general} the coordinates of some propagators and the coordinates involved in the determinant.
\end{itemize}

\paragraph{Definition of the weight functions}

It is worth pointing out that,  for the porpouses of the estimates that we are looking for, we can associate the {\it decay properties} related to non-local counterterms $\gamma^h \varpi_h(x,y)$  and remainder propagators $g_{R,\omega}^{(h)}(\bm x,\bm y)$ with vertices instead of with lines of the spanning tree, and it will be useful during the proof. Indeed:
\begin{itemize}
\item the inductive hypoyhesis on $\varpi_h(\cdot, \cdot)$
\begin{equation}
\int dy \left|\varpi_h(x,y)\right|\leq  |\lambda| \frac{C_\theta}{1+\gamma^{\theta h}|x|^\theta}, \hspace{3mm} 0<\theta \leq 1,
\label{varpi_endpoints_decay}
\end{equation}
symmetrically in $\bm x \leftrightarrow \bm y$. Let us define 
\begin{equation}
||\varpi_h||^{(\theta)}_{\infty,1}=\sup_{x\in\Lambda}(1+\gamma^h|x|)^\theta\int dy \left|\varpi_h(x,y)\right|.
\end{equation}

\begin{figure}
\centering
\begin{tikzpicture}
 [thick,decoration={
    markings,
    mark=at position 0.5 with {\arrow{>}}}] 
\fill (0,1) circle (0.06);
\node at (0,0.6) {\bf x};
\fill (2,1) circle (0.06);
\node at (2,0.6) {\bf y};
\fill (7,1) circle (0.06);
\node at (7,0.6) {\bf x};
\node at (7,1.4) {$\varpi_h(x)$};
\node [regular polygon, regular polygon sides=4,
        minimum size=3mm, fill] at (7,1) {};
\draw [postaction={decorate}] (-1,1) -- ++ (1,0);
\draw [-,decorate, decoration={coil, aspect=2}] (0,1) --++ (2,0);
\draw [postaction={decorate}] (2,1) -- ++ (1,0);
\draw [postaction={decorate}] (6,1) -- ++ (1,0);
\draw [postaction={decorate}] (7,1) -- ++ (1,0);
\draw [->, very thick] (3.5,1) -- ++ (2,0);
\end{tikzpicture}
\centering
\begin{tikzpicture}
 [thick,decoration={
    markings,
    mark=at position 0.5 with {\arrow{>}}}] 
\fill (0,1) circle (0.06);
\node at (0,0.6) {\bf x};
\fill (2,1) circle (0.06);
\node at (2,0.6) {\bf y};
\fill (8,1) circle (0.06);
\node at (6,0.6) {\bf x};
\node at (8,0.6) {\bf y};
\node at (6,1.4) {$\rho_h(x)$};
\node [regular polygon, regular polygon sides=4,
        minimum size=3mm, fill] at (6,1) {};
\draw [very thick, dashed] (0,1) --++ (2,0);
\draw [very thick] (6,1) -- ++ (2,0);
\draw [->, very thick] (3.5,1) -- ++ (2,0);
\end{tikzpicture}
\caption{First line: localization of a $\varpi$-type endpoint: $\varpi_h(x)=\int_{\Lambda}dy|\varpi_h(x,y)|$. In the second line we replace a $R$-labeled propapagot (dashed line) by a $P$-labeled propagator, dressing one of the two vertices with a weight function $	\rho$.}
\label{figure_localization_varpi_endpoints}
\end{figure}
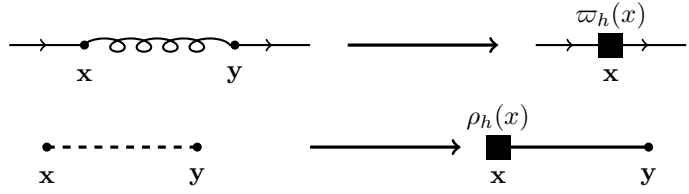

This means that, in studying a {\it renormalized} or {\it linearized} tree $\tau \in \mathcal T_{h,n}$ (meaning that the endpoints can live at any scale $h< k \leq 0$, and they represent the terms appearing in the linearized part of the effective potential $\mathcal L \mathcal V^{(k)}$ (\ref{localized_effective_potential_DBC})), if we call $n_{\varpi}$ the number of $\varpi$-type endpoints (which are the only non-local endpoints), we can reduce the number of integration points by integrating $n_\varpi$ integration points as in (\ref{varpi_endpoints_decay}), see Figure (\ref{figure_localization_varpi_endpoints}): in this way, for the purpose of an upper bound, we are replacing a non local graph element representing $\gamma^h \varpi_h(x,y)$ by a local graph element, {\it i.e.} a vertex, and we associate with this vertex a {\it weight} that, with a slight abuse of notation and by suitably fixing a $\bar \theta$ in formula (\ref{varpi_endpoints_decay}), we call $\gamma^h \varpi_h(x)$, defined as 
$$ \varpi_h(x)= |\lambda|\frac{C_{\bar \theta}}{1+\gamma^{\bar \theta h}|x|^{\bar{\theta}}},$$
\item following the same idea of associating a {\it weight} to vertices, let us recall the result of Corollary (\ref{corollary_norms_propagators_DBC})
\begin{equation}
\frac{1}{\beta}\left| \int d\bm x d\bm y g_{R,\omega}(\bm x,\bm y)\right| \leq \gamma^{- 2h},
\end{equation}
so, for the purpose of a dimensional estimate, we can replace the propagator associated with the line $(\bm x,\bm y)$ by a translation invariant propagator $g_P$ provided we {\it dress} one of the two vertices linked by the propagator with a proper weight: let us recall the definition of $\rho_h^{(N)}$  (\ref{definition_rho_h^N}), for each $N=1,2,\dots$: 
$$\rho^{(N)}_h(x)= \frac{C_N}{1+\left(\gamma^h |x|\right)^N},$$ where $C_N$ is the same as Corollary (\ref{corollary_norms_propagators_DBC}) (again, we can arbitrarily choose $x$ or $y$), indeed:
\begin{equation}
\begin{split}
\frac{1}{\beta}\left| \int d\bm x d\bm y g^{(h)}_{R,\omega}(\bm x,\bm y)\right| \leq C \gamma^{-2h},\\
\frac{1}{\beta}\left| \int d\bm x d\bm y g^{(h)}_{P,\omega}(\bm x-\bm y)\rho_h(x)\right|= \left|\int d\bm y' g_{P,\omega}^{(h)}(\bm y')\int dx \rho_h(x) \right| \leq C \gamma^{- 2h},
\end{split}
\label{reminder_propagators_as_periodic_and_weight}
\end{equation}
if we call $\rho_h(\cdot):=\rho_h^{(\bar N)}$ with $\bar N$ suitably fixed {\it a priori} (see Figure \ref{figure_localization_varpi_endpoints}).
\item we will use, during the proof, the already commented estimate (\ref{bound_g_R_g_infty_rho}), that we recall here:
$$|g_R^{(h)}(\bm x,\bm y)|\leq ||g_R^{(h)}||_\infty \rho^{(N)}_h(x),\hspace{5mm}\forall \hspace{3mm} N=1,2,\dots$$
\end{itemize}
\begin{rem}
From now on we use the symbol $\rho_h(\cdot)$ to denote $\rho^{(\bar N)}_h$ for some suitably fixed $\bar N$, so in particular we can think:
\begin{equation}
\rho_h(x)\leq \frac{C_{\bar N}}{\left(1+\gamma^h |x|\right)^{\bar N}}.
\end{equation}
We stress that for any $\theta\in (0,1)$ there exists a constant $C_\theta>0$ such that
\begin{equation}
\sup_x \left| \rho_h(x) \right|\leq C_\theta,\hspace{3mm} \sup_x\left| \varpi_h(x)\right|\leq C_\theta |\lambda|,
\end{equation}
meaning that, not taking advantage of the decay properties of $\rho_h(\cdot)$ and $\varpi_h(\cdot)$ we would get, for the clusters containing at least one element which breaks translation invariance, the same bound as the dominant (translation invariant) part.
\end{rem}
Let us unify the notation by introducing the weight function

\begin{equation}
w_h(h)\in\{\rho_h(x), \varpi_h(x)\}.
\label{definition_weight_functions}
\end{equation}

\paragraph{"Simplification" of the tree and definition of $V_\mathcal B(\tau)$} We can {\it simplify} the hierarchical renormalization structure of the tree $\tau$ by suitably using Corollary (\ref{corollary_norms_propagators_DBC}) in bounding the norm $||\cdot||_1$ of the renormalized kernels:
\begin{equation}
\frac{1}{|\Lambda|\beta}\int d\bm x(P_{v_0})\left|\mathcal W^{(h)}\left(\tau,P_{v_0}, \bm x(P_{v_0})\right)\right|.
\end{equation}

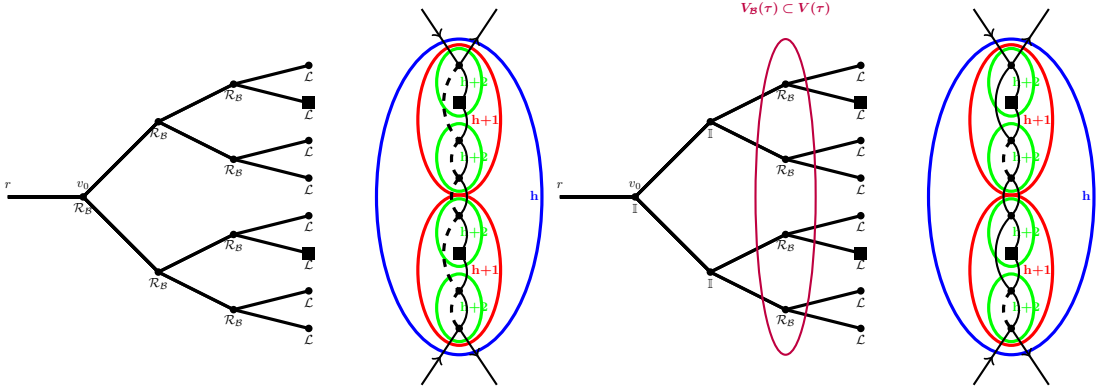
\begin{figure}
\begin{tikzpicture}
[scale=0.5,transform shape,thick,decoration={
    markings,
    mark=at position 0.5 with {\arrow{>}}}] 
\node at (1,7.3) {$r$};
\node at (3,7.3) {$v_0$};
\foreach \i in {-1,1} {%
\foreach \j in {-1,1}{%
\foreach \l in {-1,1}{%
\draw [very thick] (1,7) -- ++ (2,0) -- ++ (2,\i*2) -- ++ (2,\j*1) -- ++ (2,\l*0.5);
\fill (1,7) ++ (2,0) ++ (2,\i*2) ++ (2,\j*1) ++ (2,\l*0.5) circle (0.1);
\node at  (1+2+2+2+2,7+\i*2+\j+\l*0.5-0.3)   {$\mathcal L$};
}
\fill (1,7) ++ (2,0) ++ (2,\i*2) ++ (2,\j*1)  circle (0.1);
\node at (1+2+2+2, 7+\i*2+\j - 0.3){$\mathcal R_\mathcal B$};
}
\fill (1,7) ++ (2,0) ++ (2,\i*2) circle (0.1);
\node at  (1+2+2,7+\i*2-0.3)   {$\mathcal R_\mathcal B$};
}
\fill (1,7) ++ (2,0) circle (0.1);
\node at (1+2,7-0.3)  {$\mathcal R_\mathcal B$};
\node [regular polygon, regular polygon sides=4,
        minimum size=3mm, fill] at (9, 9.5) {};
        \node [regular polygon, regular polygon sides=4,
        minimum size=3mm, fill] at (9, 5.5) {};
  \foreach \i in {-1,1} {%
\foreach \j in {-1,1}{%
\foreach \l in {-1,1}{%
\fill (1,7) ++ (4,0) ++ (2,0) ++ (2,\i*2) ++ (2,\j*1) ++ (2,\l*0.5) circle (0.1);
}}}
\node [regular polygon, regular polygon sides=4,
        minimum size=3mm, fill] at (13, 9.5) {};
        \node [regular polygon, regular polygon sides=4,
        minimum size=3mm, fill] at (13, 5.5) {};
\foreach \i in {-1,1}{%
\foreach \j in {-1,1}{%
\draw [green, very thick] (1,7.05) ++ (4,0) ++ (2,0) ++ (2,0) ++ (2,\i*2) ++ (2,\j*1) ellipse (0.6 and 0.9);
\node at (1+4+2+2+2+2+0.4, 7.05+\i*2+\j) {\textcolor{green}{\bf h+2}};
}
\draw [red, very thick] (1,7.05) ++ (4,0) ++ (4,0) ++ (2,0) ++ (2,\i*2) ellipse (1.1 and 2);
\node at (1+4+2+2+2+2+0.7, 7.05+\i*2) {\textcolor{red}{\bf h+1}};
}
\draw [blue, very thick] (1,7) ++ (10,0) ++ (2,0) ellipse (2.2 and 4.2);
\node at (1+4+2+2+2+2+2, 7.05) {\textcolor{blue}{\bf h}};
\foreach \i in {0,1,2,3,4,5,6} {%
\draw (13,10.5 - \i) to [out=-45, in=45, looseness=1] (13,10.5 -1 -\i);}
\draw [very thick, dashed] (13,10.5)to [out=225, in=-225, looseness=1] (13,8.5);
\draw [very thick, dashed] (13,6.5)to [out=225, in=-225, looseness=1] (13,4.5);
\draw [very thick, dashed](13,7.5)to [out=225, in=-225, looseness=1] (13,6.5);
\draw [very thick, dashed] (13,8.5)to [out=225, in=-225, looseness=1] (13,7.5);
\draw [very thick, dashed] (13,4.5)to [out=225, in=-225, looseness=1] (13,3.5);
\draw [postaction={decorate}] (12,12) -- (13,10.5);
\draw [postaction={decorate}] (13,10.5) -- (14,12);
\draw [postaction={decorate}] (12,2) -- (13,3.5);
\draw [postaction={decorate}] (13,3.5) -- (14,2);
\end{tikzpicture}
\begin{tikzpicture}
[scale=0.5,transform shape, thick,decoration={
    markings,
    mark=at position 0.5 with {\arrow{>}}}] 
\node at (1,7.3) {$r$};
\node at (3,7.3) {$v_0$};
\foreach \i in {-1,1} {%
\foreach \j in {-1,1}{%
\foreach \l in {-1,1}{%
\draw [very thick] (1,7) -- ++ (2,0) -- ++ (2,\i*2) -- ++ (2,\j*1) -- ++ (2,\l*0.5);
\fill (1,7) ++ (2,0) ++ (2,\i*2) ++ (2,\j*1) ++ (2,\l*0.5) circle (0.1);
\node at  (1+2+2+2+2,7+\i*2+\j+\l*0.5-0.3)   {$\mathcal L$};
}
\fill (1,7) ++ (2,0) ++ (2,\i*2) ++ (2,\j*1)  circle (0.1);
\node at (1+2+2+2, 7+\i*2+\j - 0.3){$\mathcal R_\mathcal B$};
}
\fill (1,7) ++ (2,0) ++ (2,\i*2) circle (0.1);
\node at  (1+2+2,7+\i*2-0.3)   {$\mathbb I$};
}
\fill (1,7) ++ (2,0) circle (0.1);
\node at (1+2,7-0.3)  {$\mathbb I$};
\node [regular polygon, regular polygon sides=4,
        minimum size=3mm, fill] at (9, 9.5) {};
        \node [regular polygon, regular polygon sides=4,
        minimum size=3mm, fill] at (9, 5.5) {};
  \foreach \i in {-1,1} {%
\foreach \j in {-1,1}{%
\foreach \l in {-1,1}{%
\fill (1,7) ++ (4,0) ++ (2,0) ++ (2,\i*2) ++ (2,\j*1) ++ (2,\l*0.5) circle (0.1);
}}}
\node [regular polygon, regular polygon sides=4,
        minimum size=3mm, fill] at (13, 9.5) {};
        \node [regular polygon, regular polygon sides=4,
        minimum size=3mm, fill] at (13, 5.5) {};
\foreach \i in {-1,1}{%
\foreach \j in {-1,1}{%
\draw [green, very thick] (1,7.05) ++ (4,0) ++ (2,0) ++ (2,0) ++ (2,\i*2) ++ (2,\j*1) ellipse (0.6 and 0.9);
\node at (1+4+2+2+2+2+0.4, 7.05+\i*2+\j) {\textcolor{green}{\bf h+2}};
}
\draw [red, very thick] (1,7.05) ++ (4,0) ++ (4,0) ++ (2,0) ++ (2,\i*2) ellipse (1.1 and 2);
\node at (1+4+2+2+2+2+0.7, 7.05+\i*2) {\textcolor{red}{\bf h+1}};
}
\draw [blue, very thick] (1,7) ++ (10,0) ++ (2,0) ellipse (2.2 and 4.2);
\node at (1+4+2+2+2+2+2, 7.05) {\textcolor{blue}{\bf h}};
\foreach \i in {0,1,2,3,4,5,6} {%
\draw (13,10.5 - \i) to [out=-45, in=45, looseness=1] (13,10.5 -1 -\i);}
\draw  (13,10.5)to [out=225, in=-225, looseness=1] (13,8.5);
\draw (13,6.5)to [out=225, in=-225, looseness=1] (13,4.5);
\draw (13,7.5)to [out=225, in=-225, looseness=1] (13,6.5);
\draw [very thick, dashed] (13,8.5)to [out=225, in=-225, looseness=1] (13,7.5);
\draw [very thick, dashed] (13,4.5)to [out=225, in=-225, looseness=1] (13,3.5);
\draw [postaction={decorate}] (12,12) -- (13,10.5);
\draw [postaction={decorate}] (13,10.5) -- (14,12);
\draw [postaction={decorate}] (12,2) -- (13,3.5);
\draw [postaction={decorate}] (13,3.5) -- (14,2);
\draw [purple] (7,7) ellipse (0.8 and 4.2);
\node at (7,12) {\textcolor{purple}{$\bm{V_\mathcal B(\tau)\subset V(\tau)}$}};
\end{tikzpicture}
\caption{Simplification process of the hierarchical structure of the Renormalization Operators. The first two figures on the left represent the original renormalized tree we consider and the respective {\it Feynman diagrams structure}; on the right there is the "simplified tree"and the respective {\it Feynman diagrams structure}. In the {\it Feynman diagrams structure}, the black lines are the $P$-type propagators, the dashed lines are the $R-$type propagators, the black dots are $\lambda$-type endpoints and the black squares are $\varpi$-type endpoints. In sake of simplicity, each of the clusters has $4$ external legs and contains one non translation invariant element which is not contained in any sublcusters, so in this particular graph there is no need of Taylor renormalization. After the simplification procedure, only the "{\it innermost}" non-translation-invariant elements survive, and the main goal of this section will be to show that they are enough to renormalizethe whole tree. Finally, the vertices of the tree surrounded by the purple line belong to the set $V_\mathcal B(\mathcal \tau)\subset V(\tau)$.}
\label{figure_simplification_figure_after}
\end{figure}

 First of all using Remark (\ref{remark_commutation_renormalization operators}), as in (\ref{renormalized_kernels_explicit_expression}), we imagine to have already {\it applied} the {\it Taylor renormalization operators} $\mathcal R_\mathcal T, \tilde{\mathcal R}_{\mathcal T}$, so in Figure (\ref{figure_simplification_figure_after}) we drop in sake of simplicity the symbols $\mathcal R_\mathcal T, \tilde{\mathcal R}_\mathcal T$.\\ 
Starting from each leaf of the Gallavotti-Nicolo ({\it "Taylor renoemalized"}) tree, we descend the tree toward the root until we meet for the first time a vertex $v\in V(\tau)$ labeled by $\mathcal R_{\mathcal B}\in\{\mathcal R^{(1)}_\mathcal B, \mathcal R^{(2)}_	\mathcal B\}$ : from this point on all the ancestors $w\prec v$ are labeled by $\mathcal R_{\mathcal B}$, but there are two possibilities: 
\begin{itemize}
\item either there are {\it neither} remainder propagators {\it nor} $\varpi$-type endpoints at scale $h_w$ \footnote{In sake of clarity we repeat that, by {\it having a propagator (resp. an endpoint) at scale $h_w$} we properly mean that the propagator (resp. the endpoint) is an element of the cluster $G_w$, but it is not an element of any of the subclusters $G_{\bar w}\subset G_{w}$, where $\bar w$ is a descendent of $w$.}, and we do nothing, 
\item or there is at least {\it either} a remainder propagator {\it or} $\varpi$-type endpoint at scale $h_w< h_v$, and for each of them we use Corollary (\ref{corollary_norms_propagators_DBC}) $$||g_R^{(h_w)}||_\infty\leq ||g_P^{(h_w)}||_\infty,\hspace{3mm} ||g_R^{(h_w)}||_1\leq ||g_P^{(h_w)}||_1,\hspace{3mm} \sup_{x,y}|\varpi_{h_w}(x)|\leq C_{\theta'},$$
for some $\theta'$ fixed {\it a priori} to, respectively, replace $g_R^{(h_w)}$ by $g^{(h_w)}_P$ and $|\varpi_{h_w}(x)|$ by $C_{\theta'}$ for some $\theta'$ suitably chosen a priori.
\end{itemize}
\begin{rem}
From now on, we will say that $\mathcal R_\mathcal T\mathcal R_\mathcal B$ (or $\tilde{\mathcal R}_\mathcal T\mathcal R_\mathcal B$) {\it acts in a non-trivial way} only on those clusters $G_v$ containing a non-trivial weight function just  at scale $h_v$, and we call $V_{\mathcal B}(\tau)\subset V(\tau)$ the set of the vertices of the tree labeled in a non trivial way by $\mathcal R_\mathcal T\mathcal R_{\mathcal B}$ (or $\tilde{\mathcal R}_\mathcal T\mathcal R_\mathcal B$).
\end{rem}

\paragraph{Proof of Theorem \ref{theorem_renormalized_bounds_DBC}}

Once we introduced all the possible and useful simplifications and defined the notations, we can start proving Theorem (\ref{theorem_renormalized_bounds_DBC}). In particular, the proof will consist of two fundamental Lemmata that we are going to introduce:
\begin{itemize}
\item thanks to Lemma (\ref{lemma_bound_determinants_inside_renormalized_kernels}), we will reduce the problem of {\it bounding the kernels in Theorem (\ref{theorem_renormalized_bounds_DBC})} to the problem of {\it bounding the integral over a spanning tree whose vertices are weighted by the weight functions $w_h(\cdot)$} we just introduced.
\item thanks to Lemma (\ref{lemma_effective_gain}) we exploit the presence of these weight functions to get the dimensional gains that renormalize the tree. In particular, we will prove Lemma (\ref{lemma_effective_gain}) {\it via} two auxiliary lemmata: \begin{itemize}
\item Lemma (\ref{lemma_transfer}) tells us that we can re-arrange the spanning tree as we need, by moving the weight functions $w_h(\cdot)$ inside the cluster at scale $h$ it belongs to,
\item Lemma (\ref{lemma_integral_w_g}) tells us where the dimensional gains actually come from,
\end{itemize}
\item putting together Lemmata (\ref{lemma_bound_determinants_inside_renormalized_kernels}) and (\ref{lemma_effective_gain}), we obtain in a straightforward way the desired bound in Theorem (\ref{theorem_renormalized_bounds_DBC}).
\end{itemize}

\begin{lem}
\label{lemma_bound_determinants_inside_renormalized_kernels}
\begin{equation}
\begin{split}
\left|\int d\bm x(P_{v_0}) \mathcal R_{\alpha} W^{(h)}(\tau, P_{v_0},\bm x(I_{v_0}))\right|
\leq  C \left[\prod_{v\notin V_f(\tau)}\left(\frac{Z_{h_v}}{Z_{h_{v}-1}}\right)^{\frac{|P_v|}{2}}\right]\int d\bm x(P_{v_0}) \cdot
\\ \cdot \left\{\prod_{v\notin V_f(\tau)}\frac{1}{s_v!}\gamma^{h_vq_{\alpha,G^{h_v,T_v}}}||g^{(h_v)}||^{\frac{1}{2}\left(\sum_{j=1}^{s_v}|P_{v_j}|-|P_v|-2(s_v-1)\right)}_{\infty}\cdot\right.\\
\left.\cdot \left[\prod_{\ell \in T_v}\left|(\bm x_\ell- \bm y_\ell)^{b_\alpha(\ell)}_{j_\alpha(\ell)}\partial^{q(f_\ell^1)}_{j(f_\ell^1)}\partial^{q(f_\ell^2)}_{j(f_\ell^2)} g^{(h_\ell)}_{\ell}\right|\right]\right\}\cdot \\ \cdot\left[\prod_{i=1}^{n}\left|(\bm x^i-\bm y^i)^{b(v^*_i)}_{j(v^*_i)}K^{(h_i)}_{{v^*_{i}}}(\bm x_{v^*_i})\right|\right]\left(\prod_{v\in V_\mathcal{B}(\tau)}w_h(x_v)\right),
\end{split}
\end{equation}
where the terms $\gamma^{h_vq_{\alpha,G^{h_v,T_v}}}$ take into account the dimensional gains coming from the derivatives in (\ref{matrix_G_h_v_t_v}), the argument of the square brackets in the last line has to be read as in (\ref{renormalized_kernels_explicit_expression}) where we replaced $\varpi_h(x)$ by a constant, and the argument of the last brackets are the weight functions we defined in (\ref{definition_weight_functions}).
\end{lem}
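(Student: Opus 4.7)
The plan is to reduce the bound to an expression that resembles the non-renormalized estimate of Theorem \ref{theorem_bound_of_kernels} dressed with weight functions associated to the innermost non-translation-invariant elements. I would start from the explicit expression (\ref{renormalized_kernels_explicit_expression}) and attack its three non-trivial pieces: (a) the determinant $\det G_\alpha^{h_v, T_v}(\bm t_v)$, (b) the hierarchical action of the bulk renormalization operator $\mathcal R^{(\tau)}_{v_0,\alpha,\mathcal B}$, and (c) the difference factors and derivatives produced by the Taylor renormalization. Piece (a) is handled exactly as in the translation-invariant case: I apply the Gram--Hadamard inequality of Lemma \ref{lemma_gram_hadamard_inequality} using the Gram representation of both $g_P^{(h_v)}$ and $g_R^{(h_v)}$ propagators provided in Appendix \ref{appendix_gram_representation}; each derivative appearing in the entries (\ref{matrix_G_h_v_t_v}) produces a factor $\gamma^{h_v}$, collected in $\gamma^{h_v q_{\alpha, G^{h_v, T_v}}}$, while the remaining undifferentiated factors combine into $||g^{(h_v)}||_\infty^{(\sum_j |P_{v_j}|-|P_v|)/2 - (s_v-1)}$. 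The probability integration $\int dP_{T_v}(\bm t_v)$ is harmless since $dP_{T_v}$ has total mass one.

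For piece (b) I would carry out the tree-simplification procedure described just before the statement and illustrated in Figure \ref{figure_simplification_figure_after}. Descending from the leaves toward the root I identify, in each branch, the first vertex $v$ labelled by $\mathcal R_\mathcal B$; from there on every ancestor $w \prec v$ is automatically also labelled by $\mathcal R_\mathcal B$. At any such ancestor $w \notin V_\mathcal B(\tau)$ I use Corollary \ref{corollary_norms_propagators_DBC} to replace each $R$-propagator sitting at scale $h_w$ by the $L^\infty$-bound of the corresponding $P$-propagator, and I replace every $\varpi$-endpoint inside $G_w$ by a $\nu$-type endpoint of comparable size via $\sup_x \int dy\, |\varpi_h(x,y)| \le C_\theta |\lambda|$. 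These replacements are dimension-preserving bookkeeping devices that bring the analysis at those vertices back into the translation-invariant framework of Section \ref{subsection_non-renormalized_expansion}. The only genuine residue of the boundary sits at the innermost vertices $v \in V_\mathcal B(\tau)$: for each of those I use either $|g_R^{(h_v)}(\bm x,\bm y)| \le ||g^{(h_v)}||_\infty\, \rho_{h_v}(x)$ from (\ref{bound_g_R_g_infty_rho}), or, after integrating out the second spatial argument of the $\varpi$-endpoint, $\int dy\, |\varpi_{h_v}(x,y)| \le |\lambda|\, \varpi_{h_v}(x)$, producing exactly one weight factor $w_{h_v}(x_v)$ localized on an internal coordinate of the cluster $G_v$.

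Piece (c) is kept intact in the statement: the zero factors $(\bm x_\ell-\bm y_\ell)^{b_\alpha(\ell)}_{j_\alpha(\ell)}$ and the derived propagators $\partial^{q(f_\ell^1)}_{j(f_\ell^1)} \partial^{q(f_\ell^2)}_{j(f_\ell^2)} g^{(h_\ell)}_\ell$ along the spanning tree, together with the analogous factors at the endpoints, are left as they stand and will be estimated later in Lemmata \ref{lemma_transfer} and \ref{lemma_effective_gain}. Collecting the $Z$-ratios inherited from (\ref{dressing_measure_defn_tildeV_DBC}), the Gram--Hadamard factors, the weight functions and the untouched derivative structure yields precisely the right-hand side of the claim. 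The main subtlety, which I expect to be the only real obstacle, is verifying that the replacements carried out in piece (b) are compatible with the Gram structure used in piece (a): when an ancestor $w \notin V_\mathcal B(\tau)$ carries $R$-labelled entries in its matrix $G_\alpha^{h_w,T_w}$, one must check that the Gram representation of Appendix \ref{appendix_gram_representation} treats $P$- and $R$-entries on an equal footing, so that the determinant bound (\ref{expectation_truncated_scale_h_v_bound}) holds uniformly with the $L^\infty$-norms of $g^{(h_w)}$ and no additional combinatorial cost is paid in passing from $g_R$ to $g_P$.
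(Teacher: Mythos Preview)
Your strategy lines up with the paper's for the easy cases, but there is a genuine gap in piece (b). You assume that at each $v\in V_{\mathcal B}(\tau)$ you can extract a weight $w_{h_v}(x_v)$ either from a $\varpi$-endpoint or from an $R$-labelled propagator via the pointwise bound $|g_R^{(h_v)}(\bm x,\bm y)|\le \|g^{(h_v)}\|_\infty\,\rho_{h_v}(x)$. But that pointwise bound is only usable on a propagator that appears \emph{explicitly} in the product, i.e.\ on the spanning tree $T_v$. You do not address the case where every line of $T_v$ is $P$-labelled and the sole remainder propagator sits as an entry of the matrix $G_\alpha^{h_v,T_v}$. In that situation Gram--Hadamard, applied to the full matrix with the mixed $P/R$ Gram representation of Appendix~\ref{appendix_gram_representation}, gives exactly the same $\|g^{(h_v)}\|_\infty$-power as if all entries were $P$-type, so no weight function is produced and you have not gained the factor $\rho_{h_v}(x_v)$ that the right-hand side of the Lemma requires.

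The paper closes this gap by \emph{expanding} the determinant along the row of $R$-entries,
\[
\det G_\alpha^{h_v,T_v}=\sum_{i}(-1)^{i+j}t_{ij}\,\partial^{q_\alpha(f^1_{ij})}_{j_\alpha(f^1_{ij})}\partial^{q_\alpha(f^2_{ij})}_{j_\alpha(f^2_{ij})}g_R^{(h_v)}(\bm x(i),\bm x(j))\,G_{\alpha,ij}^{h_v,T_v},
\]
bounding the extracted $g_R$ by $\|g^{(h_v)}\|_\infty\,\rho_{h_v}(x(j))$, and only then applying Gram--Hadamard to the minors $G_{\alpha,ij}^{h_v,T_v}$. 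This row expansion costs a combinatorial factor of the row length at each such $v$, and the paper has to check that the product of these factors over all vertices in $\bar V_{\mathcal B}(\tau)$ is bounded by $c\,e^n$; this is done via the constraint $\sum_i s_{v_i}=n$ and a short Jensen-type estimate on $\prod_i s_{v_i}$. Your proposal is silent on both the row expansion and this combinatorial control. You also do not discuss the action of $\mathcal R^{(2)}_{\mathcal B}$: there one needs to argue that the coordinate sign-flips always convert at least one spanning-tree propagator into a $g_R$, so that the determinant-only case never arises for $\mathcal R^{(2)}_{\mathcal B}$; the paper sketches this by an induction on clusters.
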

We will use another Lemma to bound the integral appearing in the r.h.s. of the latter formula.
\begin{lem}
\label{lemma_effective_gain}
\begin{equation}
\begin{split}
\frac{1}{|\Lambda|\beta}\int d\bm x(P_{v_0})\prod_{v\notin V_f(\tau)}\gamma^{h_vq_{\alpha,G^{h_v,T_v}}}\left(\left[\prod_{\ell \in T_v}\left|(\bm x_\ell- \bm y_\ell)^{b(\ell)}_{j(\ell)}\partial^{q(f_\ell^1)}_{j(f_\ell^1)}\partial^{q(f_\ell^2)}_{j(f_\ell^2)} g^{(h_\ell)}_{\ell}\right|\right]\right)\cdot \\ \cdot\left[\prod_{i=1}^{n}\left|(\bm x^i-\bm y^i)^{b(v^*_i)}_{j(v^*_i)}K^{(h_i)}_{{v^*_{i}}}(\bm x_{v^*_i})\right|\right]\left(\prod_{v\in V_\mathcal{B}(\tau)}w_h(x_v)\right) \leq \\
\leq \left( \prod_{v\in V_f(\tau)}\rho_v\right)\left(\prod_{v\notin V_f(\tau)}\gamma^{-h_v(s_v-1)}\right)\left(\prod_{v\notin V_f(\tau)}\gamma^{-z_v(h_v-h_{v'})}\right)
\end{split}
\end{equation}
where 
\begin{equation}
z_v=\begin{cases}
\theta &\mbox{ if } |P_v|=4,\\
1+\theta &\mbox{ if } |P_v|=2,
\end{cases} \hspace{5mm} m_{2,v}=\begin{cases}
1 \mbox{ if $v$ is of type $\nu$ or $\varpi$},\\
0 \mbox{ otherwise}.
\end{cases}
\end{equation}
\end{lem}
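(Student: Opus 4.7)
The plan is to reduce the statement to two auxiliary Lemmata (\ref{lemma_transfer}) and (\ref{lemma_integral_w_g}) and then to combine them with the standard anchored-tree estimate used in Theorem \ref{theorem_bound_of_kernels}. First I would organize the integrand along the spanning tree $T=\cup_{v\notin V_f(\tau)}T_v$, so that the only variables integrated against $d\bm x(P_{v_0})$ are those belonging to $T$ together with the positions of the endpoints. Without the weight functions $w_{h_v}(x_v)$ and the Taylor zeros, the estimate
\begin{equation*}
\frac{1}{|\Lambda|\beta}\int d\bm x(P_{v_0})\prod_{\ell\in T}|g^{(h_\ell)}_\ell|\,\prod_i|K^{(h_i)}_{v^*_i}|\leq \Bigl(\prod_{v\in V_f(\tau)}\rho_v\Bigr)\prod_{v\notin V_f(\tau)}\gamma^{-h_v(s_v-1)},
\end{equation*}
follows by iterating the bound $\|g^{(h_\ell)}_P\|_1\leq C\gamma^{-h_\ell}$ along $T$ (here only $P$-propagators survive after the simplification that precedes the Lemma). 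The two remaining ingredients, the Taylor zeros and the weight functions, must be combined with this bound to produce the renormalization factor $\prod_v\gamma^{-z_v(h_v-h_{v'})}$.

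The Taylor zeros $(\bm x_\ell-\bm y_\ell)^{b(\ell)}_{j(\ell)}$ are handled exactly as in the translation invariant case: the short-memory estimate
$|\bm x_\ell-\bm y_\ell|^{b(\ell)}|\partial^{q(f^1_\ell)}_{j(f^1_\ell)}\partial^{q(f^2_\ell)}_{j(f^2_\ell)}g^{(h_\ell)}_\ell|\leq C\gamma^{-b(\ell)h_\ell+(q(f^1_\ell)+q(f^2_\ell))h_\ell}|g^{(h_\ell)}_\ell|$
and the constraint $\sum_{\ell,i}(b_\alpha(\ell)+b_\alpha(v^*_i)-q_\alpha(f^1_\ell)-q_\alpha(f^2_\ell))=0$ yield, via the standard telescopic argument exploiting the hierarchical structure of clusters (as at the end of Subsection \ref{subsection_localization_renormalization_PBC}), a dimensional gain $\gamma^{-z^{(T)}_v(h_v-h_{v'})}$ with $z^{(T)}_v=1$ when $|P_v|=4$ and $z^{(T)}_v=2$ when $|P_v|=2$ at every vertex where Taylor renormalization was performed. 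For $v\in V_\mathcal{B}(\tau)$ with $|P_v|=4$ no Taylor renormalization is active (since $\mathcal{R}_{\mathcal T}\mathcal R_\mathcal B=\mathcal R_\mathcal B$ on quartic kernels), so $z^{(T)}_v=0$; for $v\in V_\mathcal{B}(\tau)$ with $|P_v|=2$ only the time component is renormalized by $\tilde{\mathcal L}_\mathcal T$, contributing $z^{(T)}_v=1$. In both cases there is a deficit of $\theta$ with respect to the $z_v$ of Lemma \ref{lemma_effective_gain}, which is exactly the room that has to be filled by the weight functions.

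The core of the proof is therefore to show that each weight function $w_{h_v}(x_v)$ with $v\in V_\mathcal{B}(\tau)$ supplies an extra $\gamma^{-\theta(h_v-h_{v'})}$ of gain. Here the first step is to invoke Lemma \ref{lemma_transfer}: since the positions inside $G_v$ are connected by propagators of scale $\geq h_v$, whose decay has characteristic length $\gamma^{-h_v}$, the evaluation point of $w_{h_v}$ can be moved along the spanning tree of $G_v$ from its original site to any chosen site of $G_v$, at the price of a constant absorbed into $C^n$; in particular it can be transferred onto whichever variable is most convenient for the integration (typically the variable whose integration would otherwise produce the unrestricted volume factor, or the site carrying the external half-line which is to be anchored on the boundary). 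The second step is Lemma \ref{lemma_integral_w_g}, which states that convolving $w_{h_v}(\cdot)$ with the subsequent propagator (or integrating it freely over the endpoint position) gives $\int dx\,w_{h_v}(x)\leq C_\theta\gamma^{-\theta h_v}$ for $w=\varpi_{h_v}$ and $\int dx\,\rho_{h_v}(x)\leq C\gamma^{-h_v}$ for $w=\rho_{h_v}$; in the latter case, since $\rho$ arises from the replacement $g^{(h_v)}_R\to g^{(h_v)}_P\cdot\rho_{h_v}$, the factor $\gamma^{-h_v}$ just restores the anchorage gain of Remark \ref{remark_anchorage_property_norm_1_infty}, while an additional $\gamma^{-\theta h_v}$ can be extracted from the same $\rho_{h_v}$ by using only a fraction $\theta$ of its decay. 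Combining these local gains with the telescopic identity $h_v=h+\sum_{w\succeq v}(h_w-h_{w'})$ and distributing each $\gamma^{-\theta h_v}$ (respectively $\gamma^{-(1+\theta)h_v}$) across the ancestor scale jumps yields exactly $\prod_v\gamma^{-z_v(h_v-h_{v'})}$ with $z_v$ as in the statement.

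The main obstacle is the handling of \emph{nested} weight functions, i.e.\ clusters $G_{v_1}\supset G_{v_2}\supset\cdots$ each in $V_\mathcal{B}(\tau)$ with distinct scales. Naively transferring every $w_{h_{v_i}}$ independently would require the same line of the spanning tree to absorb several pieces of decay and might produce combinatorial factors $(\#V_\mathcal{B}(\tau))!$. This is avoided exactly as in the nested-renormalization argument recalled at the end of Subsection \ref{subsection_localization_renormalization_PBC}: once a weight function at the innermost scale $h_{v_m}$ has been used to produce the jump $\gamma^{-z_{v_m}(h_{v_m}-h_{v_m'})}$, the same decay can be rewritten as a product of jumps $\prod_{j\leq m}\gamma^{-z(h_{v_j}-h_{v_j'})}$ covering all ancestors, so that each enclosing cluster receives its gain from a single transfer. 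The proof then concludes by assembling the three sources of decay (spanning-tree integration, Taylor zeros, transferred weight functions), by absorbing the sums over the choices of anchored trees, field subsets and $\alpha$-labels into $C^n$ as in \cite{benfatto2001renormalization}, and by packaging the endpoint contributions into $\prod_{v\in V_f(\tau)}\rho_v$.
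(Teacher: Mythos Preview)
Your proposal follows essentially the same route as the paper: reduce to the two auxiliary Lemmata (\ref{lemma_transfer}) and (\ref{lemma_integral_w_g}), identify that the Taylor zeros supply $z^{(T)}_v\in\{0,1,2\}$ depending on $|P_v|$ and on whether $v\in V_\mathcal{B}(\tau)$, and then extract the missing $\theta$ from the weight functions via telescoping across ancestor scale jumps. The treatment of nested $V_\mathcal{B}$ vertices by letting the innermost weight cover all ancestors is also what the paper does (see the Remark closing the proof).

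One point deserves sharpening. Your formulation of Lemma \ref{lemma_integral_w_g} as ``convolving $w_{h_v}$ with the subsequent propagator gives $\int dx\,w_{h_v}(x)\leq C_\theta\gamma^{-\theta h_v}$'' mixes two distinct mechanisms. The actual statement of Lemma \ref{lemma_integral_w_g} is that
\[
\Bigl|\int d\bm y\,w_h(y)\,g^{(\bar h)}_{P,\omega}(\bm y-\bm x)\Bigr|\leq C_\theta\,\gamma^{-\bar h}\,\gamma^{\alpha(\bar h-h)}\,w_{\bar h}(x),
\]
i.e.\ the convolution produces not a scalar but a \emph{new weight function at the lower scale} $\bar h$, evaluated at the surviving endpoint. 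This is what makes the procedure iterative: after one application you are back in the same situation one scale down, and can apply Lemma \ref{lemma_transfer} again to slide $w_{\bar h}$ to the next boundary and repeat. The paper organizes this as a rooted integration from the leaves toward a chosen external leg, collecting one factor $\gamma^{\theta(h_{v'}-h_v)}$ at each step and discarding all but one weight when several accumulate at the same scale; only at the very last step is the remaining weight integrated freely against the volume, yielding the extra $\gamma^{\theta(h_L-h)}$ (equation (\ref{scale_jump})). Your telescoping of a single $\gamma^{-\theta h_v}$ is equivalent in the end (the paper's closing Remark says exactly this), but without the ``weight-at-lower-scale'' output of Lemma \ref{lemma_integral_w_g} the iterative bookkeeping, and in particular the claim that no spanning-tree line is used twice so that the constants $C^{n_v^0-1}$ of Lemma \ref{lemma_transfer} do not pile up, is not justified.
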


\begin{proof}[Proof of Theorem (\ref{theorem_renormalized_bounds_DBC})]
By putting together the results of Lemmata (\ref{lemma_bound_determinants_inside_renormalized_kernels}) and (\ref{lemma_effective_gain}) we can bound the right hand side of (\ref{bounds_renormalized_kernels_DBC}) by
\begin{equation}
\begin{split}
\left(\prod_{v\in V_f(\tau)}\rho_v\right)\left(\prod_{v\notin V_f(\tau)}\gamma^{\frac{h_v}{2}\left(\sum_{j=1}^{s_v}|P_{v_j}|-|P_v|-2(s_v-1)\right)}\gamma^{-h_v(s_v-1)}\right)\cdot \\ \cdot\left(\prod_{v\notin V_f(\tau)}\gamma^{-z_h(h_v-h_{v'})}\right)
\end{split}
\end{equation}
\end{proof}

\subparagraph{Proofs of Lemmata (\ref{lemma_bound_determinants_inside_renormalized_kernels}) and (\ref{lemma_effective_gain})} 

\begin{proof}[Proof of Lemma (\ref{lemma_bound_determinants_inside_renormalized_kernels})]

Let us start by considering the action of the operator $\mathcal R_{\mathcal{B}}^{(1)}$ on a vertex $v\in V_\mathcal B(\tau)$. 
\paragraph{Action of $\mathcal R^{(1)}_{\alpha,\mathcal B}$} We refer again to Remarks (\ref{remark_R_B_decomposition_R1_R2}) and (\ref{remark_commutation_renormalization operators}), and we recall the formal representation
\begin{equation*}
\begin{split}
 \mathcal R^{(1)}_{\alpha,v,\mathcal B}\left\{\int dP_{T_v}(\bm t_v) \left( \det G_\alpha^{h_v, T_v}(\bm t_v)\right)\cdot\right.\\\left.
\cdot \left[\prod_{\ell \in T_v}\left|(\bm x_\ell- \bm y_\ell)^{b(\ell)}_{j(\ell)}\partial^{q(f_\ell^1)}_{j(f_\ell^1)}\partial^{q(f_\ell^2)}_{j(f_\ell^2)} g^{(h_\ell)}_{\ell}\right|\right]\right\}\\ \left[\prod_{v_i^*\in V_f^{(\varpi)}}\left|(\bm x^i-\bm y^i)^{b(v^*_i)}_{j(v^*_i)}K^{(h_i)}_{{v^*_{i}}}(\bm x_{v^*_i}))\right|\right]\left[\prod_{v_i^*\in V_f\setminus V_f^{(\varpi)}}\left|(\bm x^i-\bm y^i)^{b(v^*_i)}_{j(v^*_i)}K^{(h_i)}_{{v^*_{i}}}(\bm x_{v^*_i}))\right|\right]
\end{split}
\end{equation*}
where $\mathcal R^{(1)}_{\alpha,v,\mathcal B}$ formally means that the vertex $v\in V_\mathcal B(\tau)$ is renormalized by $\mathcal R^{(1)}_\mathcal B$, with the constraints given by the structure of the subtree $\tau_v$ encoded in $\alpha$.
\begin{itemize}
\item {\bf Case 0} The argument of the first square brackets of the last line gives us directly the weight functions $\varpi_h(\cdot)$ coming from the $\varpi-$type endpoints, and they are independent of $\det G_\alpha^{h_v,T_v}$. To bound $\det G_\alpha^{h_v,T_v}$, we use as usual the Gram-Hadamard inequality (\ref{lemma_gram_hadamard_for_G}).
\end{itemize}
So we are left with showing that, for any $v\in V_\mathcal B(\tau)$, we can extract a weight function $\rho_{h_v}(\cdot)$ from the propagators. There are two different cases: given $v\in V_\mathcal B(\tau)$, either there is at least one remainder propagator belonging to the spanning tree $T_v$, or each of the propagators belonging to the spanning tree is a $P$-labeled propagator and $G_\alpha^{h_v,T_v}$ has a block of remainder propagators.
\begin{itemize}
\item {\bf Case 1: given $v\in V_\mathcal B(\tau)$, there is at least one $R$-labeled propagators belonging to the spanning tree $T_v$:} first of all, we use the Gram-Hadamard inequality (\ref{lemma_gram_hadamard_for_G}) to bound the determinant. To extract the weight function from the remainder propagator belonging to the spanning tree, we use (\ref{reminder_propagators_as_periodic_and_weight}) to bound
$$\int d\bm x_\ell d\bm y_\ell |g_{\ell,R}|\leq  c \int d\bm x_\ell d\bm y_\ell |g_{\ell,P}| \rho_{h_\ell}(x_\ell), \hspace{3mm} \ell\in T_v.$$
\item {\bf Case 2: given $v\in V_\mathcal B(\tau)$, there are no $R$-labeled propagators belonging to the spanning tree $T_v$, and $G_\alpha^{h_v,T_v}$ has a block of remainder propagators.} Let us call this set of vertices $\bar V_\mathcal B(\tau)\subseteq V_\mathcal B(\tau)$. The basic idea is to expand, for each $v\in\bar V_\mathcal B(\tau)$ , $\det G_\alpha^{h_v,T_v}(\bm t_v)$ using the very definition of {\it determinant}, along a row of remainder propagators:
\begin{equation}
det G_\alpha^{h_v,T_v}=\sum_{i=1}^{s_v}(-1)^{i+j}t_{ij}\partial_{j_\alpha(f_{ij}^1)}^{q_\alpha(f_{ij}^1)}\partial_{j_\alpha(f_{ij}^2)}^{q_\alpha(f_{ij}^2)}g_R^{h_v}(\bm x(i),\bm x(j)) G^{h_v, T_v}_{\alpha, ij},
\label{detG_expanded_along_a_line}
\end{equation}
where we recall that $T_v$ is defined as the set of lines such that $T=\cup_{v\notin V_f(\tau)} T_{v}$, and where $G_{\alpha, ij}^{h_v,T_v}$ is the determinant of the matrix obtained starting from $G_\alpha^{h_v,T_v}$ and erasing the row $i$ and the column $j$. Once we extracted the remainder propagators, by using the bound (\ref{bound_g_R_g_infty_rho}) we get:
\begin{equation}
\begin{split}
\left| \int d\bm x(P_{v_0})\prod_{\ell\in T_v} g_\ell \int P(d\bm t) \det G_\alpha^{h_v, T_v}(\bm t)\right|\leq \\ \leq c_{v,0} \left|\int d\bm x(P_{v_0})\prod_{\ell\in T_v} g_\ell  \cdot \right.\\ \left.\cdot \int P(d\bm t) \sum_{i}  (-1)^{i+j}t_{i_j} \partial_{j_\alpha(f_{ij}^1)}^{q_\alpha(f_{ij}^1)}\partial_{j_\alpha(f_{ij}^2)}^{q_\alpha(f_{ij}^2)}g_R^{(h_v)}(\bm x(i),\bm x(j))G_{\alpha,ij}^{h_v, T_v}(\bm t) \right| \leq \\
\leq  c_{v,1} \int d\bm x(P_{v_0})\prod_{\ell\in T_v} |g_\ell | \cdot \\ .\cdot\int P(d\bm t) \sum_{i}   \rho_{h_v}(x(j)) ||\partial_{j_\alpha(f_{ij}^1)}^{q_\alpha(f_{ij}^1)}\partial_{j_\alpha(f_{ij}^2)}^{q_\alpha(f_{ij}^2)}g_R^{(h_v)}||_\infty ||G_{\alpha, ij}^{h_v, T_v}(\bm t)||_\infty \leq\\
\leq C_{v} ||g^{(h_v)}||^{\frac{1}{2}\left(\sum_{i=1}^{s_v}|P_{v_i}|-|P_v|-2(s_v-1)\right)}_\infty \gamma^{h_vq_{\alpha,G^{h_v,T_v}}}\int d\bm x(P_{v_0})\left(\prod_{\ell\in T_v} \left|g_\ell \right| \right)\rho_{h_v}(x(j)).
\end{split}
\label{bound_spanning_tree_propagator_expanded}
\end{equation}
where $C_v$ depends on the size of (number of propagators belonging to)  the cluster $G_v$. Since there are, in general, more than one vertices of this type, we are left with controlling $\prod_{v\in \bar V_\mathcal B(\tau)}C_v$. 
\end{itemize}
The worst case possible is when $\bar V_\mathcal B(\tau)=V_\mathcal B(\tau)$, so we are forced to expand, for each $v\in V_{ \mathcal B}(\tau)$, the determinant of the $\left(\sum_{i=1}^{s_v}|P_{v_1}|-|P_{v}|\right)/2\times\left(\sum_{i=1}^{s_v}|P_{v_1}|-|P_{v}|\right)/2$ matrix $G^{h_v,T_v}$, so that $\prod_{v\in V_\mathcal B(\tau)}C_v\leq \prod_{v\in V_{{\mathcal B}}(\tau)}\left(\frac{\sum_{i=1}^{s_v}|P_{v_1}|-|P_{v}|}{2}\right)$. We want to prove that:
\begin{equation}
\prod_{v\in V_{{\mathcal B}}(\tau)}\left(\frac{\sum_{i=1}^{s_v}|P_{v_1}|-|P_{v}|}{2}\right)\leq c e^n, \hspace{3mm} \forall \tau\in\mathcal T_{h,n},
\end{equation}
where $n$ is the number of the endpoints. Let us prove the latter bound: thanks to the hierarchical structure of the set of vertices $V_{\mathcal B}(\tau)$ we just explained,
\begin{equation}
\prod_{v\in V_{\mathcal B}(\tau)}\left(\frac{\sum_{i=1}^{s_v}|P_{v_1}|-|P_{v}|}{2}\right)\leq c_1 \prod_{i=1}^k s_{v_i},\mbox{ with the constraints: } \begin{cases} 1\leq k\leq n,\\ \sum_{i=1}^k s_{v_i}=n.\end{cases}
\end{equation}
So
\begin{equation}
\begin{split}
\prod_{i=1}^k s_{v_i}=e^{\sum_{i=1}^k \log s_{v_i}}=e^{k\left(\frac{1}{k}\sum_{i=1}^k \log s_{v_i}\right)}\leq e^{k \log \left(\frac{1}{k}\sum_{i=1}^k s_{v_i}\right)}= \\ =e^{k\log \frac{n}{k}}\leq c e^n, \hspace{2mm} \forall 1\leq k \leq n.
\end{split}
\end{equation}
\paragraph{Action of $\mathcal R^{(2)}_{\mathcal B}$} Remark (\ref{remark_R_B_decomposition_R1_R2}) and the fact that
\begin{itemize}
\item all the propagators of the starting tree are $g_P$,
\item $T=\cup_{v\notin V_f(\tau)}T_{v}$,
\item  the action of $\mathcal R_\mathcal B^{(2)}$ keeps fixed at least one among $\bm x$ and $\bm y$,
\end{itemize}
ensure that, as soon as we change the sign of some space variable, at least a remainder propagator appears on the spanning tree $T_v$ for each $v\in V_\mathcal B(\tau)$. It can be proved iteritively: starting from the innermost subclusters ({\it i.e.} the vertices of the tree immediately preceeding the leaves), we can look at the action of $\mathcal R^{2}_\mathcal B$ as giving three different subcases:
\begin{enumerate}
\item $\mathcal R^{2}_\mathcal B$ does not change the sing of the coordinate of any vertex belonging to the clusters, so nothing changes,
\item $\mathcal R^{2}_\mathcal B$ changes the sing of the coordinate of each of the vertices belonging to the clusters, so nothing changes by symmetry with the previous point,
\item $\mathcal R^{2}_\mathcal B$ changes the sing of the coordinate of a subset of the vertices belonging to the clusters, leaving at least one of the vertices unchanged: so at least one of the propagators belonging to the spanning tree becomes, by the symmetry properties of the propagators, a remainder propagator.
\end{enumerate}
Of course, in the case 1 and 2 the subclusters have to be, if necessary, "Taylor renormalized".\\
Iteratively, we apply these three points on the bigger clusters with an ingredient more: there is at least a line of the spanning tree connecting the vertices of the cluster we are considering with a vertex belonging to some subclusters: so even in cases 1 and 2 there could appear some $R$-labeled propagator on the spanning tree. This mechanism ensures that, if in some cluster the $P$ symmetry is broken, for sure it is broken on the spanning tree. Besides, the fact that at least one among $\bm x, \bm y$ is kept fixed, ensures that at some point of the tree this symmetry is broken.\\
 Analogously to {\bf Case 1}, by using (\ref{reminder_propagators_as_periodic_and_weight}) and the Gram-Hadamard (\ref{lemma_gram_hadamard_for_G}) inequality we get the result.
\end{proof}

Now we prove Lemma (\ref{lemma_effective_gain}).

\begin{proof}[Proof of Lemma (\ref{lemma_effective_gain})]
Let us recall that we want to prove
\begin{equation*}
\begin{split}
\frac{1}{|\Lambda|\beta}\int d\bm x(P_{v_0})\prod_{v\notin V_f(\tau)}\gamma^{h_vq_{\alpha,G^{h_v,T_v}}}\left(\left[\prod_{\ell \in T_v}\left|(\bm x_\ell- \bm y_\ell)^{b(\ell)}_{j(\ell)}\partial^{q(f_\ell^1)}_{j(f_\ell^1)}\partial^{q(f_\ell^2)}_{j(f_\ell^2)} g^{(h_\ell)}_{\ell}\right|\right]\right)\cdot \\ \cdot\left[\prod_{i=1}^{n}\left|(\bm x^i-\bm y^i)^{b(v^*_i)}_{j(v^*_i)}K^{(h_i)}_{{v^*_{i}}}(\bm x_{v^*_i})\right|\right]\left(\prod_{v\in V_\mathcal{B}(\tau)}w_h(x_v)\right) \leq \\
\leq \left( \prod_{v\in V_f(\tau)}\rho_v\right)\left(\prod_{v\notin V_f(\tau)}\gamma^{-h_v(s_v-1)}\right)\left(\prod_{v\notin V_f(\tau)}\gamma^{-z_v(h_v-h_{v'})}\right)
\end{split}
\end{equation*}
{\bf Observation 1:} if we bounded, for each $v\in V_\mathcal B(\tau)$, $|w_{h_v}(x_v)|\leq C_\theta$ for a suitably fixed $\theta\in (0,1)$, we would get an analogous bound, provided we replaced $z_v$ by
\begin{equation*}
\tilde z_v=\begin{cases}
2 \mbox{ if } |P_v|=2 \mbox{ and } v\in V(\tau)\setminus V_\mathcal B(\tau),\\
1 \mbox{ if } |P_v|=4 \mbox{ and }  v\in V(\tau)\setminus V_\mathcal B(\tau),\\
1 \mbox{ if } |P_v|=2 \mbox{ and }  v\in V_\mathcal B(\tau),\\
0 \mbox{ if } |P_v|=4 \mbox{ and }  v\in V_\mathcal B(\tau).
\end{cases}
\end{equation*}
This observation is a consequence of Remark (\ref{remark_commutation_renormalization operators}) and of the dimensional gains coming from the {\it Taylor} renormalization operators we described in the previous chapter. We will get the further gains by exploiting the presence of $\left(\prod_{v\in V_\mathcal{B}(\tau)}w_{h_v}(x_v)\right)$. \\
{\bf Observation 2:} once we reconstructed the bound of the determinants, we are left with computing an integral along the spanning tree $T=\cup_{v}T_v$ formally analogous to the one we bounded in proving Theorem (\ref{theorem_renormalized_bounds}), with the only difference that {\it some of the vertices of the spanning tree are weighted by a weight function $w_{h_v}(x_v)$}. Moreover, let us recall that the spanning tree $T=\cup_{v}T_v$ has a hierarchical sturcture.
In fact, we will exploit this hierarchical structure to obtain the dimensional gains we need, and we will proceed in two steps:
\begin{enumerate}
\item first of all, we show that we can arbitrarily transfer the function $w_h(\cdot)$ from a vertex belonging to a cluster at scale $h$ to any vertex belonging to the same cluster at scale $h$ (Lemma (\ref{lemma_transfer})); 
\item then, we  prove that in fact we can transfer the function $w_h(\cdot)$ to a vertex belonging to some cluster at lower scale $\bar h< h$ which contains as a subcluster the cluster at scale $h$, gaining a dimensional factor $\gamma^{\theta(\bar h- h)}$, (Lemma (\ref{lemma_integral_w_g})). 
\end{enumerate}

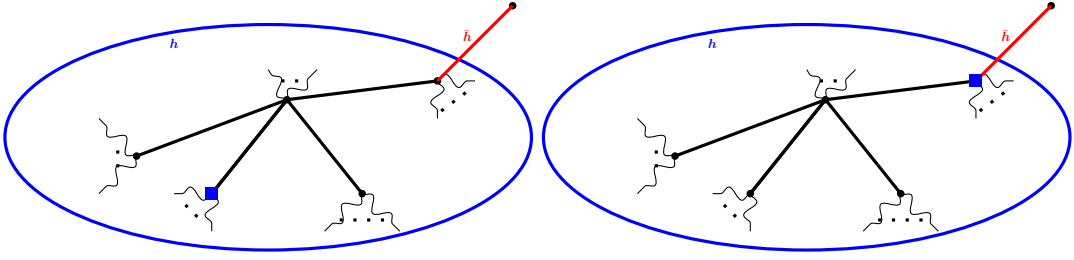
\begin{figure}
\begin{tikzpicture}
[scale=0.5, transform shape]
\draw [blue, very thick] (7.5,7.5) ellipse (7 and 3);
\node at (5,10) {\textcolor{blue}{$\bm h$}};
\node at (12.8, 10.2) {\textcolor{red}{{$\bar {\bm h}$}}};
\fill (6, 6) circle (0.1);
\fill (8, 8.5) circle (0.1);
\fill (12, 9) circle (0.1);
\fill (4,7) circle (0.1);
\fill (10, 6) circle (0.1);
\fill (14,11) circle (0.1);
\draw [very thick] (4,7) -- (8,8.5);
\draw [-,decorate, decoration={snake}] (4,7) -- (3,6);
\draw [-,decorate, decoration={snake}] (4,7) -- (3,8);
\draw [very thick, loosely dotted] (3.5, 6.7) -- (3.5,7.3);
\draw [-,decorate, decoration={snake}] (6,6) -- (5,6);
\draw [-,decorate, decoration={snake}] (6,6) -- (6,5);
\draw [very thick, loosely dotted] (5.3, 5.7) -- (5.7,5.3);
\draw [-,decorate, decoration={snake}] (8,8.5) -- (7.5,9.3);
\draw [-,decorate, decoration={snake}] (8,8.5) -- (8.8,9.3);
\draw [very thick, loosely dotted] (7.85,9) -- (8.45,9);
\draw [-,decorate, decoration={snake}] (10,6) -- (9,5);
\draw [-,decorate, decoration={snake}] (10,6) -- (11,5);
\draw [very thick, loosely dotted] (9.4,5.3) -- (10.6,5.3);
\draw [-,decorate, decoration={snake}] (12,9) -- (13,9);
\draw [-,decorate, decoration={snake}] (12,9) -- (12,8);
\draw [very thick, loosely dotted] (12.1,8.2) -- (12.9,8.8);
\draw [very thick] (8,8.5) -- (6,6);
\draw [very thick] (8,8.5) -- (6,6);
\draw [very thick] (8,8.5) -- (10,6);
\draw [very thick] (8,8.5) -- (12,9);
\draw [very thick, red] (12,9) -- (14,11);
\node [regular polygon, blue, regular polygon sides=4,
        minimum size=3mm, fill] at (6, 6) {};
\end{tikzpicture}
\begin{tikzpicture}
[scale=0.5, transform shape]
\draw [blue, very thick] (7.5,7.5) ellipse (7 and 3);
\node at (5,10) {\textcolor{blue}{$\bm h$}};
\node at (12.8, 10.2) {\textcolor{red}{{$\bar {\bm h}$}}};
\fill (6, 6) circle (0.1);
\fill (8, 8.5) circle (0.1);
\fill (12, 9) circle (0.1);
\fill (4,7) circle (0.1);
\fill (10, 6) circle (0.1);
\fill (14,11) circle (0.1);
\draw [very thick] (4,7) -- (8,8.5);
\draw [-,decorate, decoration={snake}] (4,7) -- (3,6);
\draw [-,decorate, decoration={snake}] (4,7) -- (3,8);
\draw [very thick, loosely dotted] (3.5, 6.7) -- (3.5,7.3);
\draw [-,decorate, decoration={snake}] (6,6) -- (5,6);
\draw [-,decorate, decoration={snake}] (6,6) -- (6,5);
\draw [very thick, loosely dotted] (5.3, 5.7) -- (5.7,5.3);
\draw [-,decorate, decoration={snake}] (8,8.5) -- (7.5,9.3);
\draw [-,decorate, decoration={snake}] (8,8.5) -- (8.8,9.3);
\draw [very thick, loosely dotted] (7.85,9) -- (8.45,9);
\draw [-,decorate, decoration={snake}] (10,6) -- (9,5);
\draw [-,decorate, decoration={snake}] (10,6) -- (11,5);
\draw [very thick, loosely dotted] (9.4,5.3) -- (10.6,5.3);
\draw [-,decorate, decoration={snake}] (12,9) -- (13,9);
\draw [-,decorate, decoration={snake}] (12,9) -- (12,8);
\draw [very thick, loosely dotted] (12.1,8.2) -- (12.9,8.8);
\draw [very thick] (8,8.5) -- (6,6);
\draw [very thick] (8,8.5) -- (6,6);
\draw [very thick] (8,8.5) -- (10,6);
\draw [very thick] (8,8.5) -- (12,9);
\draw [red, very thick] (12,9) -- (14,11);
\node [regular polygon,blue, regular polygon sides=4,
        minimum size=3mm, fill] at (12, 9) {};
\end{tikzpicture}
\caption{Graphical explanation of Lemma (\ref{lemma_transfer}): the blue square represents a weight function at scale $h$, and our goal is to {\it move it} along the spanning tree until the vertex shared with the red propagator, living at scale $\bar h\leq h$.}
\label{figure_lemma_transfer}
\end{figure}
Let $$\rho_h(\bm x,\bm y)=\gamma^h\frac{C_{\bar N}}{1+(\gamma^h|\bm x-\bm y|)^{\bar N}},\hspace{3mm} \rho^{(q_1,q_2;j_1,j_2)}_\ell=\partial^{q(f_\ell^1)}_{j(f_\ell^1)}\partial^{q(f_\ell^2)}_{j(f_\ell^2)}\gamma^{h_\ell}\frac{C_{\bar N}}{1+(\gamma^h|\bm x(f_\ell^1)-\bm x(f_\ell^2)|)^{\bar N}},$$
for some suitably fixed $\bar N$, and let $\tilde K^{(h_i)}_{{v'^*_{i}}}(\bm x_{v'^*_i})$ be the contribution obtained by replacing, in the iterative definition of the endpoints contributions $K^{(h_i)}_{{v'^*_{i}}}(\bm x_{v'^*_i})$, each of the propagators by the suitable $\rho^{(q_1,q_2;j_1,j_2)}_\ell$.
\begin{lem}
\label{lemma_transfer} 
Let  $v\in V_\mathcal B(\tau)$, $\tau_v\subset \tau$ the subtree whose first vertex is $v$, and let $\bm s\in \bm x(P_v)$. So
\begin{equation}
\begin{split}
\int d\bm x(P_{v})\prod_{v'\notin V_f(\tau_v)}\left[\prod_{\ell \in T_{v'}}\left|(\bm x_\ell- \bm y_\ell)^{b(\ell)}_{j(\ell)}\partial^{q(f_\ell^1)}_{j(f_\ell^1)}\partial^{q(f_\ell^2)}_{j(f_\ell^2)} g^{(h_\ell)}_{\ell}\right|\right]\cdot \\ \cdot\left[\prod_{i=1}^{n}\left|(\bm x^i-\bm y^i)^{b(v'^*_i)}_{j(v'^*_i)}K^{(h_i)}_{{v'^*_{i}}}(\bm x_{v'^*_i})\right|\right]w_{h_{v}}(s) \leq \\
\leq C^{n^0_v-1} \int d\bm x(P_{v})\prod_{v'\notin V_f(\tau_v)}\left[\prod_{\ell \in T_{v'}}\left|(\bm x_\ell- \bm y_\ell)^{b(\ell)}_{j(\ell)}\rho^{(q_1,q_2;j_1,j_2)}_\ell\right|\right]\cdot \\ \cdot\left[\prod_{i=1}^{n}\left|(\bm x^i-\bm y^i)^{b(v'^*_i)}_{j(v'^*_i)} \tilde K^{(h_i)}_{{v'^*_{i}}}(\bm x_{v'^*_i})\right|\right]w_{h_v}(x)
\end{split}
\label{integral_lemma_transfer}
\end{equation}
$\forall \hspace{2mm} \bm x\in \bm x(P_v)$, where $n^0_v$ is the number of endpoints following $v$.
\end{lem}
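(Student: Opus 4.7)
The core of the argument will be that, since $\bm s, \bm x \in \bm x(P_v)$ both belong to the cluster $G_v$, they are connected through the spanning tree $T = \cup_{v'\succeq v} T_{v'}$ (possibly also through the ``two legs'' of non-local $\varpi$-type endpoints), and all lines on the connecting path live at scales $h_\ell \geq h_v$, so the characteristic distance between $\bm s$ and $\bm x$ is controlled on a scale no finer than $\gamma^{-h_v}$, which is exactly the scale on which $w_{h_v}$ decays. The plan is to write $w_{h_v}(\bm s)/w_{h_v}(\bm x)$ as a polynomial correction of size $\gamma^{h_v}|\bm s - \bm x|$, distribute this correction along the path, and reabsorb each piece into the corresponding propagator (or non-local endpoint kernel) with a negligible loss in the decay exponent.

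First I would use the elementary pointwise bound, valid for any weight $w_{h_v}\in\{\rho_{h_v}, \varpi_{h_v}\}$ and any $0<\theta<1$,
\begin{equation*}
w_{h_v}(\bm s) \leq \bigl(1+\gamma^{h_v}|\bm s - \bm x|\bigr)^{\theta}\, w_{h_v}(\bm x),
\end{equation*}
which follows from $1+\gamma^{h_v}|\bm x|\leq (1+\gamma^{h_v}|\bm s|)(1+\gamma^{h_v}|\bm s-\bm x|)$ together with the definition of $\rho_{h_v}$ and the inductive decay assumption \eqref{varpi_endpoints_decay} on $\varpi_{h_v}$. Next, choose any fixed path $\mathcal{P}(\bm s,\bm x)\subset T$ joining the two external points through spanning-tree lines and possibly through the ``non-local'' links inside $\varpi$-type endpoints. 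Since each cluster contains the next in the tree, every $\ell\in\mathcal{P}(\bm s,\bm x)$ satisfies $h_\ell\geq h_v$. Using the triangle inequality $|\bm s-\bm x|\leq \sum_{\ell\in\mathcal{P}}|\bm x_\ell^1 - \bm x_\ell^2|$ together with the elementary inequality $1+\sum_\ell a_\ell \leq \prod_\ell (1+a_\ell)$ for $a_\ell\geq 0$, I get
\begin{equation*}
\bigl(1+\gamma^{h_v}|\bm s-\bm x|\bigr)^{\theta} \leq \prod_{\ell\in\mathcal{P}(\bm s,\bm x)}\bigl(1+\gamma^{h_v}|\bm x_\ell^1-\bm x_\ell^2|\bigr)^{\theta}\leq \prod_{\ell\in\mathcal{P}(\bm s,\bm x)}\bigl(1+\gamma^{h_\ell}|\bm x_\ell^1-\bm x_\ell^2|\bigr)^{\theta},
\end{equation*}
where in the last step I used $h_v\leq h_\ell$.

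The key step is then to absorb each factor $(1+\gamma^{h_\ell}|\bm x_\ell^1-\bm x_\ell^2|)^{\theta}$ into the corresponding graph element. For a propagator line $\ell\in T_{v'}$, combining \eqref{bounds_propagator_faster_than_any_power_DBC} with a choice of the decay exponent $\bar N$ large enough (say $\bar N > N_0 + 2$, where $N_0$ is the exponent needed for the non-renormalized estimates), I have
\begin{equation*}
\bigl|\partial^{q_1}_{j_1}\partial^{q_2}_{j_2}g^{(h_\ell)}_\ell\bigr|\cdot \bigl(1+\gamma^{h_\ell}|\bm x_\ell^1-\bm x_\ell^2|\bigr)^{\theta}\leq C\,\rho^{(q_1,q_2;j_1,j_2)}_{\ell},
\end{equation*}
with $C$ depending only on $\bar N$ and $\theta$ (because the weaker decay $\bar N - \theta$ is still summable). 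The same manipulation applied at the level of the non-local endpoints $\varpi_{h_i}(x,y)$ produces the tilded kernels $\tilde K^{(h_i)}_{v'^*_i}$, thanks to the inductive decay hypothesis \eqref{varpi_endpoints_decay} satisfied by $\varpi_{h_i}$. Finally, since the path $\mathcal{P}(\bm s,\bm x)$ contains at most $n^0_v - 1$ graph elements, the product of all the constants generated by these absorptions is bounded by $C^{n^0_v - 1}$, which is exactly the prefactor claimed in \eqref{integral_lemma_transfer}.

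The main obstacle in executing this plan cleanly will be keeping track of the combinatorics and of the ``book-keeping'' of the derivatives and zeros that the Taylor renormalization operators $\mathcal{R}_{\mathcal T}$, $\tilde{\mathcal{R}}_{\mathcal T}$ have already generated: one has to make sure that absorbing the extra factor $(1+\gamma^{h_\ell}|\bm x_\ell^1-\bm x_\ell^2|)^{\theta}$ into a derivative-acted propagator does not destroy its decay, which is why I will need to impose $\bar N$ strictly larger than the exponent already exploited in Lemma \ref{lemma_bound_determinants_inside_renormalized_kernels}. Once the exponents are chosen with enough slack, no step requires more than the elementary triangle-type inequalities stated above, and the result follows by distributing the single pointwise bound along the chosen path inside $G_v$.
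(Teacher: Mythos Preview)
Your proof is correct and follows the same underlying idea as the paper: both exploit that $\bm s$ and $\bm x$ are connected inside $G_v$ by a spanning-tree path whose lines all live at scales $\geq h_v$, so the weight $w_{h_v}$ can be transported along this path at constant cost per link. The difference is in how the transfer factor is distributed. The paper writes
\[
\frac{1+\gamma^{\theta h_v}|x|^{\theta}}{1+\gamma^{\theta h_v}|s|^{\theta}}
\leq C_\theta\Bigl(1+\sum_{i}\gamma^{\theta h_v}|z_{i+1}-z_i|^{\theta}\Bigr)
\]
via the subadditivity of $t\mapsto t^{\theta}$, and then bounds each summand by a constant using the $L^1$ decay of the propagator on that link. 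You instead use the multiplicative inequality $1+\sum_\ell a_\ell\leq\prod_\ell(1+a_\ell)$ and absorb each factor $(1+\gamma^{h_\ell}|\bm x_\ell^1-\bm x_\ell^2|)^{\theta}$ pointwise into the propagator decay, at the price of lowering the decay exponent $\bar N$ by~$\theta$. Your product form has the advantage of yielding the prefactor $C^{n_v^0-1}$ on the nose, whereas the paper's sum form gives a bound linear in the path length (which is then dominated by $C^{n_v^0-1}$). One small point: the weight functions depend only on the \emph{space} component, and for $w_{h_v}=\rho_{h_v}$ the relevant exponent is $\bar N$ rather than $\theta$; your argument still goes through provided the propagator decay exponent is chosen with enough margin, as you note.
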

\begin{proof}
Let us prove the Lemma by considering $w_h(s)=\varpi_h(s)$, where $s$ is the space-component of the integration point $\bm s$ (the proof for $w_h(s)=\rho_h(s)$ is conceptually the same).\\
Of course, being $|w_h(s)|\leq C_\theta/(1+\gamma^h|x|)^\theta$ for any $\theta\in (0,1)$ and a suitable $C_\theta$, and
\begin{equation}
\frac{C_\theta}{(1+\gamma^h|s|)^{\theta}}=\frac{C_\theta}{(1+\gamma^{ h}|x|)^{\theta}}\left(\frac{1+\gamma^{ h}|x|}{1+\gamma^{h}|s|}\right)^\theta
\end{equation} 
we can bound the latter integral in (\ref{integral_lemma_transfer}) by
\begin{equation*}
\begin{split}
C \int d\bm x(P_{v})\left[\prod_{\ell \in T_v}\left|(\bm x_\ell-\bm y_\ell)^{b(\ell)}_{j(\ell)}\partial^{q(f_\ell^1)}_{j(f_\ell^1)}\partial^{q(f_\ell^2)}_{j(f_\ell^2)} g^{(h_\ell)}_{\ell}\right|\right]\cdot \\ \cdot\left[\prod_{i=1}^{n}\left|(\bm x^i-\bm y^i)^{b(v^*_i)}_{j(v^*_i)}K^{(h_i)}_{{v^*_{i}}}(\bm x_{v^*_i}))\right|\right]w_{h_v}(x)\left(\frac{1+\gamma^{ h}|x|}{1+\gamma^{h}|s|}\right)^\theta
\end{split}
\end{equation*}
so the main goal is to replace the factor within the last brackets by a constant. By definition of spanning tree $T_v=\cup_{v'\notin V_f(\tau_v)}T_{v'}$, there exists a connected path of lines belonging to the spanning tree that connects $s$ to $x$, so we can expand along the tree $|x|^\theta$:
\begin{equation}
|x|^\theta\leq C_\theta \left(|s|^\theta+\sum_{i=0}^{m-1}|z_{i+1}-z_i|^\theta\right),
\end{equation}
where $z_0:=x$, $z_m:=s$ and $z_1,\dots,z_{m-1}$ are the real space coordinates associated with the vertices of the path. So
\begin{equation}
 \left(\frac{1+\gamma^{\theta h}|x|^{\theta}}{1+\gamma^{\theta h}|s|^{\theta}}\right)\leq  C_\theta\left(\left(\frac{1+\gamma^{\theta h}|s|^{\theta}}{1+\gamma^{\theta h}|s|^{\theta}}\right) + \sum_{i=0}^{m-1}\gamma^{\theta h}|z_{i+1}-z_i|^\theta\right).
\end{equation}
By construction, for each couple of points $(z_{i+1}, z_i)$ there is a propagator $g_{P,\omega}^{(k)}(\bm z_{i+1}- \bm z_i)$ with $k\geq h$ such that 
$$\int d(z_{i+1_0}-z_{i_0}) \int d(z_{i+1}-z_i)|z_{i+1}-z_i| |g^{(k)}_{P,\omega}(\bm z_{i+1}-\bm z_i))|\leq \gamma^{-k} \gamma^{-k},$$
so that, in order to get the bound we are interested in, we can replace, inside the integral, $$\gamma^{\theta h}|z_{i+1}-z_i|^\theta\leq c_1 \gamma^{\theta (h-k)}\leq c_1,$$ since $h-k\leq 0$.
\end{proof}

\begin{figure}
\centering
\begin{tikzpicture}
[scale=0.5, transform shape]
\draw [blue, very thick] (7.5,7.5) ellipse (7 and 3);
\node at (5,10) {\textcolor{blue}{$\bm h$}};
\node at (12.8, 10.2) {\textcolor{red}{{$\bar {\bm h}$}}};
\node at (13.7,9.9) {\textcolor{red}{$\gamma^{\theta(h-\bar h)}$}};
\fill (6, 6) circle (0.1);
\fill (8, 8.5) circle (0.1);
\fill (12, 9) circle (0.1);
\fill (4,7) circle (0.1);
\fill (10, 6) circle (0.1);
\fill (14,11) circle (0.1);
\draw [very thick] (4,7) -- (8,8.5);
\draw [-,decorate, decoration={snake}] (4,7) -- (3,6);
\draw [-,decorate, decoration={snake}] (4,7) -- (3,8);
\draw [very thick, loosely dotted] (3.5, 6.7) -- (3.5,7.3);
\draw [-,decorate, decoration={snake}] (6,6) -- (5,6);
\draw [-,decorate, decoration={snake}] (6,6) -- (6,5);
\draw [very thick, loosely dotted] (5.3, 5.7) -- (5.7,5.3);
\draw [-,decorate, decoration={snake}] (8,8.5) -- (7.5,9.3);
\draw [-,decorate, decoration={snake}] (8,8.5) -- (8.8,9.3);
\draw [very thick, loosely dotted] (7.85,9) -- (8.45,9);
\draw [-,decorate, decoration={snake}] (10,6) -- (9,5);
\draw [-,decorate, decoration={snake}] (10,6) -- (11,5);
\draw [very thick, loosely dotted] (9.4,5.3) -- (10.6,5.3);
\draw [-,decorate, decoration={snake}] (12,9) -- (13,9);
\draw [-,decorate, decoration={snake}] (12,9) -- (12,8);
\draw [very thick, loosely dotted] (12.1,8.2) -- (12.9,8.8);
\draw [very thick] (8,8.5) -- (6,6);
\draw [very thick] (8,8.5) -- (6,6);
\draw [very thick] (8,8.5) -- (10,6);
\draw [very thick] (8,8.5) -- (12,9);
\draw [red, very thick] (12,9) -- (14,11);
\node [regular polygon,red, regular polygon sides=4,
        minimum size=3mm, fill] at (14, 11) {};
\end{tikzpicture}
\caption{This figure has to be thought of as linked to Figure (\ref{figure_lemma_transfer}): by integrating together the blue dot and the red propagator in figure (\ref{figure_lemma_transfer}), we can transfer the blue dot into the red one outside the cluster and get a dimensional gain: this is the result of Lemma (\ref{lemma_effective_gain}).}
\end{figure}
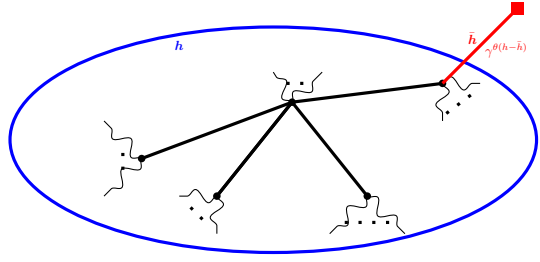

\begin{lem}
\label{lemma_integral_w_g}
Let $w_h(x)\in\{\rho_h(x),\varpi_h(x)\}$ and $\bar h< h$. So
\begin{equation}
\left| \int d\bm y w_h(y)g^{(\bar h)}_{P,\omega}(\bm y-\bm x)\right|\leq C_\theta \gamma^{-\bar h} \gamma^{\alpha(\bar h- h)}w_{\bar h}(x),\mbox{ where } \alpha=\begin{cases}
1 \mbox{ if } w_h(\cdot)=\rho_h(\cdot),\\
\theta \mbox{ if } w_h(\cdot)=\varpi_h(\cdot).
\end{cases}
\end{equation}
\end{lem}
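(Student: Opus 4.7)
The plan is first to integrate out the time variable $y_{0}$ in order to reduce the claim to a one-dimensional spatial convolution. From the pointwise bound $|g^{(\bar h)}_{P,\omega}(\bm y-\bm x)|\leq C_N\gamma^{\bar h}/(1+\gamma^{\bar h}|\bm y-\bm x|)^{N}$ given by Corollary \ref{corollary_norms_propagators_DBC}, the substitution $t=\gamma^{\bar h}(y_{0}-x_{0})$ yields
\[
G^{(\bar h)}(y-x):=\int dy_{0}\,|g^{(\bar h)}_{P,\omega}(\bm y-\bm x)|\leq\frac{C_{N}}{(1+\gamma^{\bar h}|y-x|)^{N-1}},
\]
so it suffices to bound $\int dy\,w_{h}(y)\,G^{(\bar h)}(y-x)$, with $N$ taken as large as needed in the course of the estimate.

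I would then split the argument into a \emph{far regime} $|x|\geq 2\gamma^{-\bar h}$ and a \emph{near regime} $|x|< 2\gamma^{-\bar h}$. In the far regime one cuts the $y$-integration at $|y-x|=|x|/2$: when $|y-x|\leq|x|/2$ the triangle inequality forces $|y|\geq|x|/2$, so $w_{h}(y)\leq C\,w_{h}(x)$ and the remaining $G^{(\bar h)}$-integral contributes at most $C\gamma^{-\bar h}$; when $|y-x|>|x|/2$ the propagator decay factors out a $(1+\gamma^{\bar h}|x|)^{-(N-1)}$ which, for $N$ large enough, dominates the slower decay of $w_{\bar h}(x)$. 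To match the target $w_{\bar h}(x)$ one uses the elementary comparisons $(1+\gamma^{\bar h}|x|)^{\bar N}/(1+\gamma^{h}|x|)^{\bar N}\leq C\gamma^{\bar N(\bar h-h)}$ for $w=\rho_{h}$ and $(1+\gamma^{\theta\bar h}|x|^{\theta})/(1+\gamma^{\theta h}|x|^{\theta})\leq C\gamma^{\theta(\bar h-h)}$ for $w=\varpi_{h}$, which produce exactly the gain $\gamma^{\alpha(\bar h-h)}$.

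In the near regime $w_{\bar h}(x)$ is bounded below by a constant (respectively by $C|\lambda|$), so the claim reduces to showing $\int dy\,w_{h}(y)G^{(\bar h)}(y-x)\leq C\gamma^{-\bar h}\gamma^{\alpha(\bar h-h)}$. For $w=\rho_{h}$ this is immediate from the two global bounds $\sup_{y}G^{(\bar h)}\leq C$ and $\|\rho_{h}\|_{L^{1}(\mathbb{R})}\leq C\gamma^{-h}=C\gamma^{-\bar h}\gamma^{\bar h-h}$. For $w=\varpi_{h}$ with $\theta<1$ this case is more delicate, because $\varpi_{h}$ is not globally integrable; the argument requires decomposing the $y$-integration into the three zones $|y|\leq\gamma^{-h}$, $\gamma^{-h}<|y|\leq 3\gamma^{-\bar h}$ and $|y|>3\gamma^{-\bar h}$. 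On the middle zone one uses the power-law bound $\varpi_{h}(y)\leq C|\lambda|(\gamma^{h}|y|)^{-\theta}$ together with $\int_{\gamma^{-h}}^{3\gamma^{-\bar h}}y^{-\theta}\,dy\sim\gamma^{-(1-\theta)\bar h}$ to generate precisely the factor $\gamma^{\theta(\bar h-h)}$; the inner zone gives only a volume contribution of order $C|\lambda|\gamma^{-h}$, which is absorbed into the target since $\gamma^{-h}\leq \gamma^{-\bar h}\gamma^{(1-\theta)(\bar h-h)}\leq\gamma^{-\bar h}\gamma^{\theta(\bar h-h)}$ for $\theta\leq 1$; in the outer zone the inequality $|y-x|\geq|y|/3$ allows one to combine the decay of $\varpi_{h}$ with that of $G^{(\bar h)}$ and close the estimate by a direct computation.

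The main obstacle will be this last case: the non-integrability of $\varpi_{h}$ at infinity, which is precisely what forces the polynomial (rather than local) boundary counterterm $\pi(x,y)$ to be introduced in the Hamiltonian in the first place, prevents any naive application of a Young-type convolution inequality, and the exponent $\theta<1$ must be tracked carefully through all three zones of $y$-space in order to obtain the sharp gain $\gamma^{\theta(\bar h-h)}$ stated in the lemma.
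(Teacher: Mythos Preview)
Your overall strategy---integrate out $y_0$ first, then split on the size of $|x|$ relative to $\gamma^{-\bar h}$ and on $|y-x|$ relative to $|x|/2$---is a legitimate route and genuinely different from the paper's. The paper (Appendix~E) instead expands both $w_h(y)$ and the time-integrated propagator as geometric sums of exponentials, $\sum_{k\le h}\gamma^{\alpha(k-h)}e^{-\gamma^k|y|}$, reduces the spatial convolution to the elementary estimate $\int e^{-\gamma^k|y|}e^{-\gamma^{\bar k}|x-y|}dy\le C\gamma^{-\max\{k,\bar k\}}(e^{-\gamma^k|x|}+e^{-\gamma^{\bar k}|x|})$, and then analyzes the resulting double sum over $(k,\bar k)$ by cutting at $k=\bar h$ and at $k=\bar k$. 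Your direct real-space approach is more elementary and avoids the dyadic rewriting; the paper's approach is more mechanical once the exponential representation is set up. Your near-regime argument and the far-regime first subcase are fine.

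There is, however, a real gap in your far-regime second subcase $|y-x|>|x|/2$ when $w_h=\varpi_h$. Factoring out $(1+\gamma^{\bar h}|x|)^{-(N-1)}$ indeed ``dominates the slower decay of $w_{\bar h}(x)$'', but that only matches the \emph{spatial} profile of the target; it does not produce the \emph{scale gain} $\gamma^{\theta(\bar h-h)}$, which depends on $h$ and cannot come from $(1+\gamma^{\bar h}|x|)^{-(N-1)}$ alone. The elementary comparison you quote converts $w_h(x)$ into $\gamma^{\alpha(\bar h-h)}w_{\bar h}(x)$, and that applies to the first subcase; in the second subcase there is no $w_h(x)$ available to convert. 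For $w_h=\rho_h$ the gain is hidden in $\|\rho_h\|_{L^1}\le C\gamma^{-h}=C\gamma^{-\bar h}\gamma^{\bar h-h}$, so the subcase closes as soon as you say what the remaining $y$-integral is. For $w_h=\varpi_h$ this fails, because $\varpi_h$ is not uniformly integrable; bounding $\varpi_h$ by its sup and integrating the leftover propagator decay gives $C|\lambda|\gamma^{-\bar h}w_{\bar h}(x)$ with \emph{no} gain. To close this subcase you must repeat the three-zone analysis in $|y|$ that you carried out in the near regime (or at least split on $|y|\lessgtr|x|/4$): the region $|y|>|x|/4$ gives $\varpi_h(y)\le C\varpi_h(x)$ and reduces to the first subcase, while the region $|y|\le|x|/4$ forces $|y-x|\ge 3|x|/4$ and an explicit computation of $\int_0^{|x|/4}(1+\gamma^h y)^{-\theta}dy\le C\gamma^{-\theta h}|x|^{1-\theta}$ then yields exactly $\gamma^{\theta(\bar h-h)}$. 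Once you add this, the argument is complete.
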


We present a detailed proof of this Lemma in Appendix (\ref{appendix_proof_lemma_effective_gain}) because, even thought simple, it is long and it would make the proof we are involved in less understandable.

\begin{rem}
\label{remark_gain_renormalization_reminder}
Lemma (\ref{lemma_transfer}) tells us that, if there is some weight function at scale $h$, we can associate it with the vertex shared by the kernel at scale $h$ and some propagator at scale $\bar h \leq h$, so it is natural to use Lemma (\ref{lemma_integral_w_g}), which tells us that by integrating a propagator $g^{(\bar h)}_{P,\omega}$ "against" a weight function living at some higher scale $h>\bar h$ (in particular, the one "coming" from the kernel $\mathcal W^{(h)})$, we {\it improve the usual bound} $\gamma^{-\bar h}$ by a factor $\gamma^{\theta(\bar h- h)} w_{\bar h}(\cdot)$, where in particular $\sup_x|w_{\bar h}(x)|\leq C$:
\begin{itemize}
\item we can "associate" the factor $\gamma^{\theta(\bar h-h)}$ to the cluster at scale $h$ the weight function came from: this means that we can extract, from the presence of a weight function, a scale gain in RG language, thanks to which the marginal terms become irrelevant, and the relevant one become marginal, being $0< \theta< 1$,
\item moreover, we transferred the weight function to scale $\bar h$ and of course, if we need it, we further transfer $w_{\bar h}(\cdot)$ to smaller scales getting some scale gain that iteratively improves the power counting.
\end{itemize}
\end{rem}
Now we are left with using these technical Lemmata to prove the bound (\ref{lemma_effective_gain}): the core of the proof consists of a precise integration prescription, that systematically uses Lemmata (\ref{lemma_transfer}) and (\ref{lemma_integral_w_g}) to iteratively renormalize the kernels.
\subparagraph{Integration procedure} 
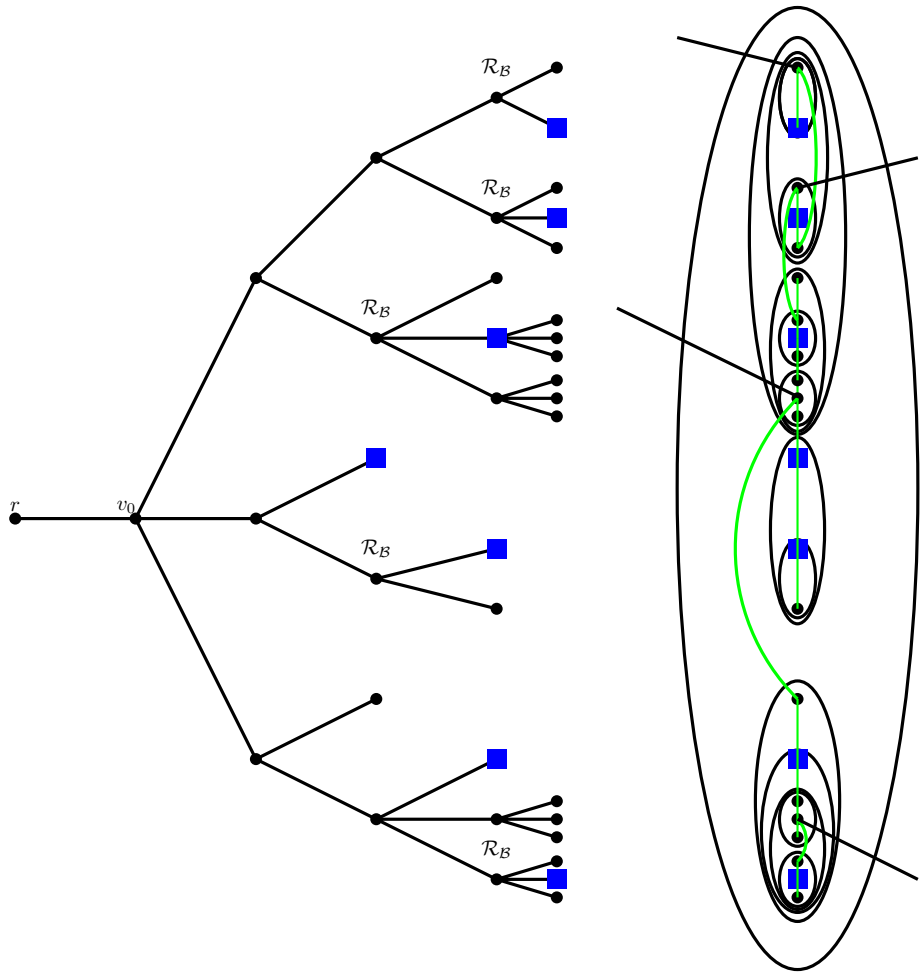
\begin{figure}
\centering
\begin{tikzpicture}
[very thick, scale=0.8, transform shape]
\draw [thick, green] (13,9) -- (13,10);
\draw (0,7) -- ++ (2,0) -- ++ (2,4) ++ (-2,-4) -- ++ (2, 0) ++ (-2,0) -- ++ (2, -4);
\fill (0,7) circle (0.1);
\fill (2,7) circle (0.1);
\fill (4,11) circle (0.1);
\fill (4,7) circle (0.1);
\fill (4,3) circle (0.1);
\draw (4,11) -- ++ (2,2) -- ++ (2,1) -- ++ (1,0.5) ++ (-1,-0.5) -- ++ (1,-0.5);
\draw (4,11) ++ (2,2) -- ++ (2,-1) -- ++ (1,0.5) ++ (-1,-0.5) -- ++ (1,-0.5) ++ (-1, 0.5) -- ++ (1,0);
\draw (4,11) -- ++ (2,-1) -- ++ (2,1) ++ (-2,-1) -- ++ (2,0) -- ++ (1,0.3) ++ (-1,-0.3) -- ++ (1,0) ++ (-1,0) -- ++ (1,-0.3) ++ (-3, 0.3) -- ++ (2,-1) -- ++ (1,0.3) ++ (-1,-0.3) -- ++ (1,0) ++ (-1,0) -- ++ (1,-0.3);
\draw (4,7) -- ++ (2,1) ++ (-2,-1) -- ++ (2,-1) -- ++ (2,0.5) ++ (-2,-0.5) -- ++ (2,-0.5);
\draw (4,3) -- ++ (2,1) ++ (-2,-1) -- ++ (2,-1) -- ++ (2,0) -- ++ (1,0.3) ++ (-1,-0.3) -- ++ (1,0) ++ (-1,0) -- ++ (1,-0.3) ++ (-3, 0.3) -- ++ (2,-1) -- ++ (1,0.3) ++ (-1,-0.3) -- ++ (1,0) ++ (-1,0) -- ++ (1,-0.3);
\draw (6,2) -- ++ (2,1);
\foreach \i in {2,4,6,8,10,13} {
\fill (6,\i) circle (0.1);
}
\foreach \i in {1,2,3,5.5,6.5,9,10,11,12,14}{
\fill (8,\i) circle (0.1);
}
\foreach \i in {0.7,1,1.3,1.7,2,2.3,8.7,9,9.3,9.7,10,10.3,11.5,12,12.5,13.5,14.5}{
\fill (9,\i) circle (0.1);
}
\node [regular polygon,blue, regular polygon sides=4,
        minimum size=3mm, fill] at (8, 3) {};
        \node [regular polygon,blue, regular polygon sides=4,
        minimum size=3mm, fill] at (9, 1) {};
        \node [regular polygon,blue, regular polygon sides=4,
        minimum size=3mm, fill] at (6, 8) {};
        \node [regular polygon,blue, regular polygon sides=4,
        minimum size=3mm, fill] at (8, 6.5) {};
        \node [regular polygon,blue, regular polygon sides=4,
        minimum size=3mm, fill] at (8, 10) {};
        \node [regular polygon,blue, regular polygon sides=4,
        minimum size=3mm, fill] at (9, 12) {};\node [regular polygon,blue, regular polygon sides=4,
        minimum size=3mm, fill] at (9, 13.5) {};
\foreach \i in {0.7,1,1.3,1.7,2,2.3,3,4,5.5,6.5,8,8.7,9,9.3,9.7,10,10.3,11,11.5,12,12.5,13.5,14.5}
{
\fill (13, \i) circle (0.1);
}
\foreach \i in {1,2,9,10}{
\draw (13,\i) ellipse (0.3 and 0.45);
}
\draw (13,6) ellipse (0.3 and 0.65);
\draw (13,12) ellipse (0.3 and 0.65);
\draw (13,14) ellipse (0.3 and 0.65);
\draw (13,14) ellipse (0.3 and 0.65);
\draw (13,13) ellipse (0.5 and 1.75);
\draw (13,9.8) ellipse (0.45 and 1.35);
\draw (13,6.8) ellipse (0.45 and 1.55);
\draw (13,1.5) ellipse (0.45 and 1);
\draw (13,1.8) ellipse (0.6 and 1.35);
\draw (13,2.3) ellipse (0.7 and 2);
\draw (13, 11.7) ellipse (0.8 and 3.3);
\draw (13, 7.5) ellipse (2 and 8);
\node [regular polygon,blue, regular polygon sides=4,
        minimum size=1mm, fill] at (13, 3) {};
        \node [regular polygon,blue, regular polygon sides=4,
        minimum size=1mm, fill] at (13, 1) {};
        \node [regular polygon,blue, regular polygon sides=4,
        minimum size=1mm, fill] at (13, 8) {};
        \node [regular polygon,blue, regular polygon sides=4,
        minimum size=1mm, fill] at (13, 6.5) {};
        \node [regular polygon,blue, regular polygon sides=4,
        minimum size=1mm, fill] at (13, 10) {};
        \node [regular polygon,blue, regular polygon sides=4,
        minimum size=1mm, fill] at (13, 12) {};\node [regular polygon,blue, regular polygon sides=4,
        minimum size=1mm, fill] at (13, 13.5) {};
\draw [thick, green] (13,13.5) -- ++ (0,1);
\draw [thick, green] (13,12.5) -- ++ (0,-0.5) -- ++ (0,-0.5);
\draw [thick, green] (13,11) -- ++ (0,-0.7) ++ (0,-0.3) -- ++ (0,-0.3) -- ++ (0,-0.4) ++ (0,-0.3) -- ++ (0,-0.3) -- ++ (0,-0.7) -- ++ (0,-1.5) -- ++ (0,-1) ++ (0,-1.5) -- ++ (0,-1) -- ++ (0,-0.7) -- ++ (0,-0.3) -- ++ (0,-0.3) ++ (0,-0.4) -- ++ (0,-0.3) -- ++ (0,-0.3);
\draw [green ](13,1.3)  to [out =45, in = -45 ]  (13, 2);
\draw [green ](13,14.5)  to [out =-45, in = 45, looseness=0.5 ]  (13, 11.5);
\draw [green ](13,12.5)  to [out =225, in = -225, looseness=0.5 ]  (13, 10.3);
\draw [green, thick] (13,10.3) -- ++ (0,-0.3);
\draw [green ](13,9)  to [out =225, in = -225, looseness=1 ]  (13, 4);
\draw (13, 14.5) -- (11,15);
\draw (13, 12.5) -- (15,13);
\draw (13, 9.03) -- (10,10.5);
\draw (13, 2) -- (15,1);
\node at (8,14.5) {$\mathcal R_\mathcal B$};
\node at (8,12.5) {$\mathcal R_\mathcal B$};
\node at (6,10.5) {$\mathcal R_\mathcal B$};
\node at (6,6.5) {$\mathcal R_\mathcal B$};
\node at (8,1.5) {$\mathcal R_\mathcal B$};
\node at (0,7.2) {$r$};
\node at (1.85,7.2) {$v_0$};
\end{tikzpicture}
\caption{Example of a renormalized tree (left) and its respective cluster structure (right). Only the $\mathcal R_\mathcal B$ operators are explicitly written, and the blue squares are the weight functions $w_h(\cdot)$. The union of the green lines on the right represents the {\it spanning tree}, while the four black lines are the four external legs. In Figure (\ref{figure_lemma_integration_prescription_step_two}) we describe the first step of the integration.}
\label{figure_integration_prescription_step_one}
\end{figure}

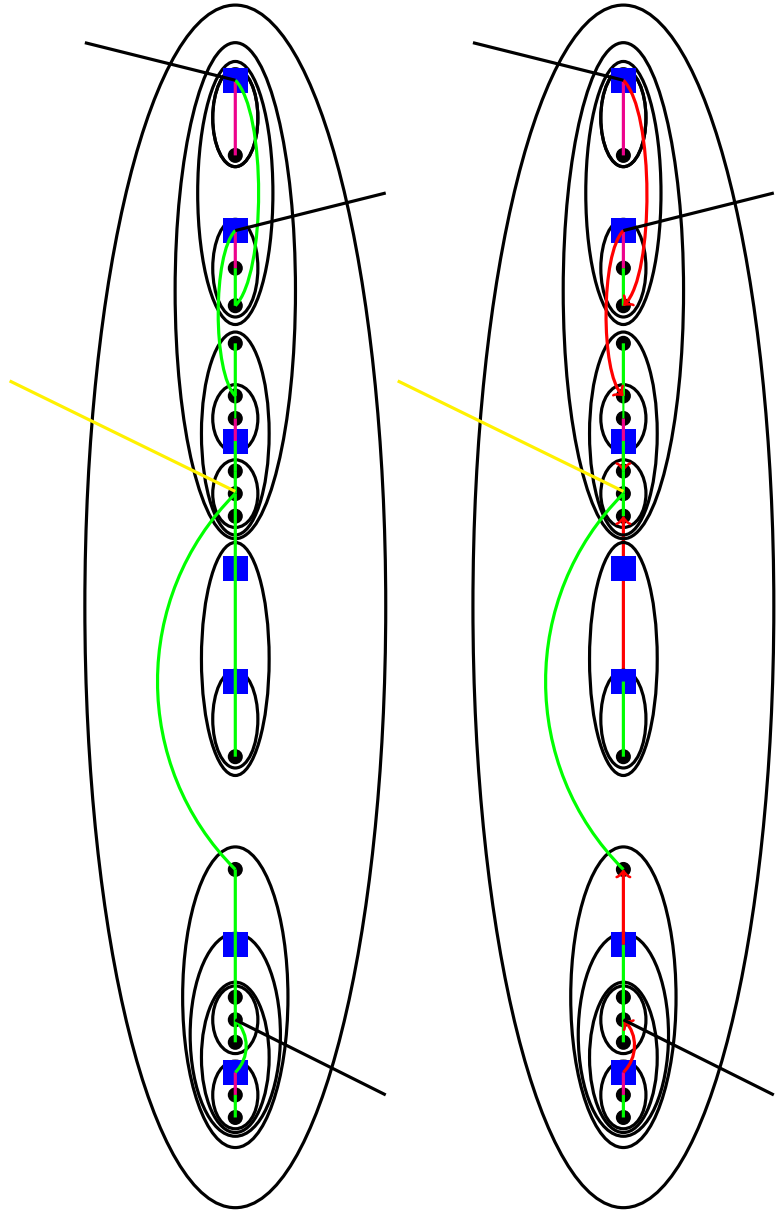
\begin{figure}
\centering
\begin{tikzpicture}
[very thick, scale=1, transform shape]
\foreach \i in {0.7,1,1.3,1.7,2,2.3,3,4,5.5,6.5,8,8.7,9,9.3,9.7,10,10.3,11,11.5,12,12.5,13.5,14.5}
{
\fill (13, \i) circle (0.1);
}
\foreach \i in {1,2,9,10}{
\draw (13,\i) ellipse (0.3 and 0.45);
}
\draw (13,6) ellipse (0.3 and 0.65);
\draw (13,12) ellipse (0.3 and 0.65);
\draw (13,14) ellipse (0.3 and 0.65);
\draw (13,14) ellipse (0.3 and 0.65);
\draw (13,13) ellipse (0.5 and 1.75);
\draw (13,9.8) ellipse (0.45 and 1.35);
\draw (13,6.8) ellipse (0.45 and 1.55);
\draw (13,1.5) ellipse (0.45 and 1);
\draw (13,1.8) ellipse (0.6 and 1.35);
\draw (13,2.3) ellipse (0.7 and 2);
\draw (13, 11.7) ellipse (0.8 and 3.3);
\draw (13, 7.5) ellipse (2 and 8);
\node [regular polygon,blue, regular polygon sides=4,
        minimum size=1mm, fill] at (13, 3) {};
        \node [regular polygon,blue, regular polygon sides=4,
        minimum size=1mm, fill] at (13, 1.3) {};
        \node [regular polygon,blue, regular polygon sides=4,
        minimum size=1mm, fill] at (13, 8) {};
        \node [regular polygon,blue, regular polygon sides=4,
        minimum size=1mm, fill] at (13, 6.5) {};
        \node [regular polygon,blue, regular polygon sides=4,
        minimum size=1mm, fill] at (13, 9.7) {};
        \node [regular polygon,blue, regular polygon sides=4,
        minimum size=1mm, fill] at (13, 12.5) {};
        \node [regular polygon,blue, regular polygon sides=4,
        minimum size=1mm, fill] at (13, 14.5) {};
        \draw [very thick, magenta] (13, 13.5) -- (13,14.5);
        \draw [very thick, green]  (13, 12) -- ++ (0,-0.5);
        \draw [very thick, green]  (13, 9.3) -- ++ (0,-0.3);
\draw [very thick, magenta] (13,12.5) -- ++ (0,-0.5);
\draw [very thick, magenta] (13,10) -- ++ (0,-0.3);
\draw [very thick, magenta] (13,1) -- ++ (0,0.3);
\draw [very thick, green] (13,11) -- ++ (0,-0.7) ++ (0,-0.3) ++ (0,-0.3) -- ++ (0,-0.4) ++ (0,-0.3) -- ++ (0,-0.3) -- ++ (0,-0.7) -- ++ (0,-1.5) -- ++ (0,-1) ++ (0,-1.5) -- ++ (0,-1) -- ++ (0,-0.7) -- ++ (0,-0.3) -- ++ (0,-0.3) ++ (0,-0.4) ++ (0,-0.3) -- ++ (0,-0.3);
\draw [green ](13,1.3)  to [out =45, in = -45 ]  (13, 2);
\draw [green ](13,14.5)  to [out =-45, in = 45, looseness=0.5 ]  (13, 11.5);
\draw [green ](13,12.5)  to [out =225, in = -225, looseness=0.5 ]  (13, 10.3);
\draw [green, thick] (13,10.3) -- ++ (0,-0.3);
\draw [green ](13,9)  to [out =225, in = -225, looseness=1 ]  (13, 4);
\draw (13, 14.5) -- (11,15);
\draw (13, 12.5) -- (15,13);
\draw [yellow] (13, 9.03) -- (10,10.5);
\draw (13, 2) -- (15,1);
\end{tikzpicture}
\centering
\begin{tikzpicture}
[very thick,scale=1, transform shape]
\foreach \i in {0.7,1,1.3,1.7,2,2.3,3,4,5.5,6.5,8,8.7,9,9.3,9.7,10,10.3,11,11.5,12,12.5,13.5,14.5}
{
\fill (13, \i) circle (0.1);
}
\foreach \i in {1,2,9,10}{
\draw (13,\i) ellipse (0.3 and 0.45);
}
\draw [->, red] (13,6.5) -- (13,8);
\draw [->, red] (13,8) -- (13,8.7);
\draw [->, red] (13,9.7) -- (13,9.3);
\draw (13,6) ellipse (0.3 and 0.65);
\draw (13,12) ellipse (0.3 and 0.65);
\draw (13,14) ellipse (0.3 and 0.65);
\draw (13,14) ellipse (0.3 and 0.65);
\draw (13,13) ellipse (0.5 and 1.75);
\draw (13,9.8) ellipse (0.45 and 1.35);
\draw (13,6.8) ellipse (0.45 and 1.55);
\draw (13,1.5) ellipse (0.45 and 1);
\draw (13,1.8) ellipse (0.6 and 1.35);
\draw (13,2.3) ellipse (0.7 and 2);
\draw (13, 11.7) ellipse (0.8 and 3.3);
\draw (13, 7.5) ellipse (2 and 8);
\node [regular polygon,blue, regular polygon sides=4,
        minimum size=1mm, fill] at (13, 3) {};
        \node [regular polygon,blue, regular polygon sides=4,
        minimum size=1mm, fill] at (13, 1.3) {};
        \node [regular polygon,blue, regular polygon sides=4,
        minimum size=1mm, fill] at (13, 8) {};
        \node [regular polygon,blue, regular polygon sides=4,
        minimum size=1mm, fill] at (13, 6.5) {};
        \node [regular polygon,blue, regular polygon sides=4,
        minimum size=1mm, fill] at (13, 9.7) {};
        \node [regular polygon,blue, regular polygon sides=4,
        minimum size=1mm, fill] at (13, 12.5) {};
        \node [regular polygon,blue, regular polygon sides=4,
        minimum size=1mm, fill] at (13, 14.5) {};
        \draw [very thick, magenta] (13, 13.5) -- (13,14.5);
        \draw [very thick, green]  (13, 12) -- ++ (0,-0.5);
        \draw [very thick, green]  (13, 9.3) -- ++ (0,-0.3);
\draw [very thick, magenta] (13,12.5) -- ++ (0,-0.5);
\draw [very thick, magenta] (13,10) -- ++ (0,-0.3);
\draw [very thick, magenta] (13,1) -- ++ (0,0.3);
\draw [very thick, green] (13,11) -- ++ (0,-0.7) ++ (0,-0.3) ++ (0,-0.3) -- ++ (0,-0.4) ++ (0,-0.3) -- ++ (0,-0.3)  ++ (0,-0.7)  ++ (0,-1.5) -- ++ (0,-1) ++ (0,-1.5)  ++ (0,-1) -- ++ (0,-0.7) -- ++ (0,-0.3) -- ++ (0,-0.3) ++ (0,-0.4) ++ (0,-0.3) -- ++ (0,-0.3);
\draw [red, -> ](13,3)  --  (13, 4);
\draw [red, -> ](13,1.3)  to [out =45, in = -45 ]  (13, 2);
\draw [red, -> ](13,14.5)  to [out =-45, in = 45, looseness=0.5 ]  (13, 11.5);
\draw [red, -> ](13,12.5)  to [out =225, in = -225, looseness=0.5 ]  (13, 10.3);
\draw [green, thick] (13,10.3) -- ++ (0,-0.3);
\draw [green ](13,9)  to [out =225, in = -225, looseness=1 ]  (13, 4);
\draw (13, 14.5) -- (11,15);
\draw (13, 12.5) -- (15,13);
\draw [yellow] (13, 9.03) -- (10,10.5);
\draw (13, 2) -- (15,1);
\end{tikzpicture}
\caption{Step one of the integration procedure. The starting point is the cluster structure of figure (\ref{figure_integration_prescription_step_one}). First of all we fixed one of the external legs, the yellow one, as the root. Using Lemma (\ref{lemma_transfer}), we moved the weight functions, {\it i.e.} the blue squares, along the spanning tree (along the lines that were green in Figure (\ref{figure_integration_prescription_step_one}) and that are {\it magenta} in this Figure), in order to use Lemma (\ref{lemma_effective_gain}) getting the gain and going toward the root of the spanning tree. The first step of the integration is done along the red arrows (on the right), "{\it living}" at a higher scale with respect to the blue dots they are attached to.}
\label{figure_lemma_integration_prescription_step_two}
\end{figure}

At this point we can describe the integration procedure to bound:
\begin{equation*}
\begin{split}
\frac{1}{|\Lambda|\beta}\left|\int d\bm x(P_{v_0})\prod_{v\notin V_f(\tau)}\gamma^{h_vq_{\alpha,G^{h_v,T_v}}}\left(\left[\prod_{\ell \in T_v}(\bm x_\ell-\bm y_\ell)^{b(\ell)}_{j(\ell)}\partial^{q(f_\ell^1)}_{j(f_\ell^1)}\partial^{q(f_\ell^2)}_{j(f_\ell^2)} g^{(h_\ell)}_{\ell}\right]\right)\cdot \right.\\\left. \cdot\left[\prod_{i=1}^{n}(\bm x^i-\bm y^i)^{b(v^*_i)}_{j(v^*_i)}K^{(h_i)}_{{v^*_{i}}}(\bm x_{v^*_i}))\right]\left(\prod_{v\in V_\mathcal{B}(\tau)}w_{h_v}(x_v)\right)\right|
\end{split}
\end{equation*}
{\bf Step 1: re-arranging the weight functions} Using  Lemma (\ref{lemma_transfer}), we can move for each $v\in V_\mathcal B(\tau)$, the weight functions $w_{h_v}(\cdot)$ along the spanning tree, re-arranging it in such a way that
\begin{itemize}
\item all the weight functions decorate vertices which are shared by propagators at different scales (in order to apply the result of Lemma (\ref{lemma_integral_w_g})),
\item if we imagine to erase the propagator at higher scale, the weighted vertex is still connected to the root of the rooted spanning tree by a subtree including the propagator at lower scale (of course in this procedure is still arbitrary, because in general there is not only a vertex of this kind, and in particular there could be vertices having the properties we are requiring, and being attached to propagators at different scales: we comment it in the remark at the end of this proof).
\end{itemize}
{\bf Step 2: integration order}
\begin{itemize}
\item we can arbitrairly choose an external line of the cluster $G_{v_0}$, and consider it as the {\it root} of the tree giving a natural order relation to the tree;
\item moreover, we use the {\it quantifier "for each $\theta\in (0,1)$"} appearing in the hypothesys of the weight functions $\varpi_{h}(\cdot)$: considering the {\it rooted spanning tree} rooted in the selected external leg we just defined, we pick the {\it closest vertex}, in the sense of the tree distance, to the root (if there are more then one at the same distance, we can arbitrarly choose one of them).
\begin{itemize}
\item if this vertex is associated with $w_{h^*}(\cdot)=\rho_{h^*}(\cdot)$, for some $h^*>h$, we do nothing,
\item if this vertex is associated with $w_{h^*}(\cdot)=\varpi_{h^*}(\cdot)$, for some $h^*>h$, we {\it fix once for all}, for all the other weight functions associated with the vertices of the spanning tree, $\theta\equiv \theta'$ in the decay hypothesis, except for the selected vertex, for which we keep the quantifier {\it "for each $\theta\in (0,1)$"}.
\end{itemize}
\end{itemize}
After these manipulations, the integral we are bounding has to be read as follows:
\begin{itemize}
\item all the propagators have to be read, with the purpos of an upper bound, as $P$-type propagators,
\item each of the  weight functions $w_{h_v}(\cdot)$, $v\in V_\mathcal B(\tau)$, is associated with some special vertex of the spanning tree: {\it i.e.} it is associated with the exiting point of some propagator $g^{(h_\ell)}_\ell$ at scale $h_\ell<h_v$, so that in the integration procedure we can think of all the weight functions as associated to some propagator: 
\end{itemize}
\begin{equation}
|w_{h'} \ast g_\ell|
\end{equation}
where $h'>h_\ell$. In particular, $h'$ is the scale of the cluster we want to {\it renormalize}, while $h_\ell$ is the scale of the line $\ell$ exiting this cluster.\\

{\bf Iterative integration} At this point, starting from the leaves of the rooted spanning tree we start the integration procedure, observing the following prescriptions:
\begin{itemize}
\item if none of the vertices linked by $\ell\in T_v$ is {\it weighted}, so we get the same bound as {\it usual} from the integration of the single line of the spanning tree,
\item if one of the two vertices linked by $\ell\in T_v$ is weighted, we use Lemma (\ref{lemma_integral_w_g}) to integrate $\int d\bm x_\ell |w_{h'}\ast g_\ell|$, both transfering the weight function to the lower scale $h_\ell$ and getting the {\it scale jump} $\gamma^{\theta(h_\ell-h')}$. 
\begin{rem}
Let us stress that this procedure (see Figures (\ref{figure_integration_prescription_step_one}) and (\ref{figure_lemma_integration_prescription_step_two})) is constructed in such a way that, in using Lemma (\ref{lemma_transfer}), we never transfer the weight function along a line of the spanning tree that we already used to transfer weight functions. This fact ensures that the constants $C^{n_v^0-1}$ appearing inthe bound in Lemma (\ref{lemma_transfer}) do not accumulate at all.
\end{rem}
{\bf Observation:}
\begin{itemize}
\item the scale jump $\gamma^{\theta(h_\ell-h')}$ is exactly the gain factor associated to the operators $\mathcal R_{\mathcal B}$ acting on the cluster at scale $h'$,
\item we transferred the weight function to scale $h_\ell$, and so we can use again Lemma (\ref{lemma_transfer}) to move the weight function following the rules described in Step 1. We stress that it could of course be the case that the propagator at scale $h_\ell$ we use to {\it transfer the weight function} links two vertices at scales $(h',h^*)$, so that the weight function $w_{h_\ell}(\cdot)$ would be attached to some vertex belonging to some subcluster at scale $h^*>h_{\ell}$, but it is not an issue: of course we consider this vertex as a vertex of the cluster at scale $h_\ell$, and we can still use Lemma (\ref{lemma_transfer}), as it is clear in the proof of Lemma.
\end{itemize}
\item in this way we are left with computing an integral over a {\it pruned spanning tree} having the same formal structure of the one we starded from, so we can iterate the procedure with this further prescription: since it could happen that at some scale $\bar h$ we get several weight functions $w_h(\cdot)$, so
\begin{itemize}
\item if one of these weight function is the one associated with the {\it special vertex}, we bound all the others by $C_{\theta'}$, keeping the weight function associated with the quantifier "{\it for each $\theta\in (0,1)$}.
\item otherwise, we arbitrarily bound all of them but one by $C_{\theta'}$, and we use Lemma (\ref{lemma_transfer}) to move the weight function following the rule of Step 1.
\end{itemize}
\item the very last non trivial step consists of integrating the cluster $G_v$ containing the weight function $\varpi_{h_v}$ we used in order to preserve the {\it quantifier "for each $\theta\in (0,1)$"}: we perform this integral as follows, getting a {\it gain factor} $\gamma^{h_L-h_{v}}$.\\
We use the coordinate the weight finction $w_h(\cdot)$ is associated with as the root of a change of variables by which we associate an effective variable to each $\ell\in T_v$. We use all the propagators $g_\ell$, $\ell\in T_v$ to integrate these effective variables, {\it i.e.} we bound each line of the spanning tree by $||g_\ell^{h_\ell}||_\infty$, and we use the weight function to integrate the overleft variable, getting:
\begin{equation}
\frac{1}{L} \int_0^L dx \left| w_{h}(x)\right| \leq c_{|\lambda|,w}  \gamma^{\theta\left(h_L-h\right)}
\end{equation}
where $h_L$ has already been defined after (\ref{1_norm_off_diagonal_DBC}), and
\begin{equation}
c_{|\lambda|,w}= \begin{cases}
|\lambda| C_\theta, &\mbox{ if } w_h(\cdot)=\varpi_h(\cdot),\\
c, &\mbox{ if } w_h(\cdot)=\rho_h(\cdot).
\end{cases}
\end{equation}
\begin{proof}
From the hypothesis it follows that, when $w_h(\cdot)=\varpi_h(\cdot)$:
\begin{equation}
\begin{split}
\frac{1}{L}\int_{\Lambda} dx \left|\varpi_{h}(x)\right| \leq \frac{1}{L} \int_\Lambda dx \frac{c |\lambda|}{1+\gamma^{\theta h}|x|^{\theta}}\leq C_\theta |\lambda| \gamma^{-\theta h}\gamma^{h_L\theta}= C_\theta |\lambda| \gamma^{\theta(h_L-h)},
\end{split}
\end{equation}
while, if $w_h(\cdot)=\rho_h(\cdot)$, for each $N=1,2,\dots$:
\begin{equation}
\begin{split}
\frac{1}{L}\int_{\Lambda} dx \rho_{h}(x)= \frac{1}{L} \int_\Lambda dx \frac{C_N}{1+\gamma^{N h}|x|^{N}}\leq C'_N \frac{1}{L}\gamma^{-h}\leq C \gamma^{h_L-h}.
\end{split}
\end{equation}
\end{proof}

 Of course, we can rewrite this gain factor as
\begin{equation}
\label{scale_jump}
\gamma^{\theta(h_L-h_v)}=\gamma^{\theta(h_L-h)}\left[\gamma^{\theta(h-h_1)}\gamma^{\theta(h_1-h_2)}\cdots \gamma^{\theta(h_m-h_v)} \right]
\end{equation}
where $h<h_1<\dots <h_m< h_v$, and the factor included in square brackets renormalizes all the inclapsulated subclusters of $G_1\supset G_2\supset\dots\supset G_m\supset G_v$, while the overleft scale jump $\gamma^{\theta(h_L-h)}$, that at the moment is not crucial in order to prove this theorem, will become crucial in proving the {\it main theorem}.
\end{itemize}
\begin{rem}
First of all, it is worth noting that, following this procedure, having a {\it non translation invariant element} at scale $h_v$ corresponding to the vertex $v\in V(\tau)$ is enough to get a dimensional gain for all the ancestors of $v$: this fact justifies the "simplification" procedure we introduced over the hierarchical organization of the renormalization operators.\\
We already pointed out that there could be several vertices we can possibly choose to glue the weight function to, attached to different propagators living at different scales, so one can think that the more convenient thing to do is to choose the propagator, among those, living at the lowest scale in order to maximize the scale jump in the gain of Lemma (\ref{lemma_integral_w_g}). In fact all the possible choises are completely equivalent for our pourposes:
\begin{itemize}
\item on the one hand, if we choose the propagator living at the lowest scale, we can rewrite the gain we get
$$\gamma^{\theta (h^{\min}_\ell-h')}=\gamma^{\theta (h^{\min}_\ell-h_1)}\gamma^{\theta (h_1-h_2)}\dots \gamma^{\theta (h_n-h')}, \mbox{ where } h_\ell^{\min}<h_1<h_2<\dots<h',$$
are the scales of all the subclusters contained in the cluster at scale $h_\ell^{\min}$, in order to get the {\it right} gain related to the renormalization operators $\mathcal R_{\mathcal B}$,
\item on the other hand, thanks to the {\it transfer of the weight function} to lower scales (Lemma (\ref{lemma_integral_w_g})), we can get the same factor step by step (or splitting the scale jump as we want).
\end{itemize}
\end{rem}

\end{proof}

\begin{corollary}
\label{theorem_bounds_kernels}
Let $\tau\in\mathcal T_{h,n}$ a renormalized tree, $h>h_L$, and $\mathcal W^{(h)}(\tau, P_{v_0}, \bm x(P_{v_0}))$ the respective kernel. If, for some constant $c_1>0$ and if for any $\theta\in (0,1)$ there exists a constant $C_\theta$ such that these bounds are verified
\begin{equation}
\begin{split}
\int_{y\in\Lambda}dy \left| \varpi_{h'}(x,y)\right|\leq |\lambda| \frac{C_\theta}{(1+\gamma^{h'}|x|)^\theta}, \\
\sup_{h'>h}\left(\max\{|\nu_{h'}|,|\delta_{h'}|, |\lambda_{h'}|, |z_{h'}|\}\right)\equiv \epsilon_h, \hspace{3mm} \sup_{h'>h}\left| \frac{Z_{h'}}{Z_{h'-1}} \right|\leq e^{c_1\epsilon_h^2}
\end{split}
\end{equation}
and if there exists a constant $\bar \epsilon$, depending on $c_1$, such that $\epsilon_h\leq \bar \epsilon$, then, for another suitable constant $c_0$ uniform in $c_1$, $L$ and $\beta$ the following bounds are true
\begin{equation}
\sum_{\tau\in\mathcal T_{h,n}}\left[|n_h(\tau)|+|z_h(\tau)|+|a_h(\tau)|+|l_h(\tau)|+||\varpi_h(\tau)||_{\infty,1}\right]\leq \left(c_0\epsilon_h\right)^n,
\end{equation}
\begin{equation}
\sum_{\tau\in\mathcal T_{h,n}}|e_{h+1}|\leq \gamma^{2h}(c_0\epsilon_h)^n
\end{equation}
and
\begin{equation}
\begin{split}
\frac{1}{|\Lambda|\beta}\sum_{\tau\in\mathcal T_{hn}}\int d\bm x(P_{v_0})\left|\mathcal R\mathcal W^{(h)}(\tau,P_{v_0},\bm x(P_{v_0}))\right|\leq \gamma^{-h\left(D(P_{v_0})+z_{v_0}\right)}(c_0\epsilon_h)^n
\end{split}
\end{equation}
where for each $\theta \in (0,1)$,
\begin{equation*}
z_{v_0}=\begin{cases}
1+\theta, &\mbox{ if $G_V$ has two external lines},\\
\theta, &\mbox{ if $G_V$ has four external lines}.
\end{cases}
\end{equation*}
\end{corollary}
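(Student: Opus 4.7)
The plan is to deduce the three bounds in the corollary from Theorem \ref{theorem_renormalized_bounds_DBC} by carefully resumming over tree shapes, over the external field families $\{P_v\}$ and over anchored trees. The logic will parallel the proof of Corollary \ref{theorem_fundamental_PBC} in the PBC case, with the only genuine novelty being that one of the endpoint weights is now the running coupling function $\varpi_{h'}$, whose contribution is measured by the norm $\|\varpi_{h'}\|_{\infty,1}^{(\theta)}$, so that the inductive hypothesis $\int dy\,|\varpi_{h'}(x,y)|\leq |\lambda|C_\theta/(1+\gamma^{h'}|x|)^\theta$ is already baked into Theorem \ref{theorem_renormalized_bounds_DBC} via the weight functions $w_h$. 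The upshot is that each endpoint $v\in V_f(\tau)$ contributes at worst a factor $\epsilon_h$ (with the convention that $\|\varpi_{h'}\|_{\infty,1}^{(\theta)}\leq C_\theta|\lambda|$ is absorbed into $\epsilon_h$ up to the harmless constant $C_\theta$).

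First I would start from the tree bound of Theorem \ref{theorem_renormalized_bounds_DBC} and incorporate the wave function renormalization factors: the renormalized kernels carry an extra product $\prod_{v\notin V_f(\tau)}(Z_{h_v}/Z_{h_v-1})^{|P_v|/2}$, and using $Z_{h_v}/Z_{h_v-1}\leq e^{c_1\epsilon_h^2}$ together with the exponents $D(P_v)+z_v$ one rewrites, exactly as in \eqref{bound_product_z_h/z_h-1_gamma},
\begin{equation*}
\prod_{v\notin V_f(\tau)}\Bigl(\tfrac{Z_{h_v}}{Z_{h_v-1}}\Bigr)^{|P_v|/2}\gamma^{-[D(P_v)+z_v](h_v-h_{v'})}\leq \Bigl(\prod_{v\notin V_f(\tau)}\gamma^{-\frac{1}{40}(h_v-h_{v'})}\Bigr)\Bigl(\prod_{v\notin V_f(\tau)}\gamma^{-|P_v|/40}\Bigr),
\end{equation*}
provided $c_1\epsilon_h^2$ is sufficiently small, where the positivity of $-[D(P_v)+z_v]+1/40+|P_v|/40$ is guaranteed by $z_v\geq\theta>0$ (on renormalized vertices) or by the presence of the localizing factor $\gamma^{-(h_v-h_{v'})}$ on $\mathcal{L}$-vertices of type $\nu$ or $\varpi$. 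The first product bounds the sum over scale labels by $C^n$ and the second product ensures summability over $\{P_v\}$, over the Feynman label set $\alpha\in A_T$ and over anchored trees $T_v$, following the standard scheme (\cite{benfatto2001renormalization}, Sec.~3).

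Then I would sum over the shape $\tau\in\mathcal{T}_{h,n}$: the number of distinct shapes with $n$ endpoints is $O(4^n)$, which is absorbed in $c_0^n$. The endpoints contribute $\prod_v r_v\leq \epsilon_h^n$ (where in the case of a $\varpi$-type endpoint the contribution is $\|\varpi_{h_{v'}}\|_{\infty,1}^{(\theta)}\leq C_\theta\epsilon_h$). Collecting everything yields the bound on the renormalized kernels with the stated $z_{v_0}$. Specializing to trees whose first vertex is an $\mathcal{L}$-vertex of type $\nu,\delta,\lambda$ or $\varpi$ (so that $|P_{v_0}|=2,4$ and no renormalization is applied at $v_0$) yields the bound on $|n_h(\tau)|+|z_h(\tau)|+|a_h(\tau)|+|l_h(\tau)|+\|\varpi_h(\tau)\|_{\infty,1}^{(\theta)}$; specializing instead to trees with $|P_{v_0}|=0$ gives the $\gamma^{2h}$ bound on the vacuum contribution $e_{h+1}$, since in this case $D(P_{v_0})+z_{v_0}=-2$.

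The only point that requires care, and which is the main obstacle compared to the PBC analogue, is to check that the weight function $\varpi_h(x)$ appearing as the local part of a $\varpi$-type endpoint is compatible with the handling of the weight functions $w_h$ inside Theorem \ref{theorem_renormalized_bounds_DBC}: one has to verify that the $\theta$-dependent decay of $\varpi_{h_{v'}}(x,y)$ gives exactly the weight assigned to the endpoint in Lemmata \ref{lemma_transfer}--\ref{lemma_integral_w_g}, so that the bookkeeping of the $\theta$-exponents closes and no endpoint is ever ``used twice'' (once for localization and once to renormalize an ancestor cluster). This is guaranteed by the rigidity of the integration prescription described after Lemma \ref{lemma_integral_w_g}, which selects a single ``special vertex'' per chain of renormalizations and bounds all the other weight functions by $C_{\theta'}$. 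Once this compatibility is checked, the claims of the corollary follow routinely.
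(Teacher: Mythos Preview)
Your proposal is correct and follows essentially the same route as the paper: start from the single-tree bound of Theorem~\ref{theorem_renormalized_bounds_DBC}, absorb the wave function renormalization factors via the inequality $\prod_{v}(Z_{h_v}/Z_{h_v-1})^{|P_v|/2}\gamma^{-[D(P_v)+z_v](h_v-h_{v'})}\leq \bigl(\prod_{\bar v}\gamma^{-(h_{\bar v}-h_{\bar v'})/40}\bigr)\bigl(\prod_v\gamma^{-|P_v|/40}\bigr)$, and then use the two resulting products to control, respectively, the sum over scale labels and the sums over $\{P_v\}$, $T$ and tree shapes, referring to \cite{benfatto2001renormalization} for the combinatorics. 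The paper's proof is terser than yours; your additional paragraph on the compatibility of the $\varpi$-endpoint weights with the integration prescription of Lemmata~\ref{lemma_transfer}--\ref{lemma_integral_w_g} is not a separate argument but simply makes explicit what the paper leaves implicit in the statement of Theorem~\ref{theorem_renormalized_bounds_DBC}, where $r_v=\sup_x\int dy\,|\varpi_{h_{v'}}(x,y)|$ already appears as the endpoint contribution.
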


\begin{proof}
Exploiting the dimensional gains coming from the operator $\mathcal R$ acting as described in equation (\ref{renormalized_kernels_explicit_expression}), we can repeat the proof of Theorem (\ref{theorem_bound_of_kernels}) by replacing
\begin{equation}
\prod_{v\notin V_f(\tau)}\gamma^{-D(v)(h_v-h_{v'})}\rightarrow \prod_{v\notin V_f(\tau)}\left(\frac{Z_{h_v}}{Z_{h_v-1}}\right)^{|P_v|/2}\gamma^{-[D(v)+z_v](h_v-h_{v'})}
\end{equation}
By the assumption $\sup_{h'>h}Z_{h'}/Z_{h'-1}\leq e^{c_1\epsilon_h^2}\leq$, taking $c_z\epsilon_h^2\leq 1/16$, one gets that
\begin{equation}
\prod_{v\notin V_f(\tau)}(Z_{h_v}/Z_{h_v-1})^{|P_v|/2}\gamma^{-[-2+|P_v|/2+z_v]}\leq \left(\prod_{\bar v }\gamma^{-\frac{1}{40}(h_{\bar v}-h_{\bar v'})}\right)\left(\prod_{v\notin V_f(\tau)}\gamma^{-|P_v|/40}\right)
\label{bound_product_z_h/z_h-1_gamma_DBC}
\end{equation}
where $\bar v$ are the non-trivial vercies, and $\bar v'$ is the non tricial vertex immediately preceding $\bar v$. Thanks to the product into the first bracket we bound the sum over the scale labels by $(const.)^n$. The second factor can be used to bound the sums, using
\begin{equation}
\sum_{\tau\in\mathcal T_{h,n}}\sum_{P_v}\sum_T\prod_{v\notin V_f(\tau)}\frac{1}{s_v!}\gamma^{-|P_v|/40}\leq C^n,
\end{equation}
we refer to \cite{benfatto2001renormalization} for details.
\end{proof}

\subsection{Proof of the main theorem}
Let us recall the main result:
\begin{thm}
\label{theorem_main_theorem_introduction}
There exists a radius $\lambda_0>0$ such that, for any $|\lambda|\leq \lambda_0$ it is possible to fix the {\it boundary defect} $\pi(x,y)$ and its strenght $\varpi=\varpi(\lambda)$ in such a way that, for any $\theta\in (0,1)$, there exists a constant $C_\theta$ such that 
\begin{equation}
\sum_{y\in\Lambda} \left|\pi(x,y)\right| \leq C_\theta \left(\frac{1}{\left(1+|x|\right)^\theta}+\frac{1}{\left(1+|L-x|\right)^\theta}\right),
\end{equation}
in such a way that $f_\Lambda$ admits a convergent expansion in $\lambda$ and $\varpi$.\\
Moreover
\begin{equation}
\left| f_\Lambda-f_\infty \right|\leq |\lambda|\frac{C_\theta}{L^\theta}.
\end{equation}
\end{thm}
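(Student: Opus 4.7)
The plan is to close a simultaneous fixed point problem for the running coupling constants $\{\nu_h, \delta_h, \lambda_h\}$ and the running coupling function $\{\varpi_h(x,y)\}$, using Corollary \ref{theorem_bounds_kernels} as a black box so that the analysis reduces to verifying its a priori smallness assumptions self-consistently. First I would fix $\theta \in (0,1)$ and work in the Banach space
\begin{equation*}
\mathcal{X}_\theta = \bigl\{(\underline{\nu},\underline{\varpi})\ :\ \|\underline{\nu}\|_\theta + \sup_{h_L < h \leq 1} \|\varpi_h\|_{\infty,1}^{(\theta)} < \infty\bigr\},
\end{equation*}
with $\|\underline{\nu}\|_\theta := \sup_k |\nu_k|\gamma^{-k\theta/2}$ and $\|\varpi_h\|_{\infty,1}^{(\theta)} := \sup_x(1+\gamma^h|x|)^\theta \int dy\,|\varpi_h(x,y)|$, and inside the closed ball $\mathcal{M}_\theta$ of radius $c|\lambda|$ I would define an operator $\mathbf{T}$ whose fixed point encodes the correct choice of $\nu = \nu(\lambda)$ and $\pi(x,y) = \pi_\lambda(x,y)$ at scale $1$; explicitly, given $(\underline{\nu},\underline{\varpi})$ one first solves the bulk flow to generate $\lambda_h(\underline{\nu},\underline{\varpi})$, $\delta_h(\underline{\nu},\underline{\varpi})$, $Z_h(\underline{\nu},\underline{\varpi})$, then sets $(\mathbf{T}\underline{\nu})_h = -\sum_{k\leq h}\gamma^{k-h-1}\beta_\nu^k$ and $(\mathbf{T}\underline{\varpi})_h(x,y) = -\sum_{k\leq h}\gamma^{k-h-1}\beta_\varpi^k(x,y)$, mirroring the second half of the proof of Theorem \ref{theorem_lambda_nu_solutions}.

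The bulk part of the argument is essentially imported from Chapter \ref{chapter_fermions_PBC}. Lemma \ref{lemma_effective_gain} shows that every tree containing at least one non-translation-invariant element (either a $g_R$ propagator or a $\varpi$-endpoint) earns an extra factor $\gamma^{\theta(h_\ell - h_v)}$, which in particular makes the remainder quartic term irrelevant; therefore the flow of $\lambda_h$ and $\delta_h$, and the flow of $Z_h/Z_{h-1} = 1 + z_h + O(\varpi_h + g_R$ contributions$)$, coincide with their Chapter \ref{chapter_fermions_PBC} analogues up to terms that are summable over scales. The vanishing of the Luttinger beta function then gives $|\lambda_h| \leq c|\lambda|$, $|\delta_h| \leq c|\lambda|$ and $Z_h \simeq A\gamma^{-\eta h}$ uniformly in $h > h_L$, provided $(\underline{\nu},\underline{\varpi}) \in \mathcal{M}_\theta$. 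What is genuinely new is the boundary flow: invoking Corollary \ref{theorem_bounds_kernels} with $D(P_{v_0}) = -1$ and $z_{v_0} = 1+\theta$ on the trees contributing to $\gamma^h \pi_h$, plus the observation that in fact $\pi_h$ is produced only by trees with at least one non-translation-invariant element (since the translation-invariant part is diagonal in the Dirichlet basis and was already absorbed by $\mathcal{L}_{\mathcal{B}}$), one gets the key estimate
\begin{equation*}
\int dy\,|\beta^h_\varpi[\underline{v},\underline{\varpi}](x,y)| \leq C_\theta|\lambda|\,\gamma^{\theta h}\sum_{k\geq h}\gamma^{\theta(h_L - k)}\,\bigl(\rho_h(x) + \rho_h(L-x)\bigr),
\end{equation*}
where the symmetric decay $\rho_h(x) + \rho_h(L-x)$ originates from the two quasi-particle indices $\omega = \pm$ in the decomposition \eqref{propagator_DBC_as_sum_of_propagators_sigma_omega_infrared} (the reflection $y \to -y$ maps to the mirror boundary at $L+1$), and the $\gamma^{\theta(h_L-k)}$ factor is the "leftover" scale jump isolated in \eqref{scale_jump}. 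Summing telescopically in $k$ and applying $\mathbf{T}$ then shows that $\mathbf{T}$ maps $\mathcal{M}_\theta$ into itself and, by the same Lipschitz estimates as in Theorem \ref{theorem_lambda_nu_solutions}, is a contraction; its unique fixed point defines $\pi_\lambda(x,y)$ and $\varpi(\lambda)$ and yields the required decay $\sum_y |\pi(x,y)| \leq C_\theta[(1+|x|)^{-\theta} + (1+|L-x|)^{-\theta}]$.

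The main obstacle is maintaining the symmetric $(1+|x|)^{-\theta} + (1+|L-x|)^{-\theta}$ decay \emph{both} as input (inside the beta function, through the iterated weight-function transfer of Lemma \ref{lemma_transfer}) \emph{and} as output (in $(\mathbf{T}\underline{\varpi})_1$) at \emph{every} scale, uniformly in $L$; this requires tracking carefully that the two terms in the decay correspond to the two Fermi points and that no cross term breaks the bound — concretely, one must check that the accumulation of $\theta$'s along iterated applications of Lemma \ref{lemma_transfer} (each of which requires fixing a single $\theta$ once and for all) does not degrade the exponent. Assuming this is done, the free energy $f_\Lambda$ is reconstructed as $f_\Lambda = -\sum_{h\leq 0} e_h$ at the fixed point; the bound $\sum_\tau |e_{h+1}(\tau)| \leq \gamma^{2h}(c_0|\lambda|)^n$ of Corollary \ref{theorem_bounds_kernels} gives absolute convergence, and the finite-size correction is read off by restricting to the subseries of trees carrying at least one non-translation-invariant element, each of which provides the dimensional gain $\gamma^{\theta(h_L - h)}$:
\begin{equation*}
|f_\Lambda - f_\infty| \leq |\lambda|\sum_{h_L\leq h \leq 0} \gamma^{2h}\,\gamma^{\theta(h_L - h)}\cdot c^n \leq C_\theta|\lambda|\,\gamma^{\theta h_L} \leq \frac{C_\theta|\lambda|}{L^\theta},
\end{equation*}
which is exactly the boundary correction stated in the theorem.
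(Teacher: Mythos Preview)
Your approach is essentially correct and follows the same architecture as the paper: a Banach fixed point for the running coupling function $\varpi_h$ in the weighted norm $\|\cdot\|_{\infty,1}^{(\theta)}$, combined with the leftover scale jump $\gamma^{\theta(h_L-h)}$ from \eqref{scale_jump} to obtain $|f_\Lambda-f_\infty|\leq |\lambda|\sum_{h}\gamma^{2h}\gamma^{\theta(h_L-h)}\leq C_\theta|\lambda|L^{-\theta}$.

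There is one structural simplification in the paper that you miss, and it removes the main obstacle you worry about. By the very definition of $\mathcal L_{\mathcal B}$ (the bulk part $W^{d(h)}_2$, $\bar W^{(h)}_4$ is built only from trees with \emph{no} $\varpi$-type endpoints and all field labels $\sigma(f)=P$), the bulk constants $n_h,z_h,a_h,l_h$ do \emph{not} depend on $\underline\varpi$ at all. Hence the two fixed point problems \emph{decouple}: the paper first solves the bulk flow for $(\nu_h,\lambda_h,\delta_h,Z_h)$ exactly as in Chapter~\ref{chapter_fermions_PBC}, with $\varpi$ playing no role whatsoever, and only afterwards runs the contraction for $\underline\varpi$ alone in the ball $\mathcal M_\theta$, with the bulk constants entering as fixed external parameters. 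Your simultaneous fixed point on $\mathcal X_\theta$ would of course also work (it just decouples trivially), but your sentence ``$Z_h/Z_{h-1}=1+z_h+O(\varpi_h+g_R\text{ contributions})$'' is not right: there are no such contributions to $z_h$ by construction, and once you realise this the ``accumulation of $\theta$'s along iterated applications of Lemma~\ref{lemma_transfer}'' that you flag as the main obstacle largely disappears---the bulk side is literally Theorem~\ref{theorem_lambda_nu_solutions}.

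One minor point: the symmetric decay $(1+|x|)^{-\theta}+(1+|L-x|)^{-\theta}$ is not derived from the two quasi-particle labels $\omega=\pm$. In the paper it is built directly into the weight function via $\rho_h(x)\leq C_N/(1+\gamma^h d_L(x))^N$ with $d_L(x)=\min\{|x|,|L-x|\}$ (see \eqref{definition_rho_h^N}), reflecting that $g_R$ depends on $x+y$ and the lattice has two boundaries; no cross-term analysis is needed.
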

First of all, let us recall that the {\it diagrams} that contribute to the {\it specific free energy} are the so called {\it vacuum diagrams}, {\it i.e.} the diagrams such that $\left| P_{v_0}\right|=0$.\\
As we have done in the case of the kernels $W^{(h)}_2$ and $W_4^{(h)}$, we can split the {\it free fermi energy} into the {\it bulk term} and a {\it remainder}:
\begin{equation}
f_{\Lambda,\beta}=f^{(P)}_{\Lambda,\beta}+f^{(R)}_{\Lambda,\beta},
\end{equation}
where, by construction, all the diagrams contributing to $f^{(R)}_{\Lambda,\beta}$ contain at least either a remaider propagator or a non local endpoint.\\
In order to explicitly control the boundary corrections, we define
\begin{equation}
f_{\Lambda}=\lim_{\beta\nearrow \infty}f_{\Lambda,\beta},\hspace{3mm} f_{\infty}=\lim_{|\Lambda|\nearrow \infty} f_{\Lambda}.
\end{equation}
and we study the difference:
\begin{equation}
|f_\infty-f_{\Lambda}|,
\end{equation}
knowing that $|f_\infty-f^{(P)}_{\Lambda}|\leq \frac{C}{L^2}$, which can be proved by proceeding as in (\cite{giuliani2013universal}).\\
Using {\it exactly the same technique} as in proving the Theorem (\ref{theorem_bounds_kernels}) with the constraints $\left| P_{v_0}\right|=0$, and keeping track of the {\it scale jump} $\gamma^{\theta(h_L-h)}$ (\ref{scale_jump}) we already commented  in the proof of Theorem (\ref{theorem_renormalized_bounds_DBC}) we get, for each $\theta\in (0,1)$,
\begin{equation}
|f_\infty-f_{\Lambda}|\leq |\lambda|c_\theta\sum_{h\leq 1} \gamma^{2h} \gamma^{\theta(h_L-h)}\leq |\lambda|\frac{C_\theta}{L^{\theta}}.
\end{equation}
The boundary defect $\pi(x,y)$ and its strenght $\varpi$ will be fixed in the next subsection (\ref{subsection_flow_rcc_DBC}).

\subsection{Flow of running coupling constants and functions}
\label{subsection_flow_rcc_DBC}
From the iterative procedure we set up,  we can write the flow equations for $\vec v_h(x)$ for the quantities we defined in (\ref{running_coupling_functions_DBC}):
\begin{equation}
\label{flows_running_coupling_DBC}
\begin{split}
\nu_{h-1}&=\gamma \nu_h+\beta_\nu^h(\vec v_h(x,y),\dots,\vec v_0(x,y);x,y),\\
\lambda_{h-1}&=\lambda_h+\beta_\lambda^h(\vec v_h(x,y),\dots,\vec v_0(x,y);x,y),\\
\delta_{h-1}&=\delta_h+\beta_\delta^h(\vec v(x,y),\dots,\vec v_0(x,y);x,y),\\
\frac{Z_{h-1}}{Z_h}&=1+\beta^h_z(\vec v_h(x,y),\dots, \vec v_0(x,y);x,y),\\
\varpi_{h-1}(x,y)&=\gamma \varpi_{h}(x,y)+\beta^h_\varpi(\vec v_h(x,y),\dots, \vec v_0(x,y);x,y).
\end{split}
\end{equation}
The convergence of the multiscale expansion has been proved under the hypotesys of that the running coupling constants and functions are small enough. Now, we have to show that, choosing $\lambda$ small enough and fixing once for all the counterterm $\nu$ (which is an analytic function of $\lambda$ as we have already seen in subsection (\ref{subsection_flow_of_running_coupling_constants_PBC})) and $\varpi  N(x,y)$ as functions of $\lambda$, such hypotesis are verified.\\
The strategy is to write down the Taylor expansion for the beta function (convergent as long as the hypotesis are fulfilled), truncate this Taylor expansion at lowest non-trivial order, check whether the {\it approximate flow} still verifies the hypotesis, and finally prove that the solution of this approximate flow is stable under the addition of higher order Taylor approximation.\\
The idea is that the beta function of this model is asympyoyically close to the beta function of the Luttinger model with an ultraviolet cut-off, so it belongs to the Luttinger liquid universality class (that we introduced in the Introduction). The main difference, as we already mentioned in the introduction, is that the reference model shows more symmetries than the models of the universality class, that can be used to show that the beta function $\beta_\lambda^{(h)}$, in the reference model, is asymptotically zero. Thanks to the {\it asymptotic closeness} of the models, the same holds for the model we are studying.\\
It is worth stressing that, by the very definition of running coupling constants and functions, we can rewrite (\ref{flows_running_coupling_DBC}) as
\begin{equation}
\begin{split}
\nu_{h-1}&=\gamma \nu_h+\beta_\nu^h(\vec v_h,\dots,\vec v_0),\\
\lambda_{h-1}&=\lambda_h+\beta_\lambda^h(\vec v_h,\dots,\vec v_0),\\
\delta_{h-1}&=\delta_h+\beta_\delta^h(\vec v,\dots,\vec v_0),\\
\frac{Z_{h-1}}{Z_h}&=1+\beta^h_z(\vec v_h,\dots, \vec v_0),\\
\varpi_{h-1}(x,y)&=\gamma \varpi_{h}(x,y)+\beta^h_\varpi(\vec v_h(x,y),\dots, \vec v_0(x,y);x,y),
\end{split}
\end{equation}
that basically means that, while the bulk constants enter in the flow equation of $\varpi_h(x,y)$, $\varpi_h(x,y)$ does not enter in the flow equations of the bulk running coupling constants. As a consequence, we can assume to have already studied the bulk flow equations of the running coupling constants $(\nu_h,\lambda_h,\delta_h,Z_h)$, and study the flow equation of the running coupling function.

\paragraph{Fixing the non local counterterm}

Let us study the flow of $\varpi_{h}(x,y)$, that has already been defined as
\begin{equation}
\gamma^h\varpi_h(x,y):=\frac{Z_{h-1}}{Z_h}\int_0^\beta dy_0 \mathcal W^{(h)}(\bm x,\bm y)
\end{equation}
and let us recall the flow
\begin{equation}
\varpi_{h-1}(x,y)=\gamma \varpi_{h}(x,y)+\beta^h\varpi(\vec v_h(x,y),\dots, \vec v_0(x,y);x,y).
\label{flow_varpi(x,y)}
\end{equation}
So, from the very last of (\ref{flows_running_coupling_DBC}), we get
\begin{equation}
\varpi_1(x,y)= -\sum_{k=h}^1\gamma^{k-2}\beta^{(k)}_\varpi(\vec v_h(x,y),\dots, \vec v_0(x,y);x,y),
\end{equation}
so
\begin{equation}
\varpi_h(x,y)=-\sum_{k\leq h}\gamma^{k-h-1}\beta^{(k)}_\varpi(\vec v_h(x,y),\dots, \vec v_0(x,y);x,y).
\label{varpi_as_sum_over_betas}
\end{equation}
Let us recall the definition of the norm:
\begin{equation}
||\varpi_h||^{(\theta)}_{\infty,1}=\sup_{x\in \Lambda}\left(1+\gamma^{ h}|x|\right)^\theta\sum_{y\in\Lambda}|\varpi_h(x,y)|,
\end{equation}
allowing us to define the Banach space $\mathcal B$  as follows.
\begin{defn}
Let $\mathcal B$ be the set of the real sequences $\underline \varpi (x,y):=\left\{\varpi_h(x,y)\right\}_{h\leq 1}$ with norm $$|| \underline \varpi||^{(\theta)}:=\sup_{h\leq 1}||\varpi_h||^{(\theta)}_{\infty,1}.$$
Besides, let us define the closed ball  $\mathcal M_{\bar \theta}=:\mathcal M\subset \mathcal B$: let us fix $\bar \theta$, and let us define
\begin{equation}
\label{closed_ball_banach_space}
\mathcal M:=\left\{\underline \varpi(x,y): \forall\hspace{1mm} \theta\leq \bar \theta, \hspace{1mm} ||\underline \varpi^{(h)} ||^{(\theta)}\leq |\lambda| C\right\}.
\end{equation}
\end{defn}
\begin{rem}
We define such a Banach space because, to fix the initial value of 
\begin{equation}
 \pi(x,y)=:\varpi_1(x,y),
\end{equation}
we will look for a fixed point of the flow equation (\ref{flow_varpi(x,y)}) using the Banach fixed point theorem; in particular, we are interested in the elements  belonging to the closed ball $\mathcal M$, and the closeness guarantees that strarting from an initial datum inside the ball $\mathcal M$, the fixed point belongs to $\mathcal M$.
\end{rem}
Let us start with defining an operator $\bm T$ acting on $\mathcal M$ as
\begin{equation}
\left(\bm T \underline \varpi(x,y)\right)_h=-\sum_{k\leq h}\gamma^{k-h-1}\beta^{(k)}_\varpi(\vec v_h(\underline \varpi;x,y), \dots, \vec v_0(\underline \varpi;x,y);x,y).
\label{flow_varpi_as_operator}
\end{equation}
{\bf Claim} If we find a fixed point $\underline \varpi^*(x,y)$ of (\ref{flow_varpi_as_operator}), the solution will be such that $\varpi_h(x,y)$ is {\it small} as desired.\\ \noindent
In order to find the fixed point for the operator $\bm T$:
\begin{enumerate}
\item we have to check that it leaves $\mathcal M$ invariant, {\it i.e.} that \begin{equation}
T:\mathcal M\to \mathcal M,
\end{equation}
\item we have to check that $\bm T$ is a contraction in $\mathcal M$, {\it i.e.}
\begin{equation}
||\bm T \underline\varpi-\bm T\underline\varpi'||^{(\theta)}\leq ||\underline \varpi - \underline \varpi'||^{(\theta)}.
\end{equation}
\end{enumerate}
Let us prove the Claim.
\begin{proof} [Proof of Claim]
\begin{enumerate}
\item  Let us prove that $\bm T:\mathcal M\to \mathcal M$. Let us recall that, by definition, $$\beta_{\varpi}^{(h)}(\vec v_h(x,y),\dots,\vec v_0(x,y);x;y)$$ is the sum of all possible Feynman graphs whose internal lines live at scale $\geq h$, such that there must be at least one line living right at scale $h$ and at least one element breaking the translation invariance, {\it i.e.} either $\gamma^k\varpi_k(x,y)\delta_{x_0,y_0}$ or $g^{(k)}_{R,\omega}(\bm x,\bm y)$, $k\geq h$. \\
So, first of all, let us check that, for each $\theta \leq \bar theta$
\begin{equation}
\sup_{x\in\Lambda}\left(1+\gamma^{\theta h}|x|^\theta\right)\sum_{y\in\Lambda}\left|\beta_{\varpi}^{(h)}(\vec v_h(x,y),\dots,\vec v_0(x,y);x,y)\right|\leq C |\lambda|,
\label{norm_infty1_beta_functions}
\end{equation}
As we already explained, we can assume without loss of generality that:
\begin{itemize}
\item all the remainder propagators belong to the spanning tree,
\item we transfered the {\it anchorage property} to the vertices, localizing the non-local counterterms and rewriting $g_{R,\omega}^{(h)}\to g_{P,\omega}^{h}\ast w_h$, and we call $n_w\geq 1$ the number of {\it weighted vertices}.
\end{itemize}
We keep track of only one {\it weight function}, bounding the contribution of $n_w-1$ non-local endpoints using the hypothesis: for each $\theta'\leq \bar \theta$ (resp. $N=1,2,\dots$) there exist a constant $C_{\theta'}$ (resp. $C_N$) such that 
\begin{equation}
 |w_h(x)|\leq \begin{cases} |\lambda|\frac{C_{\theta'}}{\left(1+\gamma^h|x|\right)^{\theta'}}\leq |\lambda| C_{\theta'}, \hspace{3mm} &\mbox{ if } w_h(\cdot) =\varpi_h(\cdot),\\
 \frac{C_N}{(1+\gamma^h|x|)^N}\leq C_N, & \mbox{ if }w_h(\cdot)=\rho_h(\cdot).
 \end{cases}
\end{equation}
We stress that the constant $|\lambda|$, if there are not non-local endpoints but only remainder propagators, arises from the fact that there must be at least a {\it four-external legs} endpoint to contribute to $\beta_\varpi^{(h)}$. In light of this fact, from now on we assume without loss of generality us assume that $w_k(\cdot)=\varpi_h(\cdot)$, 
So by construction there is a vertex of the spanning tree associated with the {\it weight function} $\varpi_h(\cdot)$. As in proof of Theorem (\ref{theorem_renormalized_bounds}), the determinant expansion is the same of the {\it translation invariant case}, while the novelty is the weight function $\varpi_h(\cdot)$ appearing in the integration over the spanning tree: so we are interested in bounding, for each $\theta\leq \bar \theta$:
\begin{equation*}
\begin{split}
\sup_{x\in \Lambda}(1+\gamma^h|x|)^\theta \left| \int d(\underline{ \bm z}\setminus \bm x)\prod_{v\notin V_f(\tau)}\gamma^{h_vq_{\alpha,G^{h_v,T_v}}}\left(\left[\prod_{\ell \in T_v}(\bm x_\ell-\bm y_\ell)^{b(\ell)}_{j(\ell)}\partial^{q(f_\ell^1)}_{j(f_\ell^1)}\partial^{q(f_\ell^2)}_{j(f_\ell^2)} g^{(h_\ell)}_{\ell}\right]\right)\cdot \right.\\\left. \cdot\left[\prod_{i=1}^{n}(\bm x^i-\bm y^i)^{b(v^*_i)}_{j(v^*_i)}K^{(h_i)}_{{v^*_{i}}}(\bm x_{v^*_i}))\right]\varpi_k(s)\right|\leq\\
\leq \sup_{x\in \Lambda}(1+\gamma^h|x|)^\theta \left| \int d(\underline{ \bm z}\setminus \bm x)\prod_{v\notin V_f(\tau)}\gamma^{h_vq_{\alpha,G^{h_v,T_v}}}\left(\left[\prod_{\ell \in T_v}(\bm x_\ell-\bm y_\ell)^{b(\ell)}_{j(\ell)}\partial^{q(f_\ell^1)}_{j(f_\ell^1)}\partial^{q(f_\ell^2)}_{j(f_\ell^2)} g^{(h_\ell)}_{\ell}\right]\right)\cdot \right.\\\left. \cdot\left[\prod_{i=1}^{n}(\bm x^i-\bm y^i)^{b(v^*_i)}_{j(v^*_i)}K^{(h_i)}_{{v^*_{i}}}(\bm x_{v^*_i}))\right] \frac{|\lambda|C_{\theta'}}{(1+\gamma^k|s|)^{\theta'}}\right|\leq \\
\leq  \sup_{x\in \Lambda} \left| \int d(\underline{ \bm z}\setminus \bm x)\prod_{v\notin V_f(\tau)}\gamma^{h_vq_{\alpha,G^{h_v,T_v}}}\left(\left[\prod_{\ell \in T_v}(\bm x_\ell-\bm y_\ell)^{b(\ell)}_{j(\ell)}\partial^{q(f_\ell^1)}_{j(f_\ell^1)}\partial^{q(f_\ell^2)}_{j(f_\ell^2)} g^{(h_\ell)}_{\ell}\right]\right)\cdot \right.\\\left. \cdot\left[\prod_{i=1}^{n}(\bm x^i-\bm y^i)^{b(v^*_i)}_{j(v^*_i)}K^{(h_i)}_{{v^*_{i}}}(\bm x_{v^*_i}))\right] \frac{|\lambda|C_{\theta'}(1+\gamma^h|x|)^\theta}{(1+\gamma^k|s|)^{\theta'}}\right|\leq
\end{split}
\end{equation*}
The strategy is the same we used in proving the Lemma (\ref{lemma_transfer}): by definition of spanning tree $T$, there must be a connected path of lines $\ell \in T$ connecting $x$ (the non-integrated point, so the anchored point) to one of the points $s$: $\{(x,z_1);(z_1,z_2),\dots,(z_{i-1},s)\}\subseteq T$. So, setting $z_0:=x$ and $z_i=s$,
\begin{equation}
|x|^\theta\leq c_{0,\theta}\left(\sum_{j=1}^i |z_{i-1}-z_i|^\theta+|s|^\theta\right),
\end{equation}
so that 
\begin{equation}
\left(1+\gamma^h|x|\right)^\theta\leq c_{1,\theta}\left[ \left(1+\gamma^h|s|\right)^\theta+\sum_{j=0}^i\gamma^{\theta h} |z_{i-1}-z_i|^\theta\right]
\end{equation}
Then each term of the sum $|z_i-z_j|^\eta$ is associated to a line of the spanning tree, so to a propagator $g^{(k)}_{P,\omega}(\bm z_i-\bm z_j)$ so that, analogously to what we did in proving Lemma (\ref{lemma_transfer}), in order to get the bounds we want we can replace, inside the integrals,  $|z_i-z_j|^\theta\leq \gamma^{-\theta k}$ so $$\gamma^{\theta h}|z_i-z_j| \leq \gamma^{\theta (h-k)}\leq 1,$$ being, by construction, $h\leq k$.\\
 So we are left with proving that
 $$\frac{\left(1+\gamma^h|s|\right)^\theta}{(1+\gamma^k|s|)^{\theta'}}\leq c, \hspace{3mm} \mbox{ for any } \theta,\theta'\leq \bar \theta,$$
 and indeed it is:
 \begin{equation}
 \frac{\left(1+\gamma^h|s|\right)^\theta}{(1+\gamma^k|s|)^{\theta'}}\leq \begin{cases}
 \left(1+\gamma^{h-k}\right)^\theta\leq 2,& \mbox{ if } |s|\leq \gamma^{-k},\\
\frac{\left(1+\gamma^h\gamma^{-h}\right)^\theta}{(1+\gamma^k{\gamma^{-k}})^{\theta'}}\leq 1, &\mbox{ if } \gamma^{-k}\leq |s|\leq \gamma^{-h},\\
c \frac{\gamma^{\theta h}|s|^\theta}{\gamma^{\theta' k}|s|^{\theta'}}\leq c \gamma^{\theta'(h-k)}\leq c, &\mbox{ if } \gamma^{-h}\leq |s|.
 \end{cases}
 \end{equation}

Now, we use this bound to verify that $||\left(\bm T \underline \varpi(x,y)\right)_h||_{\infty,1}\leq |\lambda| c$. Indeed
\begin{equation}
\begin{split}
\sup_{x\in\Lambda}\left(1+\gamma^{\theta h}|x|^{\theta}\right)\sum_{k\leq h}\gamma^{k-h-1}\left(\sum_{y\in\Lambda}\beta_{\varpi}^{(k)}(\vec v_h(x,y),\dots,\vec v_0(x,y);x,y)\right)\leq \\
\leq \sum_{k\leq h}\sup_{x\in\Lambda}\left(\gamma^{k-h-1}+\gamma^{k-h-1}\gamma^{\theta h}|x|^{\theta}\right)\left(\sum_{y\in\Lambda}\beta_{\varpi}^{(k)}(\vec v_h(x,y),\dots,\vec v_0(x,y);x,y)\right)\leq \\
\leq  \sum_{k\leq h}\Biggl (C_{\theta'} c_{2,\theta}|\lambda|\gamma^{k-h-1}+\\+\left.\gamma^{(k-h)(1-\theta)}\sup_{x\in\Lambda}\left(\sum_{y\in\Lambda}\gamma^{\theta k}|x|^{\theta}\beta_{\varpi}^{(k)}(\vec v_h(x,y),\dots,\vec v_0(x,y);x,y)\right)\right)\leq c_{3,\theta}C_{\theta'}|\lambda|,
\end{split}
\end{equation}
so $\left(\bm T \underline \varpi(x,y)\right)\in \mathcal M$ if we choose $C= c_{3,\theta}C_{\theta'}$.
\item Let us check that $\bm T$ is a contraction, {\it i.e.} $||\bm T\underline \varpi- \bm T\underline \varpi'||\leq ||\underline \varpi-\underline \varpi'||$. First of all, let us remark that, by the very definition of the running coupling functions (\ref{vec_v_h(x)}) $$\vec v_h(x,y)=\left(\nu_h,\delta_h,\lambda_h,\varpi_h(x,y)\right),$$
the running coupling constants $(\nu_h,\delta_h,\lambda_h)$ do not depend on the running coupling functions $\varpi_{h}(x)$. So we can split
\begin{equation}
\beta_{\varpi}^{(k)}(\vec v_h,\dots,\vec v_0;x,y)=\beta_{\varpi=0}^{(k)}(\vec v_h,\dots,\vec v_0;x,y)+\bar\beta_{\varpi}^{(k)}(\vec v_h(x,y),\dots,\vec v_0(x,y);x,y)
\end{equation}
where $\bar \beta_{\varpi}^{(k)}$ corresponds, in the Feynman graphs picture, to the contribution of all the diagrams containing at least a $\varpi_{h}$ term, and $\beta_{\varpi=0}^{(k)}$ is the remainder. So, if we use the notation
$$\vec v_h(x,y)=\left(\nu_h,\delta_h,\lambda_h,\varpi_h(x,y)\right), \hspace{5mm} \vec v'_h(x,y)=\left(\nu_h,\delta_h,\lambda_h,\varpi'_h(x,y)\right),$$
the difference between the beta functions depending on different {\it running coupling functions} depends only on the diagrams containing at least a $\varpi_{h}$ term:
\begin{equation}
\begin{split}
\beta_{\varpi}^{(k)}(\vec v_h(x,y),\dots,\vec v_0(x,y);x,y)-\beta_{\varpi}^{(k)}(\vec v'_h(x,y),\dots,\vec v'_0(x,y);x,y)=\\=\bar\beta_{\varpi}^{(k)}(\vec v_h(x,y),\dots,\vec v_0(x,y);x,y)-\bar\beta_{\varpi}^{(k)}(\vec v'_h(x,y),\dots,\vec v'_0(x,y);x,y)
\end{split}
\end{equation}
So, using the notation $\vec v_k(x,y)=:\vec v_k$
\begin{equation}
\begin{split}
||\left(\bm T\underline \varpi\right)_h- \left(\bm T\underline \varpi'\right)_h||^{(\theta)}\leq \\
\leq \sum_{k\leq h}\sup_{x\in\lambda}(1+\gamma^{\theta h}|x|^\theta)\gamma^{k-h-1}\sum_{y\in\lambda}\left|\beta_{\varpi}^{(k)}(\vec v_h,\dots,\vec v_0;x,y)-\beta_{\varpi}^{(k)}(\vec v'_h,\dots,\vec v'_0;x,y)\right|=\\
=\sum_{k\leq h}\gamma^{k-h-1} \sum_{y\in\lambda}\left| \bar \beta_{\varpi}^{(k)}(\vec v_h,\dots,\vec v_0;x,y)-\bar \beta_{\varpi}^{(k)}(\vec v'_h,\dots,\vec v'_0;x,y)\right|+\\+\sup_{x\in\Lambda}\gamma^{h\theta} |x|^\theta \sum_{k\leq h}\gamma^{k-h-1} \sum_{y\in\lambda}\left|\bar \beta_{\varpi}^{(k)}(\vec v_h,\dots,\vec v_0;x,y)-\bar \beta_{\varpi}^{(k)}(\vec v'_h,\dots,\vec v'_0;x,y)\right|\leq \\ 
 \leq c' \sum_{k\leq h}\gamma^{k-h-1} \sum_{y\in\lambda}\left|\bar \beta_{\varpi}^{(k)}(\vec v_h,\dots,\vec v_0;x,y)-\bar \beta_{\varpi}^{(k)}(\vec v'_h,\dots,\vec v'_0;x,y)\right|+\\ + c'\sum_{k\leq h}\gamma^{(k-h)(1-\theta)-1} \sup_{x\in\Lambda}\gamma^{k \theta} |x|^\theta \sum_{y\in\lambda}\left|\bar \beta_{\varpi}^{(k)}(\vec v_h,\dots,\vec v_0;x,y)-\bar \beta_{\varpi}^{(k)}(\vec v'_h,\dots,\vec v'_0;x,y)\right|\leq \\
\leq c'' \sum_{k\leq h}\gamma^{k-h-1} |\lambda| \left| \varpi_k(x)-\varpi'_k(x)\right| \sum_{k'\geq k} \gamma^{\theta(k-k')}+
\\
+\sum_{k\leq h}\gamma^{(k-h)(1-\theta)-1} \sup_{x\in\Lambda}\gamma^{k \theta} |x|^\theta |\lambda| \left| \varpi_k(x)-\varpi'_k(x)\right| \sum_{k'\geq k} \gamma^{\theta(k-k')} \leq C \lambda ||\underline \varpi-\underline \varpi'||^{(\theta)}.
\end{split}
\end{equation}
\end{enumerate}
Finally, we fix $$\varpi_1(x,y)=\varpi \pi(x,y),$$ and $\varpi$ is fixed by imposing that
\begin{equation}
\varpi =\sup_{x\in \Lambda}\left|\int_\Lambda dy \varpi_1(x,y)\right|.
\end{equation}
\end{proof}

\paragraph{Bulk running coupling constants}

Since the bulk running coupling constants are {\it space independent}, the strategy to study them is conceptually the same as the previous chapter (\ref{chapter_fermions_PBC}), even though at finite volume they are not exactly the same constants. Let us denote by $\{\bar \nu_h,\bar \delta_h,\bar \lambda_h\}$ the running coupling constants at finite volume of the translation invariant setting described by the Hamiltonian (\ref{hamiltonian_PBC}) in a volume $|\bar \Lambda|=2(L+1)$.
So, let us recall where the bulk running coupling constants come from: $\lambda_h$ comes from $\mathcal L_\mathcal T \mathcal L_{\mathcal B}\mathcal V^{(h)}(\psi^{(\leq h)})$.  Thanks to Theorem (\ref{theorem_renormalized_bounds_DBC}), we infer that, even at finite volume, $\lambda_h=\bar \lambda_h$, since the difference between the quartic terms in the two settings is an irrelevant term. The running coupling constants coming from the quadratic terms localization deserve a deeper comment. Indeed, by construction $\mathcal L_{\mathcal B} W^{(h)}_2=\bar W^{(h)}_2(x-y, x_0-y_0)-\bar W^{(h)}_2(x+y,x_0-y_0)$ consists of the first term, which is exactly the {\it quadratic kernel} of the translation invariant theory defined in the box $\bar \Lambda$, and the second one which is a {\it remainder term} in the sense of the norm $||\cdot||_1$. This suggests to treat $\delta_h$ and $z_h$, coming from $\mathcal L_\mathcal T\mathcal L_{\mathcal B} W^{(h)}_2$, morally as we treated $\lambda_h$, {\it i.e.} by inferring that $|\delta_h-\bar \delta_h|$ and $|z_h-\bar z_h|$ are irrelevant quantities, so we can actually reduce our study to the translational invariant case one.\\
So we are left with fixing the {\it constant counter-term} $\nu_h$: at a formal technical level ({\it i.e.} the fixed point argument), there is no difference with respect to what we have done in the very last section of the previous chapter (\ref{subsection_flow_of_running_coupling_constants_PBC}); anyway, since $\nu_h$ is a {\it relevant running coupling constants}, we cannot proceed as we did for the other constants, because $|\nu_h-\bar \nu_h|$ is a {\it marginal} quantity. At this point it should be clear that, following the {\it definition} of $\nu_h$ that we have chosen, the actually {\it relevant} contribution to $\nu_h$ comes from the linearization of the operators associated with the integral kernel $\bar W_2^{(h)}(\bm x-\bm y)$ and, by applying the same estimates of Theorem (\ref{theorem_renormalized_bounds_DBC}) we get, at finite volume, $|\bar \nu_h-\nu_h|\leq \gamma^{h_L-h}$. To conclude, only when the thermodinamic limit is reached the bulk counterterm on the halfline is the same as the one of the system defined on the whole line.\\
Since all these considerations have a meaning only at finite volume, we underline that the difference between the finite volume and infinite volume running coupling constants has already been rigorously studied in \cite{giuliani2013universal}, during the study of the flow of running coupling constants.

\chapter{Conclusion}

\label{chapter_conlcusion}

\section{Summary}
\label{section_summary}
With the purpose of extending the Constructive Renormalization Group formalism to systems defined in general domains, we attacked a multiscale problem that breaks the {\it translation invariance symmetry} in the simplest possible way: we considered a system of spinless fermions hopping on a 1D semi-infinite lattice with Dirichlet boundary conditions, in the presence of a {\it weak density-density interaction} (of size $\lambda$).\\
We showed, by rigorous Renormalization Group methods, that, if the perturbation is {\it weak enough}, it is possible to fix a {\it quadratic boundary counterterm}, localized at the boundary as ${|\lambda|C_\theta}/{\left((1+|x|\right)^\theta}$ for each $\theta\in(0,1)$, in such a way that the specific free energy is expressed as an analytic function of the perturbation size. In particular, we derived constructive bounds on the difference between the {\it finite volume} specific free energy $f_\Lambda$ and its thermodynamic limit $f=\lim_{|\Lambda|\nearrow \infty } f_\Lambda$:
$$|f-f_\Lambda|\leq |\lambda| \frac{C_\theta}{L^\theta}, \hspace{5mm} \forall \hspace{5mm} 0<\theta<1.$$\\
Our proof involves a systematic treatment of what we call {\it boundary terms}. In particular we developed a method thanks to which, in a {\it multiscale} language, given a family of incapsulated clusters, the presence of a {\it non-translation invariant} element in the innermost cluster $G_v$ is enough to get a {\it  dimensional gain}, that improves the renormalization analysis, for each of the clusters $G_w \supseteq G_v$ containing $G_v$.\\
If, on the one hand, this improvement is enough to renormalize the quartic boundary contributions, on the other hand it allows us to conclude that the quadratic boundary contributions are {\it marginal}. The fact that the boundary conditions are not invariant under RG integrations makes it technically difficoult to absorb this quadratic boundary contributions into the Grassmann integration, so we decided to introduce a quadratic {\it boundary correction},  localized at the boundary as $|\lambda|C_\theta/(1+|x|)^\theta$ for each $\theta\in (0,1)$, to control them.\\
It is worth pointing out that we did not take care of  keeping track of the $\theta$-dependent constants coming from the bounding procedure, thanks to which one could find {\it more explicit} bounds for the corrections to the free energy. \\
In particular one expects, in this way, to be able to express $C_\theta$ as
\begin{equation}
C_\theta=C\frac{1}{(1-\theta)^\alpha},
\end{equation}
for some suitable $\alpha>0$.\\
This would allow us to choose {\it the optimal $\theta\in (0,1)$} by fixing $\theta$ in such a way that
\begin{equation}
\frac{d}{d\theta}\left(\frac{C_\theta}{L^\theta}\right)=C\frac{d}{d\theta}\left(\frac{1}{(1-\theta)^{\alpha}L^\theta}\right)=0
\end{equation}
so the {\it optimal}
\begin{equation}
|f-f_\Lambda|\leq C |\lambda| \frac{\left(\log L\right)^\alpha}{L},
\end{equation}
for some $C>0$.\\
In order not to make heavier the analysis, we decided neither to give these detalis nor to discuss the construction of the Schwinger functions not even in the translation invariant case (Chapter (\ref{chapter_fermions_PBC})). Indeed, even in the translation invariant setting, the construction of Schwinger functions requires an {\it adapted multiscale analysis} slightly different from the one we set up to construct the specific free energy (see \cite{gentile2001renormalization} Section $12$ for an introduction, \cite{benfatto1993beta} for the details). A modification of this {\it multiscale argument}, in the spirit of the modification we introduced in Chapter (\ref{chapter_Interacting_fermions_on_the_half_line}) with respect to Chapter (\ref{chapter_fermions_PBC}), would extend the control of the boundary correction to the case of Schwinger functions: as already mentioned, one expects different behaviours depending on the {\it comparison} between the mutual distance and the {\it distance from the boundary.}

\section{Outlook}
\label{section_outlook}

The next natural step is the program we started this thesis with: {\it i.e.} to {\it invert the counterterm}, meaning properly to {\it build the ground state} of a system described by the Hamiltonian:
\begin{equation*}
H=H_0+\lambda V,
\end{equation*}
with Dirichlet boundary conditions.\\
This corresponds to finding a way to re-sum the quadratic boundary contributions into the Grassmann integration, absorbing the boundary effects into the dressed propagator. The expected result would be encoded in some {\it space dependent} critical exponent $\eta(x)$ in the long distance behaviour of the dressed propagaor.  In the context of Luttinger liquids some {\it non rigorous results} have been obtained, via {\it non-rigorous} bosonization techniques, in  \cite{fabrizio1995interacting, meden2000luttinger, grap2009renormalization,mattsson1997properties}, where the presence of {\it space dependent} critical exponents is investigated by comparing the two-point correlation functions in different asymptotic regimes: both the variables {\it well inside the bulk}, one {\it well inside the bulk} and the other {\it close to the boundary}, both of them {\it close to the boundary}.\\
As we pointed out, the main technical problem in doing this is the fact that {\it the boundary conditions are not invariant under RG iterations}, so absorbing the boundary corrections into the propagator would break the multiscale structure, {\it mixing up} different scales in a {\it non-hierarchical way}. This complication can basically be summarized by saying that the momentum,  being non-preserved, is not the {\it right quantum number} to look at, so in order to solve this problem one should be able to {\it diagonalize}, scale by scale, the single scale Laplacian with covariance $\left( g^{(h)}\right)^{-1}$ perturbed by a {\it weak and localized potential}:
$$\left(\left(g^{(h)}\right)^{-1}-W_2^{(h)}\right)^{-1}(\bm x, \bm y)= \tilde g^{(h)}(\bm x,\bm y)=\sum_{j\in \mathcal D}\hat{\tilde g}^{(h)}_j \varphi_j^*(\bm x)\varphi_j(\bm y),$$
for some suitable {\it dual space} we denote by $\mathcal D$ and orthonormal basis $\left\{\varphi_j(\bm x)\right\}_{j\in\mathcal D}^{\bm x\in\Lambda\times [0,\beta)}$, where $W_2^{(h)}(\bm x, \bm y)$ is such that 
$$\frac{1}{\beta}\int_{[0,\beta)} dx_0 \int_{[0,\beta)} dy_0 \sum_{y\in\Lambda} \left| W_2^{(h)}(\bm x,\bm y)\right|\leq |\lambda| \gamma^h e^{-\gamma^h|x|}.$$
In particular, one should imagine the dual space $\mathcal{D}$ as the {\it energy-space}, being the energy preserved. In other words, it seems that the main difficoulty of the problem is to solve, scale by scale, a {\it scattering problem}, or a {\it perturbed Schr\"odinger equation}, getting explicit expressions both for the eigenfunctions and for the spectrum of the system. Being a quite common problem, there is a huge literature about it mostly interested in the spectral property of the {\it perturbed system} (see {\it e.g.} the review about rank-one perturbations of the Laplacian \cite{simon1995spectral}, or \cite{kato2013perturbation} for a {\it scattering theory} point of view). However at this point it should be clear that, in order to {\it construct} the observables we are interested in via a RG method, we need an {\it "explicit enough"} representation of the covariance allowing us to exploit the {\it selfsimilar structure} of the theory at {\it different scales} in order to iterate this procedure. This quite natural, but challenging, method, seems in fact to match with a {\it multiscale implementation of} the ideas used first by Symanzik, then by Diehl {\it et al.} to study $\phi^4_{4-\epsilon}$ theories in non trivial domains \cite{symanzik1981schrodinger, diehl1981field2,diehl1983universality} via non-rigorous RG methods, in order to investigate the Casimir effect. Even though, on the one hand, the single scale problem seems to be reasonable, we expect its multiscale implementation to be non-trivial. \\
Let us stress, in light of the fact that the boundary correction to the {\it quadratic part of the effective action are marginal}, the same novelties we met in dealing with {\it 1D spinless fermions} would come out even in the case of systems with {\it irrelevant interactions} ({\it e.g.} Ising model, see \cite{mastropietro2008non} Chapter 9 for a RG language treatment of this topic). In fact we expect that our analysis is adaptable to 2D statistical models as Ising, dimers {\it etc.} provided one is able to find a {\it manageable} fermionic representation of the starting model (see again \cite{mastropietro2008non} Chapter 9 for the Ising model, see \cite{giuliani2015height} for dimers, both in translation invariant settings).\\

A further challenging topic related to a full understanding of the problem we started to study is the investigation of the Kondo model \cite{kondo1964resistance} (both the original and the {\it multichannel one}) around the {\it strong coulping regime fixed point}. Indeed, a natural way to study the {\it Kondo effect} seems to be a {\it conformal field theory approach} \cite{affleck1995conformal}: studying the strong coupling regime basically means to assume that the interaction with the impurity (that we assume to be sitting at the origin) is much stronger than the kinetic part. This assumption would imply that the fermion sitting at the origin is bounded to the impurity forming a singlet state with it. So an arbitary electron configuration occurs on all other sites, but other electrons are {\it forbidden} to enter the origin, since that would destroy the singlet state costing a {\it big energy}: that is the reason why the impurity at the origin has roughly the same effect as a Dirichlet boundary condition. Anyway, a rigorous understanding of the Kondo model is still far, even though a first step, based on rigorous hierarchical RG methods, has been completed in \cite{benfatto2015kondo}.
\appendix

\chapter{Estimates of single scale propagators}
\label{appendix_propagator_decay_property}
\begin{proof}[ Proof of Lemma \ref{lemma_propagator_faster_any_power}]
Let us introduce the discrete derivatives $\partial_{\bm k}$, and the {\it directional discrete derivative as} $\tilde \partial_i=\tilde e_i\cdot \partial_{\bm k}$, where $i=0,1$, $\tilde e_0=(1,0)$ and $\tilde e_1=(0,1)$, as follows: fiven a compactly supported function $\hat F(\bm k):\mathcal D_{\Lambda,\beta}\to \mathbb C$,
\begin{equation}
\tilde \partial_i \hat F(\bm k)= \tilde e_i\cdot \partial_{\bm k} \hat F(\bm k)=\frac{\hat F(\bm k+\tilde e_i\Delta k_i)-\hat F(\bm k)}{\delta k_i},
\end{equation}
where by the definition of $\mathcal D_|\Lambda|$ and $\mathcal D_{\beta}$, $\Delta k_0=2\pi/\beta$ and $\delta k_1=2\pi/L$. So, looking at the Fourier transform
\begin{equation}
F(\bm x)=\frac{1}{|\Lambda|\beta}\sum_{\bm k\in\mathcal D_{L,\beta}} e^{-i\bm k\cdot \bm x}\hat F(\bm k),
\end{equation}
it is immediate to check that
\begin{equation}
\sum_{\bm k\in\mathcal D_{\Lambda,\beta}}e^{-i\bm k\cdot {\bm x}} \partial_i \hat F(\bm k)=\left(\frac{e^{i\Delta k_i \tilde x_i}-1}{\Delta k_i}\right)\sum_{\bm k\in\mathcal D_{\Lambda,\beta}}e^{-i\bm k\cdot \bm x}\hat F(\bm k).
\end{equation}
Now let us notice that
\begin{equation}
\frac{e^{i\Delta k_i \tilde {x}_i}-1}{\Delta k_i}=e^{i\Delta k_i \tilde x_i}i\frac{2}{\Delta k_i}\sin \left(\tilde x_i\Delta k_i /2\right)\Rightarrow \left|\frac{e^{i\Delta k_i \tilde {x}_i}-1}{\Delta k_i}\right|=\left|\frac{2}{\Delta k_i}\sin \left(\tilde x_i\Delta k_i /2\right)\right|,
\end{equation}
and that
\begin{equation}
|x_0|\leq \frac{\pi}{2}\frac{\sin ( x_0\pi/\beta)}{\pi/\beta}=:\frac{\pi}{2}d_\beta(x_0), \hspace{3mm} |x|\leq \frac{\pi}{2}\frac{sin(x \pi/L )}{\pi/L}=:\frac{\pi}{2}d_\Lambda(x).
\end{equation}
we get
\begin{equation}
\begin{split}
|x_0|^2|F(\bm x)|\leq \frac{\pi^2}{4}d_\beta^2(x_0)|F(\bm x)|=\frac{\pi^2}{4} \left|\frac{e^{i\Delta k_0 x_0}-1}{\Delta k_0}\right|^2 \left|\frac{1}{|\Lambda \beta|}\sum_{\bm k\in\mathcal D_{\Lambda,\beta}}e^{-i\bm k\cdot \bm x}\hat F(\bm k)\right|=\\
=\frac{\pi^2}{4}  \left|\frac{1}{|\Lambda \beta|}\sum_{\bm k\in\mathcal D_{\Lambda,\beta}}e^{-i\bm k\cdot \bm x}\partial_{k_0}^2\hat F(\bm k)\right|\leq \frac{\pi^2}{4}\frac{1}{|\Lambda|\beta}\sum_{\bm k\in\mathcal D_{|\Lambda|,\beta}}\left|\partial^2_{k_0}\hat F(\bm k)\right|.
\end{split}
\end{equation}
and
\begin{equation}
\begin{split}
|x|^2|F(\bm x)|\leq \frac{\pi^2}{4}d_\Lambda^2(x)|F(\bm x)|=\frac{\pi^2}{4} \left|\frac{e^{i\Delta k x}-1}{\Delta k}\right|^2 \left|\frac{1}{|\Lambda \beta|}\sum_{\bm k\in\mathcal D_{\Lambda,\beta}}e^{-i\bm k\cdot \bm x}\hat F(\bm k)\right|=\\
=\frac{\pi^2}{4}  \left|\frac{1}{|\Lambda \beta|}\sum_{\bm k\in\mathcal D_{\Lambda,\beta}}e^{-i\bm k\cdot \bm x}\partial_{k}^2\hat F(\bm k)\right|\leq \frac{\pi^2}{4}\frac{1}{|\Lambda|\beta}\sum_{\bm k\in\mathcal D_{|\Lambda|,\beta}}\left|\partial^2_{k}\hat F(\bm k)\right|.
\end{split}
\end{equation}
Let us now consider $F(\bm x)=g^h_{\omega}(\bm x)$
\begin{equation}
g_{\omega}^{(h)}(\bm x-\bm y)=\frac{1}{L\beta}\sum_{\bm k\in\mathcal{D}_{L,\beta}}e^{-i\bm k\cdot(\bm x-\bm y)}f_h(\bm k-\omega p_F)\hat g(\bm k).
\end{equation}
and let us observe that we can procede as before to get, for all $N\geq 0$ and a suitable constant $C$,
\begin{equation}
\begin{split}
|x_0|^{2N}|g_\omega^{(h)}(\bm x)|\leq \frac{C^N}{\beta |\Lambda|}\sum_{\bm k\in\mathcal D_{\Lambda,\beta}}\left|\partial_{k_0}^{2N}\left[f_h(\bm k-\omega p_f)\hat g(\bm k)\right]\right|,\\
|x|^{2N}|g_\omega^{(h)}(\bm x)|\leq \frac{C^N}{\beta |\Lambda|}\sum_{\bm k\in\mathcal D_{\Lambda,\beta}}\left|\partial_{k}^{2N}\left[f_h(\bm k-\omega p_f)\hat g(\bm k)\right]\right|,
\end{split}
\end{equation}
that can be unified in
\begin{equation}
|\bm x|^{2N}|g_\omega^{(h)}(\bm x)|\leq \frac{C^N}{\beta |\Lambda|}\sum_{\bm k\in\mathcal D_{\Lambda,\beta}}\left|\partial_{\bm k}^{2N}\left[f_h(\bm k-\omega p_f)\hat g(\bm k)\right]\right|.
\end{equation}
Observing that, on the support of $f_h(k-\omega p_F,k_0)$, $|k_0|\sim |k-\omega p_F|\sim \gamma^h $ and $\hat g(k)\sim \gamma^{-k}$ the derivative can be dimensionally estimated, when acting either on $f_h$ of on $\hat g$, by a factor $\gamma^{-h}$, while the support of $f_h$ is an annulus whose volume is proportional to $\gamma^{2h}$, meaning that $$\frac{1}{|\Lambda|\beta}\sum_{\bm k\in\mathcal D_{\Lambda,\beta}}f_h(\bm k-\omega p_F)\sim \gamma^{2h}.$$ 
Putting all together these estimates we get, for some $C_N$, the bound
\begin{equation}
|\bm x|^{2N}|g^{(h)}_\omega(\bm x)|\leq C_N\gamma^{2h}\gamma^{-h}\gamma^{-2Nh}.
\end{equation}

\end{proof}

Thanks to these estimates, we can also prove the bounds of $||g||_1$ and $||g||_{\infty}$ we used in Lemma (\ref{lemma_bounds_no_multiscale_no_determinants})
\begin{corollary}
Let $$g(\bm x-\bm y)=\frac{1}{\Lambda\beta}\sum_{\bm k\in\mathcal D_{\Lambda,\beta, M}}e^{-i\bm k\cdot (\bm x-\bm y)}\hat g(\bm k).$$
 So
\begin{eqnarray}
||g||_1=\frac{1}{L\beta}\int_{[0,\beta)}dx_0\sum_{x\in\Lambda}\int_{[0,\beta)}dy_0\sum_{y\in\Lambda} g(\bm x-\bm y)\leq C\beta,\\
||g||_\infty\leq CM.
\end{eqnarray}
\end{corollary}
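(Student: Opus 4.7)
Both inequalities are elementary pointwise/summation estimates, and the plan is to treat them separately.

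For $\|g\|_\infty\le CM$, I would argue by the triangle inequality in Fourier space. Starting from
\begin{equation*}
g(\bm z)=\frac{1}{L\beta}\sum_{\bm k\in\mathcal D_{\Lambda,\beta,M}}e^{i\delta_M k_0}e^{-i\bm k\cdot\bm z}\hat g(\bm k),\qquad \hat g(\bm k)=\frac{1}{-ik_0+e(k)},
\end{equation*}
one has $|\hat g(\bm k)|\le 1/|k_0|$, and since $k_0\in\mathcal D_{\beta,M}$ obeys $|k_0|\ge\pi/\beta$, this gives $|\hat g(\bm k)|\le\beta/\pi$ for every $\bm k$. The cardinality of $\mathcal D_{\Lambda,\beta,M}$ is $L\cdot 2M$, hence
\begin{equation*}
\sup_{\bm z}|g(\bm z)|\le\frac{1}{L\beta}\cdot 2LM\cdot\frac{\beta}{\pi}=\frac{2M}{\pi},
\end{equation*}
which proves the $\|g\|_\infty$ bound with $C=2/\pi$.

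For $\|g\|_1\le C\beta$, I would use the explicit real-space/imaginary-time representation for $H_0$ (the same formula displayed just before the definition of $g$ in the text): for $x_0\in(0,\beta)$,
\begin{equation*}
g(x,x_0)=\frac{1}{L}\sum_{k\in\mathcal D_\Lambda}e^{-ikx}\,\frac{e^{-x_0 e(k)}}{1+e^{-\beta e(k)}}.
\end{equation*}
By translation invariance, $\|g\|_1=\int_0^\beta dx_0\sum_{x\in\Lambda}|g(x,x_0)|$. The key elementary observation is that the Fermi-type factor $F(x_0,e(k)):=e^{-x_0 e(k)}/(1+e^{-\beta e(k)})$ satisfies $|F|\le 1$ uniformly in $k$ and $x_0\in[0,\beta]$: if $e(k)\ge 0$ this is immediate, and if $e(k)<0$ one pulls $e^{-\beta e(k)}$ out of the denominator to rewrite $F=e^{(\beta-x_0)(-e(k))}/(1+e^{\beta e(k)})\le 1$. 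Consequently $|g(x,x_0)|\le 1$ pointwise, which yields $\|g\|_1\le L\beta$, i.e.\ the claimed estimate with $C=L$ (here $L$ is fixed, and the proportionality to $\beta$ is what matters for Lemma~1.1).

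There is no real obstacle: both bounds are intentionally loose a priori estimates, designed so that when inserted into Lemma~1.1 they display the exponential divergence in the cutoffs $M$ and $\beta$ that motivates the subsequent determinant expansion and multiscale analysis. If a sharper version is desired, one can integrate $F(x_0,e(k))$ in $x_0$ first, obtaining $\int_0^\beta|F|\,dx_0\le\min(\beta,1/|e(k)|)$, and then $\|g\|_1\le\sum_{k\in\mathcal D_\Lambda}\min(\beta,1/|e(k)|)$; this refinement is unnecessary for the use made of the corollary later.
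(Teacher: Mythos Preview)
Your argument for $\|g\|_\infty\le CM$ is correct and more direct than the paper's: the paper decomposes $g=\sum_h g^{(h)}$ over scales $h_\beta\le h\le h_M$ and uses the single-scale bounds $\|g^{(h)}\|_\infty\le C\gamma^h$, $\|g^{(h)}\|_1\le C\gamma^{-h}$ from Lemma~\ref{lemma_propagator_faster_any_power}, then sums the geometric series. Your Fourier-space triangle inequality bypasses all of that for the sup-norm.

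However, there is a genuine gap in your $\|g\|_1$ argument. You obtain $\|g\|_1\le L\beta$, i.e.\ a constant $C=L$ that depends on the volume, and you try to dismiss this by saying ``$L$ is fixed''. But the corollary claims $C$ is universal, and this is exactly what the paper's multiscale proof delivers: $\sum_{h=h_\beta}^{h_M}\gamma^{-h}\le C\gamma^{-h_\beta}\le C\beta$ with $C$ independent of $L,\beta,M$. The reason your bound loses a factor of $L$ is that the pointwise estimate $|g(x,x_0)|\le 1$ throws away all information about the \emph{spatial} decay of $g$; that decay is precisely what makes $\sum_x|g(x,x_0)|$ of order $\beta$ rather than $L$, and it is captured scale by scale in the paper's approach via $\|g^{(h)}\|_1\le C\gamma^{-h}$.

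Your proposed refinement does not close the gap either. The chain
\[
\sum_{x}|g(x,x_0)|\le \sum_x\frac{1}{L}\sum_k|F(x_0,e(k))|=\sum_k|F(x_0,e(k))|
\]
is again the triangle inequality in $k$, which discards the phases $e^{-ikx}$ and hence all spatial localization. Integrating in $x_0$ then gives $\sum_{k\in\mathcal D_\Lambda}\min(\beta,C/|e(k)|)$; near a Fermi point $e(k)\sim(k-p_F)\sin p_F$ with $k$-spacing $2\pi/L$, so this sum is of order $L\log\beta$, not $\beta$. To get an $L$-independent bound you must exploit the decay of $g$ in $x$, and the cleanest way to do that is exactly the multiscale decomposition the paper uses.
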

\begin{proof}
It is enough to extend the multiscale decomposition to the ultraviolet regime $2 \geq h \leq \lfloor h_M$, where $\log_\gamma M \rfloor$
\begin{eqnarray}
||g||_1\leq c\left(\sum_{h=h_\beta}^1 \gamma^{-h}+\sum_{h=2}^{h_M} 
\gamma^{-h}\right)\leq C \beta,\\
||g||_\infty \leq c\left(\sum_{h=h_\beta}^1 \gamma^{h}+\sum_{h=2}^{h_M} \gamma^{h}\right)\leq C M
\end{eqnarray}
\end{proof}

\chapter{Finite volume localization (PBC)}
\label{appendix_finite_volume_localization}
\section*{Momentum space-time localization}
{\it In this case in which periodic boundary conditions are imposed, we have a certain freedom to choose the point to localize at, because of the symmetry of $\mathcal{D}_{L,\beta}^{\omega}$ with respect to $0$. So let us call
\begin{equation}
\bar{\bm k}_{\eta, \eta'}=\left(\eta \frac{\pi}{L},\eta'\frac{\pi}{\beta}\right), \hspace{5mm} \eta,\eta'=\pm 1,
\label{bar_k_taylor_expansion_points}
\end{equation}
whose usefulness will be clear in a while.}\\
Let $\mathcal{L}$ be the {\it localization operator} acting on the {\it effective potentials} in the following way:
\begin{itemize}
\item the terms with more then $6$ external legs cause no problems, so we have nothing to extract:
$$\mathcal{L} \left(\frac{1}{\left(\beta L\right)^{2n}}\sum_{n\geq 3}\sum_{\bm k_1,\dots,\bm k_n\in\mathcal{D}_{L,\beta}}\prod_{j=1}^n\left(\hat \psi_{\bm k_{2j-1}}^{(\leq 0)+}\hat \psi_{\bm k_{2j}}^{(\leq 0)-}\right)\hat W_{2n}(\bm k_1,\dots,\bm k_{2n})\delta\left(\sum_{j=1}^{2n}(j+1)\bm k_j\right)\right)=0$$
\item on the terms with $4$ external legs,
\begin{equation}
\begin{aligned}
\mathcal L \left(\frac{1}{\left(\beta L\right)^4}\sum_{\bm k'_1,\bm k'_2,\bm k'_3,\bm k'_4 \in \mathcal D^{\bm \omega}_{L,\beta}}\psi^{(\leq 0)+}_{\omega_1,\bm k'_1}\psi^{(\leq 0)+}_{\omega_2, \bm k'_2}\psi^{(\leq 0)-}_{\omega_3, \bm k'_3}\psi^{(\leq 0)-}_{\omega_4, \bm k'_4}\right.\\ \left .\hat W_{4,\bm \omega}(\bm k'_1,\bm k'_2,\bm k'_3,\bm k'_4)\delta_{\bm k'_1+\bm k'_2,\bm k'_3+\bm k'_4}\delta_{\omega_1+\omega_2,\omega_3+\omega_4}\right)=\\
\frac{1}{\left(\beta L\right)^4}\sum_{\bm k'_1,\bm k'_2,\bm k'_3,\bm k'_4 \in \mathcal D^{\bm \omega}_{L,\beta}}\psi^{(\leq 0)+}_{\omega_1,\bm k'_1}\psi^{(\leq 0)+}_{\omega_2, \bm k'_2}\psi^{(\leq 0)-}_{\omega_3, \bm k'_3}\psi^{(\leq 0)-}_{\omega_4, \bm k'_4}\\ \hat W_{4,\bm \omega}(\bar{\bm k}_{++},\bar{\bm k}_{++},\bar{\bm k}_{++},\bar{\bm k}_{++})\delta_{\bm k'_1+\bm k'_2,\bm k'_3+\bm k'_4}\delta_{\omega_1+\omega_2,\omega_3+\omega_4},
\end{aligned}
\end{equation}
\item  on the terms with $2$ external legs
\begin{equation}
\begin{split}
\mathcal L\left( \frac{1}{L\beta}\sum_{\bm k\in\mathcal{D}^{\omega}_{L,\beta}} \psi_{\omega,\bm k'}^{(\leq 0)+}\hat \psi_{\omega,\bm k'}^{(\leq 0)-}W_{2,\omega}(\bm k') \right)
=\frac{1}{L\beta}\sum_{\bm k\in\mathcal{D}^{\omega}_{L,\beta}} \psi_{\omega,\bm k'}^{(\leq 0)+}\hat \psi_{\omega,\bm k'}^{(\leq 0)-}\cdot \\
\cdot\frac{1}{4}\left(\sum_{\eta,\eta'=\pm 1}\hat W^{(h)}_{2,\bm \omega}(\bar{\bm k}_{\eta\eta'})\right)\left[1+\left(\eta \frac{L}{\pi}\left(b_L+a_Le(k'+\omega p_F)\right)+\eta'\frac{\beta}{\pi}k_0\right)\right]
\end{split}
\end{equation}
where
\begin{equation}
a_L=\frac{\pi/L}{\sin\left(\pi/L\right)},\hspace{3mm} 
b_L= \cos(p_F) \frac{\left(\cos(\pi/L)-1\right)\pi/L}{\sin(\pi/L)}
\end{equation}
If, on the one hand, this is the rigorous definition taking into account the finite size effect that, of course, disappear once we take the thermodinamic and zero temperature limit, the reader should keep in mind that
\begin{equation}
\begin{split}
\lim_{L\to\infty}a_L=1, \hspace{3mm}\lim_{L\to\infty}b_L=0,
\end{split}
\end{equation}
implying that
\begin{equation}
\begin{split}
\lim_{L,\beta\to \infty}\mathcal L\left( \frac{1}{L\beta}\sum_{\bm k\in\mathcal{D}^{\omega}_{L,\beta}} \psi_{\omega,\bm k'}^{(\leq 0)+}\hat \psi_{\omega,\bm k'}^{(\leq 0)-}W_{2,\omega}(\bm k') \right)
=\int d\bm k' \psi_{\omega,\bm k'}^{(\leq 0)+}\hat \psi_{\omega,\bm k'}^{(\leq 0)-}\cdot \\
\cdot\left[\hat W^{(h)}_{2,\bm \omega}(0)+\left(e( k'+\omega p_F)\frac{\partial \hat{W}^{(h)}_{2,\omega}}{\partial k'}(0)+k_0 \frac{\partial \hat W^{(h)}_{2,\omega}}{\partial k_0}(0)\right)\right].
\end{split}
\end{equation}
\end{itemize}
Finally, we simply define the {\it renormalization operator}
\begin{equation}
\mathcal R=1-\mathcal L
\end{equation}
where $1$ has to be read as the {\it identity operator}.
\begin{rem}
First of all, it is clear that the only condition to recover the  {\it infinite volume limit} (\ref{localization_2el_infinite_volume_limit}) is 
$$a_L=1+O\left(\frac{1}{L^2}\right),\hspace{3mm}b_L=O\left(\frac{1}{L^2}\right).$$
The choice (\ref{localization_2el_infinite_volume_limit}) is better then the others because it reproduces the property which is true in the infinite volume limit: $\mathcal{L}$ and $\mathcal{R}$ are projectors onto to orthogonal spaces:
\begin{equation}
\mathcal L^2=\mathcal L,\hspace{3mm}\mathcal R^2=\mathcal R,\hspace{3mm}\mathcal{L}\mathcal{R}=\mathcal{R}\mathcal{L}=0. 
\end{equation} 
Besides, it is worth noting that that in momenta space representation the operators $\mathcal L$ and $\mathcal R$ act directly on the kernels $\hat W_{2n,\omega}$, so it is not necessary to use the heavy notation we used to express their action on the whole operator represenation. Anyway, we prefer to be heavier but complete: indeed this representation corresponds, in a trivial way, to a real-space representation (via a Fourier transorm) that we will explain in a while, and that will be the crucial representation in the next chapter when, due to the lost of momenta conservation, it will be useless to represent the system in Fourier space.
\end{rem}
\section*{Real-space localization}
\label{appendix_real_space-time_localization}

In subsection (\ref{subsection_renormalization_group_PBC}) we defined the {\it localization} operator directly in the Fourier space formalism: it corresponds, trivially, to compute the kernels that have to be renormalized (two and four external legs kernels) at Fermi momentum, based on the fact that we know that the propagator (\ref{free_propagator_PBC}) is singular at Fermi point. Technically, it corresponds to keep in what we call the {\it local part} only the order zero term (in the case of quartic kernels) or the order zero and order one terms (in the case of quadratic terms) of the Taylor expansion around the Fermi points. Of course, there is a counterpart in the real space-time representation, that we are going to explain. Of course, there is nothing more to do than just Fourier transform the formulae we have given in the momentum-space.
\begin{itemize}
\item If $2n=4$, by Fourier transforming the (\ref{localization_4el_finite_volume_limit}) we get
\begin{equation}
\begin{split}
\mathcal L \sum_{\bm x_1,\dots,\bm x_4} W^{(h)}_{\bm \omega, 4}(\bm x_1, \bm x_2, \bm x_3, \bm x_4) \psi^{(\leq h)+}_{\omega_1,\bm x_1} \psi^{(\leq h)+}_{\omega_2,\bm x_2}\psi^{(\leq h)-}_{\omega_3,\bm x_3}\psi^{(\leq h)-}_{\omega_4,\bm x_4}=\\
=  \sum_{\bm x_1,\dots,\bm x_4} W^{(h)}_{\bm \omega, 4}(\bm x_1, \bm x_2, \bm x_3, \bm x_4)e^{i\bar{\bm k}_{++}\cdot(\bm x_1-\bm x_4)} \psi^{(\leq h)+}_{\omega_1,\bm x_4} \\e^{i\bar{\bm k}_{++}\cdot(\bm x_2-\bm x_4)}\psi^{(\leq h)+}_{\omega_2,\bm x_4}
e^{-i\bar{\bm k}_{++}\cdot(\bm x_3-\bm x_4)}\psi^{(\leq h)-}_{\omega_3,\bm x_4}\psi^{(\leq h)-}_{\omega_4,\bm x_4}=\\
=\sum_{\bm x_1,\dots,\bm x_4} W^{(h)}_{\bm \omega, 4}(\bm x_1, \bm x_2, \bm x_3, \bm x_4)e^{i\bar{\bm k}_{++}\cdot(\bm x_1+\bm x_2-\bm x_3-\bm x_4)} \psi^{(\leq h)+}_{\omega_1,\bm x_4} \psi^{(\leq h)+}_{\omega_2,\bm x_4}\psi^{(\leq h)-}_{\omega_3,\bm x_4}\psi^{(\leq h)-}_{\omega_4,\bm x_4}
\end{split}
\label{localization_4el_real_spacetime}
\end{equation}
where, by recalling the definition (\ref{bar_k_taylor_expansion_points})
$$\bar{\bm k}_{++}=\left(\frac{\pi}{L},\frac{\pi}{\beta}\right)$$
we notice that the {\it localization operator} in real space-time representation {\it acts} on the annihilation and creation operators by computing all of them in the same space-time point ({\it localization}) and multiplying them by an oscillating factor that keeps track of the finiteness of the volume:
$$e^{i\bar{\bm k}_{++}(\bm x_i-\bm x_4)}=e^{i\pi\left(\frac{x_i-x_4}{L}+\frac{x_{0_i}-x_{0_4}}{\beta}\right)}.$$
In writing the formula (\ref{localization_4el_real_spacetime}) we have choosen arbitrarly the point $\bm x_4$ to localize at, but by noting that the function $W^{(h)}_{\bm \omega, 4}(\bm x_1, \bm x_2, \bm x_3, \bm x_4)e^{i\bar{\bm k}_{++}\cdot(\bm x_1+\bm x_2-\bm x_3-\bm x_4)}$ is translation invariant and periodic in the space-time components (respectively with period $L$ and $\beta$) it follows that we can replace each $e^{i\epsilon_i \bar{\bm k}_{++}(\bm x_i-\bm x_4)}\psi^{(\leq h)\epsilon_i}_{\bm x_4}$ with $e^{i\epsilon_i \bar{\bm k}_{++}(\bm x_i-\bm x_k)}\psi^{(\leq h)\epsilon_i}_{\bm x_k}$, $k=1,2,3,4$: {\it i.e.} we can chose the {\it localization point}, so we have a {\it freedom in the choice of the localization point}.\\
By definition $$\mathcal{R}=1-\mathcal L,$$ so if we define the localization as in (\ref{localization_4el_real_spacetime}), we get
\begin{equation}
\begin{split}
\mathcal R\sum_{\bm x_1,\dots,\bm x_4} W^{(h)}_{\bm \omega, 4}(\bm x_1, \bm x_2, \bm x_3, \bm x_4) \psi^{(\leq h)+}_{\omega_1,\bm x_1} \psi^{(\leq h)+}_{\omega_2,\bm x_2}\psi^{(\leq h)-}_{\omega_3,\bm x_3}\psi^{(\leq h)-}_{\omega_4,\bm x_4}=\\
=\sum_{\bm x_1,\dots,\bm x_4} W^{(h)}_{\bm \omega, 4}(\bm x_1, \bm x_2, \bm x_3, \bm x_4) \left(\prod_{i=1}^4\psi^{(\leq h)\epsilon_i}_{\omega_i,\bm x_i}-\prod_{i=1}^4e^{i \epsilon_i \bar{\bm k}_{++}(\bm x_i-\bm x_4)}\psi^{(\leq h)\epsilon_i}_{\omega_i,\bm x_i}\right)
\end{split}
\end{equation}
where the term in brackets can be read as
\begin{equation}
\begin{split}
\psi^{(\leq h)+}_{\omega_1,\bm x_1} \psi^{(\leq h)+}_{\omega_2,\bm x_2}D^{1,1(\leq h)-}_{\bm x_3,\bm x_4,\omega 3}\psi^{(\leq h)-}_{\omega_4,\bm x_4}+\\
+e^{-i\bar{\bm k}_{++}(\bm x_3-\bm x_4)} \psi^{(\leq h)+}_{\omega_1,\bm x_1} D^{1,1(\leq h)+}_{\bm x_2,\bm x_4, \omega_2}\psi^{(\leq h)-}_{\omega_3,\bm x_4}\psi^{(\leq h)-}_{\omega_4,\bm x_4}+\\
+e^{i\bar{\bm k}_{++}(\bm x_2-\bm x_3)}D^{1,1(\leq h)+}_{\bm x_1,\bm x_4,\omega_1} \psi^{(\leq h)+}_{\omega_2,\bm x_4}\psi^{(\leq h)-}_{\omega_3,\bm x_4}\psi^{(\leq h)-}_{\omega_4,\bm x_4}
\end{split}
\end{equation}
where in sake clarity we used the synthetic notation:
\begin{equation}
D_{\bm y,\bm x,\omega}^{1,1(\leq h)\epsilon}=\psi_{\bm y,\omega}^{(\leq h),\epsilon}-e^{i \epsilon\bar{\bm k}_{++}(\bm y-\bm x)}\psi^{(\leq h)+}_{\bm x,\omega}
\end{equation}
and we have chosen the name $D$ because, when the new {\it field} $D$ is contracted, it corresponds to a {\it derivative-propagator}.\\
An equivalent representation of the remainder is
\begin{equation}
\begin{split}
\mathcal R\sum_{\bm x_1,\dots,\bm x_4}W^{(h)}_{\bm \omega, 4}(\bm x_1, \bm x_2, \bm x_3, \bm x_4) \prod_{i=1}^4\psi^{(\leq h)\epsilon_i}_{\omega_i,\bm x_i}=\\
=\sum_{\bm x_1,\dots,\bm x_4} \prod_{i=1}^4\psi^{(\leq h)\epsilon_i}_{\omega_i,\bm x_i} \Bigl[W^{(h)}_{\bm \omega, 4}(\bm x_1, \bm x_2, \bm x_3, \bm x_4)-\\ -\delta_{\bm x_3,\bm x_4}\sum_{\bm y_3} W^{(h)}_{\bm \omega, 4}(\bm x_1, \bm x_2, \bm y_3, \bm x_4)e^{-i\bar{\bm k}_{++}(\bm y_3-\bm x_4)} \Bigr]+\\
+\sum_{\bm x_1,\dots,\bm x_4} \prod_{i=1}^4\psi^{(\leq h)\epsilon_i}_{\omega_i,\bm x_i} \delta_{\bm x_3,\bm x_4}\sum_{\bm y_3} \Bigl(W^{(h)}_{\bm \omega, 4}(\bm x_1, \bm x_2, \bm y_3, \bm x_4)e^{-i\bar{\bm k}_{++}(\bm y_3-\bm x_4)} -\\
- \delta_{\bm x_2,\bm x_4}\sum_{\bm y_2}W^{(h)}_{\bm \omega, 4}(\bm x_1, \bm x_2, \bm y_3, \bm x_4)e^{i\bar{\bm k}_{++}(\bm y_2-\bm x_3)}\Bigr)+\\
+\sum_{\bm x_1,\dots,\bm x_4} \prod_{i=1}^4\psi^{(\leq h)\epsilon_i}_{\omega_i,\bm x_i} \delta_{\bm x_2,\bm x_4}\delta_{\bm x_3,\bm x_4} \sum_{\bm y_2,\bm y_3}\Bigl(W^{(h)}_{\bm \omega, 4}(\bm x_1, \bm y_2, \bm y_3, \bm x_4)e^{i\bar{\bm k}_{++}(\bm y_2-\bm x_3)}-\\ 
-\delta_{\bm x_1,\bm x_4}\sum_{\bm y_1}W^{(h)}_{\bm \omega, 4}(\bm y_1, \bm y_2, \bm y_3, \bm x_4)e^{i\bar{\bm k}_{++}(\bm y_1+\bm y_2-\bm y_3-\bm x_4)}\Bigr)
\end{split}
\end{equation}
where we have used
\begin{equation}
\delta_{\bm x,\bm y}=\frac{1}{L\beta}\sum_{\bm k'\in\mathcal{D}_{L,\beta}^\omega}e^{i\bm k'(\bm x-\bm y)}.
\end{equation}

\item If $2n=4$, by Fourier transforming (\ref{localization_2el_infinite_volume_limit}) we get 
\begin{equation}
\begin{split}
\mathcal{L}\sum_{\bm x,\bm y}W^{(h)}_{\bm \omega,2}(\bm x-\bm y)\psi^{(\leq h)+}_{\omega,\bm x}\psi^{(\leq h)-}_{\omega,\bm y}=\\ 
=\sum_{\bm x,\bm y}W^{(h)}_{\bm \omega,2}(\bm x-\bm y)\psi^{(\leq h)+}_{\omega,\bm x}T^{1(\leq h)-}_{\bm y,\bm x,\omega}=\\
\sum_{\bm x,\bm y}W^{(h)}_{\bm \omega,2}(\bm x-\bm y)T^{1(\leq h)+}_{\bm x,\bm y,\omega}\psi^{(\leq h)-}_{\omega,\bm y}
\end{split}
\label{localization_2el_real_spacetime}
\end{equation}
where
\begin{equation}
\begin{split}
T^{1(\leq h)\epsilon}_{\bm y,\bm x,\omega}=\psi^{(\leq h)\epsilon}_{\omega,\bm x}c_\beta(y_0-x_0)[c_L(y-x)+b_Ld_L(y-x)]+\\
+[\partial_1 \psi^{(\leq h)\epsilon}_{\omega,\bm x}+\frac{i\cos p_F}{2}\partial_1^2\psi^{(\leq h)\epsilon}_{\omega,\bm x}c_\beta(y_0-x_0)a_Ld_L(y-x)]+\\
+\partial_0\psi^{(\leq h)\epsilon}_{\omega,\bm x} d_\beta (y_0-x_0)c_L(y-x)
\end{split}
\end{equation}
where
\begin{equation}
\begin{split}
d_L(x)=\frac{L}{\pi}\sin\left(\frac{\pi x}{L}\right), \hspace{5mm} d_\beta(x_0)=\frac{\beta}{\pi}\sin\left(\frac{\pi x}{\beta}\right),\\
c_L=\cos\left(\frac{\pi x}{L}\right),\hspace{5mm}c_\beta=\cos\left(\frac{\pi x_0}{\beta}\right).
\end{split}
\end{equation}
Again, by definition,
\begin{equation}
\begin{split}
\mathcal{R}\sum_{\bm x,\bm y}W^{(h)}_{\bm \omega,2}(\bm x-\bm y)\psi^{(\leq h)+}_{\omega,\bm x}\psi^{(\leq h)-}_{\omega,\bm y}=\\
=\sum_{\bm x,\bm y}W^{(h)}_{\bm \omega,2}(\bm x-\bm y)\psi^{(\leq h)+}_{\omega,\bm x}D^{2(\leq h)-}_{\bm y,\bm x\omega}=\\
=\sum_{\bm x,\bm y}W^{(h)}_{\bm \omega,2}(\bm x-\bm y)D^{2(\leq h)+}_{\bm x,\bm y,\omega}\psi^{(\leq h)-}_{\omega,\bm y}
\end{split}
\end{equation}
with of course
\begin{equation}
D^{2(\leq h)\epsilon}_{\bm y,\bm x,\omega}=\psi^{(\leq h)\epsilon}_{\omega,\bm y}-T^{1(\leq h)\epsilon}_{\bm y,\bm x,\omega}.
\end{equation}
The remainder can be represented as
\begin{equation}
\begin{split}
\mathcal{R}\sum_{\bm x,\bm y}W^{(h)}_{\bm \omega,2}(\bm x-\bm y)\psi^{(\leq h)+}_{\omega,\bm x}\psi^{(\leq h)-}_{\omega,\bm y}=\sum_{\bm x,\bm y}\psi^{(\leq h)+}_{\omega,\bm x}\psi^{(\leq h)-}_{\omega,\bm y}\Bigl(W^{(h)}_{\omega, 2}(\bm x-\bm y)-\\
-\delta_{\bm x,\bm y}\sum_{\bm z} W^{(h)}_{\omega, 2}(\bm x-\bm z)c_\beta(z_0-x_0)[c_L(z-x)+b_Ld_L(z-x)]-\\
-[-\partial_1\delta_{\bm y,\bm x}+\frac{i\cos p_F}{2}\partial_1^2\delta_{\bm x,\bm y}]\sum_{\bm z}W^{(h)}_{\omega, 2}(\bm x-\bm z)c_\beta(z_0-x_0)a_Ld_L(z-x)-\\
-\partial_0\delta_{\bm x,\bm y}\sum_{\bm x}W^{(h)}_{\omega, 2}(\bm x-\bm z)d_\beta(z_0-x_0)c_L(z-x).
\end{split}
\end{equation}
\end{itemize}

\chapter{Gram representation}
\label{appendix_gram_representation}
In the proof of the following Lemma, concerning the Gram-Hadamard representation of the remainder propagators, is included the prove of the fact that also the free propagator (\ref{free_propagator_PBC}) can be expressed as a scalar product of vectors of a suitable chosen Hilbert space.\\
The proof of the assumption we made in Lemma (\ref{lemma_gram_hadamard_for_G}) is included in the proof of the following Lemma (\ref{lemma_gram_hadamard_scalar_product_off_diagonal_propagators}): in particular it is enough to look at formula (\ref{g_2(L+1)_scalar_product}) and the explicit expressions of $A^{(h)}_{2(L+1)}$, $B^{(h)}_{2(L+1)}$ in formula (\ref{explicit_expressions_functions_A_B_gram_hadamard}).
\begin{lem}
\label{lemma_gram_hadamard_scalar_product_off_diagonal_propagators}
The {\it remainder propagator} $g^{(h)}_{R}(x+y,x_0-y_0)$ defined in Lemma (\ref{lemma_reflection_trick}) and the definition of $g^{(h)},$ can be written as a scalar product. As a consequence, we are allowed to build up the matrix $G^T(\bm t)$ to use the formula (\ref{free_energy_determinant_expansion_gram_hadamard}) and the Gram-Hadamard estimate (Lemma (\ref{lemma_gram_hadamard_inequality})).
\end{lem}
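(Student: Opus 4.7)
The plan is to exhibit an explicit Gram factorization of the single-scale propagators by factorizing the momentum-space weight $\hat g^{(h)}(\bm k)=f_h(\bm k-\omega p_F)/(-ik_0+e(k))$ into a product of two scalar factors and then decorating them with oscillating exponentials depending on $\bm x,\bm y$. More precisely, on the support of $f_h(\bm k-\omega p_F)$ one has $|{-ik_0+e(k)}|\sim \gamma^h$, so the function
\[
\hat{\mathfrak a}^{(h)}(\bm k):=\sqrt{f_h(\bm k-\omega p_F)},\qquad
\hat{\mathfrak b}^{(h)}(\bm k):=\frac{\sqrt{f_h(\bm k-\omega p_F)}}{-ik_0+e(k)}
\]
is well defined on the whole $\mathcal D_{2(L+1)}\times\mathcal D_{\beta,M}$ (extending by zero outside the support of $f_h$), satisfies $\hat{\mathfrak a}^{(h)}(\bm k)^*\hat{\mathfrak b}^{(h)}(\bm k)=\hat g^{(h)}(\bm k)$, and obeys $\|\hat{\mathfrak a}^{(h)}\|_{\ell^2}^2\leq C\gamma^{2h}$, $\|\hat{\mathfrak b}^{(h)}\|_{\ell^2}^2\leq C$ (where the $\ell^2$ norm is with respect to the normalized counting measure $\frac{1}{\beta\,2(L+1)}\sum_{\bm k}$).

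With this factorization in hand, the first step is to treat the translation-invariant piece $g_P^{(h)}(\bm x-\bm y)=g^{(h)}_{2(L+1)}(x-y,x_0-y_0)$. I would define, in the Hilbert space $\mathcal H:=\ell^2(\mathcal D_{2(L+1)}\times\mathcal D_{\beta,M})$ with inner product $\langle u,v\rangle=\frac{1}{\beta\,2(L+1)}\sum_{\bm k}\overline{u(\bm k)}\,v(\bm k)$, the two families of vectors
\[
A^{(h)}_{2(L+1),\bm x}(\bm k):=e^{i(kx+k_0 x_0)}\,\hat{\mathfrak a}^{(h)}(\bm k),\qquad
B^{(h)}_{2(L+1),\bm y}(\bm k):=e^{i(ky+k_0 y_0)}\,\hat{\mathfrak b}^{(h)}(\bm k).
\]
A direct computation gives
\begin{equation}\label{g_2(L+1)_scalar_product}
\langle A^{(h)}_{2(L+1),\bm x},B^{(h)}_{2(L+1),\bm y}\rangle=g^{(h)}_{2(L+1)}(\bm x-\bm y)=g_P^{(h)}(\bm x,\bm y),
\end{equation}
with the uniform bounds $\|A^{(h)}_{2(L+1),\bm x}\|_{\mathcal H}\leq C\gamma^{h/2}$, $\|B^{(h)}_{2(L+1),\bm y}\|_{\mathcal H}\leq C\gamma^{h/2}$, so that the Gram--Hadamard inequality reproduces $\|g^{(h)}\|_\infty\leq C\gamma^h$.

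The key remark, for the remainder propagator, is that the substitution $x-y\mapsto x+y$ corresponds merely to flipping the sign of the spatial phase carried by one of the two vectors. I would therefore set
\begin{equation}\label{explicit_expressions_functions_A_B_gram_hadamard}
A^{R,(h)}_{\bm x}(\bm k):=e^{i(kx+k_0 x_0)}\,\hat{\mathfrak a}^{(h)}(\bm k),\qquad
B^{R,(h)}_{\bm y}(\bm k):=-e^{i(-ky+k_0 y_0)}\,\hat{\mathfrak b}^{(h)}(\bm k),
\end{equation}
and a one-line calculation, identical to the one leading to \eqref{g_2(L+1)_scalar_product}, gives $\langle A^{R,(h)}_{\bm x},B^{R,(h)}_{\bm y}\rangle=-g^{(h)}_{2(L+1)}(x+y,x_0-y_0)=g_R^{(h)}(\bm x,\bm y)$, again with norms bounded by $C\gamma^{h/2}$ uniformly in $\bm x,\bm y$. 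Since the full free propagator decomposes as $g^{(h)}=g^{(h)}_P+g^{(h)}_R$ (via Lemma \ref{lemma_reflection_trick} applied scale by scale), this already yields the Gram representation for every piece entering the determinant expansion.

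The final step, needed to build the matrix $G^{h_v,T_v}(\bm t)$ from \eqref{expectation_truncated_determinants} and use it inside the renormalized bounds, is to absorb the interpolation parameters $t_{v,i,i'}=\bm u_i\cdot\bm u_{i'}$ into the Hilbert space structure. Following the standard Brydges--Battle--Federbush prescription, I would enlarge the Hilbert space to $\mathcal H\otimes\mathbb R^s$ and replace $A^{\sigma,(h)}_{\bm x}$, $B^{\sigma,(h)}_{\bm y}$ ($\sigma\in\{P,R\}$) by $\bm u_{j(\bm x)}\otimes A^{\sigma,(h)}_{\bm x}$ and $\bm u_{j(\bm y)}\otimes B^{\sigma,(h)}_{\bm y}$ respectively, where $j(\cdot)$ labels the cluster of the point; the resulting vectors still have norms bounded by $C\gamma^{h/2}$, and Gram--Hadamard gives $\|\det G^{h_v,T_v}(\bm t)\|_\infty\leq C^{\#}\gamma^{h_v\#/2}$ as used in Lemma \ref{lemma_gram_hadamard_for_G}. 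I do not anticipate any serious obstacle: the only delicate point is checking that the factorization $\hat{\mathfrak a}^{(h)*}\hat{\mathfrak b}^{(h)}=\hat g^{(h)}$ is well defined at finite $L,\beta,M$ (which follows from the fact that $f_h$ is a smooth nonnegative cutoff supported where $|{-ik_0+e(k)}|$ is bounded away from zero), so that the square roots cause no issue.
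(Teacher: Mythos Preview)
Your argument is correct and in fact takes a cleaner route than the paper. The paper does not Gram-represent $g_R^{(h)}$ directly; instead it rewrites $g_R^{(h)}(\bm x,\bm y)=g^{(h)}(\bm x,\bm y)-g^{(h)}_{2(L+1)}(\bm x-\bm y)$ as a difference of the Dirichlet and the periodic single-scale propagators, gives a separate Gram factorization for each (the Dirichlet one using sine kernels $\sin(kx)\sin(ky)$, the periodic one using plane waves), and then combines the two via an auxiliary direct-sum construction with orthogonal unit vectors $u_{A^d},u_{B^d},u_{A_{2(L+1)}},u_{B_{2(L+1)}}$ so that the cross terms vanish. Your observation that $g_R^{(h)}$ is obtained from $g_P^{(h)}$ by the substitution $y\mapsto -y$ in the spatial argument, and hence that a Gram representation follows by simply flipping the sign of the spatial phase in one of the two vectors, bypasses this machinery entirely. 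The paper's detour has no obvious payoff here; your approach is more elementary and equally sufficient for the Gram--Hadamard bound.

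One small slip: with your factorization $\hat{\mathfrak a}^{(h)}=\sqrt{f_h}$, $\hat{\mathfrak b}^{(h)}=\sqrt{f_h}/(-ik_0+e(k))$, the individual norms are not both $C\gamma^{h/2}$ but rather $\|A^{(h)}_{\bm x}\|\le C\gamma^{h}$ and $\|B^{(h)}_{\bm y}\|\le C$. This is harmless, since only the product $\|A\|\,\|B\|\le C\gamma^h$ enters the Gram--Hadamard estimate, but you should correct the stated bounds.
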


\begin{proof}
Let us recall that the formula (\ref{g^d(h)_definition}) says that
\begin{equation}
g_R^{(h)}(\bm x,\bm y):=g^{(h)}_{2(L+1)}((x,x_0),(-y,y_0))=g^{(h)}_{2(L+1)}(\bm x-\bm y)-g^{(h)}(\bm x,\bm y),
\end{equation}
where
\begin{equation}
\begin{split}
g^{(h)}_{2(L+1)}(\bm x-\bm y)=\frac{1}{\beta 2(L+1)}\sum_{\bm k\in\mathcal D_{2(L+1),\beta}} e^{-i\bm k(\bm x-\bm y)}\frac{f_h(\bm k)}{-ik_0+e(k)},\\
g^{(h)}(\bm x,\bm y)=\frac{2}{\beta (L+1)}\sum_{\bm k\in\mathcal D^d_{\Lambda,\beta}} e^{-i k_0( x_0 - y_0)}\sin(kx)\sin(ky)\frac{f_h(\bm k)}{-ik_0+e(k)},
\end{split}
\end{equation}

The idea of the proof is to show that we can write both the propagators in the right hand side of the latter formula as scalar products, and then to use the linearity of the scalar product. Let us introduce four functions $A^{d(h)}(\bm x,\cdot), B^{d(h)}(\bm x,\cdot), A^{(h)}_{2(L+1)}(\bm x,\cdot),B^{(h)}_{2(L+1)}(\bm x,\cdot)$ such that, if we use the notation $\left<A(\bm x,\cdot), B(\bm y, \cdot)\right>=\int d\bm z\bar A(\bm x,\bm z)B(\bm z,\bm y)$, we can write
\begin{eqnarray}
g^{(h)}_{2(L+1)}(\bm x-\bm y)=\left<A^{(h)}_{2(L+1)}(\bm x,\cdot), B^{(h)}_{2(L+1)}(\bm y, \cdot)\right>, \label{g_2(L+1)_scalar_product}\\
g^{(h)}(\bm x,\bm y)=\left<A^{d(h)}(\bm x,\cdot), B^{d(h)}(\bm y, \cdot)\right>. \label{g^d_scalar_product}
\end{eqnarray}
and it is easy to check that a good choice is 
\begin{equation}
\begin{split}
A^{(h)}_{2(L+1)}(\bm x, \bm y)=\frac{1}{2\beta(L+1)}\sum_{\bm k\in\mathcal D_{2(L+1),\beta}}e^{-i\bm k (\bm x-\bm y)}\sqrt{f_h(\bm k)}\frac{1}{k_0^2+e^2(k)},\\
B^{(h)}_{2(L+1)}(\bm x, \bm y)=-\frac{1}{2\beta(L+1)}\sum_{\bm k\in\mathcal D_{2(L+1),\beta}}e^{-i\bm k (\bm x-\bm y)}\sqrt{f_h(\bm k)}(ik_0+e(k)),\\
A^{d(h)}(\bm x, \bm y)=\frac{2\chi(x>0,y>0)}{\beta(L+1)}\sum_{\bm k\in\mathcal D^d_{\beta}}e^{-i k_0 (\bm x_0-\bm y_0)}\sin(kx)\sin(ky)\sqrt{f_h(\bm k)}\frac{1}{k_0^2+e^2(k)},\\
B^{d(h)}(\bm x, \bm y)=-\frac{2\chi(x>0,y>0)}{\beta(L+1)}\sum_{\bm k\in\mathcal D^d_{\beta}}e^{-i k_0 (\bm x_0-\bm y_0)}\sin(kx)\sin(ky)\sqrt{f_h(\bm k)}(ik_0+e(k)).
\end{split}
\label{explicit_expressions_functions_A_B_gram_hadamard}
\end{equation}
Now we can check that there exist $\tilde A_R(\bm x,\cdot)$ and $\tilde B_R(\bm x,\cdot)$ (where the slightly different notation stays for the fact that they live in a bigger space with respect to the previous ones, as we explain below), such that
\begin{equation}
g^{(h)}_R(\bm x,\bm y)= \left<\tilde A^{(h)}_R(\bm x,\cdot), \tilde B^{(h)}_R(\bm y,\cdot)\right>.
\end{equation}
Let $\odot$ denote some kind of product between the space of the functions $A,B$ and of the new vectors we introduce: $u_{A^d}, u_{A_{2(L+1)}},u_{B^d},u_{B_{2(L+1)}}$, such that, if $\cdot$ is the usual scalar product between these vectors,
\begin{eqnarray}
u_{A_{2(L+1)}}\cdot u_{B_{2(L+1)}}=1,\\
u_{A_{2(L+1)}}\cdot u_{B^d}=0,\\
u_{A^d}\cdot u_{B_{2(L+1)}}=0,\\
u_{A^d}\cdot u_{B^d}=1.
\end{eqnarray}
Besides, let us interpret
\begin{equation}
\left<u_i\odot A(\bm x,\cdot),u_j\odot (\bm y,\cdot)\right>=u_i\cdot u_j\int d\bm z \bar A(\bm x,\bm z)B(\bm y,\bm z)=u_i\cdot u_j \left<A(\bm x,\cdot), B(\bm y,\cdot)\right>.
\end{equation}
Thanks to these definition, we can check that by defining
\begin{eqnarray}
\tilde A^{(h)}_R=A^{d(h)}\odot u_{A^d}+i A^{(h)}_{2(L+1)}\odot u_{A_{2(L+1)}},\\
\tilde B^{(h)}_R=B^{d(h)}\odot u_{B^d}-i B^{(h)}_{2(L+1)}\odot u_{B_{2(L+1)}},\\
\end{eqnarray}
we get
\begin{equation}
\begin{split}
\left<\tilde A^{(h)}_R(\bm x,\cdot),\tilde B^{(h)}_R(\bm y, \cdot)\right>=\left<\tilde A^{d(h)}(\bm x,\cdot),\tilde B^{d(h)}(\bm y, \cdot)\right>-\left<\tilde A^{(h)}_{2(L+1)}(\bm x,\cdot),\tilde B^{(h)}_{2(L+1)}(\bm y, \cdot)\right>=\\ =g^{(h)}(\bm x,\bm y)-g^{(h)}_{2(L+1)}(\bm x-\bm y)
\end{split}
\end{equation}
so finally
\begin{equation}
g^{(h)}_R(\bm x,\bm y)=\left<\tilde A^{(h)}_R(\bm x,\cdot),\tilde B^{(h)}_R(\bm y, \cdot)\right>.
\end{equation}
Furthermore, in order to use the Gram-Hadamard estimate, let us note that
\begin{eqnarray}
||\tilde{A}^{(h)}_R||^2=||A^{(h)}_{2(L+1)}||^2+||A^{d(h)}||^2,\\
||\tilde{B}^{(h)}_R||^2=||B^{(h)}_{2(L+1)}||^2+||B^{d(h)}||^2.
\end{eqnarray}
so that, thanks to formulae (\ref{g_2(L+1)_scalar_product}) and (\ref{g^d_scalar_product}),
\begin{equation}
||\tilde{A}^{(h)}_R||||\tilde{B}^{(h)}_R||\leq C\gamma^h,
\end{equation}
for some $C>0$.
\end{proof}

\chapter{Breaking of DBC: a counter-example}
\label{appendix_non_local_tadpole}
\begin{proof}
In constructing this counter-example, we use the assumption that $v(x,y)$ is diagonal in the {\it sine Fourier base}. It is enough to show a counterexample: the {\it non-local tadpole} (see Figure (\ref{figure_tadpoles}), the element on the right):
\begin{equation}
\begin{split}
\int_{[0,\beta)} dx_0\sum_{x\in\Lambda}\int_{[0,\beta)} dy_0\sum_{y\in\Lambda} \psi^{(h)+}_{\bm x}\psi^{(h)-}_{\bm y}v(x,y)\delta_{x_0,y_0}g^{(h)}(\bm x,\bm y)=\\
=\frac{1}{\beta^3}\left(\frac{2}{(L+1)}\right)^4  \sum_{\substack{ k_{1_0}, k_{2_0}, k_{4_0}\\ \in \\ \mathcal D_{\beta}}}\sum_{\substack{ k_1,  k_2,  k_3,  k_4\\ \in \\ \mathcal D_{\Lambda}^d}}\int_{[0,\beta)} dx_0\sum_{x\in\Lambda}\sum_{y\in\Lambda}\hat \psi^{(h)+}_{\bm k_1}\hat \psi^{(h)-}_{\bm k_2}\hat v( k_3) \hat g^{(h)}(\bm k_4) \cdot \\ \cdot e^{-ix_0(k_{1_0}-k_{2_0})}\left(\sin (k_1x)\sin (k_3x)\sin (k_4x)\right)\left(\sin (k_2y)\sin (k_3y)\sin (k_4y)\right)=\\
=\left(\frac{1}{\beta}\right)^2\left(\frac{2}{(L+1)}\right)^4\sum_{k_0, k_{4_0}\in\mathcal D_\beta} \sum_{\substack{  k_1,  k_2, k_3, k_4\\ \in \\ \mathcal D_{\Lambda}^d}}\sum_{x\in\Lambda}\sum_{y\in\Lambda}\hat \psi^{(h)+}_{(k_1, k_0)}\hat \psi^{(h)-}_{(k_2, k_0)} \hat v( k_3) \hat g^{(h)}(\bm k_4) \cdot \\ \cdot \left(\sin (k_1x)\sin (k_3x)\sin (k_4x)\right)\left(\sin (k_2y)\sin (k_3y)\sin (k_4y)\right).
\end{split}
\end{equation}
Knowing that
\begin{equation}
\begin{split}
\sin (k_1x)\sin (k_2 x) \sin(k_3 x)=\frac{1}{(2i)^3}\sum_{\substack{\sigma_1, \sigma_2, \sigma_3 \\ \in \\ \{\pm 1\}} }\sigma_1\sigma_2\sigma_3 e^{i(\sigma_1k_1+\sigma_2k_2+\sigma_3k_3)x}=\\=\frac{1}{(2i)^2}\sum_{\substack{\sigma_2, \sigma_3 \\ \in \\ \{\pm 1\}} }\sigma_2\sigma_3 \sin((k_1+\sigma_2 k_2+\sigma_3k_3)x)
\end{split}
\end{equation}
we can rewrite
\begin{equation}
\begin{split}
\sum_{x\in\Lambda}\sum_{y\in\Lambda}\left(\sin (k_1x)\sin (k_3x)\sin (k_4x)\right)\left(\sin (k_2y)\sin (k_3y)\sin (k_4y)\right)=\\
= \sum_{\substack{\sigma_3,\sigma_4,\omega_3,\omega_4\\ \in \\ \{\pm 1\}}}\sigma_3\sigma_4\omega_3\omega_4 \sum_{x\in\Lambda}\sum_{y\in\Lambda} \sin ((k_1+\sigma_3k_3+\sigma_4k_4)x)\sin ((k_2+\omega_3k_3+\omega4k_4)y)
\end{split}
\label{sin^6_in_counterexample}
\end{equation}
Finally
\begin{equation}
\begin{split}
 \sum_{x\in\Lambda}\sum_{y\in\Lambda} \sin ((k_1+\sigma_3k_3+\sigma_4k_4)x)\sin ((k_2+\omega_3k_3+\omega_4k_4)y)=\\
 =\left(L\delta (k_1+\sigma_3k_3+\sigma_4k_4)-2\sum_{x\in \Lambda} e^{-i(k_1+\sigma_3k_3+\sigma_4k_4)x}\right)\cdot\\
\cdot \left(L\delta (k_2+\omega_3k_3+\omega_4k_4)-2\sum_{y\in \Lambda} e^{-i(k_2+\omega_3k_3+\omega_4k_4)y}\right)
\end{split}
\end{equation}
In particular,
\begin{equation}
\begin{split}
\left(L\delta (k_1+\sigma_3k_3+\sigma_4k_4)-2\sum_{x\in \Lambda} e^{-i(k_1+\sigma_3k_3+\sigma_4k_4)x}\right)\cdot\\
\cdot \left(L\delta (k_2+\omega_3k_3+\omega_4k_4)-2\sum_{y\in \Lambda} e^{-i(k_2+\omega_3k_3+\omega_4k_4)y}\right)=L^2\delta(k_1-k_2)\delta (k_1+\sigma_3k_3+\sigma_4k_4)
\end{split}
\end{equation}
if and only if $\omega_3=\sigma_3$ and $\omega_4=\sigma_4$. Since in equation (\ref{sin^6_in_counterexample}) there is a sum over all the possible values of  $\omega_3, \sigma_3, \omega_4, \sigma_4$, we have proved that the non-local tadpole is in fact a counter-example.
\end{proof}

\chapter{Finite volume localization definition, DBC}
\label{appendix_finite_volume_loc_DBC}

\begin{itemize}
\item {\bf Case $2n=2$, kernel $W^{d(h)}_2=\mathcal L_{\mathcal B}W_2^{(h)}$} As we pointed out in formulae (\ref{2el_W^d_in_diagonal_form}) and (\ref{2el_W^d_in_diagonal_quasiparticles_form}), $$\int d\bm x d\bm y\psi^{(\leq h)+}_{\bm x}\psi^{(\leq h)-}_{\bm y}W^{d(h)}_2(\bm x,\bm y)=\frac{2}{\beta(L+1)}\sum_{\bm k'\in \mathcal D'^{d}_{\Lambda,\beta}}\hat \psi^{(\leq h)+}_{\bm k'+\bm p_F}\hat \psi^{(\leq h)-}_{\bm k'+\bm p_F}\hat W^{d(h)}_{2}(\bm k'+\bm p_F),$$
so we can proceed in localizing by analogy with the translation invariant case and define a localization procedure directly in the dual space. In order to take into account the finiteness of the volume we cannot directly localize at $\bm p_F$, but we are forced to localize at the nearest possible points to $\bm p_F$ belonging to the domain. While in the translation invariant case, so in the domain $\mathcal D_{\Lambda,\beta}$, $\bm p_F$ has four equidistant nearest neighbors $\bar{\bm k}_{\eta,\eta'}$, in the domain $\mathcal D^d_{\Lambda,\beta}$ there are only two of them:
\begin{equation}
\underline{\bm k}_\eta=\left(\frac{\pi}{L+1},\eta \frac{\pi}{\beta}\right),\hspace{5mm} \eta=\pm 1.
\label{underline_k_eta}
\end{equation}

\begin{equation}
\begin{split}
\mathcal L_\mathcal T\left[\mathcal L_{\mathcal B}\left(\int _{[0,\beta)}dx_0 dy_0\sum_{ x, y \in \Lambda}\psi^{(\leq h)+}_{\bm x}\psi^{(\leq h)-}_{\bm y}W^{(h)}_2(\bm x,\bm y)\right)\right]=\\
=\mathcal L_\mathcal T \left(\frac{2}{\beta(L+1)}\sum_{\bm k'\in \mathcal D'^{d}_{\Lambda,\beta}}\hat \psi^{(\leq h)+}_{\bm k'+\bm p_F}\hat \psi^{(\leq h)-}_{\bm k'+\bm p_F}\hat W^{d(h)}_{2}(\bm k'+\bm p_F) \right)= \\=\frac{2}{\beta(L+1)}\sum_{\bm k'\in \mathcal D'^d_{\Lambda,\beta}}\hat \psi^{(\leq h)+}_{\bm k'+\bm p_F}\hat \psi^{(\leq h)-}_{\bm k'+\bm p_F}\cdot \\
\cdot \frac{1}{2}\left(\sum_{\eta=\pm 1}\hat W^{d(h)}_{2}(\underline{\bm k}_\eta-\bm p_F)\right)\left[1+\left(\frac{L}{\pi}\left(b_L+a_Le(k'+p_F)+\eta \frac{\beta}{\pi}k_0\right)\right)\right]
\end{split}
\label{localization_W^d_DBC}
\end{equation}
where again
\begin{equation}
a_L=\frac{\pi/L}{\sin (\pi/L)}, \hspace{3mm} b_L=\cos p_F\frac{\left(\cos (\pi/L)-1\right)\pi/L}{\sin(\pi/L)}
\end{equation}
and $\mathcal T$ stays for {\it "Taylor expansion"}.

\item {\bf Case $2n=2$, kernel $\mathcal W^{(h)}_2=\mathcal R_\mathcal B W_2^{(h)}$} Because of the non-diagonality of the kernel $\mathcal W^{(h)}_2$, there is no advantage in defining a localization procedure in $\bm k$ space, so we work directly in the real space-time, being inspired by Appendix (\ref{appendix_real_space-time_localization}).
\begin{equation}
\begin{split}
\tilde {\mathcal L}_\mathcal T\left[\mathcal R_{\mathcal B}\left(\int _{[0,\beta)}dx_0 dy_0\sum_{ x, y \in \Lambda}\psi^{(\leq h)+}_{\bm x}\psi^{(\leq h)-}_{\bm y} W^{(h)}_2(\bm x,\bm y) \right)\right]=\\
=\tilde {\mathcal L}_\mathcal T\left(\int _{[0,\beta)}dx_0 dy_0\sum_{ x, y \in \Lambda}\psi^{(\leq h)+}_{\bm x}\psi^{(\leq h)-}_{\bm y}\mathcal W^{(h)}_2(\bm x,\bm y) \right)=\\=\int _{[0,\beta)}dx_0 dy_0\sum_{ x, y \in \Lambda}\left.\psi^{(\leq h)+}_{\bm x}\psi^{(\leq h)-}_{\bm y}\right|_{y_0=x_0}c_\beta(y_0-x_0) \mathcal W^{(h)}_2(\bm x,\bm y),
\end{split}
\label{localization_definition_mathcalWcd_DBC}
\end{equation}
where:
\begin{equation}
\begin{split}
c_\beta=\cos\left(\frac{\pi x_0}{\beta}\right),
\end{split}
\end{equation}
and we introduced the symbol $\tilde \cdot$ because the Teylor expansion is performed only in the {\it space-direction}.

\item {\bf Case $2n=4$, kernel $\bar W^{(h)}_4$}  In this case, we expand the Grassmann variables in the quasi-particles representation (\ref{quasi_particles_decomposition_scale_h_DBC}), and we define
\begin{equation}
\begin{split}
\mathcal L_\mathcal T\left[\mathcal L_{\mathcal B}\left(\int_{[0,\beta)}dx_{1_0}\dots dx_{4_0} \sum_{ \substack{x_1,\dots, x_4 \\ \in \Lambda}} \bar W^{(h)}_4(\bm x_1, \bm x_2, \bm x_3, \bm x_4)\cdot\right. \right.\\ \cdot e^{-ip_F(\omega_1x_1+\omega_2x_2-\omega_3 x_3-\omega_4 x_4)} \left.\left.\cdot\psi^{(\leq h)+}_{\sigma_1,\omega_1,\bm x_1} \psi^{(\leq h)+}_{\sigma_2,\omega_2,\bm x_2}\psi^{(\leq h)-}_{\sigma_3,\omega_3,\bm x_3}\psi^{(\leq h)-}_{\sigma_4,\omega_4,\bm x_4}\right)\right]=\\
=\mathcal L_\mathcal T \int_{[0,\beta)}dx_{1_0}\dots dx_{4_0} \sum_{ \substack{x_1,\dots, x_4 \\ \in \Lambda}} \bar W^{(h)}_4(\bm x_1, \bm x_2, \bm x_3, \bm x_4) \cdot \\ \cdot e^{-ip_F(\omega_1x_1+\omega_2x_2-\omega_3 x_3-\omega_4 x_4)} \psi^{(\leq h)+}_{\sigma_1,\omega_1,\bm x_1} \psi^{(\leq h)+}_{\sigma_2,\omega_2,\bm x_2}\psi^{(\leq h)-}_{\sigma_3,\omega_3,\bm x_3}\psi^{(\leq h)-}_{\sigma_4,\omega_4,\bm x_4}=\\
=  \int_{[0,\beta)}dx_{1_0}\dots dx_{4_0} \sum_{ \substack{x_1,\dots, x_4 \\ \in \Lambda}}\bar  W^{(h)}_4(\bm x_1, \bm x_2, \bm x_3, \bm x_4)\cdot \\ \cdot e^{-ip_F(\omega_1x_1+\omega_2x_2-\omega_3 x_3-\omega_4 x_4)}e^{i\bar{\bm k}_{+}\cdot(\bm x_1-\bm x_4)} \psi^{(\leq h)+}_{\sigma_1,\omega_1,\bm x_4} \\e^{i\bar{\bm k}_{+}\cdot(\bm x_2-\bm x_4)}\psi^{(\leq h)+}_{\sigma_2,\omega_2,\bm x_4}
e^{-i\bar{\bm k}_{+}\cdot(\bm x_3-\bm x_4)}\psi^{(\leq h)-}_{\sigma_3,\omega_3,\bm x_4}\psi^{(\leq h)-}_{\sigma_4,\omega_4,\bm x_4}=\\
=\int_{[0,\beta)}dx_{1_0}\dots dx_{4_0} \sum_{ \substack{x_1,\dots, x_4 \\ \in \Lambda}} \bar W^{(h)}_4(\bm x_1, \bm x_2, \bm x_3, \bm x_4)\cdot \\ \cdot e^{-ip_F(\omega_1x_1+\omega_2x_2-\omega_3 x_3-\omega_4 x_4)}e^{i\bar{\bm k}_{+}\cdot(\bm x_1+\bm x_2-\bm x_3-\bm x_4)} \cdot \\ \cdot\psi^{(\leq h)+}_{\sigma_1,\omega_1,\bm x_4} \psi^{(\leq h)+}_{\sigma_2,\omega_2,\bm x_4}\psi^{(\leq h)-}_{\sigma_3,\omega_3,\bm x_4}\psi^{(\leq h)-}_{\sigma_4,\omega_4,\bm x_4}
\end{split}
\label{localization_4el_real_spacetime_DBC}
\end{equation}
where, by recalling the definition (\ref{underline_k_eta})
$$\bar{\bm k}_{+}=\left(\frac{\pi}{L+1},\frac{\pi}{\beta}\right).$$

\item {\bf Other cases} 
The case with $2n=4$ and the kernel $\mathcal R_{\mathcal B}\mathcal W^{(h)}_4$ does not need to be renormalized:
\begin{equation}
\begin{split}
\mathcal L_\mathcal T \left[\mathcal R_{\mathcal B}\left(\sum_{\bm x_1,\dots,\bm x_4}   W^{(h)}_4(\bm x_1, \bm x_2, \bm x_3, \bm x_4)e^{-ip_F(\omega_1x_1+\omega_2x_2-\omega_3 x_3-\omega_4 x_4)}\right. \right.\cdot \\ \cdot\left. \left. \psi^{(\leq h)+}_{\sigma_1,\omega_1,\bm x_1} \psi^{(\leq h)+}_{\sigma_2,\omega_2,\bm x_2}\psi^{(\leq h)-}_{\sigma_3,\omega_3,\bm x_3}\psi^{(\leq h)-}_{\sigma_4,\omega_4,\bm x_4}\right)\right]=\\
=\mathcal L_\mathcal T \left(\sum_{\bm x_1,\dots,\bm x_4}  \mathcal W^{(h)}_4(\bm x_1, \bm x_2, \bm x_3, \bm x_4)e^{-ip_F(\omega_1x_1+\omega_2x_2-\omega_3 x_3-\omega_4 x_4)}\right. \cdot \\ \cdot \left. \psi^{(\leq h)+}_{\sigma_1,\omega_1,\bm x_1} \psi^{(\leq h)+}_{\sigma_2,\omega_2,\bm x_2}\psi^{(\leq h)-}_{\sigma_3,\omega_3,\bm x_3}\psi^{(\leq h)-}_{\sigma_4,\omega_4,\bm x_4}\right)=0,
\end{split}
\end{equation}
and this implies that $\mathcal R_\mathcal T\mathcal R_{\mathcal B}=\mathcal R_{\mathcal B}$ if it acts on a quartic term. The same holds for the operators $\tilde{\mathcal L}_\mathcal T$ and $\tilde{\mathcal R}_\mathcal T$.\\
Of course, as in the translation invariant case, if $2n\geq 6$
\begin{equation}
\mathcal L_\mathcal T\left(\sum_{\bm x_1,\dots,\bm x_{2n}}\left[\prod_{j=1}^{n}\psi^{(\leq h)+}_{\bm x_{2j-1}} \psi^{(\leq h)-}_{\bm x_{2j}} \right]W_{\bm 2n}^{(h)}(\bm x_1,\dots,\bm x_{2n})\right)=0.
\end{equation}
where $i=0,1$. The same holds for $\tilde{\mathcal L}_\mathcal T$.
\end{itemize}

\chapter{Weighted integration and dimensional gain}
\label{appendix_proof_lemma_effective_gain}
\section*{Proof of Lemma (\ref{lemma_effective_gain})}
\begin{proof}
Let us start by bounding
\begin{equation}
\left| \int d\bm y \rho_h(y) g^{(\bar h)}_{P,\omega}(\bm y-\bm x)\right|
\end{equation}
where $\bar h < h$, recalling that for any $N,M\in\mathbb N$
\begin{equation}
\begin{split}
\left| \rho_h(y)\right|\leq \frac{C_N}{1+\left(\gamma^h |y|\right)^N}\leq \sum_{k\leq h}\gamma^{N(k-h)}e^{-\gamma^{k|y|}},\\
\left| g^{(\bar h)}_{P,\omega}(\bm x-\bm y)\right|\leq \gamma^{\bar h}\frac{C_M}{1+\left(\gamma^{\bar h} |\bm x-\bm y|\right)^M}\leq C \sum_{\bar k\leq \bar h}\gamma^{M(\bar k-\bar h)}e^{-\gamma^{\bar k|\bm x-\bm y|}}.
\end{split}
\end{equation}
for a suitable $C>0$.\\
First of all, we can perform the time integration in order to be left with, up to a constant $C_N C_M$
\begin{equation}
\begin{split}
\int_0^L dy\frac{1}{1+\gamma^{N h}|y|^N}\frac{1}{1+\gamma^{M\bar h}|x-y|^M}\leq \sum_{\bar k\leq \bar h}\sum_{k\leq h}\gamma^{N(k-h)}\gamma^{M(\bar k-\bar h)}\int_0^L dy e^{-\gamma^k |y|}e^{-\gamma^{\bar k}|x-y|}\leq \\
\leq c_0 \sum_{\bar k\leq \bar h}\sum_{k\leq h}\gamma^{N(k-h)}\gamma^{M(\bar k-\bar h)}\gamma^{-\max\{k,\bar k\}} \left[e^{-\gamma^k |x|}+e^{-\gamma^{\bar k}|x|}\right].
\end{split}
\end{equation}
Now our strategy is to control the contribution coming from the different domains of the double sum we are considering, bounding the previous expression by
\begin{equation}
\begin{split}
c_1\left( \sum_{\bar k\leq \bar h}\sum_{\bar h< k\leq h}\gamma^{N(k-h)}\gamma^{M(\bar k-\bar h)}\gamma^{-\max\{k,\bar k\}} \left[e^{-\gamma^k |x|}+e^{-\gamma^{\bar k}|x|}\right]\right.+\\ \left. \sum_{\bar k\leq \bar h}\sum_{k\leq \bar h}\gamma^{N(k-h)}\gamma^{M(\bar k-\bar h)}\gamma^{-\max\{k,\bar k\}} \left[e^{-\gamma^k |x|}+e^{-\gamma^{\bar k}|x|}\right]\right)=\\
=c_1\left(\sum_{\bar k\leq \bar h}\gamma^{M(\bar k-\bar h)}e^{-\gamma^{\bar k} |x|}\sum_{\bar h< k\leq h}\gamma^{N(k-h)}\gamma^{-k}+ \sum_{\bar k\leq \bar h}\gamma^{M(\bar k-\bar h)}\sum_{\bar h< k\leq h}\gamma^{N(k-h)}\gamma^{-k}e^{-\gamma^k |x|}\right.+\\
+\left. \sum_{\bar k\leq \bar h}\sum_{k\leq \bar h}\gamma^{N(k-h)}\gamma^{M(\bar k-\bar h)}\gamma^{-\max\{k,\bar k\}} \left[e^{-\gamma^k |x|}+e^{-\gamma^{\bar k}|x|}\right]\right).
\end{split}
\end{equation}
Now we show that the dominant term is the first one:
\begin{equation}
\sum_{\bar k\leq \bar h}\gamma^{M(\bar k-\bar h)}e^{-\gamma^{\bar k} |x|}\sum_{\bar h< k\leq h}\gamma^{N(k-h)}\gamma^{-k}\leq c_2 \sum_{\bar k\leq \bar h}\gamma^{M(\bar k-\bar h)}e^{-\gamma^{\bar k} |x|} \gamma^{-h}\leq c_3 \gamma^{(\bar h-h)} \frac{\gamma^{-\bar h}}{1+\gamma^{M\bar h}|x|^{\bar h}},
\label{lemma_scale_gain_rho_dominant_term}
\end{equation}
indeed the second one is bounded by:
\begin{equation}
\begin{split}
\sum_{\bar k\leq \bar h}\gamma^{M(\bar k-\bar h)}\sum_{\bar h< k\leq h}\gamma^{N(k-h)}\gamma^{-k}e^{-\gamma^k |x|}\leq c_3 \gamma^{N(\bar h-h)}\gamma^{-\bar h}e^{-\gamma^{\bar h} |x|}
\end{split}
\end{equation}
while for the remaining ones we split again the domain of the double sum to check all the contributions:
\begin{equation}
\begin{split}
\sum_{\bar k\leq \bar h}\sum_{k\leq \bar h}\gamma^{N(k-h)}\gamma^{M(\bar k-\bar h)}\gamma^{-\max\{k,\bar k\}} \left[e^{-\gamma^k |x|}+e^{-\gamma^{\bar k}|x|}\right]\leq\\ \leq c_4\gamma^{N(\bar h-h)}\left(\sum_{k\leq\bar h}\gamma^{N(k-\bar h)}e^{-\gamma^k|x|}\sum_{k\leq \bar k\leq \bar h}\gamma^{M(\bar k-\bar h)}\gamma^{-\bar k}+\right.\\+\left.\sum_{k\leq\bar h}\gamma^{N(k-\bar h)}\sum_{k\leq \bar k\leq \bar h}\gamma^{M(\bar k-\bar h)}\gamma^{-\bar k}e^{-\gamma^{\bar k}|x|}+\right.\\
\left.\sum_{\bar k\leq\bar h}\gamma^{M(\bar k-\bar h)}e^{-\gamma^{\bar k|}x|}\sum_{\bar k\leq k\leq \bar h}\gamma^{N(k-\bar h)}\gamma^{- k}+\sum_{\bar k\leq\bar h}\gamma^{M(\bar k-\bar h)}\sum_{\bar k\leq k\leq \bar h}\gamma^{N( k-\bar h)}\gamma^{- k}e^{-\gamma^{ k}|x|}\right)\leq \\
\leq c_5 \gamma^{N(\bar h-h)}\left(\frac{\gamma^{-\bar h}}{1+\gamma^{N\bar h}|x|^N}+\frac{\gamma^{-\bar h}}{1+\gamma^{M\bar h}|x|^M}+ \frac{\gamma^{-\bar h}}{1+\gamma^{(M+N-1)\bar h}|x|^{N+M-1}}\right),
\end{split}
\end{equation}
which are of course smaller than the dominant term in (\ref{lemma_scale_gain_rho_dominant_term}).\\
Following the same strategy of splitting the domain of the double sum, we now bound 
\begin{equation}
\int d\bm y \varpi_h(y) g^{(\bar h)}_{P,\omega}(\bm y-\bm x),
\end{equation}
where $\bar h < h$, recalling that for any $\theta\in (0,1)$ and $M\in\mathbb N$,
\begin{equation}
\begin{split}
\left| \varpi_h(y)\right|\leq \frac{C_\theta}{1+\left(\gamma^h |y|\right)^\theta}\simeq \sum_{k\leq h}\gamma^{\theta(k-h)}e^{-\gamma^{k|}y|}\\
\left| g^{(\bar h)}_{P,\omega}(\bm x-\bm y)\right|\leq \gamma^{\bar h}\frac{C_M}{1+\left(\gamma^{\bar h} |\bm x-\bm y|\right)^M}\simeq \sum_{\bar k\leq \bar h}\gamma^{M(\bar k-\bar h)}e^{-\gamma^{\bar k}|\bm x-\bm y|}.
\end{split}
\end{equation}
First of all, we can perform the time integration in order to be left, up to a constant $C_\theta C_M$, with
\begin{equation}
\begin{split}
\int_0^L dy \frac{1}{1+\gamma^{\theta h}|y|^\theta}\frac{1}{1+\gamma^{M\bar h}|x-y|^M}\leq \sum_{\bar k\leq \bar h}\sum_{k\leq h}\gamma^{\theta(k-h)}\gamma^{M(\bar k-\bar h)}\int_0^L dy e^{-\gamma^k |y|}e^{-\gamma^{\bar k}|x-y|}\leq \\
\leq c_0^{(\theta)} \sum_{\bar k\leq \bar h}\sum_{k\leq h}\gamma^{\theta(k-h)}\gamma^{M(\bar k-\bar h)}\gamma^{-\max\{k,\bar k\}} \left[e^{-\gamma^k |x|}+e^{-\gamma^{\bar k}|x|}\right]
\end{split}
\end{equation}
Now let us split the double sum as 
\begin{equation}
\begin{split}
c_1^{(\theta)}\left( \sum_{\bar k\leq \bar h}\sum_{\bar h< k\leq h}\gamma^{\theta(k-h)}\gamma^{M(\bar k-\bar h)}\gamma^{-\max\{k,\bar k\}} \left[e^{-\gamma^k |x|}+e^{-\gamma^{\bar k}|x|}\right]\right.+\\ \left. \sum_{\bar k\leq \bar h}\sum_{k\leq \bar h}\gamma^{\theta(k-h)}\gamma^{M(\bar k-\bar h)}\gamma^{-\max\{k,\bar k\}} \left[e^{-\gamma^k |x|}+e^{-\gamma^{\bar k}|x|}\right]\right)=\\
\leq c_2^{(\theta)}\left(\sum_{\bar k\leq \bar h}\gamma^{M(\bar k-\bar h)}e^{-\gamma^{\bar k} |x|}\sum_{\bar h< k\leq h}\gamma^{\theta (k-h)}\gamma^{-k}+ \sum_{\bar k\leq \bar h}\gamma^{M(\bar k-\bar h)}\sum_{\bar h< k\leq h}\gamma^{\theta (k-h)}\gamma^{-k}e^{-\gamma^k |x|}\right.+\\
+\left. \sum_{\bar k\leq \bar h}\sum_{k\leq \bar h}\gamma^{\theta (k-h)}\gamma^{M(\bar k-\bar h)}\gamma^{-\max\{k,\bar k\}} \left[e^{-\gamma^k |x|}+e^{-\gamma^{\bar k}|x|}\right]\right)
\end{split}
\end{equation}
Let us study the first two terms of the latter expression:
\begin{equation}
\begin{split}
\sum_{\bar k\leq \bar h}\gamma^{M(\bar k-\bar h)}e^{-\gamma^{\bar k} |x|}\sum_{\bar h< k\leq h}\gamma^{\theta (k-h)}\gamma^{-k}+ \sum_{\bar k\leq \bar h}\gamma^{M(\bar k-\bar h)}\sum_{\bar h< k\leq h}\gamma^{\theta (k-h)}\gamma^{-k}e^{-\gamma^k |x|}\leq\\\leq c_3^{(\theta)}  \gamma^{-\bar h} \frac{\gamma^{\theta(\bar h - h )}}{1+\gamma^{M \bar h}|x|^M},
\end{split}
\end{equation}
Moreover,
\begin{equation}
\begin{split}
 \sum_{\bar k\leq \bar h}\sum_{k\leq \bar h}\gamma^{\theta (k-h)}\gamma^{M(\bar k-\bar h)}\gamma^{-\max\{k,\bar k\}} \left[e^{-\gamma^k |x|}+e^{-\gamma^{\bar k}|x|}\right]\leq\\
c_4^{(\theta)} \gamma^{\theta(\bar h- h)} \left(\left[\sum_{\bar k\leq \bar h} \gamma^{M(\bar k-\bar h)} e^{-\gamma^{\bar k} x} \sum_{\bar k\leq k \leq \bar h} \gamma^{\theta(k-\bar h)}\gamma^{-k} \right.\right. +
 \sum_{\bar k\leq \bar h} \gamma^{M(\bar k-\bar h)}  \sum_{\bar k\leq k \leq \bar h} \gamma^{\theta(k-\bar h)}\gamma^{-k} e^{-\gamma^{ k} x}+\\+\left. \left.\sum_{k\leq \bar h}\gamma^{\theta(k-\bar h)}    \sum_{ k\leq \bar k \leq \bar h} \gamma^{M(\bar k-\bar h)} \gamma^{-\bar k}e^{-\gamma^{\bar  k} x}\right]+\sum_{k\leq \bar h}\gamma^{\theta(k-\bar h)} e^{-\gamma^{ k} x}   \sum_{ k\leq \bar k \leq \bar h} \gamma^{M(\bar k-\bar h)} \gamma^{-\bar k}\right)\leq \\
 \leq c_5^{(\theta)}\left(\gamma^{\theta(\bar h-h)}\frac{\gamma^{-\bar h}}{1+(\gamma^{\bar h}|x|)^{M-(1-\theta)}}+ \gamma^{\theta(\bar h-h)}\frac{\gamma^{-\bar h}}{1+\gamma^{\theta \bar h}|x|^\theta}\right)\leq C_\theta \gamma^{\theta(\bar h-h)}\frac{\gamma^{-\bar h}}{1+\gamma^{\theta \bar h}|x|^\theta} ,
\end{split}
\end{equation}
where in the very last line, the first term in brackets bounds the three sums in square brackets, while the second one bounds the sum not included in square brackets.
\end{proof}

\end{document}